\algnewcommand\algorithmicinput{\textbf{INPUT:}}
\algnewcommand\INPUT{\item[\algorithmicinput]}
\algnewcommand\algorithmicoutput{\textbf{OUTPUT:}}
\algnewcommand\OUTPUT{\item[\algorithmicoutput]}
\DeclareMathAlphabet{\mathpzc}{OT1}{pzc}{m}{it}
\DeclareMathOperator*{\argmax}{argmax}
\DeclareMathOperator*{\argsup}{argsup}
\newtheorem{model}{Model}
\newtheorem{assumption}{Assumption}
\begin{document}

\title{Change point localization in dependent \\dynamic nonparametric random dot product graphs}

\author{\name Oscar Hernan Madrid Padilla \email oscar.madrid@stat.ucla.edu \\
       \addr Department of Statistics\\
       University of California\\
       Los Angeles, CA 90095-1554, USA
       \AND
       \name Yi Yu \email yi.yu.2@warwick.ac.uk \\
       \addr Department of Statistics\\
       University of Warwick\\
	   Coventry, CV4 7AL, UK
	   \AND
	   \name Carey E.~Priebe \email cep@jhu.edu \\
	   \addr Department of Applied Mathematics and Statistics \\
	   Johns Hopkins University \\
	   Baltimore, MD 21218-2682, USA}

\editor{David Blei}

\maketitle

\begin{abstract}
In this paper, we study the offline change point localization problem in a sequence of dependent nonparametric random dot product graphs.  To be specific, assume that at every time point, a network is generated from a nonparametric random dot product graph model \citep[see e.g.][]{athreya2017statistical}, where the latent positions are generated from unknown underlying distributions.  The underlying distributions are piecewise constant in time and change at unknown locations, called change points.  Most importantly, we allow for dependence among networks generated between two consecutive change points.  This setting incorporates edge-dependence within networks and temporal dependence between networks, which is the most flexible setting in the published literature.

To accomplish the task of consistently localizing change points, we propose a novel change point detection algorithm, consisting of two steps.  First, we estimate the latent positions of the random dot product model, our theoretical result being a refined version of the state-of-the-art results, allowing the dimension of the latent positions to diverge.  Subsequently, we construct a nonparametric version of the CUSUM statistic \citep[e.g.][]{Page1954, padilla2019optimal} that allows for temporal dependence.  Consistent localization is proved theoretically and supported by extensive numerical experiments, which illustrate state-of-the-art performance.  We also provide in depth discussion of possible extensions to give more understanding and insights.

\end{abstract}

\begin{keywords}
  Dependent dynamic networks, Nonparametric random dot product graph models, Change point localization.
\end{keywords}

\section{Introduction}

Computationally-efficient and theoretically-justified change point localization methods that can handle new data types are in high demand, due to technological advances in a broad range of application areas including finance, biology, social sciences, to name only a few.  The literature on change point detection is extensive, including the univariate mean case \citep[e.g.][]{FrickEtal2014, fryzlewicz2014wild, wang2018univariate}, the high-dimensional mean case \citep[e.g.][]{wang2016high, cho2016change}, the robust mean case \citep[e.g.][]{fearnhead2018changepoint, pein2017heterogeneous}, the covariance case \citep[e.g.][]{AueEtal2009, wang2017optimal, avanesov2018change}, the univariate nonparametric case \citep[e.g.][]{zou2014nonparametric, padilla2019optimal}, and the multivariate nonparametric case \citep[e.g.][]{arlot2012kernel, matteson2014nonparametric, garreau2018consistent, padilla2019optimal_2}. 
 
In this paper we are concerned with offline change point localization in dynamic networks.  Let $\{A(t)\}_{t=1}^T \subset \{0,1\}^{n \times n}$ be a sequence of adjacency matrices generated from a sequence of distributions $\{\mathcal{L}_t\}_{t = 1}^T$, such that for an unknown sequence of change points $\{\eta_k\}_{k = 1}^K \subset \{2, \ldots, T\}$ with $1 = \eta_0 < \eta_1 < \ldots < \eta_K \leq  T < \eta_{K+1} = T+1$, we have that
	\[
		\mathcal{L}_{t-1} \neq \mathcal{L}_t, \quad \text{if and only if} \quad t \in \{\eta_1, \ldots, \eta_{K}\}.
	\]
	The goal is to estimate the change point collection $\{\eta_k\}_{k = 1}^K$ accurately.

The model described can be used to study various application problems.  For instance, in epidemiology, studying the dynamic networks formed by human interaction can facilitate the detection of transmissible diseases outbreaks; in finance, dynamic networks formed by within-companies transactions over time may possess abrupt changes which indicate abnormal market behaviours; in neuroscience, we provide a detailed real data example in the context of detecting changes in the neuronal activity in \Cref{sec-real-data}.  In response to the growing demand from applications, there has been recently an increasing interest in the literature studying the model described above.  \cite{wang2018optimal} considered an independent sequence of inhomogeneous Bernoulli networks and presented a nearly optimal change point localization algorithm, accompanied with a phase transition phenomenon.     \cite{zhao2019change} assumed an independent sequence of graphon models with independent edges and proposed consistent yet optimal localization result.  Other network change point papers include \cite{WangEtal2014}, \cite{CribbenYu2017}, \cite{LiuEtal2018}, \cite{ChuChen2017}, \cite{Mukherjee}, among others.  We would like to mention that both \cite{CribbenYu2017} and \cite{LiuEtal2018} have exploited the eigenvectors information to conduct change point detection, but both lack theoretical results.  Our paper, to the best of our knowledge, is the first to provide theoretical justifications for eigenvector-based change point detection methods.  More in-depth comparisons with \cite{wang2018optimal} will be conducted later in the paper.

\subsection{Random dot product graph models}\label{sec-rdphm}

Different from the aforementioned papers, in order to allow for dependence among edges, we assume that at every time point, the network is generated from a random dot product graph \citep[e.g.][]{young2007random, athreya2017statistical}.  We formally define the model in Definitions~\ref{def-1} and \ref{def-3}, which are both based on \cite{athreya2017statistical}.

\begin{definition}[Inner product distribution]\label{def-1}
	 Let $F$ be a probability distribution whose support is given by $\mathcal{X}_F \subset \mathbb{R}^{d}$.  We say that $F$ is a $d$-dimensional inner product distribution on $\mathbb{R}^d$ if for all $x, y \in \mathcal{X}_F$, it holds that $x^{\top}y \in [0, 1]$.	
\end{definition}

\begin{definition}[Random dot product graph with distribution $F$] \label{def-3} 
Let $F$ be an inner product distribution with $\{X_i\}_{i = 1}^n \stackrel{\mbox{i.i.d.}}{\sim} F$.  Let $X = (X_1, \ldots, X_n)^{\top} \in \mathbb{R}^{n \times d}$.  Suppose $A$ is a random adjacency matrix given by 
	\begin{equation}\label{eq-def-rdpg-1}
		\mathbb{P}\left\{A \mid X\right\} = \prod_{1 \leq i < j \leq n} (X_i^{\top}X_j)^{A_{ij}}(1 - X_i^{\top}X_j)^{1 - A_{ij}}.
	\end{equation}
	We write $A \sim \mathrm{RDPG}(F, n)$.
\end{definition}

We would like to make a few comments regarding random dot product graph models.  For first time reading, one can safely skip this and jump to \Cref{sec-contributions}.

\medskip
\noindent \textbf{Equivalence of distributions}

It can be seen from  Definition \ref{def-3} that the latent positions come into play only through their inner products, i.e.~we have
	\[
		A_{ij} \sim \mathrm{Ber}(X_i^{\top}X_j), \quad 1 \leq i, j \leq n.
	\]
	This means that one can apply any orthonormal rotations to all the latent positions and retain the same distribution of $A$.  In light of this rotational invariance, we define the equivalence of inner product distributions below, which is also from \cite{athreya2017statistical}.

\begin{definition}[Equivalence of inner product distributions]\label{def-equivalence} 
	If both $F(\cdot)$ and $G(\cdot)$ are  inner product distributions defined on $\mathbb{R}^d$, and there exists an orthogonal operator $U: \, \mathbb{R}^d \to \mathbb{R}^d$ such that $F = G \circ U$, then we say $F$ and $G$ are equivalent. 
\end{definition}

\medskip
\noindent \textbf{Community structures}

The random dot product graph is a generalization of the stochastic block model \citep{HollandEtal1983}, where the latent positions $X$ are assumed to be fixed and satisfy
	\[
		XX^{\top} = ZQZ^{\top},
	\]
	where $Z \in \{0, 1\}^{n \times d}$ is a membership matrix,  with each row  consisting  of one and only one entry being 1 and $Q \in [0, 1]^{d \times d}$ is a connectivity matrix encoding the edge probabilities.  
	
One may be puzzled by the observation that under  Definition \ref{def-3}, we have that for any $(i, j) \in \{1, \ldots, n\}^2$, $i \neq j$,
	\[
		\mathbb{E}(A_{ij}) = \mathbb{E}(X_i^{\top}X_j) = \mathbb{E}(X_1^{\top}X_2),
	\]	
	where the second identity follows from the fact that within a network the latent positions are i.i.d., and therefore one loses the community structure and connections from the stochastic block model.  
	
This observation is due to the randomness of the latent positions.  To enforce a version of ``communities'' under  Definition \ref{def-3}, one may introduce a membership vector and treat the distribution $F$ as a mixture distribution.  To be specific, we have an alternative to Definition \ref{def-3} below.	

\begin{definition} \label{def-alt-rdpg} 
Let $\tau_1,\ldots,\tau_n$ be i.i.d.~random variables satisfying
	\[
		\mathbb{P}\{\tau = m\} = \pi_m, \quad \pi_m \geq 0, \, m \in \{1, \ldots, M\}, \, \sum_{m = 1}^M \pi_m = 1,
	\]
	where $M$ is a positive integer.  Let $\{F_m\}_{m = 1}^M$ be a sequence of $d$-dimensional inner product distributions.  Assume that
	\[
		X_i \mid \tau_i \stackrel{ind.}{\sim} F_{\tau_i}, \quad i = 1, \ldots, n.
	\]
	Let $X = (X_1, \ldots, X_n)^{\top} \in \mathbb{R}^{n \times d}$.  Suppose $A$ is a random adjacency matrix given by 
	\[
		\mathbb{P}\left\{A \mid X\right\} = \prod_{1 \leq i < j \leq n} (X_i^{\top}X_j)^{A_{ij}}(1 - X_i^{\top}X_j)^{1 - A_{ij}}.
	\]
	We write $A \sim \mathrm{RDPG}(F, n)$, where 
	\[
		F = \sum_{m = 1}^M \pi_m F_{m}.
	\]
\end{definition}

We remark that Definition \ref{def-alt-rdpg} is a special case of  Definition \ref{def-3}. Therefore the theoretical results based on  Definition \ref{def-3} also hold for Definition  \ref{def-alt-rdpg}.  The vector $\tau$ prompts the vertex correspondence in a dynamic network.  For instance, one may assume a sequence of RDPG($F, n$) using Definition \ref{def-alt-rdpg}, with latent positions drawn independently and the membership vector unchanged.  There are also other variants.  For instance, one may also assume instead that the membership vector $\tau$ is fixed.

\subsection{List of contributions}\label{sec-contributions}

We highlight the contributions of this paper.

First of all, we propose a novel algorithm for change point localization in dependent dynamic random dot product graph models, see \Cref{algorithm:WBS}.  This proceeds by first estimating the latent positions $\{\widehat{X}_i(t)\}_{i = 1, t = 1}^{n, T}$.  However,  due to the latent positions' rotational-invariance properties discussed in \Cref{sec-rdphm}, one pertaining challenge in the RDPG literature is to match the rotations of the latent position estimators of different networks \citep[e.g.][]{athreya2017statistical, cape2019two}.  We propose a novel way to get around this issue with matching. Specifically,  we define $\widehat{Y}^t_{ij} = (X_i(t))^{\top}X_j(t)$, and construct a Kolmogorov--Smirnov CUSUM statistic \citep{padilla2019optimal} based on $\{\widehat{Y}^t_{ij}: (i, j) \in \{(l, n/2+l), \, l = 1, \ldots, n/2\}, t = 1, \ldots, T\}$.  One may question the power of the Kolmogorov--Smirnov distance, but it allows for more general distributions for latent positions. Among those distributions stochastic block models are special cases.  One may also question the effectiveness of using only a subset of all the possible edges, we will discuss in \Cref{sec-possible-extensions} that in terms of order, this is in fact the same as using all possible edges.

Secondly, under an appropriate signal-to-noise ratio condition, we prove that our proposed method  (\Cref{algorithm:WBS}) can estimate the number and locations of change points consistently, which will be formally stated in \Cref{sec:main_theorem}.  It is worth mentioning that \Cref{thm-main} handles the situation where there exists dependence across time and among edges.  This is not shown in the existing network change point detection literature.  To be more specific, the dependence among edges are imposed by assuming the latent positions are random and the edges are conditionally independent given the latent positions.  Our proposed method is also robust to some model mis-specification, see the discussions following \Cref{thm-main} for details.

Thirdly, we provide in-depth discussions on the characterization of jumps in \Cref{sec:jumps}.  Note that the data we have are a collection of adjacency matrices. However, as stated in Definition \ref{def-3}, the data generating mechanism depends on latent positions' distributions $F$s.  A natural question is whether the changes in $F$ will lead to the changes in the distributions of the adjacency matrices, and if so, whether we can characterize the changes.  The results we developed in \Cref{sec:jumps} are interesting \emph{per se}, and can shed light on network testing problems.
 
Lastly, numerical experiments provide ample evidence on the strength of our proposed approach. In particular, we  highlight the advantage of our method in scenarios with dependent networks.

\medskip

The rest of the paper is organized as follows.  \Cref{sec:model} provides the formal problem setup and our proposed method in detail, including discussions on possible extensions.  The characterization of the distributional changes and statistical guarantees for our approach are collected in \Cref{sec:theory}.  We conclude with numerical experiments in \Cref{sec:experiments} and final discussions in \Cref{sec-conclusion}.  Technical details are deferred to the Appendix.

\section{Methodology}
\label{sec:model}

\subsection{Setup}
We first formally state the full model descriptions.
\begin{model}\label{assume:model-rdpg}
	Let $\{A(1),\ldots, A(T)\} \subset \mathbb{R}^{n\times n}$ be a sequence of adjacency matrices of random dot product graphs, satisfying the following.
\begin{enumerate}
	\item (\textbf{Random dot product graphs}) For any $t \in \{1, \ldots, T\}$, it holds that
		\[
			\mathbb{P}\left\{A(t) \mid X(t)\right\} = \prod_{1 \leq i < j \leq n} (X_i(t)^{\top}X_j(t))^{A_{ij}(t)}(1 - X_i(t)^{\top}X_j(t))^{1 - A_{ij}(t)},
		\]
		where $X(t) = (X_1(t), \ldots, X_n(t))^{\top} \in \mathbb{R}^{n \times d}$ satisfies the following.
		
		There exists a sequence $1 = \eta_0 < \eta_1 < \ldots < \eta_K \leq T   < \eta_{K+1} = T+1$ of time points, called change points.  For $k \in \{0, \ldots, K\}$, we have that
		\[
			X_i(\eta_k) \in \mathbb{R}^d \stackrel{\mathrm{ind}}{\sim} F_{\eta_{k}}, \quad i = 1, \ldots, n,
		\]
		and for $t \in \{\eta_k + 1, \ldots, \eta_{k+1}-1\}$, we have that
		\begin{equation}\label{eq-rho}
			X_i(t) \begin{cases}
 				= X_i(t-1), & \mbox{with probability } \rho, \\
 				\stackrel{\mathrm{ind}}{\sim} F_{\eta_{k}}, & \mbox{with probability } 1-\rho,
 			\end{cases}
		\end{equation}		
		and with $F_{t}$'s satisfying Definition \ref{def-1}. Throughout, we write  $P_t = X(t)X(t)^{\top}$ for the matrix of latent link probabilities at time $t \in \{1, \ldots, T\}$.  
			
	\item (\textbf{Minimal spacing}) The minimal spacing between two consecutive change points satisfies 
		\[
			\min_{k = 1, \ldots, K+1} \{\eta_k-\eta_{k-1}\} = \Delta > 0.
		\]
	\item (\textbf{Minimal jump size}) For each $k \in \{0, \ldots, K\}$ and for any $X, Y \stackrel{\mbox{i.i.d.}}{\sim} F_{\eta_k}$, denote 
		\[
			G_{\eta_k}(z) = \mathbb{P}\left\{X^{\top} Y \leq z\right\}, \quad z \in [0, 1].
		\]
		The magnitudes of the changes in the data generating distribution are such that
		\begin{equation}\label{eq-jump-define}
			\kappa = \min_{k = 1, \ldots, K+1} \kappa_k = \min_{k = 1, \ldots, K} \sup_{z \in [0, 1]} |G_{\eta_k}(z) - G_{\eta_{k-1}}(z)| > 0.
		\end{equation}
	\item Assume that for every $k \in \{0, \ldots, K\}$ and  $i \in \{1,\ldots,n\}$, 
		\[
			\mathbb{E}\left\{X_i(\eta_k)X_i(\eta_k)^{\top}\right\} = \Sigma_k \in \mathbb{R}^{d \times d},
		\]	
	where $\Sigma_k$ has eigenvalues $\mu^k_1 \geq \cdots \geq \mu^k_d > 0$, with $\{\mu^k_l, \, k = 0, \ldots, K, \, l = 1, \ldots, d\}$ all being universal constants.  
\end{enumerate}
\end{model}

In \Cref{assume:model-rdpg}, between two consecutive change points, the latent positions are dependent with exponentially decaying correlations; and for latent positions drawn at time points separated by change points, they are independent.  If $\rho = 0$ in \eqref{eq-rho}, then all the latent positions are independent, which implies that the adjacency matrices are independent.

The distributional changes occurring at change points are quantified through cumulative distribution functions $\{G_{\eta_k}\}$ defined in \Cref{assume:model-rdpg}(3).  Intuitively, since the unconditional distributions of $\{A(t)\}$ are completely characterized by the joint distributions of $\{X_i(t)^{\top}X_j(t)\}$, it is natural to quantify the changes with respect $\{G_{\eta_k}\}$.  (A more detailed discussion on this can be found in \Cref{sec:jumps}.)  In particular, the changes are measured by the Kolmogorov--Smirnov distance in \eqref{eq-jump-define}, since the Kolmogorov--Smirnov distance does not require assumptions about the moments of the distributions, or about their discrete/continuous nature.  With the stochastic block model being a special case of the random dot product graph, the distributions thereof are point-mass distributions, which handicaps the adoption of other (potentially more powerful) distribution distances, including the total variation distance.

\Cref{assume:model-rdpg}(4) is imposed to guarantee that the latent link probabilities satisfy $\mathrm{rank}(P_t) = d$ with high probability.  Without this full-rank assumption, assuming that $r = \mathrm{rank}(\Sigma_k) < d$  implies that there exists a rank-$r$ subspace which characterises the latent positions, and the effective dimension is $r$ instead of $d$.  For simplicity, we assume that $\Sigma_k$ is of full rank.
 
\subsection{Methods}\label{sec-methods}

To arrive at our construction, we start by defining the main statistic, and its population version.  Without loss of generality, we assume that the number of nodes $n$ is an even integer.  If $n$ is odd, then we randomly ignore a certain but fixed node and all edges connecting to it throughout the whole procedure.

\begin{definition}[CUSUM statistics]\label{def-cusum-3}
Let $\mathcal{O} = \{(i, n/2+i), \, i = 1, \ldots, n/2\}$.
\begin{itemize}
\item (Sample version) With $\{A(t)\}_{t = 1}^T \subset \mathbb{R}^{n \times n}$, let
	\[
		\widehat{X}(t) =  U_A(t)\Lambda_A(t)^{1/2},
	\]
	where $U_A(t) \in \mathbb{R}^{n \times d}$ is an orthogonal matrix with columns being the leading $d$ eigenvectors of $A(t)$, and $\Lambda_A(t) \in \mathbb{R}^{d \times d}$ is a diagonal matrix with entries being the largest $d$, in absolute value,  eigenvalues of $A(t)$.  
	
  For any $t \in \{1, \ldots, T\}$ and $(i, j) \in \mathcal{O}$, let
	\[
		\widehat{Y}_{ij}^t = \widehat{X}_i(t)^{\top}\widehat{X}_j(t),
	\]
	where $\widehat{X}_i(t)^{\top}$ is the $i$th row of $\widehat{X}(t)$.  For any integer triplet $(s, t, e)$, $0 \leq s < t < e \leq T$ and $z \in \mathbb{R}$, we define the CUSUM statistic as 
	\begin{align}\label{eq-cusum-def-ddd}
		& D^t_{s, e}(z) = \Bigg|\sqrt{\frac{2(e-t)}{n(e-s)(t-s)}}\sum_{k = s+1}^t \sum_{(i, j) \in \mathcal{O}} \mathbbm{1}\{\widehat{Y}^k_{ij} \leq z\} \nonumber \\
		& \hspace{3cm} - \sqrt{\frac{2(t-s)}{n(e-s)(e-t)}}\sum_{k = t+1}^e \sum_{(i, j) \in \mathcal{O}} \mathbbm{1}\{\widehat{Y}^k_{ij} \leq z\}\Bigg|,
	\end{align}
	and 
	\[
		D^t_{s, e} = \sup_{z \in [0, 1]}|D^t_{s, e}(z)|.
	\]

\item (Population version) With $\{A(t)\}_{t = 1}^T \subset \mathbb{R}^{n \times n}$, recall that  $P_t =  X(t) X(t)^{\top} $ and write 
	\[
		X(t) =  U_P(t)\Lambda_P(t)^{1/2},
	\]
	where $U_P(t) \in \mathbb{R}^{n \times d}$ is an orthogonal matrix with columns being the  $d$ eigenvectors of $P_t$ with largest  absolute eigenvalues, and $\Lambda_P(t) \in \mathbb{R}^{d \times d}$ is a diagonal matrix with entries being the leading $d$ eigenvalues of $P_t$.  
	
For any $t \in \{1, \ldots, T\}$ and $(i, j) \in \mathcal{O}$, let
	\begin{equation}\label{eq-def-Y}
		Y_{ij}^t = X_i(t)^{\top}X_j(t),
	\end{equation}
	where $X_i(t)^{\top}$ is the $i$th row of $X$.  For any integer triplet $(s, t, e)$, $0 \leq s < t < e \leq T$ and $z \in \mathbb{R}$, we define the CUSUM statistic as 
	\begin{align*}
		\widetilde{D}^t_{s, e}(z) & = \Bigg|\sqrt{\frac{2(e-t)}{n(e-s)(t-s)}}\sum_{k = s+1}^t \sum_{(i, j) \in \mathcal{O}} \mathbb{E}\left(\mathbbm{1}\{Y^k_{ij} \leq z\}\right) \\
		& \hspace{2cm} - \sqrt{\frac{2(t-s)}{n(e-s)(e-t)}}\sum_{k = t+1}^e \sum_{(i, j) \in \mathcal{O}} \mathbb{E}\left(\mathbbm{1}\{Y^k_{ij} \leq z\}\right)\Bigg|
	\end{align*}
	and 
	\[
		\widetilde{D}^t_{s, e} = \sup_{z \in [0, 1]}|\widetilde{D}^t_{s, e}(z)|.
	\]
\end{itemize}
\end{definition}

We remark that in Definition \ref{def-cusum-3}, if the $d$th and $(d+1)$th eigenvalues share the same value, then one can randomly pick an eigenvector to construct $\widehat{X}, X \in \mathbb{R}^{n \times d}$.  In addition, we do not require a specific order of the eigenvectors in constructing $\widehat{X}$ and $X$.

Recall that the distributions of the latent positions are equivalent up to a rotation, see Definition \ref{def-equivalence}.  To avoid extra efforts in matching the rotations when comparing two latent position distributions, we resort to the inner products of latent positions instead of latent positions itself.  We explain this via \eqref{eq-def-Y}.  For any orthogonal matrix $U \in \mathbb{R}^{d \times d}$, it holds that 
	\[
		Y_{ij}^t = (X_i(t))^{\top}X_j(t) = (UX_i(t))^{\top}UX_j(t).
	\]
	
With Definition \ref{def-cusum-3}, we arrive at our proposed procedure \Cref{algorithm:WBS} that builds on the wild binary segmentation algorithm \citep{fryzlewicz2014wild}.  The method requires first estimating the latent positions, a subroutine shown in \Cref{algorithm:PCA} \citep[adjacency spectral embedding, see e.g.][]{sussman2012consistent}.  Note that this only needs to be done once regardless of the choice of the tuning parameter $\tau$, and is parallelizable.  Since the complexity of the truncated principal component analysis is of order $O(d n^2)$, \Cref{algorithm:PCA} has the computational cost of order $O(T d n^2)$. Once the latent positions are estimated, we run the remaining steps in \Cref{algorithm:WBS}, which amounts to running Algorithm 2 in \cite{padilla2019optimal}.  For a fixed $\tau$ which leads to $\widetilde{K}$ change points, we have the computational complexity of order $O(\widetilde{K} M T n \log(n))$, which translates to $O(T d n^2 + \widetilde{K} MT n \log(n))$ for the overall cost of \Cref{algorithm:WBS}, where $M$ is the number of random intervals drawn in \Cref{algorithm:WBS}.

\begin{algorithm}[htbp!]
\begin{algorithmic}
	\INPUT Matrix $A \in \mathbb{R}^{n \times n}$, and  tuning parameter $d \in \mathbb{Z}_+$.
	\State $A = (v_1, \ldots, v_n) \mathrm{diag}(\lambda_1, \ldots, \lambda_n) (v_1, \ldots, v_n)^{\top}$, where $|\lambda_1| \geq \ldots \geq |\lambda_n|$.
	\State $X   \leftarrow     (v_1, \ldots, v_d)\mathrm{diag}(|\lambda_1|^{1/2}, \ldots, |\lambda_d|^{1/2})$
	\OUTPUT $X$
\caption{ScaledPCA $(A, d)$}
\label{algorithm:PCA}
\end{algorithmic}
\end{algorithm} 

\begin{algorithm}[htbp!]
\begin{algorithmic}
	\INPUT A sample $\{A(t)\}_{t = s+1}^{e} \subset \mathbb{R}^{n \times n}$, collection of intervals $\{ (\alpha_m,\beta_m)\}_{m=1}^M$, tuning parameters $d \in \mathbb{Z}_+$, and $\tau > 0$.
	\For{$t = s + 1, \ldots, e$}
		\State $X(t) \leftarrow \mathrm{ScaledPCA}(A(t), d)$
	\EndFor
	\For{$m = 1, \ldots, M$}  
		\State $(s_m, e_m) \leftarrow [s, e]\cap [\alpha_m, \beta_m]$
		\If{$e_m - s_m > 1$}
			\State $b_{m} \leftarrow \argmax_{s_m + 1 \leq t \leq e_m - 1}   D_{s_m, e_m}^{t}$
			\State $a_m \leftarrow D_{s_m, e_m}^{b_{m}}$
		\Else 
			\State $a_m \leftarrow -1$	
		\EndIf
	\EndFor
	\State $m^* \leftarrow \argmax_{m = 1, \ldots, M} a_{m}$
	\If{$a_{m^*} > \tau$}
		\State add $b_{m^*}$ to the set of estimated change points
		\State NonPar-RDPG-CPD$((s, b_{m*}),\{ (\alpha_m,\beta_m)\}_{m=1}^M, \tau)$
		\State NonPar-RDPG-CPD$((b_{m*}+1,e),\{ (\alpha_m,\beta_m)\}_{m=1}^M,\tau) $

	\EndIf  
	\OUTPUT The set of estimated change points.
\caption{NonPar-RDPG-CPD $((s, e), \{ (\alpha_m,\beta_m)\}_{m=1}^M, \tau)$}
\label{algorithm:WBS}
\end{algorithmic}
\end{algorithm} 

In every network, there are $n(n-1)/2$ observations, but note that in Definition \ref{def-cusum-3}, we in fact only use $n/2$ of them.  This is for technical convenience, since due to the choice of $\mathcal{O}$, we obtain independent observations within one network.  We acknowledge that there are other variants of this treatment.  For instance, instead using a fixed choice of $\mathcal{O}$, one can do multiple random sub-samplings and combine the results; one can also gather all the observations and create a $U$-statistic instead.  In \Cref{sec-possible-extensions}, we will show that in terms of rate, using $n/2$ edges  is as effective as using all possible edges. 

\section{Theory}
\label{sec:theory}

In this section, we provide the statistical guarantees for \Cref{algorithm:WBS}. In order to enhance the theoretical understanding, we take a step back and understand how the jump defined in \eqref{eq-jump-define} through the cumulative distribution functions of the inner products can be related to the jumps in terms of the distributions of the adjacency matrices.  The main results which provide theoretical guarantees of our algorithm are collected in \Cref{thm-main}.

\subsection{Characterizations of the changes} \label{sec:jumps}

We summarize the notation below and consider two different sets of models.

\begin{model}\label{mod-comp}
	We assume the following two independent models:
	\[
		\{A_{ij}, 1 \leq i <  j \leq n\} | \{X_i\}_{i = 1}^n \stackrel{\mathrm{ind}}{\sim} \mathrm{Ber}(X_i^{\top}X_j), \quad X_i \stackrel{\mathrm{ind}}{\sim} F \in \mathbb{R}^d;
	\]
	and
	\[
		\{\widetilde{A}_{ij}, 1 \leq i < j \leq n\} | \{\widetilde{X}_i\}_{i = 1}^n \stackrel{\mathrm{ind}}{\sim} \mathrm{Ber}(\widetilde{X}_i^{\top} \widetilde{X}_j), \quad \widetilde{X}_i \stackrel{\mathrm{ind}}{\sim} \widetilde{F} \in \mathbb{R}^d. 
	\]	
	For $i \neq j$, the cumulative distribution functions of $X_i^{\top} X_j$ and $\widetilde{X}_i^{\top} \widetilde{X}_j$ are denoted by $G(\cdot)$ and $\widetilde{G}(\cdot)$, respectively.  We further write $\mathcal{L}$ and $\widetilde{\mathcal{L}}$ for the joint unconditional distributions of $\{A_{ij}, 1 \leq i <  j \leq n\}$ and $\{\widetilde{A}_{ij}, 1 \leq i < j \leq n\}$, respectively.
\end{model}

The rest of this subsection is summarized in \Cref{fig-flowchart}.  The notation $A \Rightarrow B$ means $A$ implies $B$. 

\begin{figure}
\centering
\begin{tikzpicture}[roundnode/.style={circle, draw=Fuchsia!60, fill=Fuchsia!5, very thick, minimum size=7mm}, squarednode/.style={rectangle, draw=CarnationPink!60, fill=CarnationPink!5, very thick, minimum size=7mm}, ball/.style={ellipse, minimum width=2cm, minimum height=1cm, draw=Fuchsia!60, fill=Fuchsia!5, very thick}]

\node [squarednode] (G) {$G \neq \widetilde{G}$};
\node [squarednode] (F) [left of = G, xshift = 6cm] {$F \neq \widetilde{F}$};
\node [squarednode] (L) [left of = F, xshift = 6cm, yshift = -1.5cm] {$\mathcal{L} \neq \widetilde{\mathcal{L}}$};
\node [squarednode] (F2) [below of = F, yshift = -2cm] {$\substack{\mbox{First } n-1 \mbox{ moments of}\\ F \mbox{ and } \widetilde{F} \mbox{ are not identical}}$};

\node  [ball] (l1) [left of = G, xshift = 3.5cm, yshift = -0.7cm] {Lemma \ref{lem-ff-gg}};
\node  [ball] (l3) [left of = F, xshift = 3.5cm, yshift = -0.7cm] {Lemma \ref{lem-one-network}};
\node  (a1) [left of = F, xshift = 3.5cm, yshift = 0.5cm] {$+$        \Cref{assume-comp}};
\node [ball, below of = L, yshift = -0.5cm, rotate = 10, anchor = north, xshift = -1cm]{Lemma \ref{lem-dist-moment}};

\draw [double, ->, thick] (G.east) to [out= 0, in=180] (F.west);
\draw [double, ->, thick] (F.east) to [out= 0, in=90] (L.north);
\draw [double, <->, thick] (L.south) to [out= 270, in=0] (F2.east);

\end{tikzpicture}
\caption{Flowchart of \Cref{sec:jumps}.  The notation $A \Rightarrow B$ means $A$ implies $B$. }\label{fig-flowchart}
\end{figure}
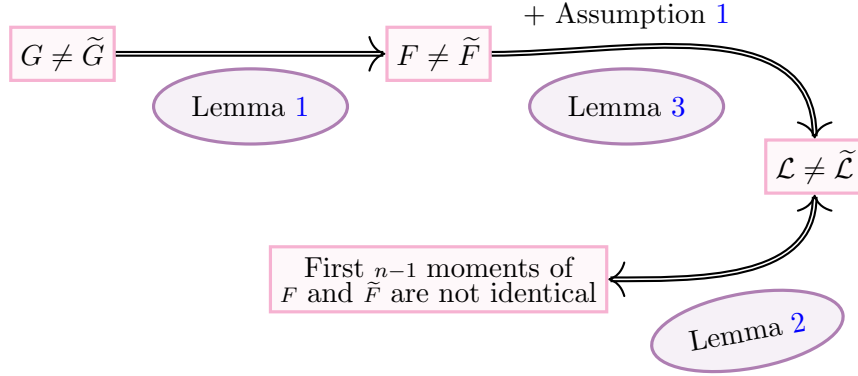

\begin{lemma}\label{lem-ff-gg}
	With the notation in \Cref{mod-comp}, if $F = \widetilde{F}$, then $G = \widetilde{G}$.
\end{lemma}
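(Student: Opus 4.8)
The plan is to observe that $G$ (respectively $\widetilde{G}$) is simply the law of a fixed measurable function---the inner-product map---evaluated at an i.i.d.\ pair drawn from $F$ (respectively $\widetilde{F}$), so that equality of the latent distributions transfers immediately to equality of the inner-product distributions by a pushforward argument.

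First I would fix $i \neq j$ and note that under the first model $(X_i, X_j)$ is a pair of independent draws from $F$, so its joint law on $\mathbb{R}^d \times \mathbb{R}^d$ is the product measure $F \otimes F$; likewise $(\widetilde{X}_i, \widetilde{X}_j)$ has joint law $\widetilde{F} \otimes \widetilde{F}$. Since by hypothesis $F = \widetilde{F}$, these two product measures coincide, i.e.\ $F \otimes F = \widetilde{F} \otimes \widetilde{F}$.

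Next I would introduce the map $\phi \colon \mathbb{R}^d \times \mathbb{R}^d \to \mathbb{R}$ given by $\phi(x, y) = x^{\top} y$, which is continuous and hence Borel measurable. By definition $X_i^{\top} X_j = \phi(X_i, X_j)$ and $\widetilde{X}_i^{\top} \widetilde{X}_j = \phi(\widetilde{X}_i, \widetilde{X}_j)$, so the law of each inner product is the pushforward of the corresponding product measure under the same map $\phi$. Because the two product measures agree and $\phi$ is common to both, the pushforward laws agree, giving for any $z \in \mathbb{R}$
\[
	G(z) = (F \otimes F)\{(x, y) : x^{\top} y \leq z\} = (\widetilde{F} \otimes \widetilde{F})\{(x, y) : x^{\top} y \leq z\} = \widetilde{G}(z),
\]
and hence $G = \widetilde{G}$, as claimed.

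There is essentially no obstacle here: the result is a direct consequence of the fact that pushing identical measures through a common measurable map yields identical laws. The only point to be mindful of is to frame the argument in terms of the \emph{joint} law of the pair $(X_i, X_j)$ rather than the marginal of $F$, since $G$ is the distribution of a function of both coordinates; the within-network independence is what makes this joint law a genuine product measure, though it is really only the equality of the joint laws that the proof uses.
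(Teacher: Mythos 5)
Your proof is correct and matches the paper's approach: the paper gives no formal argument, merely remarking that the lemma ``follows automatically from the definitions,'' and your pushforward formulation---that $G$ is the law of the fixed measurable map $(x,y) \mapsto x^{\top}y$ under the product measure $F \otimes F$, so $F = \widetilde{F}$ forces $G = \widetilde{G}$---is exactly the formalization of that remark. There are no gaps; your note that the within-network independence is what identifies the joint law as a product measure is the right point to make explicit.
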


This follows automatically from the definitions, and is equivalent to the claim that if $G \neq \widetilde{G}$ then $F \neq \widetilde{F}$, which implies that \eqref{eq-jump-define} is equivalent to 
	\[
		F_{\eta_k} \neq F_{\eta_{k-1}}, \quad k \in \{1, \ldots, K\}.
	\]

However, $F \neq \widetilde{F}$ does not imply $\mathcal{L} \neq \widetilde{\mathcal{L}}$.  As a simple toy example, consider $F$ and $\widetilde{F}$ to be defined in  Definition \ref{def-1}, with the same mean but different variances, and $n = 2$.  Then $F \neq \widetilde{F}$ but $\mathcal{L} = \widetilde{\mathcal{L}}$.  \ref{lem-dist-moment} below shows that $\mathcal{L}$ is determined by the first $n-1$ moments of $F$.

\begin{lemma}\label{lem-dist-moment}
	Under \Cref{mod-comp}, we have that $\mathcal{L} = \widetilde{\mathcal{L}}$ if and only if there exists an orthogonal operator $U \in \mathbb{R}^{d \times d}$, such that if $d = 1$, 
	\[
	\mathbb{E}_F(X_1^k) = \mathbb{E}_{\widetilde{F}}\{(U\widetilde{X}_1)^k\}, \quad k = 1, \ldots, n-1,
	\]
	if $d > 1$
	\[
	\mathbb{E}_F\left(\prod_{l = 1}^d X_{1, l}^{k_l}\right) = \mathbb{E}_{\widetilde{F}}\left\{\prod_{l = 1}^d(U\widetilde{X}_1)_l^{k_l}\right\}, \quad k_l \in \mathbb{Z}, \quad k_l \geq 0, \quad \sum_{l = 1}^d k_l = k, \quad k = 1, \ldots, n-1,
	\]
	where $X_{1, l}$ and $(U\widetilde{X}_1)_l$ are the $l$th coordinates of the $X_1$ and $U\widetilde{X}_1$.
\end{lemma}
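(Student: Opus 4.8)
The plan is to convert the distributional equality $\mathcal L=\widetilde{\mathcal L}$ into an equality between families of moments, read those moments as orthogonal invariants of the two latent distributions, and then argue orbit separation. First, each adjacency matrix is a bounded $\{0,1\}$-valued random vector indexed by the edges of $K_n$, so its law is determined, through M\"obius/inclusion--exclusion, by the inclusion probabilities $\mathbb P\{A_{ij}=1\text{ for all }(i,j)\in S\}=\mathbb E\{\prod_{(i,j)\in S}A_{ij}\}$ as $S$ ranges over edge subsets. Conditioning on the latent positions and using $\mathbb E\{A_{ij}\mid X\}=X_i^\top X_j$ with conditional edge-independence gives $\mathbb E\{\prod_{(i,j)\in S}A_{ij}\}=\mathbb E_F\{\prod_{(i,j)\in S}X_i^\top X_j\}$, so that $\mathcal L=\widetilde{\mathcal L}$ is equivalent to
\[
\mathbb E_F\Big\{\prod_{(i,j)\in S}X_i^\top X_j\Big\}=\mathbb E_{\widetilde F}\Big\{\prod_{(i,j)\in S}\widetilde X_i^\top \widetilde X_j\Big\}\quad\text{for every simple graph }S\text{ on }\{1,\dots,n\}.
\]
Expanding $X_i^\top X_j=\sum_l X_{i,l}X_{j,l}$ and using that the $X_i$ are i.i.d.\ across $i$, I sum over an assignment of a coordinate label to each edge and factorize over vertices; the left-hand side becomes a complete contraction of the moment tensors $M_F^{(k)}:=(\mathbb E_F[X_{1,l_1}\cdots X_{1,l_k}])_{l_1,\dots,l_k}$ in which the tensor at vertex $i$ has order $\deg_S(i)$. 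Since any vertex of a simple graph on $n$ vertices has degree at most $n-1$, only $M_F^{(1)},\dots,M_F^{(n-1)}$ ever appear.

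This expansion immediately gives the ``if'' direction. If some $U\in\mathrm O(d)$ matches the first $n-1$ mixed moments, i.e.\ $M_F^{(k)}=U^{\otimes k}M_{\widetilde F}^{(k)}$ for $k\le n-1$, then the rotational invariance $X_i^\top X_j=(UX_i)^\top(UX_j)$ (Definition \ref{def-equivalence}) lets me replace $\widetilde X_i$ by $U\widetilde X_i$ without changing $\widetilde{\mathcal L}$, after which every contraction above agrees term by term; hence every inclusion probability agrees and $\mathcal L=\widetilde{\mathcal L}$.

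The substance, and the main obstacle, is the converse. Equality of all graph contractions says exactly that the tuples $(M_F^{(1)},\dots,M_F^{(n-1)})$ and $(M_{\widetilde F}^{(1)},\dots,M_{\widetilde F}^{(n-1)})$ are indistinguishable by any invariant realizable by a simple graph, and I must deduce they lie in a common $\mathrm O(d)$-orbit. The natural engine is the first fundamental theorem of invariant theory for $\mathrm O(d)$ --- every polynomial invariant is a polynomial in complete contractions by the Euclidean metric --- together with the fact that, $\mathrm O(d)$ being compact, its orbits are closed, so invariants separate orbits. The difficulty is that the available contractions are restricted: the diagonal Gram entries $X_i^\top X_i$ are never observed and each vertex pair carries at most one edge, so only \emph{self-loop-free, simple-edge} contractions arise, a strict subfamily of all complete contractions. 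I would show this subfamily still separates orbits, using two structural facts: every admissible contraction has even total degree $\sum_i\deg_S(i)=2|S|$, matching the parity of the $\mathrm O(d)$-action, and the ostensibly missing low-order invariants are recoverable --- for instance cycles of length $k$ yield $\tr(\Sigma^k)$, $k\ge 3$, with $\Sigma=M_F^{(2)}$, and since $\Sigma$ has exactly $d$ eigenvalues these power sums pin down $\tr(\Sigma)$ and $\tr(\Sigma^2)$ as well. Concretely I would attach tree and cycle gadgets to a central vertex --- a pendant path of length $p$ contributes the vector $\Sigma^{p-1}m$, with $m=\mathbb E_F X_1$ --- to reconstruct the projected moments $\mathbb E_F\{(v^\top X_1)^k\}$ along a spanning family of directions $v$ and then assemble a single $U$; the delicate case to dispatch is when the Krylov span of $(m,\Sigma)$ is a proper $\Sigma$-invariant subspace, where the argument must be completed on its orthogonal complement using the higher-order tensors.
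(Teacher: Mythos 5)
Your ``if'' direction is complete and correct, and your reformulation of $\mathcal{L}=\widetilde{\mathcal{L}}$ as equality of the graph moments $\mathbb{E}_F\{\prod_{(i,j)\in S}X_i^{\top}X_j\}$ over simple graphs $S$ (via inclusion--exclusion and conditional edge-independence) is a clean equivalent of the paper's starting point, which instead enumerates the atom probabilities $\mathbb{P}\{A=v\}$ directly. The problem is the converse, which is the entire substance of the lemma, and there your text is a plan rather than a proof: the central claim --- that the restricted family of invariants realizable by loop-free, multi-edge-free graphs on $n$ vertices separates $\mathrm{O}(d)$-orbits of the tuple of moment tensors $(M_F^{(1)},\ldots,M_F^{(n-1)})$ --- is exactly what must be established, and you explicitly leave it unproved (``I would show this subfamily still separates orbits'', ``the delicate case to dispatch is when the Krylov span\ldots''). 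The first fundamental theorem for $\mathrm{O}(d)$ together with orbit-closedness of compact group actions gives separation by \emph{all} complete contractions; it says nothing about your strict subfamily, so the gap is not a technicality. Moreover, two of your intermediate claims are themselves unjustified under the lemma's hypotheses: a cycle of length $k$ requires $k\leq n$ vertices, so you only ever observe $\mathrm{tr}(\Sigma^k)$ for $3\leq k\leq n$, and when $d$ is not small relative to $n$ these cannot ``pin down $\mathrm{tr}(\Sigma)$ and $\mathrm{tr}(\Sigma^2)$'' (one cannot recover the spectrum of a $d\times d$ matrix from fewer than $d$ power sums, let alone from a list missing $p_1$ and $p_2$); and the pendant-path gadgets only reach the Krylov space of $(m,\Sigma)$, which is precisely the case you concede you cannot finish.

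For comparison, the paper proves the converse by an explicit downward induction on moment order, with no invariant theory: it evaluates $\mathbb{P}\{A=v\}$ for $v$ equal to all ones except a star of $k$ zero-edges at a single vertex, for $k=0,1,\ldots,n-2$, obtaining an identity in which the order-$(n-k-1)$ moments enter linearly with coefficients built from the higher-order moments already identified at earlier steps of the induction, so the moments are recovered one order at a time down from $n-1$. The vertex- and degree-budget constraints that obstruct your approach are automatically respected there, because the test patterns are themselves graphs on $n$ vertices. To salvage your route you would need to prove the orbit-separation statement for simple-graph contractions with all tensor orders at most $n-1$; that statement is not supplied by the cited general theory and does not appear to be any easier than the lemma itself.
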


It can be seen from  Lemma \ref{lem-dist-moment} that the unconditional distribution of the data matrix is determined by the first $n-1$ moments of the underlying distribution $F$.  Unfortunately, without additional assumptions, the first $n-1$ moments do not determine the distribution \citep[e.g.][]{heyde1963property}\footnote{We are grateful to Richard J. Samworth for this reference and constructive discussions.}.  This means that only assuming \eqref{eq-jump-define} can not guarantee that the data matrices $A$  and $\tilde{A}$ have different distributions.  

The final claim we make in this subsection is that under some additional but weak conditions, we will be able to guarantee that  $\mathcal{L} \neq \widetilde{\mathcal{L}}$.

\begin{assumption}\label{assume-comp}
	Under \Cref{mod-comp}, let 
	\[
	\kappa_0 = \sup_{z \in [0, 1]} |G(z) - \widetilde{G}(z)|.
	\]
	It holds that
	\[
	\kappa_0\sqrt{n} > 3 \sqrt{\log(n)}.
	\]	
\end{assumption}

\begin{lemma}\label{lem-one-network}
	Assume that \Cref{mod-comp} and \Cref{assume-comp} hold. Then we have that 
	\[
	\mathcal{L} \neq \widetilde{\mathcal{L}}.
	\]
\end{lemma}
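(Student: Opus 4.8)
The plan is to argue by contradiction: assume $\mathcal{L} = \widetilde{\mathcal{L}}$ and build a statistic, computable from the adjacency matrix alone, that concentrates near $G(z_0)$ under one model and near $\widetilde{G}(z_0)$ under the other at a well-chosen threshold $z_0$. Run \Cref{algorithm:PCA} to obtain $\widehat{X} = \mathrm{ScaledPCA}(A, d)$, set $\widehat{Y}_{ij} = \widehat{X}_i^{\top}\widehat{X}_j$, and define
\[ S(A) = \frac{2}{n}\sum_{(i,j)\in\mathcal{O}} \mathbbm{1}\{\widehat{Y}_{ij}\le z_0\}. \]
Because $\widehat{Y}_{ij}$ is invariant to a common orthogonal rotation of the rows of $\widehat{X}$, the map $A\mapsto S(A)$ is a fixed measurable function applied identically under both models. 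Writing $\kappa_0' = |G(z_0) - \widetilde{G}(z_0)|$ and $E = \{|S - G(z_0)| < \kappa_0'/2\}$, $\widetilde{E} = \{|S - \widetilde{G}(z_0)| < \kappa_0'/2\}$, I would prove $\mathbb{P}_{\mathcal{L}}(E) > 1/2$ and $\mathbb{P}_{\widetilde{\mathcal{L}}}(\widetilde{E}) > 1/2$. If $\mathcal{L} = \widetilde{\mathcal{L}}$, then $\mathbb{P}_{\mathcal{L}}(\widetilde{E}) = \mathbb{P}_{\widetilde{\mathcal{L}}}(\widetilde{E}) > 1/2$, so $E\cap\widetilde{E}$ has positive probability; on that event the triangle inequality forces $\kappa_0' < \kappa_0'$, a contradiction.

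The threshold is chosen to make $\kappa_0'$ close to $\kappa_0 = \sup_z|G(z) - \widetilde{G}(z)|$. Since $G - \widetilde{G}$ is right-continuous with left limits and the jump points of $G$ and of $\widetilde{G}$ are countable, I can approach from the right a near-maximiser of $|G - \widetilde{G}|$ through common continuity points of $G$ and $\widetilde{G}$, obtaining such a $z_0$ with $\kappa_0'$ as close to $\kappa_0$ as desired; in particular $\kappa_0'/2$ still exceeds the fluctuation scale $\sqrt{\log(n)/n}$ thanks to the factor $3$ in \Cref{assume-comp}. For the stochastic part, the pairs in $\mathcal{O}$ use pairwise-disjoint node indices, so $\{\mathbbm{1}\{Y_{ij}\le z_0\}: (i,j)\in\mathcal{O}\}$ are $n/2$ i.i.d. $\mathrm{Bernoulli}(G(z_0))$ variables; Hoeffding's inequality places their average $\frac{2}{n}\sum_{(i,j)\in\mathcal{O}}\mathbbm{1}\{Y_{ij}\le z_0\}$ within $\kappa_0'/4$ of $G(z_0)$ with probability tending to one.

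It remains to pass from the true inner products $Y_{ij}$ to their estimates inside the indicators. On the good embedding event $\{\max_{(i,j)\in\mathcal{O}}|\widehat{Y}_{ij} - Y_{ij}| \le \delta_n\}$ one has
\[ \Bigl|S(A) - \frac{2}{n}\sum_{(i,j)\in\mathcal{O}}\mathbbm{1}\{Y_{ij}\le z_0\}\Bigr| \le \frac{2}{n}\#\{(i,j)\in\mathcal{O}: |Y_{ij} - z_0| \le \delta_n\}, \]
whose mean is $G(z_0 + \delta_n) - G(z_0 - \delta_n)$ and which therefore concentrates around a quantity that vanishes because $z_0$ is a continuity point of $G$. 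To control the probability of the good embedding event and the size $\delta_n$, I would invoke the row-wise (two-to-infinity) consistency of adjacency spectral embedding, standard in the random dot product graph literature and sharpened later in this paper: aligning $\widehat{X}$ to $X$ by the optimal orthogonal $W$, using $\widehat{Y}_{ij} = (\widehat{X}_iW)^{\top}(\widehat{X}_jW)$ and the boundedness of the latent positions from \Cref{assume:model-rdpg}(4), one obtains $\delta_n \to 0$ with high probability. Combining the two displays puts $S$ within $\kappa_0'/2$ of $G(z_0)$ on $E$ with probability exceeding $1/2$; the identical argument under $\widetilde{\mathcal{L}}$ handles $\widetilde{E}$, delivering the two events and hence the contradiction.

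The main obstacle is the estimation step, and specifically the discontinuity of $z\mapsto\mathbbm{1}\{\cdot\le z\}$: a uniformly small perturbation $\delta_n$ of the arguments does not give a uniformly small change of the empirical distribution function. In the stochastic block model, where $G$ is a step function, evaluating at an atom can misclassify a constant fraction of the pairs whose estimated inner products straddle it, so $\sup_z|\,\frac{2}{n}\sum_{(i,j)\in\mathcal{O}}\mathbbm{1}\{\widehat{Y}_{ij}\le z\} - G(z)|$ does not tend to zero. This is precisely why the argument must be anchored at a continuity point rather than phrased in the Kolmogorov--Smirnov norm against the truth, and why the quantity $G(z_0+\delta_n) - G(z_0-\delta_n)$, rather than $\delta_n$ itself, is what must be driven below $\kappa_0'/4$: in the discrete case by choosing $z_0$ farther than $\delta_n$ from every atom, so the term is exactly zero, and in the continuous case by controlling the local oscillation of $G$ against $\kappa_0$. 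Verifying that the embedding error $\delta_n$ is of smaller or comparable order to $\sqrt{\log(n)/n}$, so that the factor $3$ in \Cref{assume-comp} supplies the needed slack, is where the real work lies.
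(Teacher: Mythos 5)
Your proposal takes a genuinely different route from the paper, and it contains a gap that cannot be closed under the lemma's hypotheses. The skeleton is sound: an $A$-measurable statistic $S$, concentration of $S$ near $G(z_0)$ under $\mathcal{L}$ and near $\widetilde{G}(z_0)$ under $\widetilde{\mathcal{L}}$, and intersection of two probability-greater-than-$1/2$ events under the assumed equality of laws. What fails is exactly the step you defer to ``the real work''. First, \Cref{mod-comp} and \Cref{assume-comp} say nothing about the local behaviour of $G$ or $\widetilde{G}$, so no bound on $G(z_0+\delta_n)-G(z_0-\delta_n)$ is available: for instance, let $F$ place the inner products in an interval of width far smaller than $\delta_n$ and let $\widetilde{F}$ produce an atom at distance less than $2\delta_n$ from that interval. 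Then every near-maximiser $z_0$ of $|G-\widetilde{G}|$ lies in a gap of width less than $2\delta_n$ between the support of $G$ and the atom of $\widetilde{G}$, so every window $[z_0-\delta_n, z_0+\delta_n]$ swallows an order-one mass of one of the two distributions; there is no admissible $z_0$ at all, even though the lemma's conclusion still holds in this example (the means of $Y$ and $\widetilde{Y}$ already differ). So the missing step is not merely unproved; it is false in the generality needed. Second, the quantitative hope that ``the factor $3$ in \Cref{assume-comp} supplies the needed slack'' is also wrong: the row-wise error this paper actually proves (\Cref{lem-thm8-carey}) is $\delta_n \asymp C_W\max\{\sqrt{d\log(n\vee T)},\,d^{3/2}\}/\sqrt{n}$, which exceeds the slack $\kappa_0/4 \gtrsim \sqrt{\log(n)/n}$ by at least a factor $\sqrt{d}$ and an unspecified constant. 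Third, you cannot even invoke that lemma here: it is proved under \Cref{assume:model-rdpg}, whose part (4) keeps the eigenvalues of $\mathbb{E}\{X_1X_1^{\top}\}$ bounded away from zero, a condition that \Cref{mod-comp} does not impose; without it, no two-to-infinity consistency of the spectral embedding is available.

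The paper's proof avoids estimation altogether, which is why it is half a page. It evaluates at the exact point $z_* \in \argsup_{z\in[0,1]}|G(z)-\widetilde{G}(z)|$ (no continuity-point selection is needed) and applies Hoeffding's inequality directly to the latent indicators $\mathbbm{1}\{Y_{ij}\le z_*\}$ and $\mathbbm{1}\{\widetilde{Y}_{ij}\le z_*\}$, $(i,j)\in\mathcal{O}$: with probability at least $1-2n^{-4}$ the quantity $\sqrt{2/n}\sum_{(i,j)\in\mathcal{O}}\bigl(\mathbbm{1}\{Y_{ij}\le z_*\}-\mathbbm{1}\{\widetilde{Y}_{ij}\le z_*\}\bigr)$ is at least $\kappa_0\sqrt{n/2}-2\sqrt{\log(n)}$ in absolute value, whereas under $\mathcal{L}=\widetilde{\mathcal{L}}$ --- which the paper reads as centering these differences, i.e.\ as forcing $G(z_*)=\widetilde{G}(z_*)$ --- a second application of Hoeffding bounds the same quantity by $\sqrt{\log(n)}$ with the same probability; \Cref{assume-comp} makes the two events incompatible. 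Your insistence that the distinguishing statistic be computable from $A$ alone is presumably meant to avoid identifying $\mathcal{L}=\widetilde{\mathcal{L}}$ with $G=\widetilde{G}$, and that scruple is legitimate, but it is precisely what forces the spectral-embedding detour and, with it, the unclosable oscillation problem. To repair your argument you would need additional regularity or atom-separation assumptions on $G$ and $\widetilde{G}$ that the lemma does not grant; otherwise the contradiction has to be run at the latent level, as the paper does.
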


Lemma \ref{lem-one-network} suggests that under \Cref{assume-comp}, $G \neq \widetilde{G}$ implies $\mathcal{L} \neq \widetilde{\mathcal{L}}$.  This enhances the rationale of imposing the distributional changes occurring at the change points on the differences on $G$, as detailed in \Cref{assume:model-rdpg}(4).  \Cref{assume-comp} is a weak assumption, which will be further elaborated in \Cref{sec:main_theorem}.  The proofs of Lemmas~\ref{lem-dist-moment} and \ref{lem-one-network} are collected in \Cref{sec-app-jumps}.

\subsection{Consistent estimation of change points}
\label{sec:main_theorem}

We first state a signal-to-noise ratio condition below.

\begin{assumption}[Signal-to-noise ratio]\label{assume-snr}
There exists a universal constant $C_{\mathrm{SNR}} > 0$, such that there exists a diverging sequence $a_T \to \infty$, as $T \to \infty$, satisfying
	\[
		\kappa\sqrt{\Delta n (1 - \rho)} >  C_{\mathrm{SNR}} \sqrt{T} \max\{\sqrt{d\log(n \vee T)}, \, d^{3/2}\} + a_T.
	\]
\end{assumption}

To better understand \Cref{assume-snr}, we would like to use Assumptions 2 and 3 in \cite{wang2018optimal} as benchmarks, since \cite{wang2018optimal} studied a simpler problem assuming independence within and across networks, and showed a phase transition phenomenon in the minimax sense.  However, we would like to emphasize that comparing \Cref{assume-snr} and Assumptions 2 and 3 in \cite{wang2018optimal} is comparing apples and oranges, to some extent.   Even though the jump size $\kappa$ are defined differently in these two papers, both take values in $(0, 1]$.  The parameter $\rho$ in this paper indicates the correlation between networks, while the parameter $\rho$ in \cite{wang2018optimal} represents the entrywise sparsity.  For simplicity, we let $\rho = 1$ in \cite{wang2018optimal} for this discussion.	

One key difference is that in \Cref{assume-snr}, the required signal-to-noise ratio is inflated by $\sqrt{1-\rho}$.  We might view  this as the effective  sample size  being   shrunk from $\Delta$ to $(1-\rho)\Delta$, due to the dependence across time.  In \Cref{assume:model-rdpg}, we do not allow $\rho = 1$, but allow $\rho \to 1$, as long as \Cref{assume-snr} holds. In the extreme case that $\rho = 1$, between two consecutive change points, there is essentially only one observation.  As long as \Cref{assume-comp} holds, Lemma \ref{lem-one-network} shows that the distributions of the adjacency matrices before and after change points are different, which implies that one can identify the change points with probability 1.

Another difference is that in our paper, the signal-to-noise ratio is inflated by $\sqrt{T}$ compared to \cite{wang2018optimal}.  This is due to the fact that we estimate the latent positions separately for every single network, while the graphons were estimated based on a version of sample average of the adjacency matrices in \cite{wang2018optimal}.  The reason we estimate the positions separately roots in the difficulty of deriving theoretical properties of eigenvectors of a sample average matrices.  More discussions on this can be found in \Cref{sec-possible-extensions}.

We allow the dimensionality $d$ to diverge, provided that \Cref{assume-snr} holds.  The dimensionality $d$ is essentially the low rank condition imposed in \cite{wang2018optimal}.  The upper bound on the rank $r$ in \cite{wang2018optimal} comes into play with the term $\sqrt{r}$, while we have $d^{3/2}$ here.  The difference again is rooted in the estimation of the latent positions, although we do not claim optimality here.  

The sequence $a_T$ can diverge at any arbitrarily slow rate.  We will explain the role of $a_T$ after we state \Cref{thm-main}.

Finally, we make connections between Assumptions~\ref{assume-comp} and \ref{assume-snr}.  Recall that we use \Cref{assume-comp} in Lemma \ref{lem-one-network}, where only one observation is available for each distribution, i.e.~$\Delta = 1$, $\rho = 0$ and $T = 2$.  Ignoring the universal constants, the only difference left between Assumptions~\ref{assume-comp} and \ref{assume-snr} is the term $d^{3/2}$.  Of course, if $d = O(1)$, then this is also a universal constant, and there is no difference left.  The interesting thing happens when $d$ is allowed to diverge faster than the poly-logarithm term.  \Cref{assume-comp} is required to differentiate two different distributions, which roughly speaking is related to a testing task; while \Cref{assume-snr} is used below in \Cref{thm-main} with the purpose of consistent localization, which is an estimation problem.  To this end, the extra $d^{3/2}$ in \Cref{assume-snr} is a piece of evidence that estimation is a harder problem than a testing one.

\begin{theorem}\label{thm-main}
	Let data be from \Cref{assume:model-rdpg} and satisfy \Cref{assume-snr}. Assume the following.
	\begin{itemize}
	\item The tuning parameter $\tau$ in \Cref{algorithm:WBS} satisfies	
		\begin{equation}\label{eq-tau-cond}
			c_{\tau, 1} T^{1/2} \max\{\sqrt{d\log(n \vee T)}, \, d^{3/2}\} < \tau < c_{\tau, 2} \kappa\sqrt{\Delta n (1 - \rho)},
		\end{equation}
		where $c_{\tau, 1}, c_{\tau,2} > 0$ are universal constants depending on all the universal constants in \Cref{assume:model-rdpg} and \Cref{assume-snr}.
	\item The tuning parameter $d$ in Algorithms~\ref{algorithm:PCA} and \ref{algorithm:WBS} are the true dimension $d$ of the latent positions. 
	\item The intervals satisfy 
		\begin{equation}\label{eq-thm1-statement-cr}
			\max_{m = 1, \ldots, M} (\alpha_m - \beta_m) \leq C_R \Delta,
		\end{equation}
		where $C_R > 3/2$ is a universal constant.
	\end{itemize}
	
Let $\{\widehat{\eta}_k\}_{k = 1}^{\widehat{K}}$ be the output of \Cref{algorithm:WBS}.  We have that
	\begin{align*}
		& \mathbb{P}\left\{\widehat{K} = K, \quad |\widehat{\eta}_k - \eta_k| \leq C_{\epsilon} \frac{T \max\{d\log(n \vee T), \, d^3\}}{\kappa_k^2 n (1 - \rho)}, \, \forall k \right\} \\
		& \hspace{2cm}\geq 1 - C(n \vee T)^{-c} - CTe^{-n} - \exp\left(\log(T/\Delta) - (4C_R)^{-1}T^{-1}M\Delta\right),
	\end{align*}
	where $C, c >0$ are universal constants depending only on the other universal constants.
\end{theorem}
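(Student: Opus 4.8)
The plan is to follow the wild binary segmentation scheme of \cite{fryzlewicz2014wild} and, more directly, Algorithm 2 of \cite{padilla2019optimal}, but layered on top of two complications absent there: the latent positions are only estimated (via \Cref{algorithm:PCA}) rather than observed, and the within-epoch observations are temporally dependent through $\rho$. I would first match the three terms in the probability bound to three sources of failure. The term $\exp(\log(T/\Delta)-(4C_R)^{-1}T^{-1}M\Delta)$ is purely a statement about the random intervals: for each true change point one needs at least one drawn interval $(\alpha_m,\beta_m)$ that straddles it with both halves of length $\gtrsim\Delta$, and the standard counting argument of \cite{fryzlewicz2014wild} together with \eqref{eq-thm1-statement-cr} bounds the probability that some change point is missed by this expression. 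The term $CTe^{-n}$ I would charge to the event that the rank condition fails for some network, i.e.\ that $X(t)^\top X(t)/n$ deviates from $\Sigma_k$; \Cref{assume:model-rdpg}(4) plus a matrix concentration inequality and a union bound over $t\in\{1,\ldots,T\}$ control this. Everything else is absorbed into the $C(n\vee T)^{-c}$ term, and on the complement of the first two events I would work on a deterministic good event.

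The analytic core is a uniform comparison of the sample CUSUM $D^t_{s,e}$ with its population version $\widetilde D^t_{s,e}$ over all triplets the recursion can visit. I would write
\[
D^t_{s,e}(z)-\widetilde D^t_{s,e}(z)
\]
as an estimation term, swapping $\mathbbm 1\{\widehat Y^k_{ij}\le z\}$ for $\mathbbm 1\{Y^k_{ij}\le z\}$, plus a stochastic term, swapping $\mathbbm 1\{Y^k_{ij}\le z\}$ for its expectation. For the estimation term I would invoke the refined adjacency-spectral-embedding bound advertised in the introduction to obtain, with probability $1-(n\vee T)^{-c}$ after a union bound over the $T$ networks, a uniform control on $\max_{k,(i,j)}|\widehat Y^k_{ij}-Y^k_{ij}|$; the powers of $d$ in this bound are exactly what produces the $\max\{\sqrt{d\log(n\vee T)},d^{3/2}\}$ factor, and the fact that each network is embedded separately is what inflates the noise by $\sqrt T$ relative to the averaged estimator of \cite{wang2018optimal}. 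Propagating this through the discontinuous indicator is delicate, since $\mathbbm 1\{\widehat Y\le z\}-\mathbbm 1\{Y\le z\}$ is nonzero only when $z$ falls between $\widehat Y$ and $Y$; I would bound the contribution of such straddling pairs uniformly in $z$ using a Lipschitz control on the $G_{\eta_k}$. For the stochastic term I would exploit that the choice $\mathcal O=\{(l,n/2+l)\}$ makes the $n/2$ summands independent within each network, and then handle the dependence across $k$ by a blocking argument using that each $X_i(t)$ is a fresh i.i.d.\ draw from $F_{\eta_k}$ with probability $1-\rho$; this is what shrinks the effective sample size from $\Delta$ to $(1-\rho)\Delta$ and injects $\sqrt{1-\rho}$ into the signal-to-noise bookkeeping. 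A Kolmogorov--Smirnov / Dvoretzky--Kiefer--Wolfowitz argument then converts the pointwise control into the supremum over $z$.

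With this in hand I would establish the matching signal lower bound: on any drawn interval that straddles a true $\eta_k$ with $\min\{\eta_k-s_m,e_m-\eta_k\}\gtrsim\Delta$, the (effective) population CUSUM is $\gtrsim\kappa\sqrt{\Delta n(1-\rho)}$, so that a threshold $\tau$ chosen as in \eqref{eq-tau-cond} sits strictly between the noise floor and the signal. Consistency ($\widehat K=K$) and localization then follow the standard WBS induction over the recursion tree exactly as in \cite{padilla2019optimal}: on the good event every genuine change point forces $\max_t D^t_{s_m,e_m}>\tau$ and is detected, no change-point-free interval produces a CUSUM exceeding $\tau$, and the declared $b_{m^*}$ lies near a true $\eta_k$. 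The localization radius comes from the usual quadratic trade-off: an offset $\delta$ from $\eta_k$ reduces the signal to order $\kappa_k\sqrt{\delta\, n(1-\rho)}$, and setting this equal to the noise level $\sqrt T\max\{\sqrt{d\log(n\vee T)},d^{3/2}\}$ and solving for $\delta$ returns $T\max\{d\log(n\vee T),d^3\}/(\kappa_k^2 n(1-\rho))$.

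I expect the main obstacle to be the uniform CUSUM comparison of the second paragraph, and within it the interplay between the \emph{estimated} argument of the empirical process and the temporal dependence. Unlike the observed-data nonparametric CUSUM of \cite{padilla2019optimal}, here $\sup_z|D^t_{s,e}(z)-\widetilde D^t_{s,e}(z)|$ is an empirical process evaluated at $\widehat Y$ rather than $Y$, so no off-the-shelf bound applies; one must simultaneously control the straddling-indicator contribution uniformly in $z$ (which forces the regularity of $G_{\eta_k}$) and run the blocking argument that keeps the $(1-\rho)$ accounting correct, all uniformly over the $O(T^2)$ candidate triplets while holding the failure probability at $(n\vee T)^{-c}$. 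Reconciling these two demands is the crux of the argument.
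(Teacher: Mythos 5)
Your proposal is correct and follows essentially the same route as the paper: the same decomposition of $D^t_{s,e}-\widetilde D^t_{s,e}$ into an embedding-error term (controlled via the uniform two-to-infinity bound of Lemma \ref{lem-thm8-carey}, propagated through the indicators by straddling events and Lipschitz control of the CDFs, exactly as in Lemmas \ref{lem-large-prob-1} and \ref{lem-large-prob}) and a dependent empirical-process term, followed by the same signal/noise sandwich around $\tau$ and the WBS induction of \cite{padilla2019optimal}, with the random-interval term handled exactly as you describe. The only deviations are at the level of tools or labels: the paper handles the temporal dependence with a weakly dependent Bernstein inequality (Lemma \ref{thm-delvon}, from \cite{delyon2009exponential}) rather than your blocking/regeneration argument, obtains uniformity in $z$ by a $\delta$-grid tied to the embedding error rather than a DKW-type bound, and the $CTe^{-n}$ term actually arises from the operator-norm concentration $\max_t\|A(t)-P_t\|_{\mathrm{op}}\lesssim\sqrt n$ (event $\mathcal{E}_3$ in Lemma \ref{lem-three-e}) feeding into the embedding bound, not from the Gram-matrix/rank event, which only costs $(n\vee T)^{-c}$.
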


The proof of \Cref{thm-main} can be found in \Cref{sec-app-main}, following two sets of lemmas -- technical details on estimating the latent positions and on change point analysis, collected in Appendices~\ref{sec-app-prop} and \ref{sec-app-cpd}, respectively.

Suppose that $Te^{-n} \to 0$ and that $M$ satisfies
	\begin{equation} \label{choice_of_m}
	\frac{T/\Delta \log\left(T/\Delta\right)}{M} \to 0.
	\end{equation}
Then	it can be seen from \Cref{thm-main} that with probability tending to 1, as $T$ diverges, we have that $\widehat{K} = K$ and 
	\begin{align*}
		\max_{k = 1, \ldots, K}\frac{|\widehat{\eta}_k - \eta_k|}{\Delta} \leq \max_{k = 1, \ldots, K}  C_{\epsilon} \frac{T \max\{d\log(n \vee T), \, d^3\}}{\Delta \kappa_k^2 n (1 - \rho)}  \leq  C_{\epsilon} \frac{T \max\{d\log(n \vee T), \, d^3\}}{\Delta \kappa^2 n (1 - \rho)} \to 0,
	\end{align*}
	where the second inequality follows from the definition of $\kappa$ and the convergence follows from \Cref{assume-snr}, with the aid of an arbitrarily diverging sequence $a_T$.  This implies that the change point estimators we obtain are consistent, with a vanishing localization rate.  Since the quantity $M$ only appears in the probability, we remark that the larger $M$ is, the more likely that our estimators would perform satisfactorily, while the higher the computational cost is.
	
\Cref{algorithm:WBS} in fact can handle networks of varying size. For instance, if we do not allow for the dependence across time, then \Cref{thm-main} holds provided that all network sizes are of the same order, which amounts to $c_1 n \leq n_t \leq c_2 n$, $t = 1, \ldots, T$, for universal constants $c_1, c_2 > 0$.

In view of \Cref{thm-main}, the two most important tuning parameters in \Cref{algorithm:WBS} are the threshold $\tau$ and the dimension $d$.

The threshold $\tau$ is set to satisfy \eqref{eq-tau-cond}.  The upper and lower bounds in \eqref{eq-tau-cond} are the lower bound on the signals and the upper bound on the noise, in a large probability event, respectively.  If the input $\tau$ is larger than the upper bound in \eqref{eq-tau-cond}, then one may not be able to detect all change points; while if the input is smaller than the lower bound, then there is the risk of falsely detecting change points.  Note that both the upper and lower bounds involve unknown constants.  We will discuss the practical guidance in choosing $\tau$ in \Cref{sec:experiments}.
	
In \Cref{thm-main}, we assume that the input $d$ should be the true dimension.  This is a seemingly strong condition.  We would like to comment on this from a few different angles.
	\begin{itemize}
	\item In the context of stochastic block models, which are simpler than the RDPG models, the parameter $d$ is a lower bound on the number of communities.  To estimate the number of communities in a stochastic block model is yet open, despite a tremendous amount of efforts \citep[e.g.][]{bickel2016hypothesis, lei2016goodness, chen2018network, li2016network, saldana2017many}.  We do not intend to propose a method to estimate the dimensionality here, but in practice, one could resort to the aforementioned papers.  
	\item Without a theoretically-justified method to estimate $d$, we need to discuss on the potential misspecification.  If one overestimates $d$, i.e.~with an input $d_1 > d$, then our method can still consistently estimate the change points under \Cref{assume-snr}, with a sufficiently large constant $C_{\mathrm{SNR}}$.  This is due to the fact our statistic is a function of inner products of latent position estimators.  Overestimating $d$ will only add extra noise which is in fact of the same order of the noise introduced when estimating the latent positions with true dimension $d$.  
	\item Another possible misspecification is underestimating the dimension $d$, i.e.~the input of the algorithms is $d_2 < d$.  This is a more damning issue than overestimating $d$, however it does not necessarily lead to inconsistent change point estimators.  Now we assume a toy example where the true dimension $d = 3$.  Recall the definition on the jump size $\kappa$ that
		\begin{align*}
			\kappa & = \min_{k = 1, \ldots, K} \sup_{z \in [0, 1]} |G_{\eta_k}(z) - G_{\eta_{k-1}}(z)| \\
			& = \min_{k = 1, \ldots, K} \sup_{z \in [0, 1]} \left|\mathbb{P}_{\eta_k}\left\{X^{\top} Y \leq z\right\} - \mathbb{P}_{\eta_{k-1}}\left\{X^{\top} Y \leq z\right\}\right| \\
			& = \min_{k = 1, \ldots, K} \sup_{z \in [0, 1]} \left|\mathbb{P}_{\eta_k}\left\{ \sum_{i = 1}^3 X_i Y_i \leq z\right\} - \mathbb{P}_{\eta_{k-1}}\left\{\sum_{i = 1}^3 X_i Y_i \leq z\right\}\right|.
		\end{align*}
		If we underestimate $d$ and we miss out the third dimension, our \textit{de facto} jump size becomes
		\[
			\kappa_1 = \min_{k = 1, \ldots, K} \sup_{z \in [0, 1]} \left|\mathbb{P}_{\eta_k}\left\{ \sum_{i = 1}^2 X_i Y_i \leq z\right\} - \mathbb{P}_{\eta_{k-1}}\left\{\sum_{i = 1}^2 X_i Y_i \leq z\right\}\right|.
		\]
		Provided that the signal-to-noise ratio condition holds for $\kappa_1$, i.e.
		\[
			\kappa_1\sqrt{\Delta n (1 - \rho)} >  C_{\mathrm{SNR}} \sqrt{T} \max\{\sqrt{\log(n \vee T)}, \, d^{3/2}\} + a_T,
		\]
		with the notation defined in \Cref{assume-snr}, \Cref{thm-main} still holds.  In general, underestimating the dimension $d$ decreases the true jump sizes at the change points.  Identical arguments to those in \Cref{thm-main} can lead to consistent detection of change points, whose decreased jump sizes satisfy the signal-to-noise ratio condition \Cref{assume-snr}.
	\end{itemize}
	
On a different note, without assuming \eqref{eq-thm1-statement-cr}, and using the trivial bound $C_R \leq T/\Delta$, it can be shown that we will achieve a larger localization error. The resulting rate inflates  that  of \Cref{thm-main} by a factor of polynomials of $T/\Delta$.  

\subsection{Possible extensions}\label{sec-possible-extensions}

There are three aspects of the methods proposed in \Cref{sec-methods} that might not seem to be satisfactory at first sight.  In this subsection, we discuss possible extensions.  Readers who are not familiar with the area, may safely skip this subsection during the first time reading.

\medskip
\noindent \textbf{From dense to sparse networks}

The networks we are dealing with in this paper are dense, i.e.~the average degrees are of order of the network size.  In order to allow for sparse networks, one might wish to replace \eqref{eq-def-rdpg-1} in Definition \ref{def-1} with the following
	\[
		\mathbb{P}\left\{A \mid X\right\} = \prod_{1 \leq i < j \leq n} (\alpha X_i^{\top}X_j)^{A_{ij}}(1 - \alpha X_i^{\top}X_j)^{1 - A_{ij}},
	\]
	where $\alpha = \alpha(n) \in (0, 1]$.  
	
If $\alpha$ is known, then one could simply replace the definition of $\widehat{X}$ in Definition \ref{def-cusum-3} with 
	\[
		\widehat{X}(t) =  \alpha^{-1/2} U_A(t)\Lambda_A(t)^{1/2}.
	\]
	The signal-to-noise ratio and the localization errors will change correspondingly by a polynomial factor of $\alpha$, following the identical derivations.  
	
If $\alpha$	is unknown but satisfies $\alpha n \gtrsim \log(n)$, then one could use graphon estimation methods, e.g.~the universal singular value thresholding (USVT) method \citep{chatterjee2015matrix}, to first produce a USVT estimator of each $P(t)$, namely $\widehat{A}(t)$.  The quantity $\widehat{X}(t)$ can be defined to be  
	\[
		\widehat{X}(t) =  U_{\widehat{A}}(t)\Lambda_{\widehat{A}}(t)^{1/2},
	\]
	and the rest of the algorithm remains the same.  The localization rate would change from 
	\[
		\frac{T \max\{d\log(n \vee T), \, d^3\}}{\kappa^2 n (1 - \rho)} \quad \mbox{to} \quad \frac{T \max\{\sqrt{n /\alpha}, \, d^3\}}{\kappa_k^2 n (1 - \rho)},
	\]
	by simply using Theorem~1 in \cite{xu2018rates} instead of Lemma \ref{lem-three-e} in controlling large probability events and following all the rest of our proofs.  The term $d\log(n \vee T)$ in the upper bound is due to the fact that conditional on the latent positions, the entries in the upper triangular matrix of $A(t)$'s are independent.  This is not true for the USVT estimator $\widehat{A}$, and therefore the difference between the two different rates is not merely multiplying a polynomial factor of the sparsity parameter $\alpha$. 

\medskip
\noindent \textbf{From individual estimation to a bulk estimation}

In \Cref{algorithm:WBS}, we estimate every individual network separately, which results in the polynomial dependence on $T$ in both the signal-to-noise ratio and the localization rate.  One natural question would be if there is a way to conduct \Cref{algorithm:PCA} to a bulk of adjacency matrices at once in order to improve the statistical accuracy and computational efficiency.

There are two possible extensions.  One is to use the omnibus embedding proposed in \cite{levin2017central2} and the other is to conduct \Cref{algorithm:PCA} to a sample average of the adjacency matrices.  Either way is suffered from the lack of some critical theoretical understanding.  When the bulk of adjacency matrices used to construct either the omnibus matrix or the sample average matrix, are not generated from the same set of latent positions, the behaviours of the sample eigenvectors remain unknown.      In change point detection, one needs to deal with intervals containing adjacency matrices coming from different latent positions.  Without knowing how the eigenvectors would behave, eigenvector-based change points detection methods would not be able utilize bulk of adjacency matrices.  In fact, this is also the reason that the methods proposed in \cite{CribbenYu2017} and \cite{LiuEtal2018} lack theoretical guarantees.

\medskip
\noindent \textbf{From using $n/2$ edges to all edges}

In \Cref{algorithm:WBS}, we only use $n/2$ out of $n(n-1)/2$ edges for technical convenience.  In our choice, all the edges are conditionally independent given the latent positions and the concentration inequalities are easier to handle.  

One extreme is to use all possible edges such that $D^t_{s, e}(z)$ defined in Definition \ref{def-cusum-3} is a $U$-statistic.  To be specific, \eqref{eq-cusum-def-ddd} is replaced by 
	\begin{align*}
		D^t_{s, e}(z) & = \Bigg|\sqrt{\frac{2(e-t)}{n(n-1)(e-s)(t-s)}}\sum_{k = s+1}^t \sum_{i = 1}^{n-1}\sum_{j = i+1}^n  \mathbbm{1}\{\widehat{Y}^k_{ij} \leq z\} \\
		& \hspace{2cm} - \sqrt{\frac{2(t-s)}{n(n-1)(e-s)(e-t)}}\sum_{k = t+1}^e \sum_{i = 1}^{n-1}\sum_{j = i+1}^n \mathbbm{1}\{\widehat{Y}^k_{ij} \leq z\}\Bigg|.
	\end{align*}
	Using the Hoeffding theorem \citep[Theorem~5.2 in][]{hoeiffding1948class}, we can see that in our cases, the variance of using all edges and that of only using $n/2$ edges are of the same order.  This means that using all edges will not improve the statistical accuracy (in terms of rates) but creates extra computational burden.

\medskip  		
\noindent \textbf{Generalised random dot product graph}

The random dot product graph models have a generalisation, namely generalised random dot product graph models \citep[GRDPG,][]{rubin2017statistical}.  Recalling \Cref{assume:model-rdpg}, GRDPG assumes that
	\[
		\mathbb{P}\left\{A(t) \mid X(t)\right\} = \prod_{1 \leq i < j \leq n} (X_i(t)^{\top}I_{p, q}X_j(t))^{A_{ij}(t)}(1 - X_i(t)^{\top} I_{p, q}X_j(t))^{1 - A_{ij}(t)},
	\]
	where $I_{p, q} = \mathrm{diag}(1, \ldots, 1, -1, \ldots, -1)$ with $p$ ones and $q$ minus ones.  We remark that the algorithms and theoretical results developed in this paper for RDPGs also hold for GRDPGs.

\section{Numerical Experiments} \label{sec:experiments}

\subsection{Simulations}\label{sec:simulations}

We now assess the performance of our proposed estimator NonPar-RDPG-CPD (\Cref{algorithm:WBS}) in different scenarios, and compare our results with those produced by the network binary segmentation (NBS) algorithm \citep{wang2018optimal} and the modified neighbourhood smoothing (MNBS) algorithm \citep{zhao2019change} \footnote{Code implementing our method can be found in \url{https://github.com/hernanmp/RDPG}.  The algorithms are now included in the R package \texttt{changepints} \citep{r-changepoint-package}.}.  The measurements we adopt are the absolute error $|\widehat{K} - K|$, where $\widehat{K}$ and $K$ are the numbers of the change point estimators and the true change points, respectively, and the one-sided Hausdorff distance defined as
	\[
		d(\widehat{\mathcal{C}}| \mathcal{C}) = \max_{\eta \in \mathcal{C} } \min_{ x\in \widehat{\mathcal{C}}   } |x - \eta|,
	\]
	where $\mathcal{C}$ is the set of true change points, and $\widehat{\mathcal{C}}$ is the set of estimated change points. We also consider the metric $d(\mathcal{C}|\widehat{\mathcal{C}})$.  For Hausdorff distances, we report the medians over 100 Monte Carlo simulations, and for $|\widehat{K} - K|$, we report the means over 100 Monte Carlos trials. By convention, if $\widehat{\mathcal{C}} = \emptyset$, we define $d(\widehat{\mathcal{C}}| \mathcal{C}) = \infty$ and $d(\mathcal{C}|\widehat{\mathcal{C}}) = -\infty$.

\textbf{Choice of tuning parameters.}  Recall that NonPar-RDPG-CPD involves three tuning parameters: (1) the threshold $\tau$ for declaring change points, (2) the number of random intervals $M$ and (3) the dimension of the embedding $d$.  
\begin{itemize}
\item [(1)]	 We choose $\tau$ based on the model selection criteria from \cite{zou2014nonparametric}.  To be specific, we stack all the $\widehat{Y}_{ij}^t$ into one matrix $\widehat{Z} \in \mathbb{R}^{T \times n/2}$.  For any given $\tau$, we let $\{\hat{\eta}_k, k = 0, \ldots, \widehat{K}(\tau)+1\}$ be the set of change points estimated by NonPar-RDPG-CPD, with $\hat{\eta}_0(\tau)=0$ and $\hat{\eta}_{\widehat{K}(\tau)+1}(\tau)=T$.  Define
	\begin{align*}
		\mathrm{BIC}_{\tau} & =  \sum_{l=1}^{n/2}  \Bigg[\sum_{k=0}^{ \hat{K}(\tau)  } \sum_{t= 2}^{T-1}  \{\hat{\eta}_{k+1}(\tau) -  \hat{\eta}_{k} (\tau)\}  \frac{\widehat{H}_{k tl}  \log (\widehat{H}_{ktl})  +  (1-\widehat{H}_{ktl}) \log (1- \hat{H}_{ktl}  ) }{ t(T-t)}  \\
		& \hspace{1cm}+ \xi \widehat{K}(\tau)  \Bigg],
	\end{align*}
	where $\widehat{H}_{ktl} =  \widehat{H}_{ \hat{\eta}_k(\tau) :  \hat{\eta}_{k+1}(\tau)  }^{ l   }(\widehat{Z} _{t,l} )$ with  $\widehat{H}_{ \hat{\eta}_k(\tau) :  \hat{\eta}_{k+1}(\tau)  }^{ l   }$ being the empirical cumulative distribution function of the observations  $\{ \widehat{Z}_{t,l}\}_{\hat{\eta}_k(\tau) \leq   t\leq \hat{\eta}_{k+1}(\tau) -1}$ and $\xi = \log^{2.1}(n)/5$. The metric $\text{BIC}_{\tau}$ is  constructed  by calculating along  each column of $\widehat{Z} $ the BIC-type scores defined in Equation (2.4) in \cite{zou2014nonparametric}, and then aggregating the scores to produce $\text{BIC}_{\tau}$. We select the model with $\tau$ that minimizes $   \text{BIC}_{\tau}$. 
\item [(2)] As for input representing the dimension of the latent positions $d$, we set $d = 10$ throughout this section, with varying $d$ scenarios discussed in \Cref{sec:sensitivity}.  In general, we find the procedure very robust to the choice of $d$ provided it is no smaller than the true dimension of the latent positions.  This supports our discussions on misspecification after \Cref{thm-main}. 	
\item [(3)] We also set $M = 120$, which is large enough for the various settings considered to perform well.
\end{itemize}

As for the competitor NBS, we follow the proposal by the authors in \cite{wang2018optimal} setting $\tau$ to be of order $n\log^2(T)$.  For the competitor MNBS, we use the default  choice of its tuning parameters with code generously provided by the authors of \cite{zhao2019change}.  
	To be specific, the scaling parameter for the threshold is set as $D_0 = 0.25$, the constant $B_0$ for the neighborhood size is chosen as $B_0 = 3$, the threshold size is set as $\delta_0 =0.1$ and the local window size is set as $h =\sqrt{T}$.

\textbf{Disclaimer}: We would like to emphasize that the comparisons to the competitors might not be fair, due to the fact that the tuning parameter choosing schemes in \cite{zhao2019change} and \cite{wang2018optimal} are not meant for dependent networks.

We construct four different models, in each of which, $T = 150$ and $K= 2$.  The locations of the change points are evenly spaced, giving rise to three  disjoint intervals $\mathcal{A}_1 = [1, 50]$, $\mathcal{A}_2 = [51, 100]$ and $\mathcal{A}_3 = [101,150]$.  As for the sizes of networks, we consider $n \in \{100, 200, 300\}$.

\paragraph{Scenario  1.} \textbf{Stochastic block models}.  We construct two matrices of probabilities, $P, Q \in \mathbb{R}^{n \times n}$.  The matrix $P$ satisfies 
	\[
		P_{i,j} = \begin{cases}
			0.5,  & i, j \in \mathcal{B}_l, \, l \in \{1, \ldots, 4\},\\
			0.3,  & \text{otherwise},
		\end{cases}
	\]	
	where $\mathcal{B}_1, \ldots, \mathcal{B}_4$ are evenly sized communities of nodes that form a partition of $\{1, \ldots, n\}$.  The matrix $Q$ satisfies
	\[
		Q_{i,j} = \begin{cases}
			0.45,  & i,j \in \mathcal{B}_l, \, l \in \{1, \ldots, 4\},\\
			0.2,  & \text{otherwise}.
		\end{cases}
	\]
	We then construct a sequence of matrices $\{E(t)\}_{t=1}^T \subset \mathbb{R}^{n \times n} $ such that
	\[
		E_{i, j}(t) = \begin{cases}
			P_{i,j},  & t\in \mathcal{A}_1 \cup \mathcal{A}_3,\\
			Q_{i,j},  & \text{otherwise},
		\end{cases}
	\]
	for every $i, j \in \{1, \ldots, n\}$.
 
The data are then generated with a correlation parameter $\rho \in \{0, 0.5, 0.9\}$.  Specifically, for any $\rho$, we have  $A_{i,j}(1) \sim \mathrm{Ber}(P_{i,j}(1))$, and between two consecutive change points,
	\[
		A_{i,j}(t+1) \sim \begin{cases}
			\mathrm{Ber}((1-  E_{i,j}(t+1)) \rho + E_{i,j}(t+1)), & A_{i,j}(t) = 1,\\
			\mathrm{Ber}((E_{i,j}(t+1))(1-\rho)),  & A_{i,j}(t)= 0,\\
		\end{cases}
	\] 
	for  $1 \leq i <  j \leq  n$.

\paragraph{Scenario 2.} We first generate 
	\[
		X_i(t) \stackrel{\mbox{ind}}{\sim} \text{Uniform}[0.2, 0.8],  \quad i = 1, \ldots, n, \, t \in \mathcal{A}_1 \cup  \mathcal{A}_3.
	\]
	Then for any $\varepsilon \in \{0.05, 0.15, 0.3\}$, we generate
	\[
		X_i(t) = \begin{cases}
			Z_i(t) + 0.2, &  i \in\{1, \ldots, \lfloor n \varepsilon \rfloor \}, \\
			Z_i(t), & \text{otherwise},
		\end{cases} 
	\]
	where  $Z_i(t) \stackrel{\mbox{ind}}{\sim} \text{Uniform}[0.2, 0.8]$ for $i \in \{1, \ldots, n\}$ and $t \in \mathcal{A}_2$. Then we generate $A_{i,j}(t) \sim \text{Ber}(X_i(t)X_j(t ))$. 

\paragraph{Scenario 3.} For $t \in \{1, 101\}$, we generate $Z_i(t) \stackrel{\mbox{ind}}{\sim} \mathcal{N}(0, I_3)$, and for  $t \in \mathcal{A}_1 \cup \mathcal{A}_3 \backslash \{1, 101\}$, we generate
	\[
		Z_i(t)\begin{cases}
		     	\stackrel{\mathrm{ind}}{\sim}	\mathcal{N}(0,I_3), & \text{with probability } 0.9,\\
		=	Z_i(t-1), & \text{with probability } 0.1. 
		\end{cases} 
	\]
	We then set
	\[
		P_{i,j}(t) = \frac{\exp\left\{Z_i(t)^{\top}Z_{j}(t)\right\}}{1 + \exp\left\{Z_i(t)^{\top}Z_{j}(t)\right\}}.
	\]
	Furthermore, we generate $P_{i,j}(51) \sim \text{Beta}(100, 100)$, and for $t \in \{52, \ldots, 100\}$ we generate
	\[
		P(t) \begin{cases}
			= P(t-1),  &  \text{with probability } 0.9,\\
			\sim \text{Beta}(100,100), & \text{with probability } 0.1.\\
		\end{cases}
	\]
	Once the mean matrices $\{P(t)\}_{t=1}^T \mathbb{R}^{n\times n}$ have been constructed, we independently draw $A_{i,j}(t) \sim \text{Ber}(P_{i, j}(t))$, for all $i, j \in \{1, \ldots, n\}$ and $t \in \{1, \ldots, T\}$.

\paragraph{Scenario 4.}  For $t \in \{1,101\}$ we generate $X_t \in \mathbb{R}^5$ as
	\[
		X_i(t) \sim  \text{Dirichlet}(1,1,1,1,1), 
	\]
	for all $i \in \{1, \ldots, n\}$.  Then for $t \in \mathcal{A}_1 \cup \mathcal{A}_3 \backslash \{1, 101\}$, 
	\[
		X_i(t) \begin{cases}
			= X_i(t-1),  & \text{with probability } 0.9, \\
			\sim \text{Dirichlet}(1,1,1,1,1) & \text{otherwise,}  \\
		\end{cases} 
	\]  
	for all $i \in \{1, \ldots, n\}$. We also have 
	\[
		X_i(51) \sim \begin{cases}
			\text{Dirichlet}(500,500,500,500,500), & i \in \{1,\ldots, \lfloor n \varepsilon \rfloor\}, \\
			\text{Dirichlet}(1,1,1,1,1), & i \in \{\lfloor n \varepsilon  \rfloor +1 , \ldots, n \}, \\
		\end{cases} 
	\]   
	and for $t \in \mathcal{A}_2 \backslash\{51\}$,
	\[
		X_i(t) \begin{cases}
			= X_i(t-1),  &  \text{with probability } 0.9,   \\
			\sim \text{Dirichlet}(500,500,500,500,500), & \text{with probability } 0.1\, \text{if } i \in \{1,\ldots, \lfloor n \varepsilon  \rfloor \}, \\
			\sim \text{Dirichlet}(1,1,1,1,1), & \text{with probability } 0.1, \,\text{if } i \in \{\lfloor n \varepsilon  \rfloor + 1, \ldots, n \}, \\
		\end{cases} 
	\]  
	for all $i \in \{1, \ldots, n\}$, where $\varepsilon \in \{0.05,0.15,0.3\}$.

\begin{figure}[t!]
	\begin{center}
		\includegraphics[width = 0.45\textwidth]{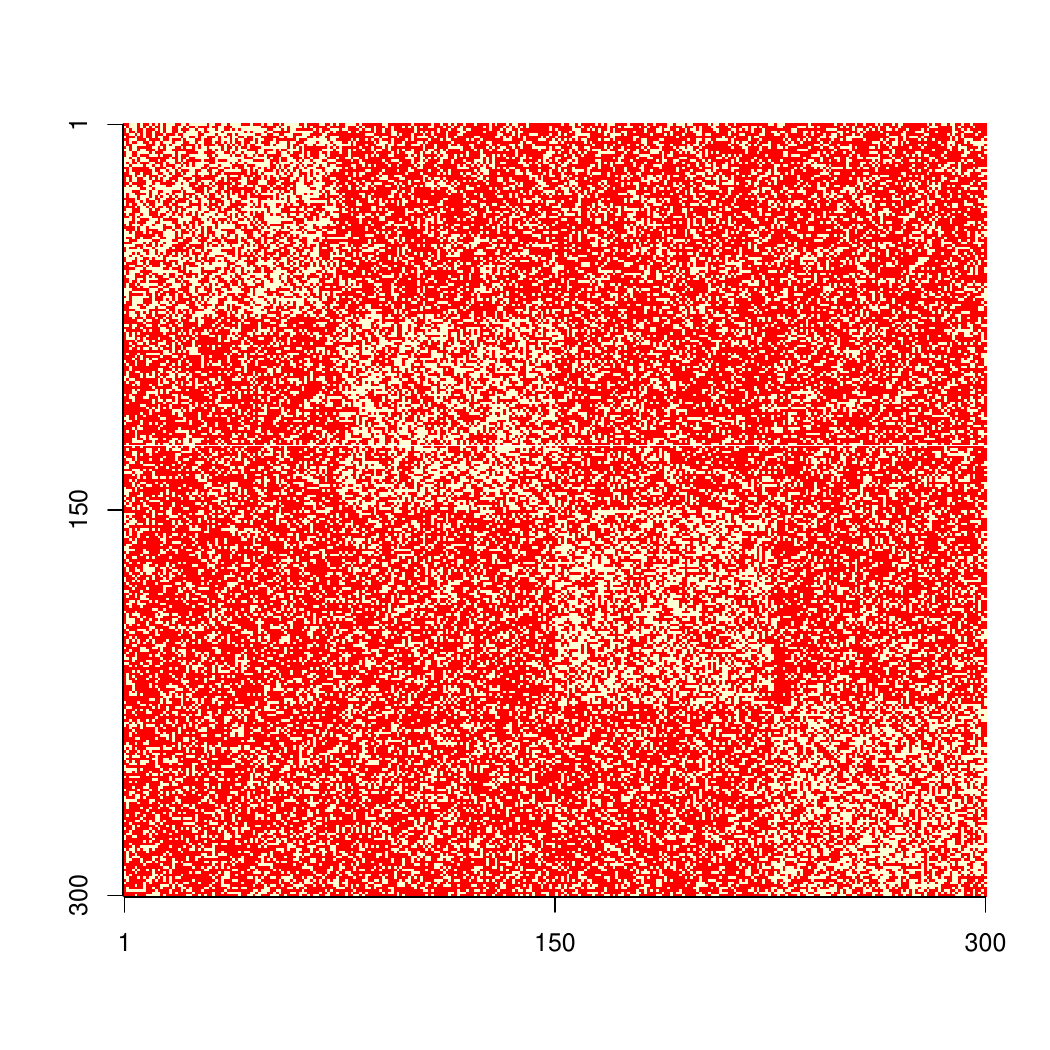} 
		\includegraphics[width = 0.45\textwidth]{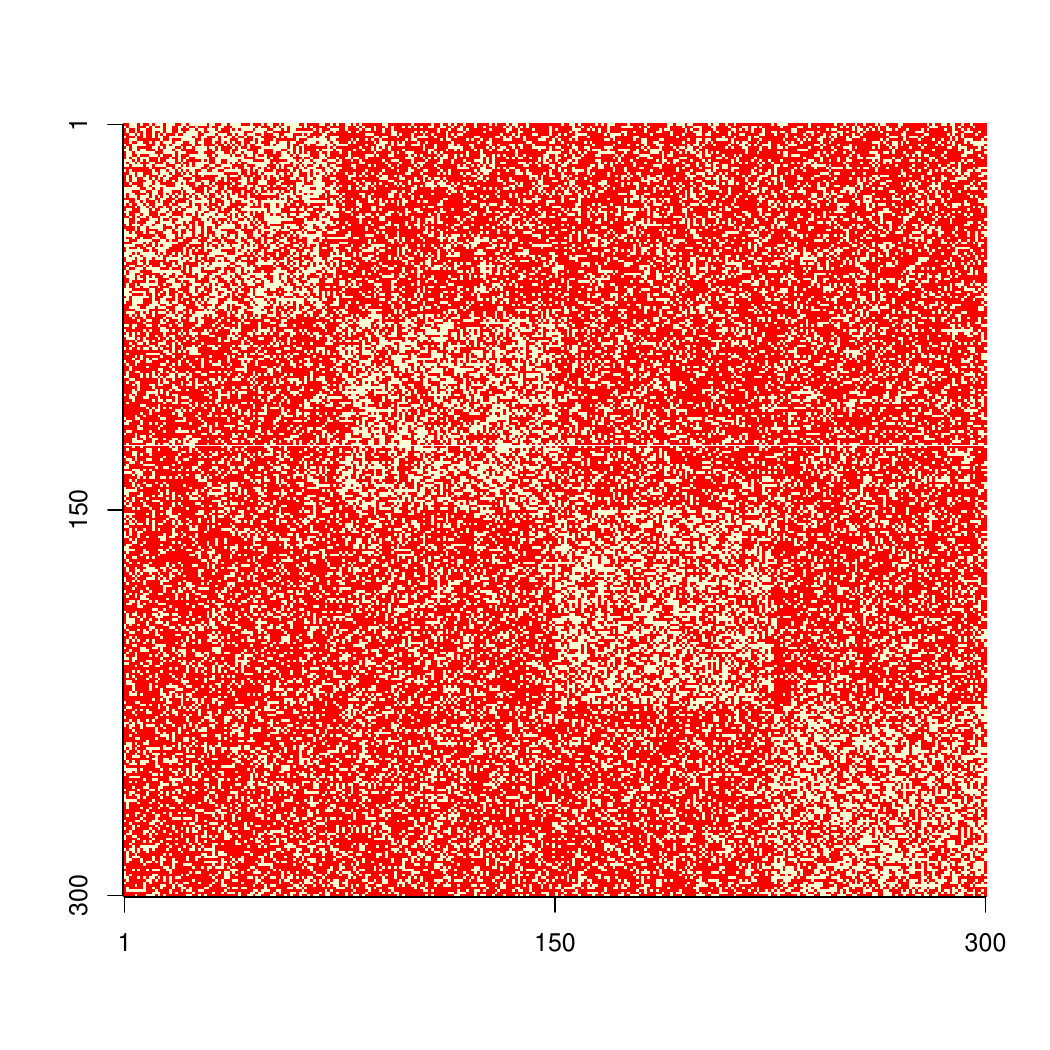}
		\includegraphics[width = 0.45\textwidth]{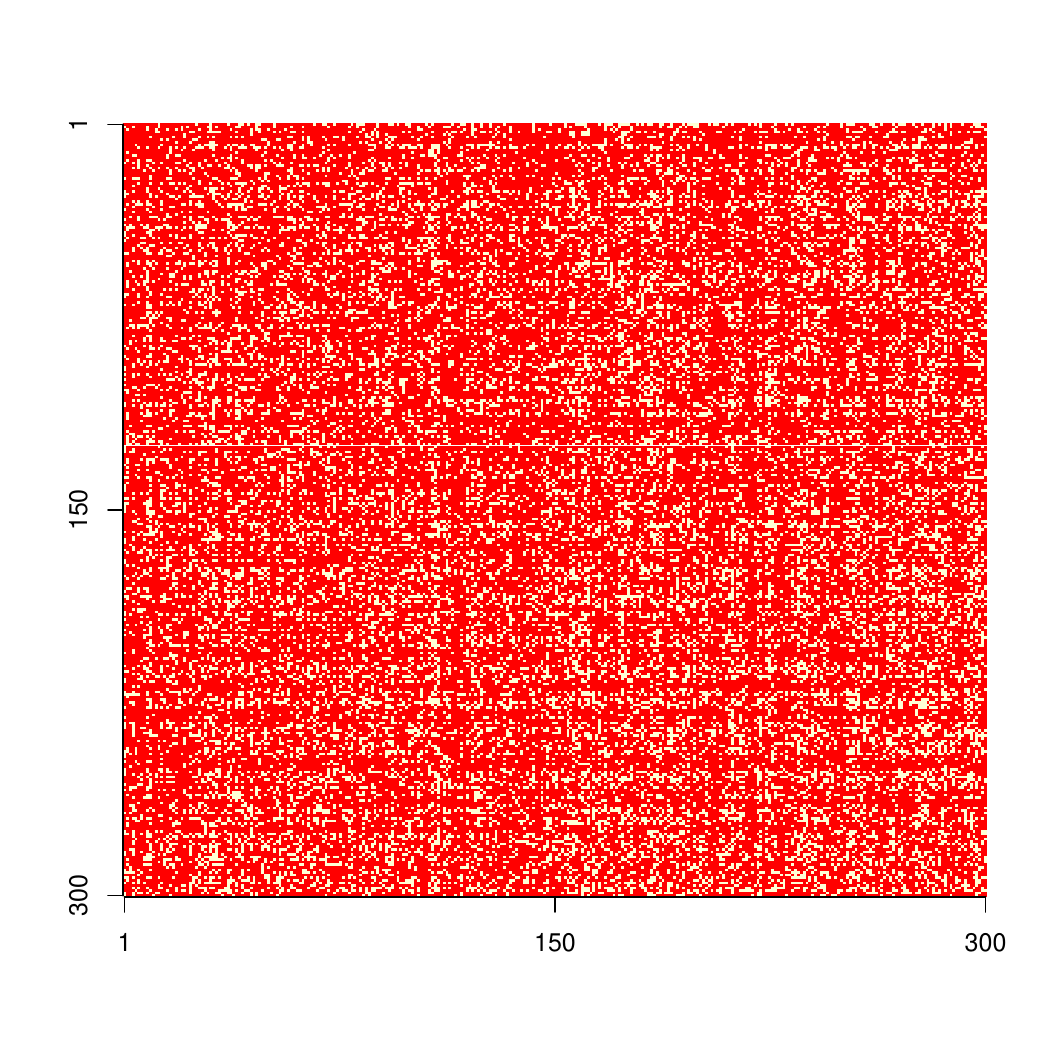} 
		\includegraphics[width = 0.45\textwidth]{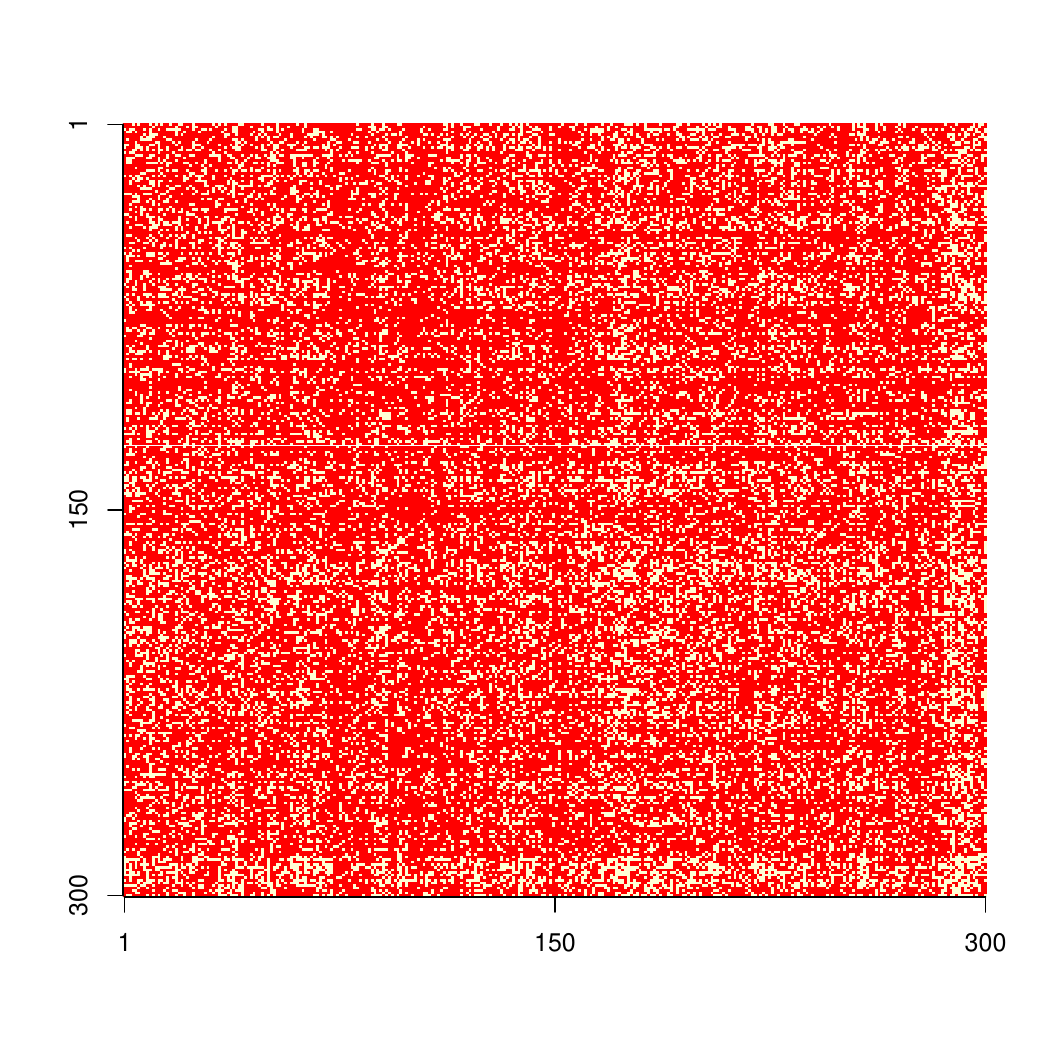}
		\caption{\label{fig1}  The top row shows two instances of data generated in \textbf{Scenario 1}.  The left panel corresponds to $A(t)$ for $t$ before the first change point, and the right panel to $A(t)$ between the first and   second change points.  The bottom row shows the corresponding  plots for \textbf{Scenario 2} with $\varepsilon =0.05$.}
	\end{center}
\end{figure}

\begin{figure}[t!]
	\begin{center}
		\includegraphics[width = 0.45\textwidth]{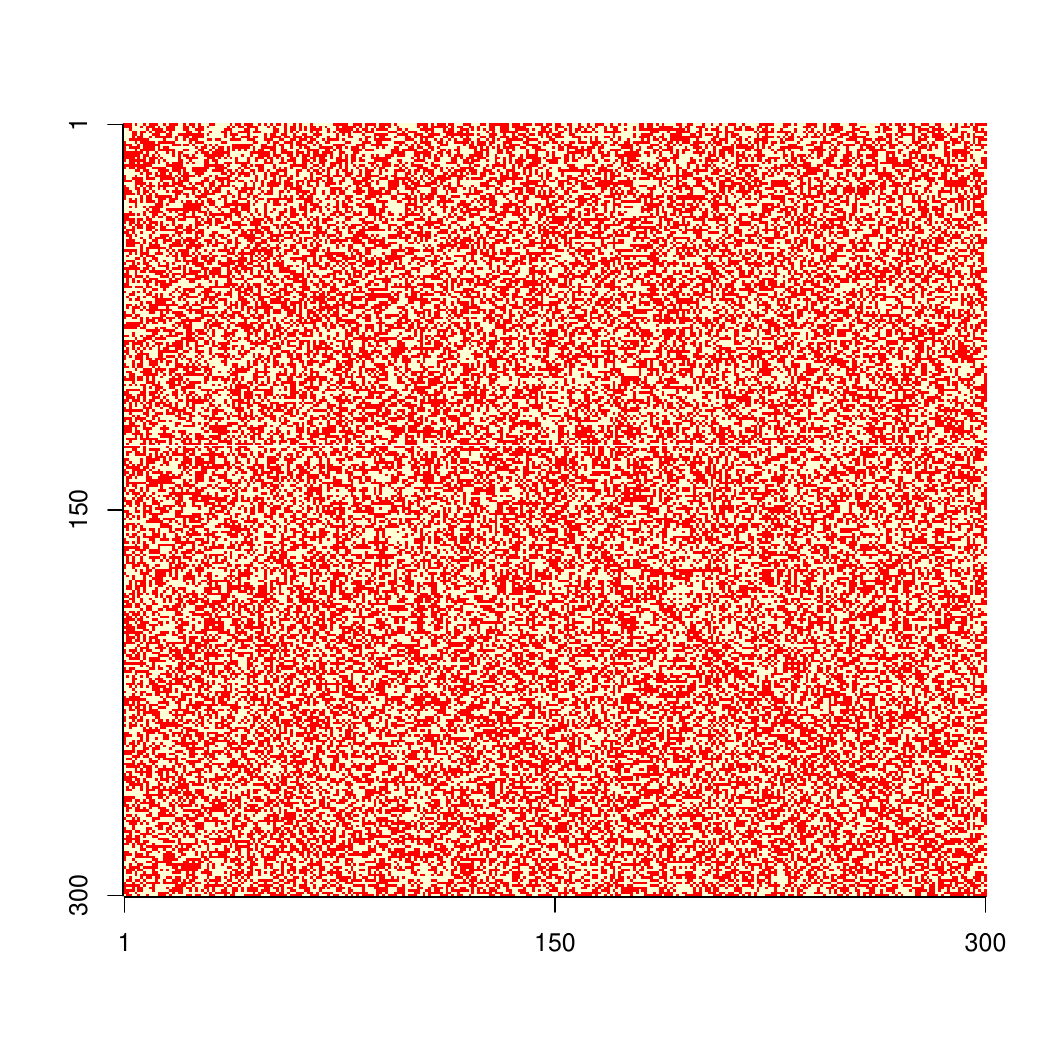} 
		\includegraphics[width = 0.45\textwidth]{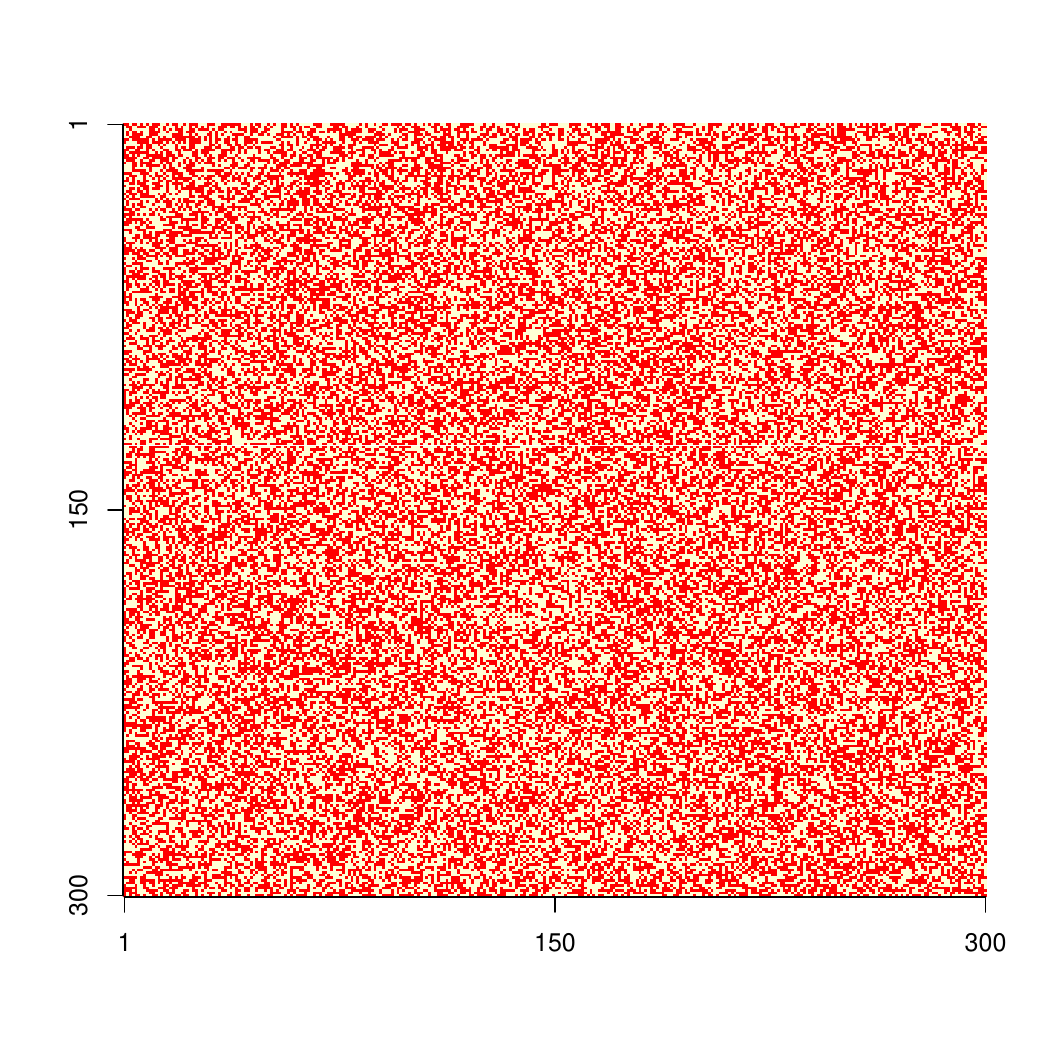}
		\includegraphics[width = 0.45\textwidth]{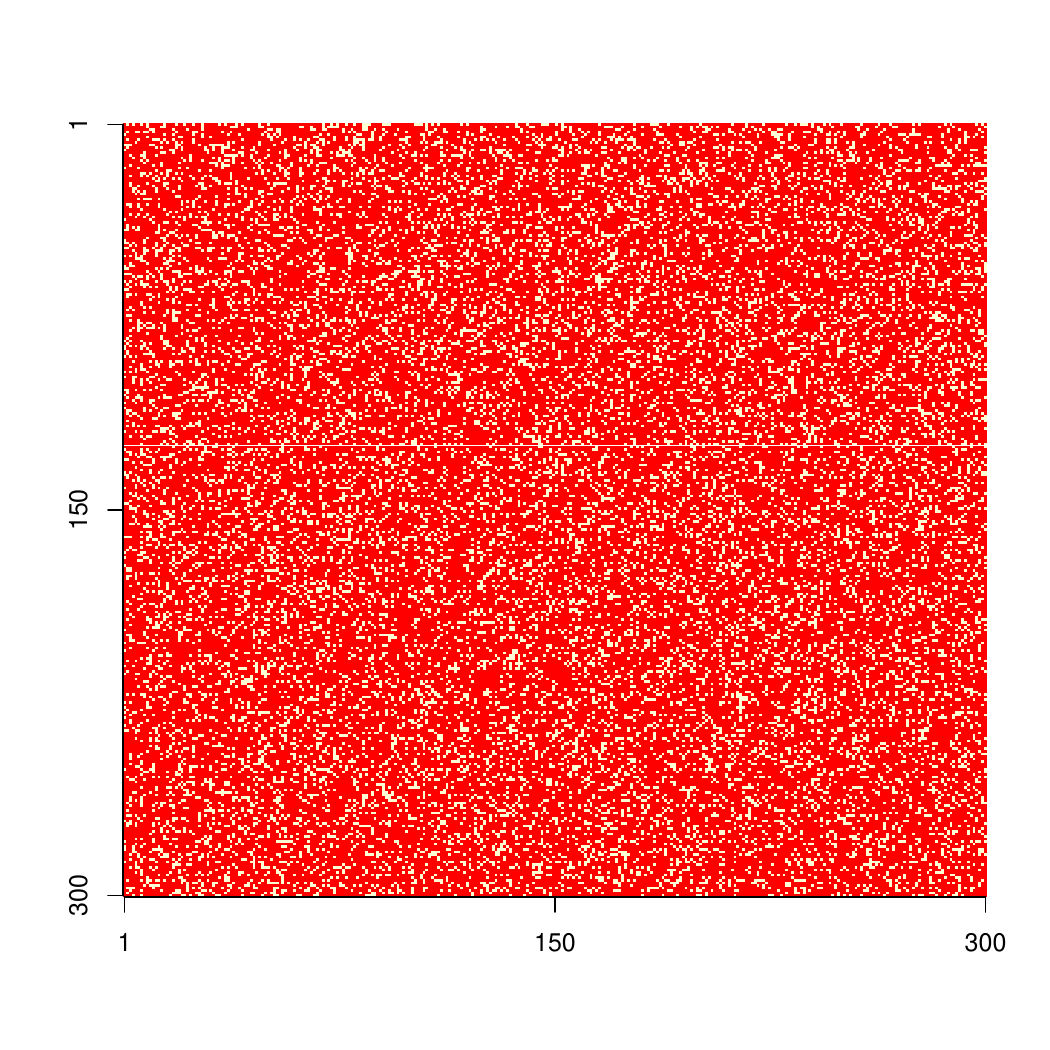} 
		\includegraphics[width = 0.45\textwidth]{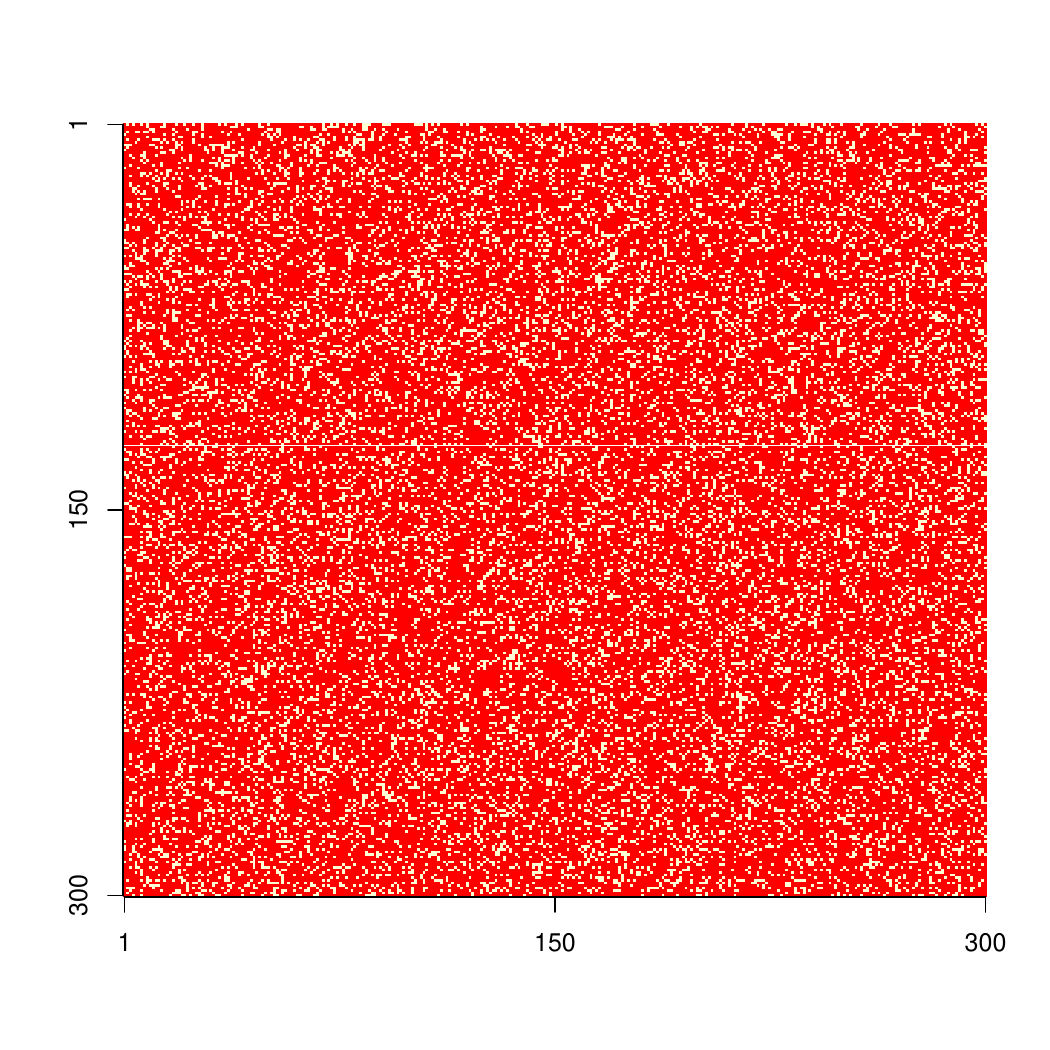}
		\caption{\label{fig2} The top row shows two instances of data generated in \textbf{Scenario 3}.  The left panel corresponds to $A(t)$ for $t$ before the first change point, and the right panel to $A(t)$ between the first and   second change points.  The bottom row shows the corresponding  plots for \textbf{Scenario 4} with $\varepsilon =0.05$.}
	\end{center}
\end{figure}

\begin{table}[t!]
	\centering
	\caption{\label{tab1} Scenario 1}
	\medskip
	\setlength{\tabcolsep}{18pt}
	\begin{small}
		\begin{tabular}{ rrrrrr}
			\hline
			Method &	$n$ & $\rho$  & $\vert  K - \widehat{K}\vert$   & $d(\widehat{\mathcal{C}}| \mathcal{C})$  & $d( \mathcal{C}|\widehat{\mathcal{C}})$ \\ 
			\hline
		NonPar-RDPG-CPD & 300 & 0 & 0.1 &  \textbf{1.0} &  1.0 \\
		NBS	& 300& 0&  \textbf{0.0}  & \textbf{1.0} & 1.0\\			
				MNBS	& 300& 0& 1.16&  50.0  &  
				\textbf{0.0}\\				
			\hline		
		NonPar-RDPG-CPD & 200 & 0 &   \textbf{0.0} & \textbf{1.0}& \textbf{1.0}\\
NBS	& 200&0&\textbf{0.0} & \textbf{1.0} & \textbf{1.0}\\		
MNBS	& 200&0&   1.92& $\inf$ & $-\inf$ \\
			\hline
		NonPar-RDPG-CPD & 100 &0  &    0.2   &  \textbf{1.0} & 1.0\\
NBS	& 100&0&\textbf{0.0} &   \textbf{1.0}&  1.0\\			
MNBS	& 100&0& 0.84 &50.0 & \textbf{0.0}\\
			\hline
			NonPar-RDPG-CPD & 300 & 0.5 & \textbf{0.0} &  \textbf{0.0}   &\textbf{0.0}  \\
			NBS	& 300&0.5 & 21.2 &  1.0  & 43.0 \\			
		MNBS	& 300&0.5 &\textbf{0.0} & \textbf{0.0}  & \textbf{0.0} \\	
			\hline		
			NonPar-RDPG-CPD & 200 & 0.5 &0.04& \textbf{0.0}  &\textbf{0.0}  \\
			NBS	& 200&0.5 &21.3 &  1.0  & 4.30 \\
		MNBS	& 200&0.5 &\textbf{0.0}  &\textbf{0.0}   &\textbf{0.0}  \\					
			\hline
			NonPar-RDPG-CPD & 100 &0.5  & 0.16   &  \textbf{0.0}  &\textbf{0.0}  \\
			NBS	& 100&0.5&  21.3 &  1.0   & 42.0 \\			
		MNBS	& 100&0.5&\textbf{0.12} &   \textbf{0.0}   &\textbf{0.0} \\			
			\hline
						NonPar-RDPG-CPD & 300 & 0.9 & \textbf{0.0} &\textbf{0.0} &\textbf{0.0}\\
			NBS	& 300&0.9 &21.0 & 1.0 & 43.0\\	
		MNBS	& 300&0.9 & 3.12&  \textbf{0.0}  &36.0 \\			
			\hline		
			NonPar-RDPG-CPD & 200 & 0.9 &\textbf{0.0} & \textbf{0.0}& \textbf{0.0}\\
			NBS	& 200&0.9 &21.0 &  1.0  & 43.0 \\		
		MNBS	& 200&0.9 &2.88 & \textbf{0.0} &35.0 \\	
			\hline
			NonPar-RDPG-CPD & 100 &0.9  & \textbf{0.0}  &\textbf{1.0}&\textbf{1.0} \\
			NBS	& 100&0.9&21.04 &1.0 &43.0\\
		MNBS	& 100&0.9&3.28 &0.0 &35.0\\
		\end{tabular}
	\end{small}
\end{table}

\begin{table}[t!]
	\centering
	\caption{\label{tab2} Scenario 2}
	\medskip
	\setlength{\tabcolsep}{18pt}
	\begin{small}
		\begin{tabular}{ rrrrrr}
			\hline
			Method &	$n$ &  $\varepsilon$  & $\vert  K - \widehat{K}\vert$   & $d(\widehat{\mathcal{C}}| \mathcal{C})$  & $d( \mathcal{C}|\widehat{\mathcal{C}})$ \\ 
			\hline
			NonPar-RDPG-CPD & 300 & 0.3 & \textbf{0.04}   & \textbf{0.0} &\textbf{0.0}\\
			NBS	& 300& 0.3  &0.28 &1.0   & 1.0 \\			
	     MNBS	& 300& 0.3  &0.76 &\textbf{0.0}   & 21.0 \\	
			\hline	
		NonPar-RDPG-CPD & 200 & 0.3 & \textbf{0.0}   & \textbf{0.0} &\textbf{0.0}\\
			NBS	& 200& 0.3  & 0.32& 1.0  &1.0  \\	
		MNBS	& 200& 0.3  & 0.48&   \textbf{0.0}  & 1.0 \\			
			\hline
	NonPar-RDPG-CPD & 100 & 0.3 & \textbf{0.08}  & 3.0 &3.0 \\
NBS	& 100& 0.3  & \textbf{0.08} & 1.0&\textbf{1.0}\\
MNBS	& 100& 0.3  & 0.64& \textbf{0.0} & 18.0 \\								
			\hline
		NonPar-RDPG-CPD & 300 & 0.15 & \textbf{0.0}  &2.0  &2.0\\
			NBS	& 300& 0.15  &0.4 & 1.0 & \textbf{1.0}\\	
     	MNBS	& 300& 0.15  &0.76& \textbf{0.0}&21.0 \\			
			\hline
			NonPar-RDPG-CPD & 200 & 0.15 & \textbf{0.04}  &  3.0  & 3.0\\
			NBS	& 200& 0.15  & 0.28& 1.0 &\textbf{1.0}\\
		MNBS	& 200& 0.15  &0.76 &\textbf{0.0}   & 20.0\\			
			\hline
		NonPar-RDPG-CPD & 100 & 0.15 &\textbf{0.28}   &    4.0  &  10.0\\
			NBS	& 100& 0.15  & 0.32&   \textbf{1.0} &\textbf{1.0} \\
		MNBS	& 100& 0.15  &0.48&  \textbf{1.0} &  5.0\\	
			\hline
	NonPar-RDPG-CPD & 300 & 0.05 & \textbf{0.72}  &  36.0 & \textbf{5.0}\\
NBS	& 300& 0.05  &0.84 & \textbf{1.0} &  9.0 \\	
MNBS	& 300& 0.05  &\textbf{1.24}& \textbf{1.0}& 21.0 \\			
\hline
NonPar-RDPG-CPD & 200 & 0.05 & 0.64  &  37.0   & \textbf{6.0}\\
NBS	& 200& 0.05  &0.76 & \textbf{3.0} &11.0\\
MNBS	& 200& 0.05  &\textbf{0.6} &  4.0   &8.0 \\			
\hline
NonPar-RDPG-CPD & 100 & 0.05 &\textbf{0.72}   & \textbf{19.0}&\textbf{15.0}\\
NBS	& 100& 0.05  &1.4&  $\inf$  & $-\inf$ \\
MNBS	& 100& 0.05  &1.88&  $\inf$  &$-\inf$ \\
		\hline
		\end{tabular}
	\end{small}
\end{table}

\begin{table}[t!]
	\centering
	\caption{\label{tab3} Scenario 3}
	\medskip
	\setlength{\tabcolsep}{18pt}
	\begin{small}
		\begin{tabular}{ rrrrr}
			\hline
			Method &	$n$  & $\vert  K - \widehat{K}\vert$   & $d(\widehat{\mathcal{C}}| \mathcal{C})$  & $d( \mathcal{C}|\widehat{\mathcal{C}})$ \\ 
			\hline
			NonPar-RDPG-CPD & 300 &\textbf{0.24}  &\textbf{0.0} & \textbf{0.0} \\
			NBS	& 300& 15.04 &1.0 & 43.0\\
			MNBS	& 300&0.84 &25 &36 \\				
			\hline		
			NonPar-RDPG-CPD & 200 & \textbf{0.08} &\textbf{0.0} &\textbf{0.0}  \\
		NBS	& 200& 14.4 &43.0 &1.0\\
		MNBS	& 200&0.84 &23 &36 \\	
			\hline
			NonPar-RDPG-CPD & 100 & \textbf{0.52} &\textbf{3.0}& \textbf{5.0} \\
		NBS	& 100&13.96 &1.0   &43.0\\
		MNBS	& 100&1.16 & 23& 35\\			
			\hline
		\end{tabular}
	\end{small}
\end{table}

Examples of matrices $A(t)$ generated in each  scenario are depicted in Figures~\ref{fig1}-\ref{fig2}.  We can see qualitative differences among \textbf{Scenarios 1}-\textbf{4}.  In particular, \textbf{Scenario 1} produces adjacency matrices with  block structure.  Interpretation is less clear for the other models, but we see that \textbf{Scenario 3} seems to generate more dense graphs than \textbf{Scenarios 2} and \textbf{4}.

Results comparing NonPar-RDPG-CPD with NBS are provided in Tables \ref{tab1}-\ref{tab4}.  We observe that, overall, NonPar-RDPG-CPD provides generally reliable estimation of the number of change points and their locations. 

In \textbf{Scenario 1}  with $\rho = 0$, a model where the marginal distributions of $A(t)$ only change in mean, we see from \Cref{tab1} that NBS outperforms our proposed approach.  This does not come as a surprise since NBS is designed to   detect change points in mean.  However, as $\rho$ increases and the number of samples decreases, the most robust method seems to be NonPar-RDPG-CPD. 

\textbf{Scenario 2} poses an interesting example where the behaviour of only a fraction of nodes in the network changes at the change points.  Furthermore, the data are generated under an RDPG model.  As shown in \Cref{tab2}, NonPar-RDPG-CPD seems to be the best method for estimating the number of change points.  A possible explanation is that the underlying changes in the distributions of $A(t)$ not only occur at the level of the means, and hence the NBS might not be the ideal for this scenario even though it outperforms MNBS in this framework.  Our method was constructed under the assumption of the RDPG   model.

To assess the robustness of our method to misspecification, we can look at the performance of our method in the context of \textbf{Scenario 3}  which is not an RDPG.  Interestingly, \Cref{tab3} shows that NonPar-RDPG-CPD is the best in this  model with MNBS coming in second.  In contrast, NBS suffers greatly, overestimating the number of change points.  This makes sense since  between change points, the latent positions $X(t)$ remain constant with probability 0.9 and change with probability 0.1.  Hence, some of these changes in $X(t)$ could be confused as change points by NBS.

Finally,  \textbf{Scenario 4} consists of an example of \Cref{assume:model-rdpg}.  However, similarly as \textbf{Scenario 2},  the change points correspond to shifts in the behaviour of only some of the nodes in the network.  In particular,  \Cref{tab4} suggests that our method performs reasonably well,  improving its  performance when the signal-to-noise ratio  increases. This is different from the NBS which once again tends to overestimate the number of change points.  As for the MNBS, we see that this method is unable to detect the change points in this example. 

\begin{figure}[pb!]
	\begin{center}
		\includegraphics[width=0.24\textwidth]{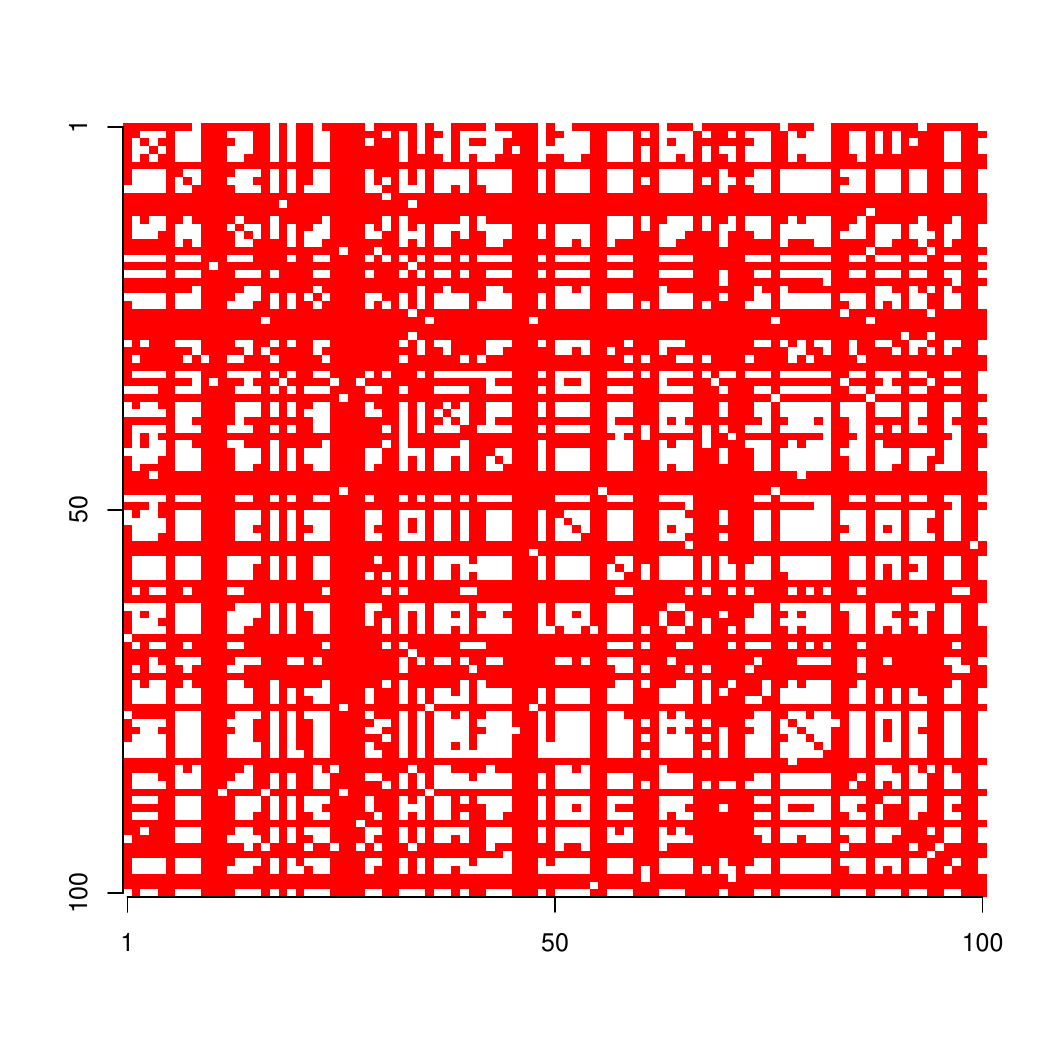} 
		\includegraphics[width=0.24\textwidth]{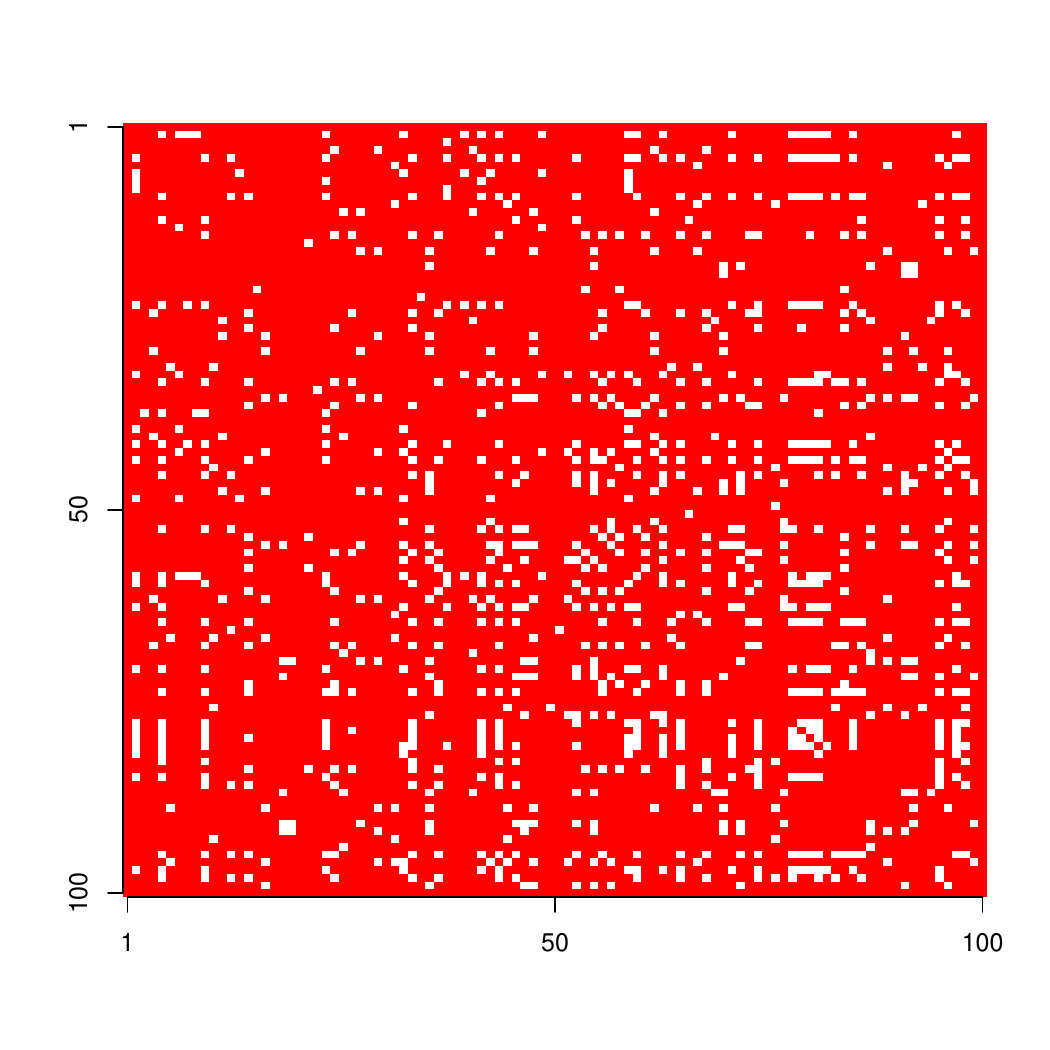}
		\includegraphics[width=0.24\textwidth]{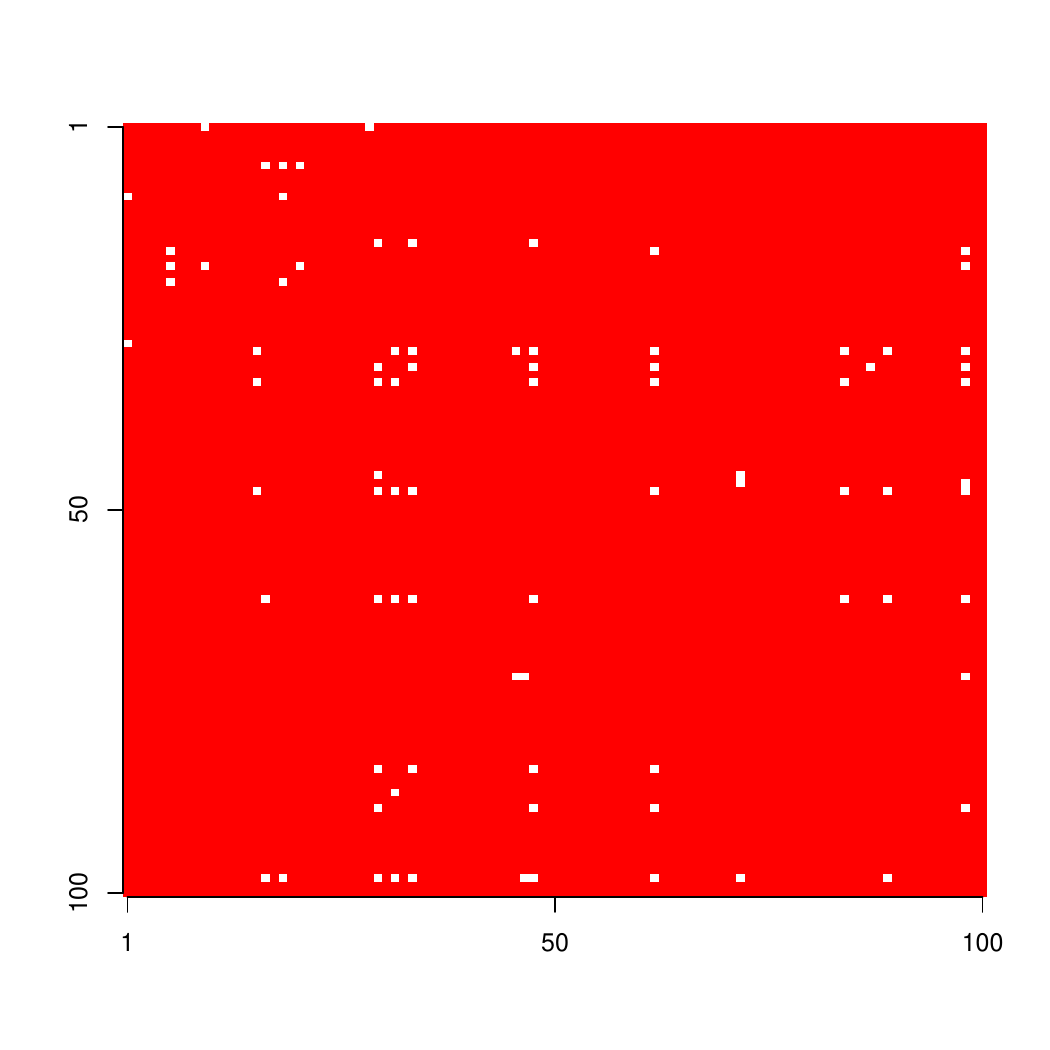}
		\includegraphics[width=0.24\textwidth]{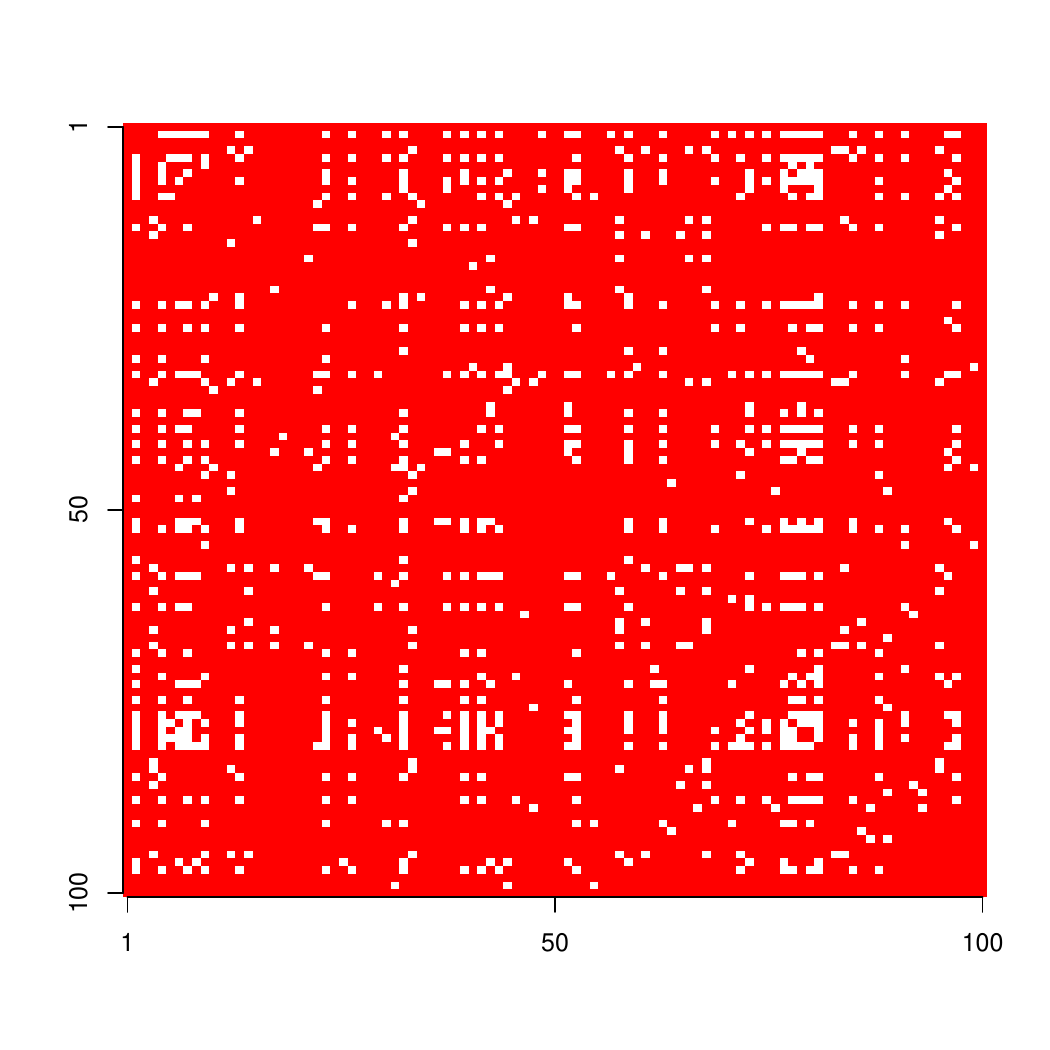} 
		\includegraphics[width=0.24\textwidth]{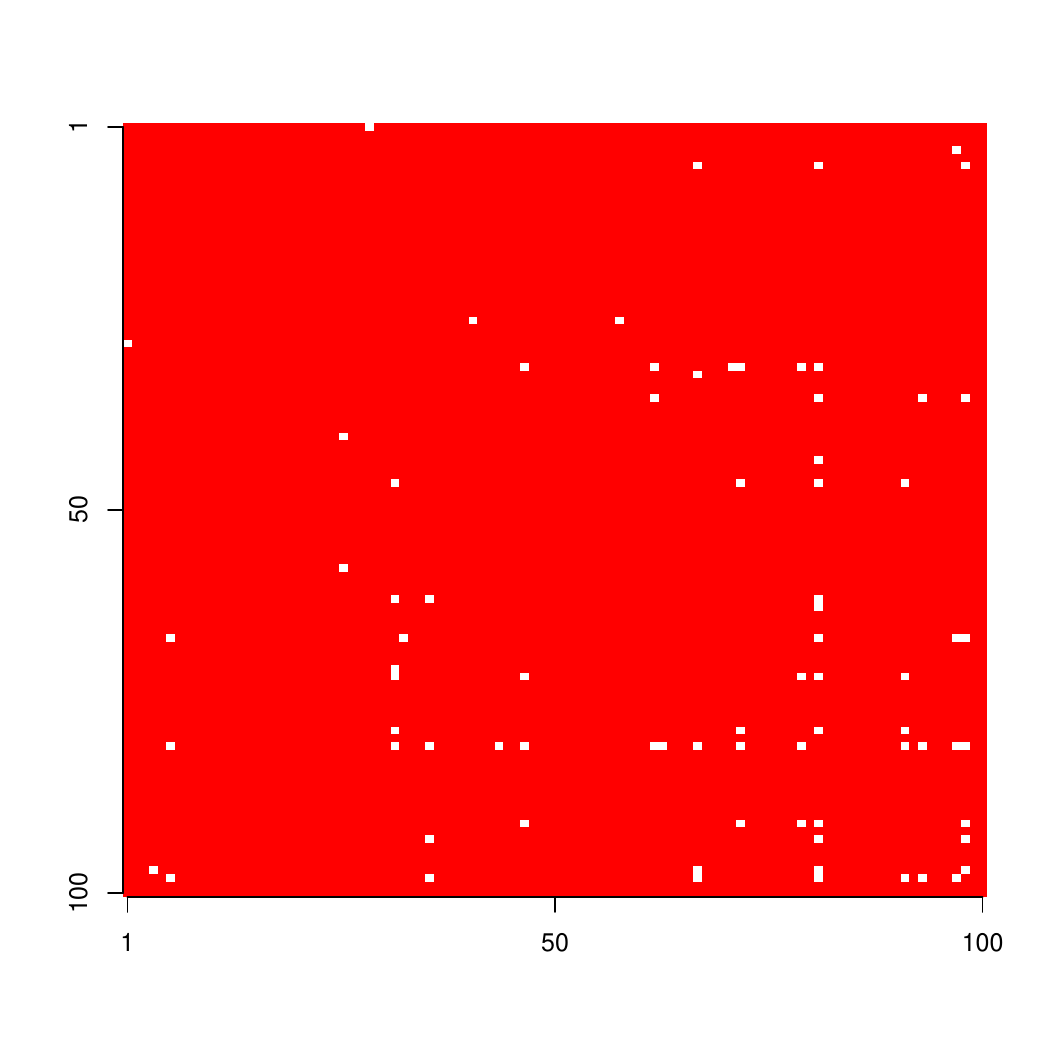}
		\includegraphics[width=0.24\textwidth]{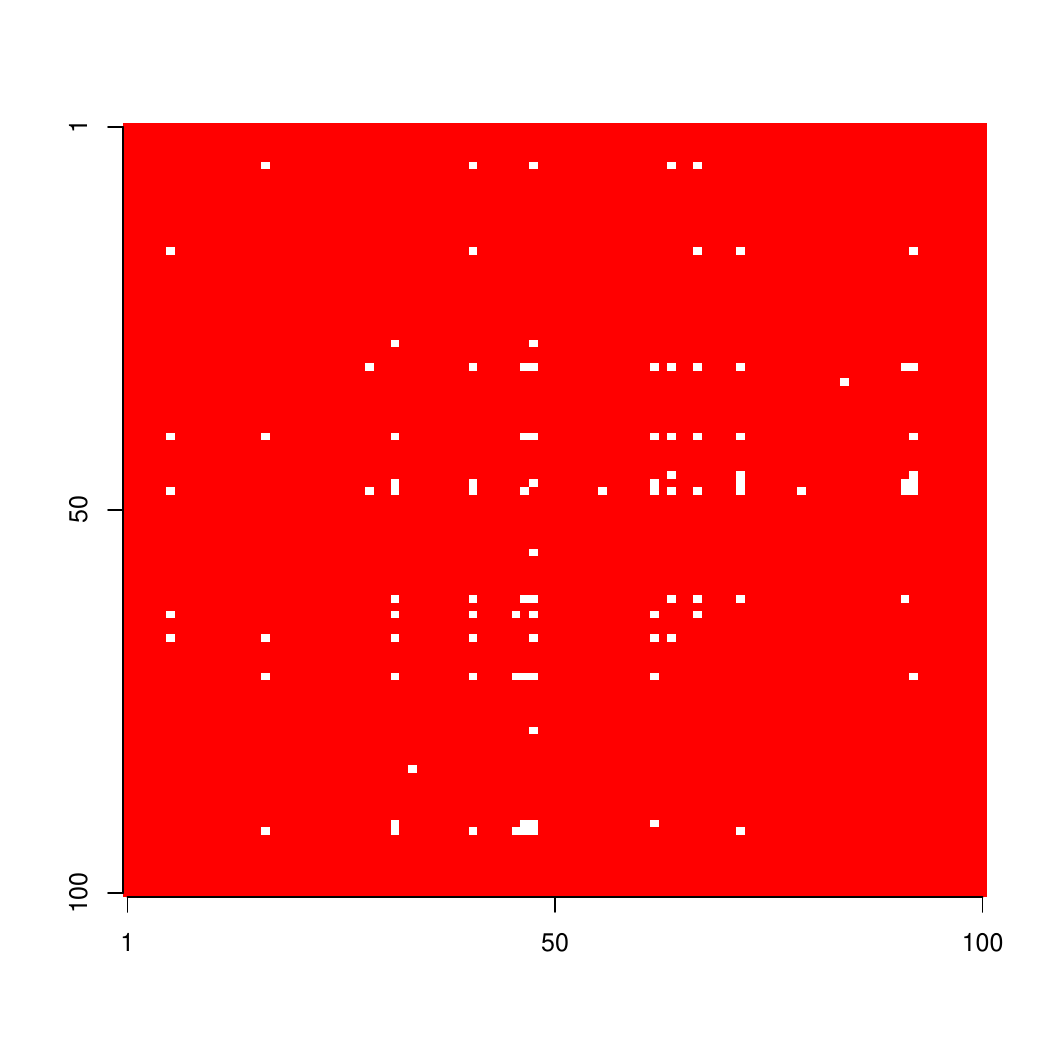}
		\includegraphics[width=0.24\textwidth]{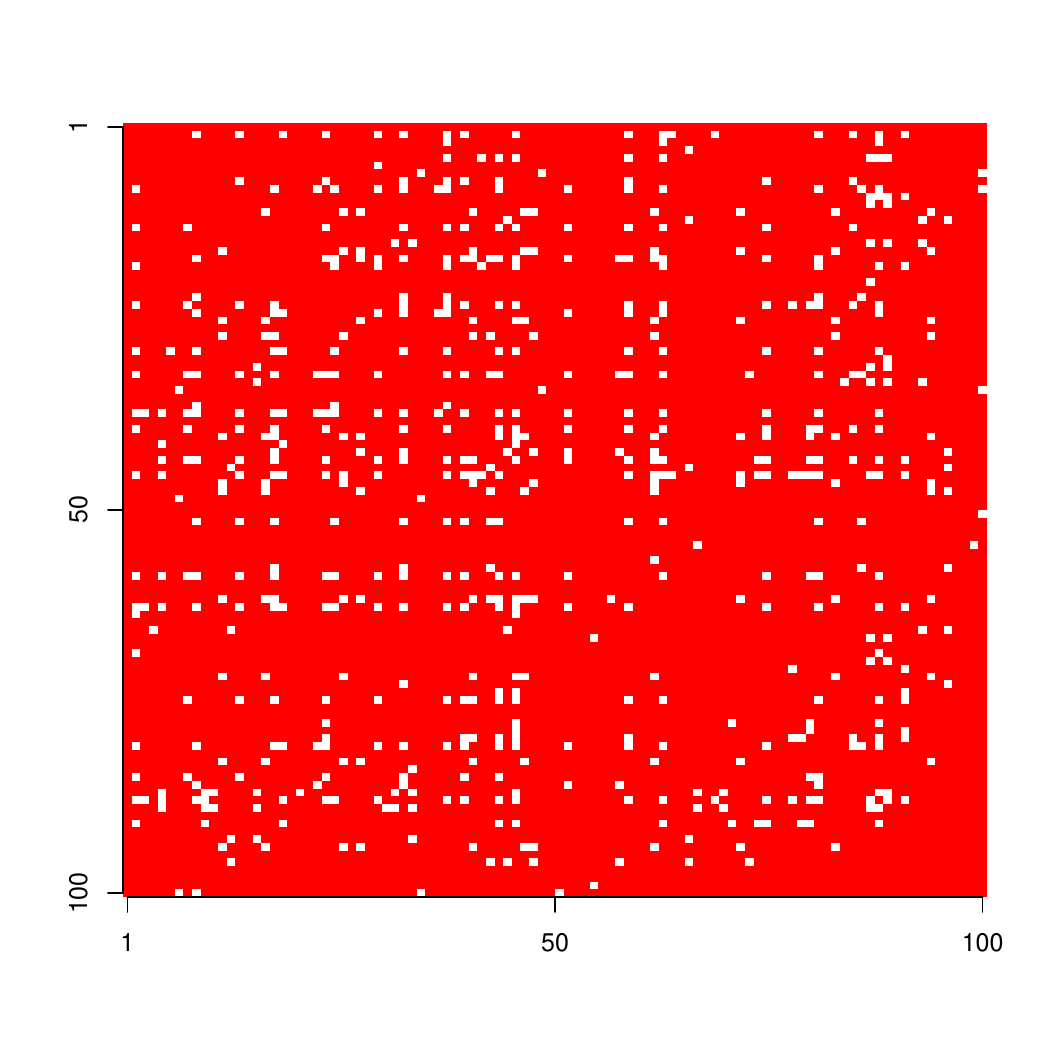} 
		\includegraphics[width=0.24\textwidth]{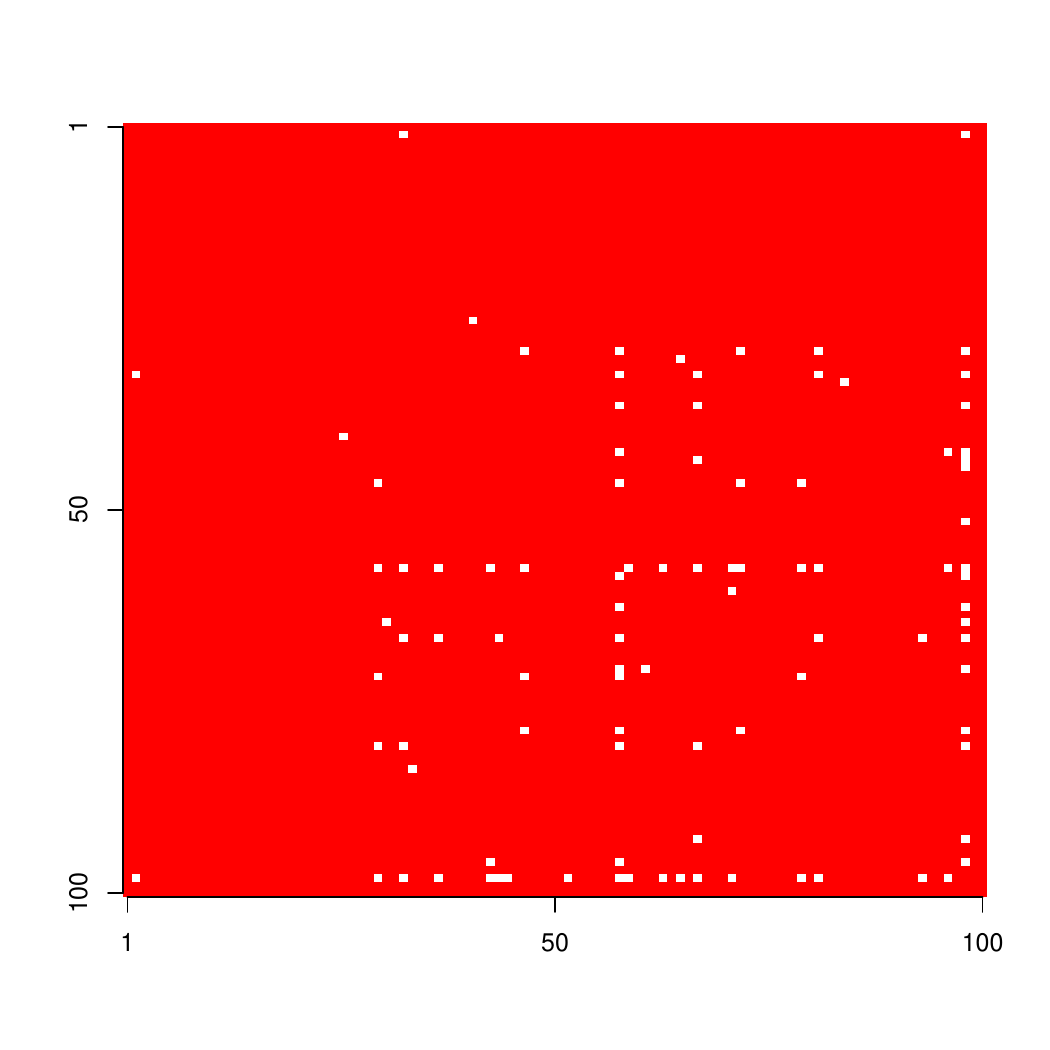}
		\includegraphics[width=0.24\textwidth]{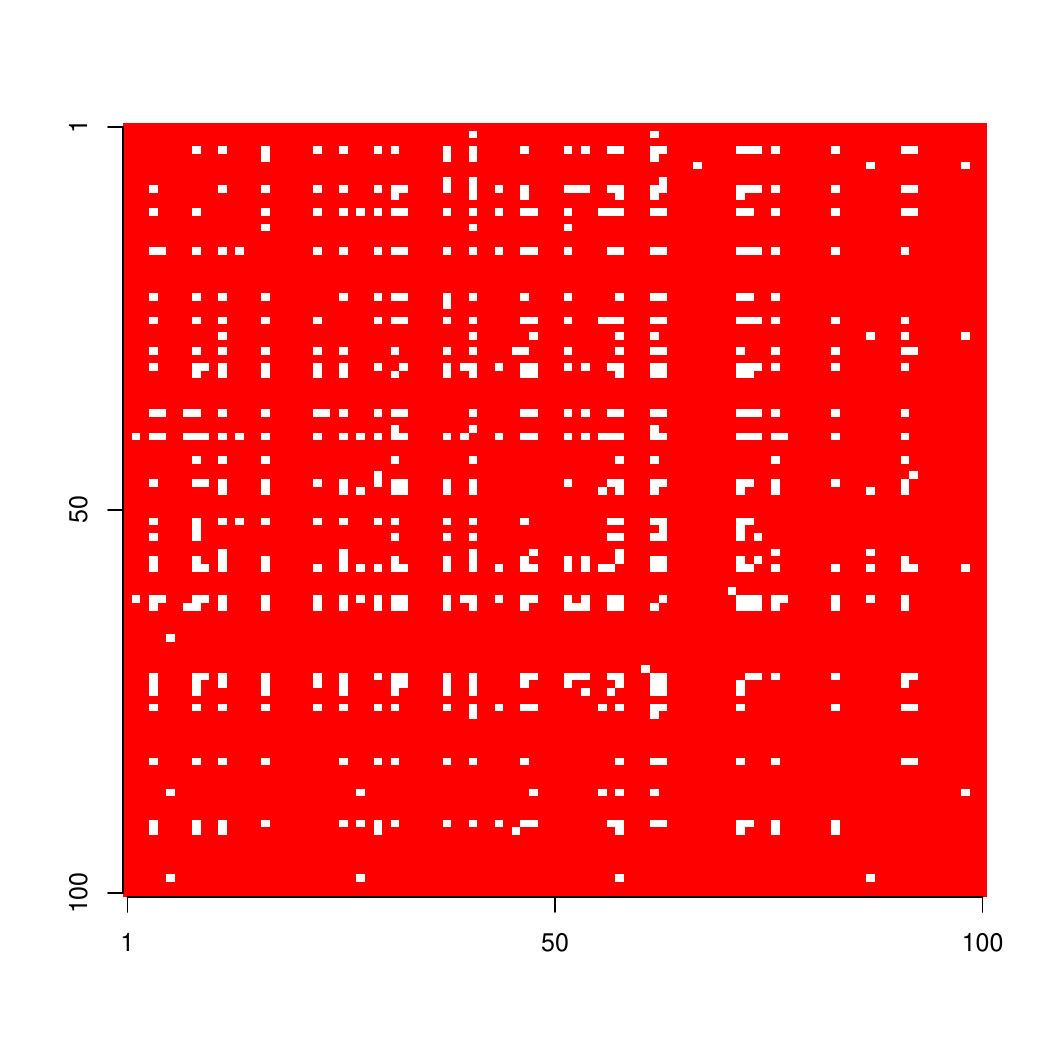}
		\includegraphics[width=0.24\textwidth]{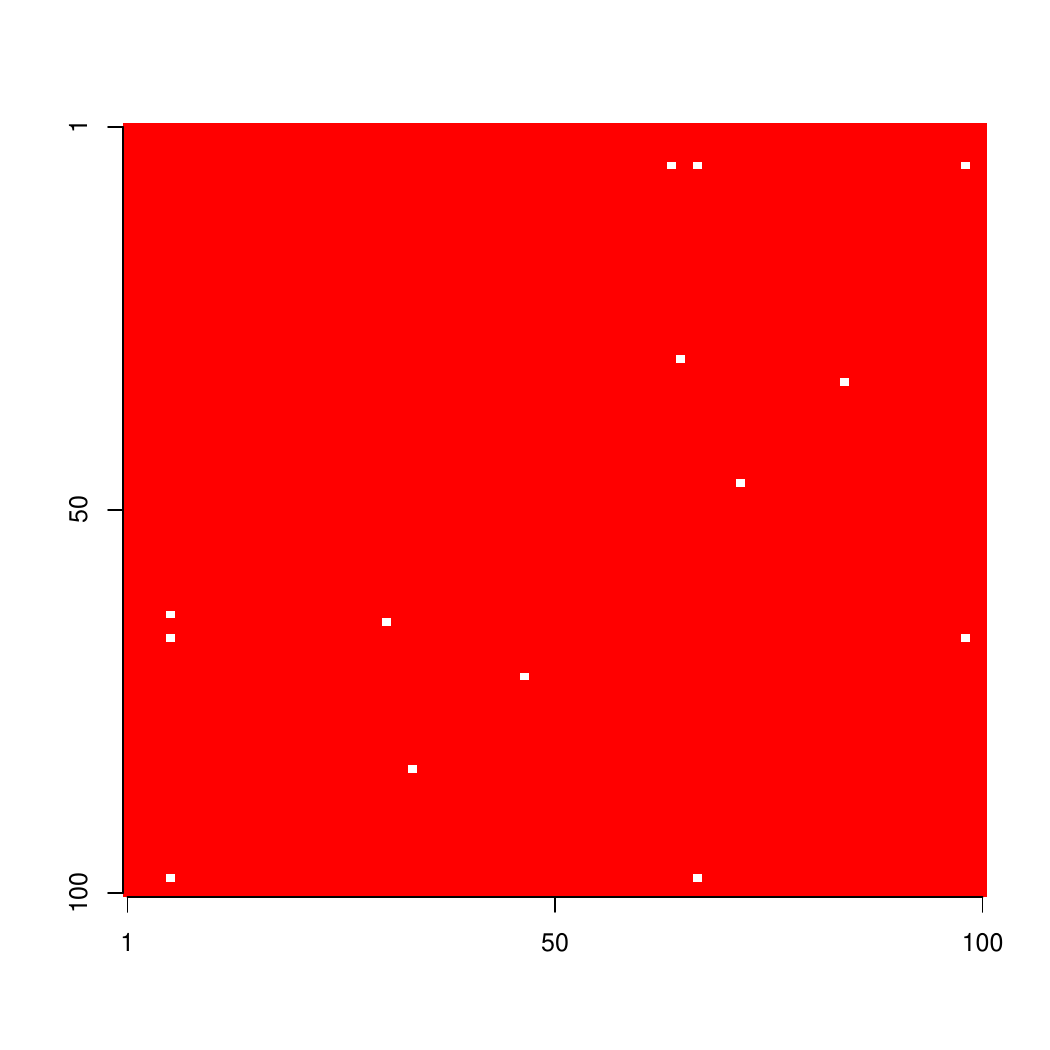} 
		\includegraphics[width=0.24\textwidth]{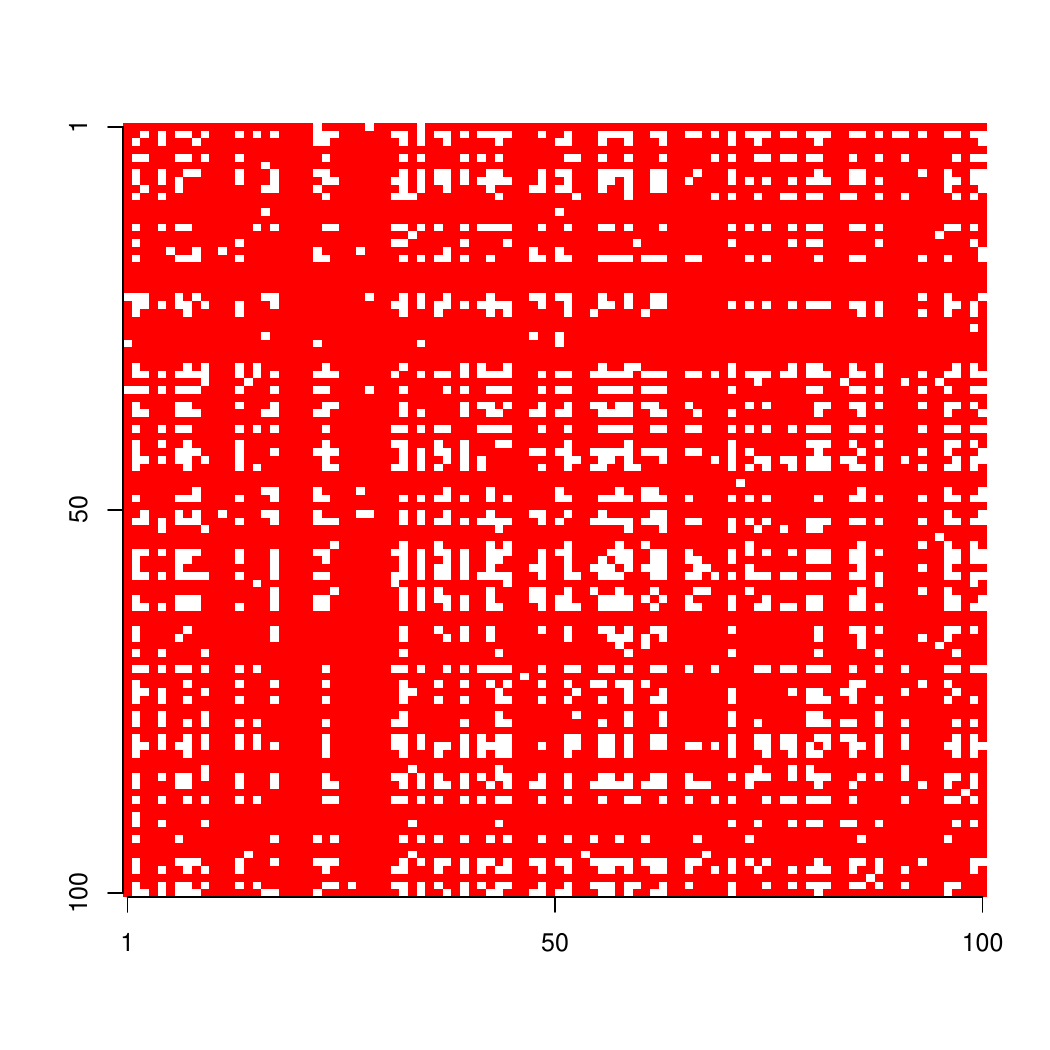}
		\includegraphics[width=0.24\textwidth]{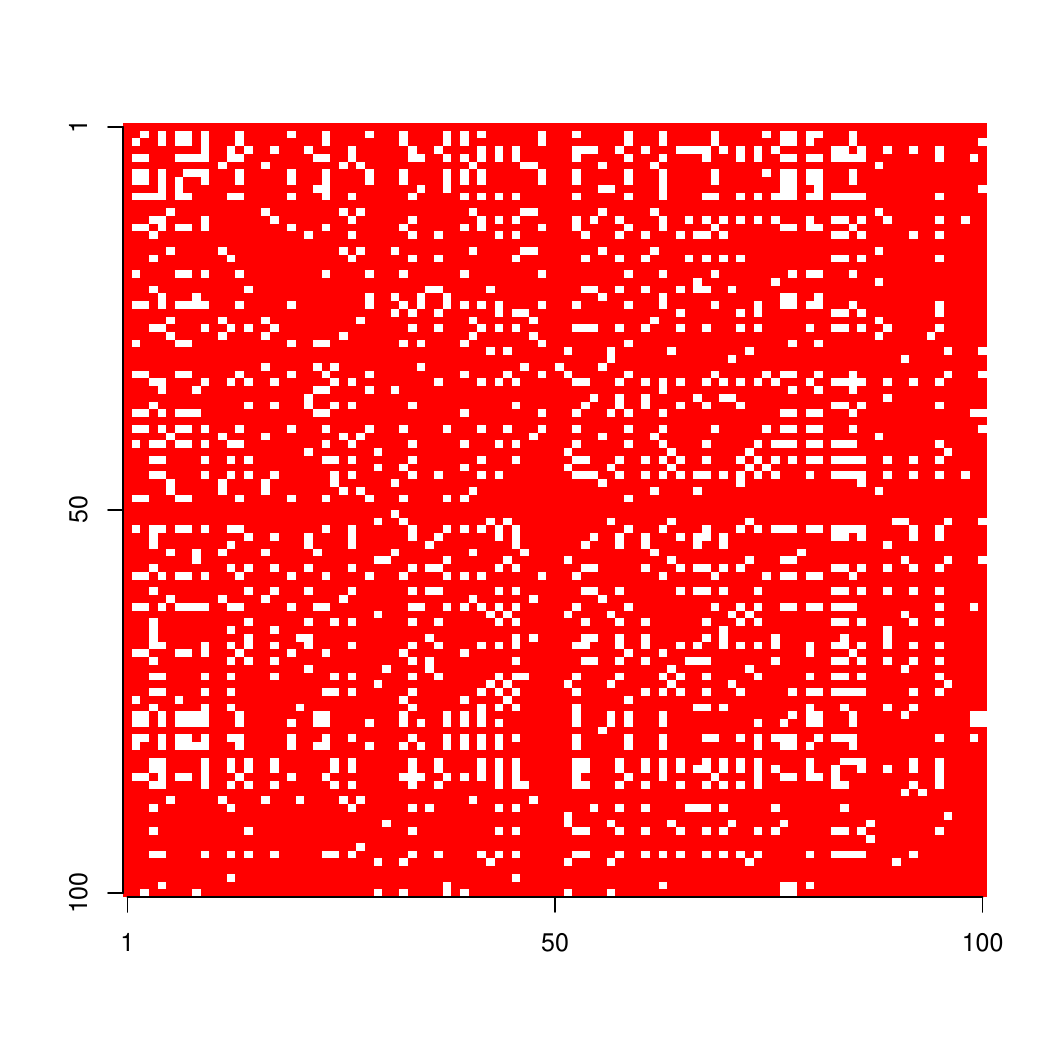}	 
				\includegraphics[width=0.24\textwidth]{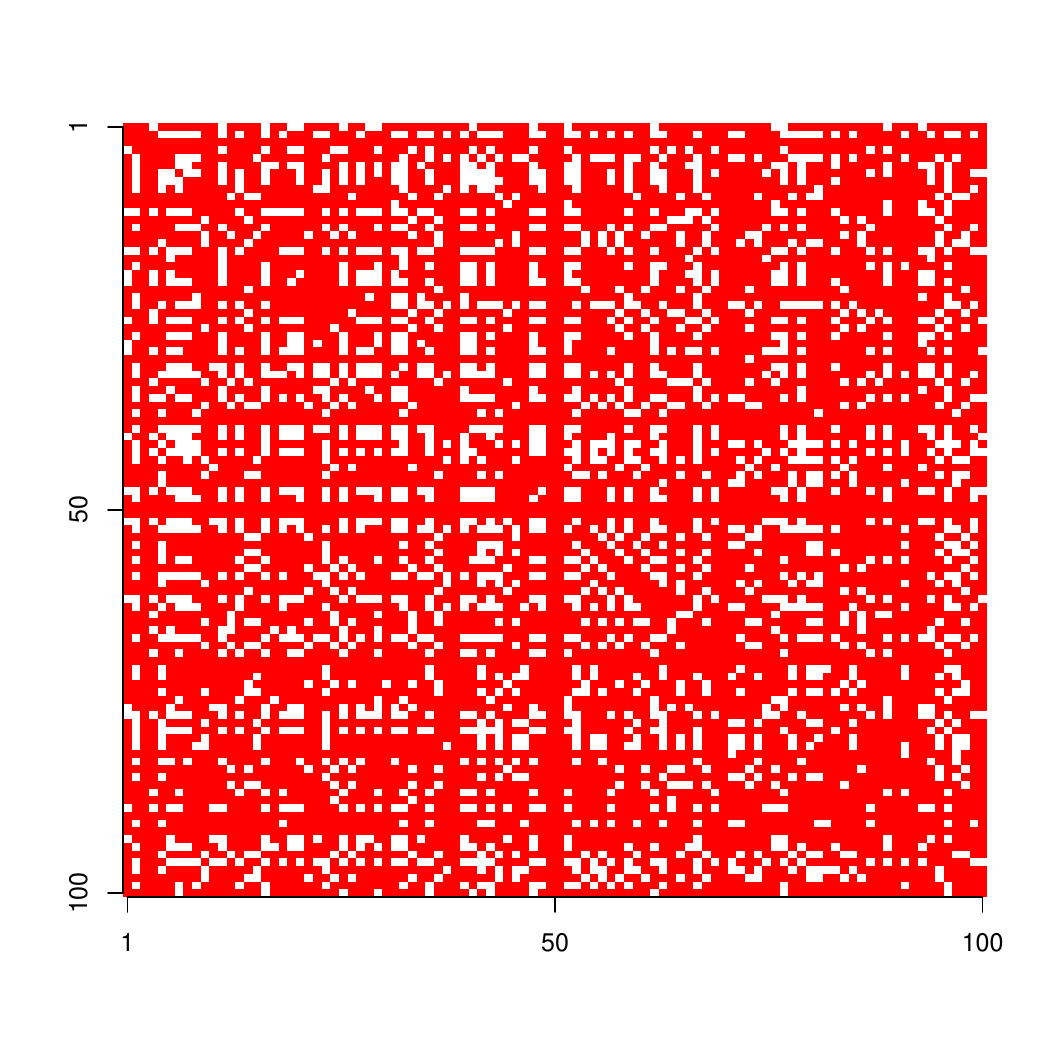}
					\includegraphics[width=0.24\textwidth]{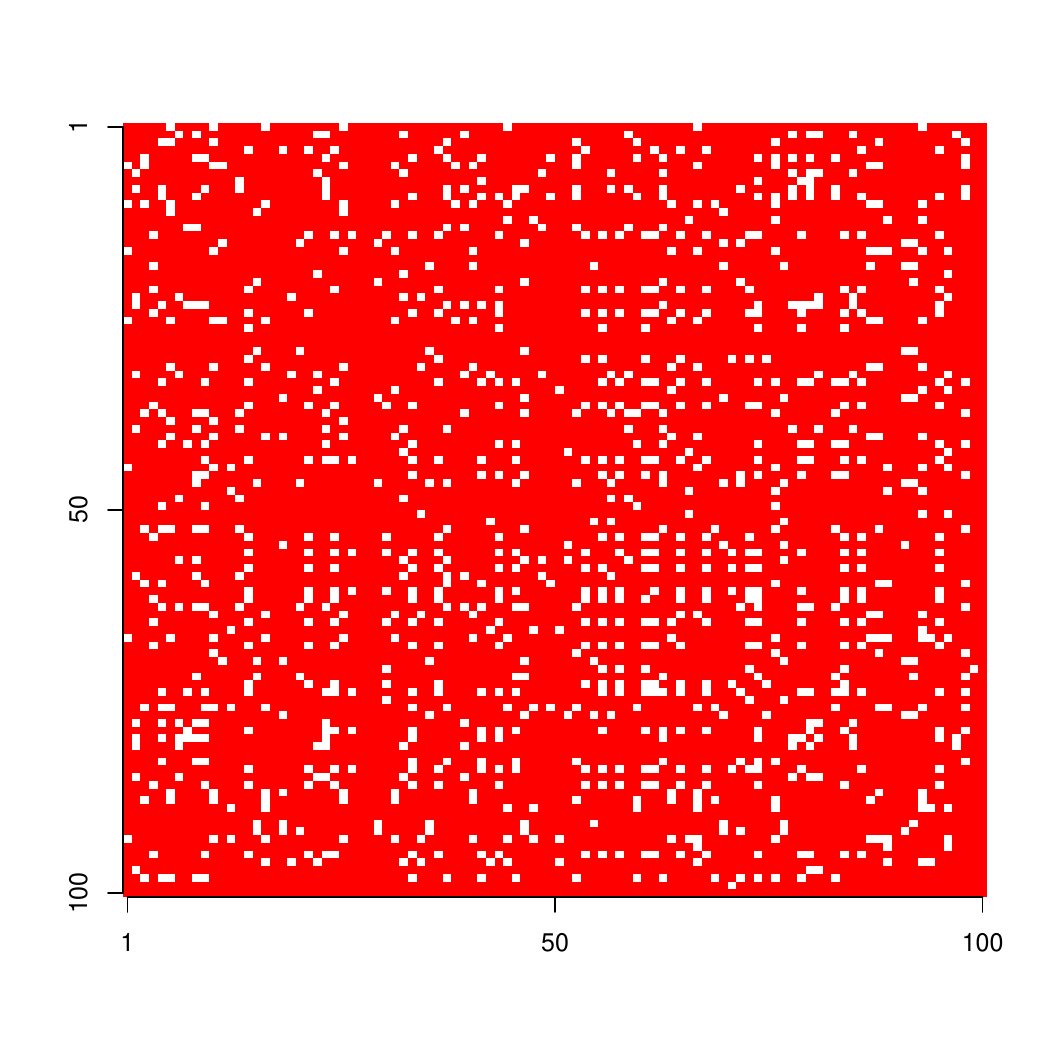}
						\includegraphics[width=0.24\textwidth]{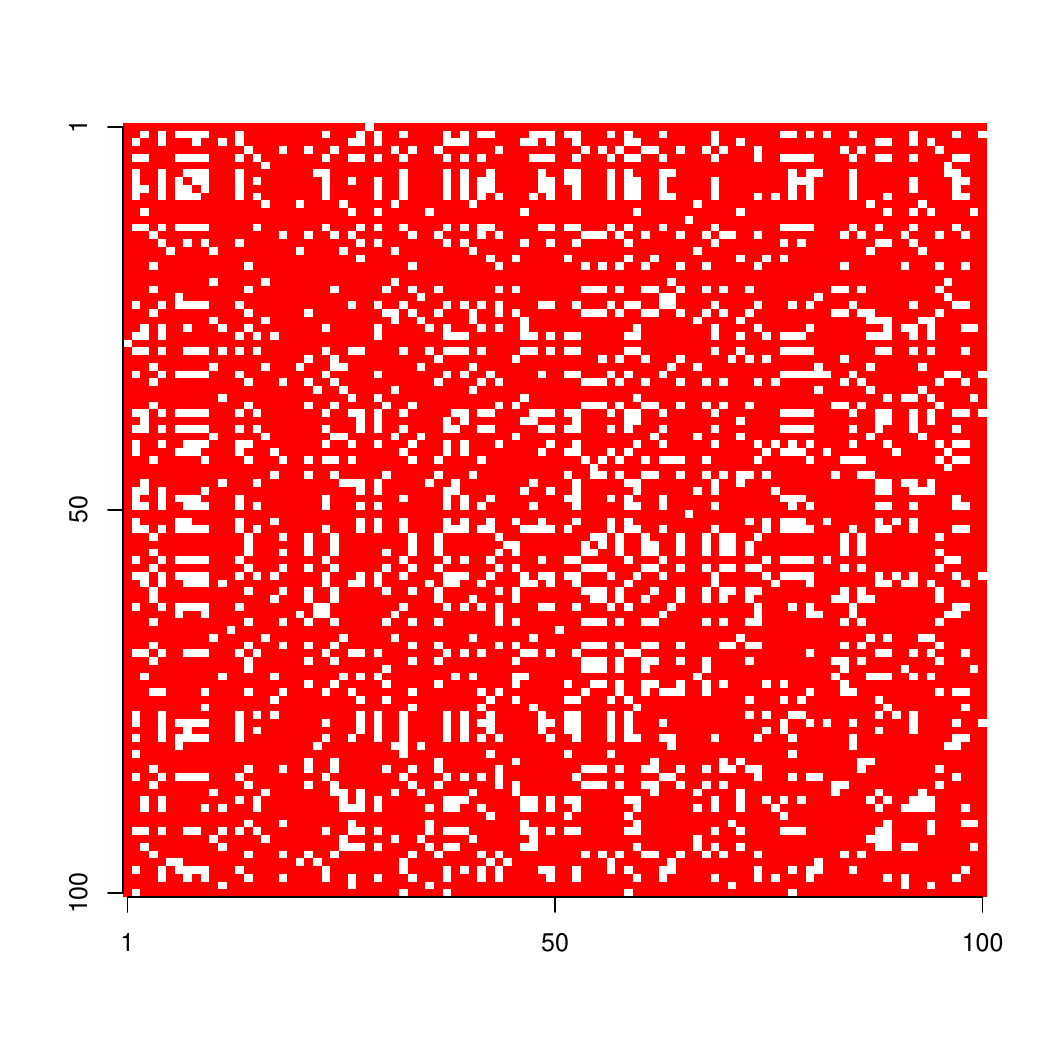}
							\includegraphics[width=0.24\textwidth]{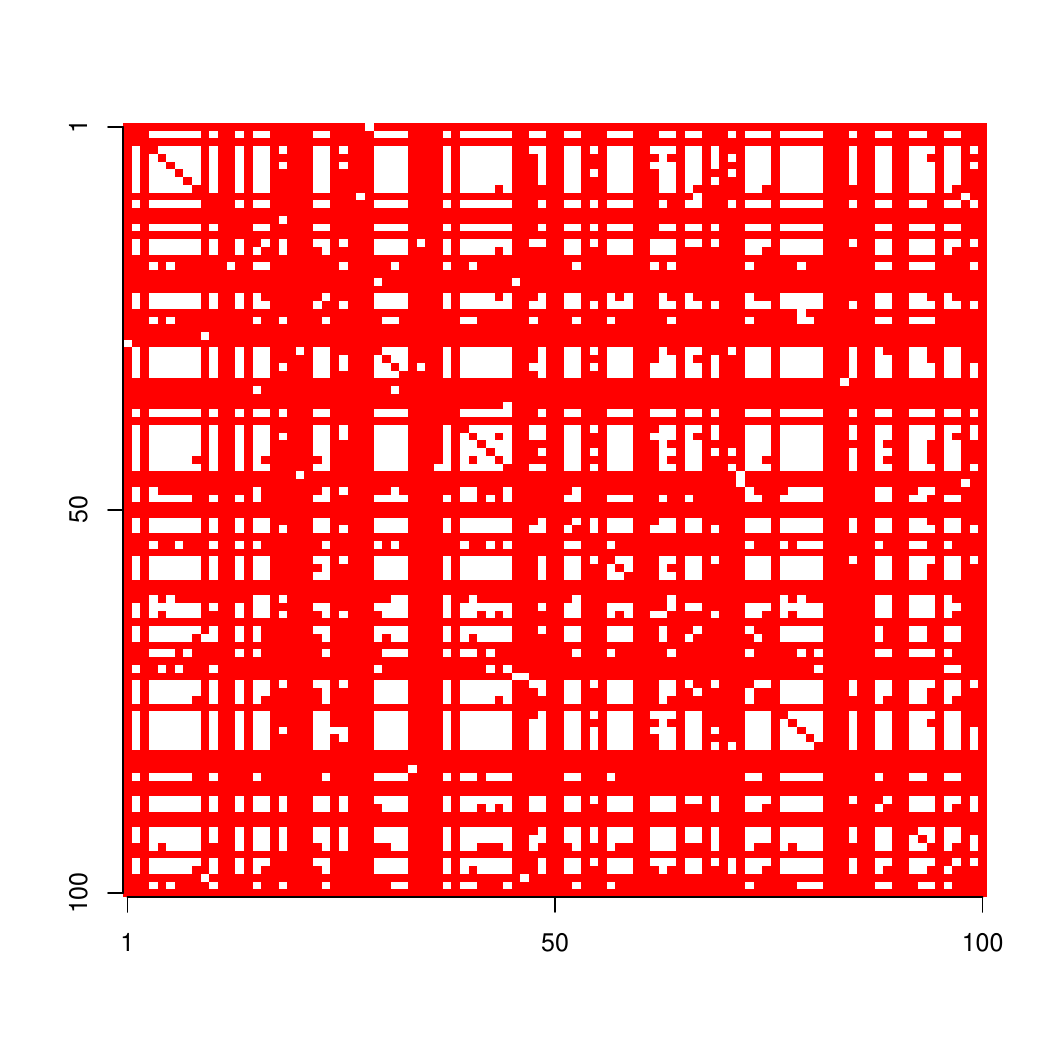}
				\includegraphics[width=0.24\textwidth]{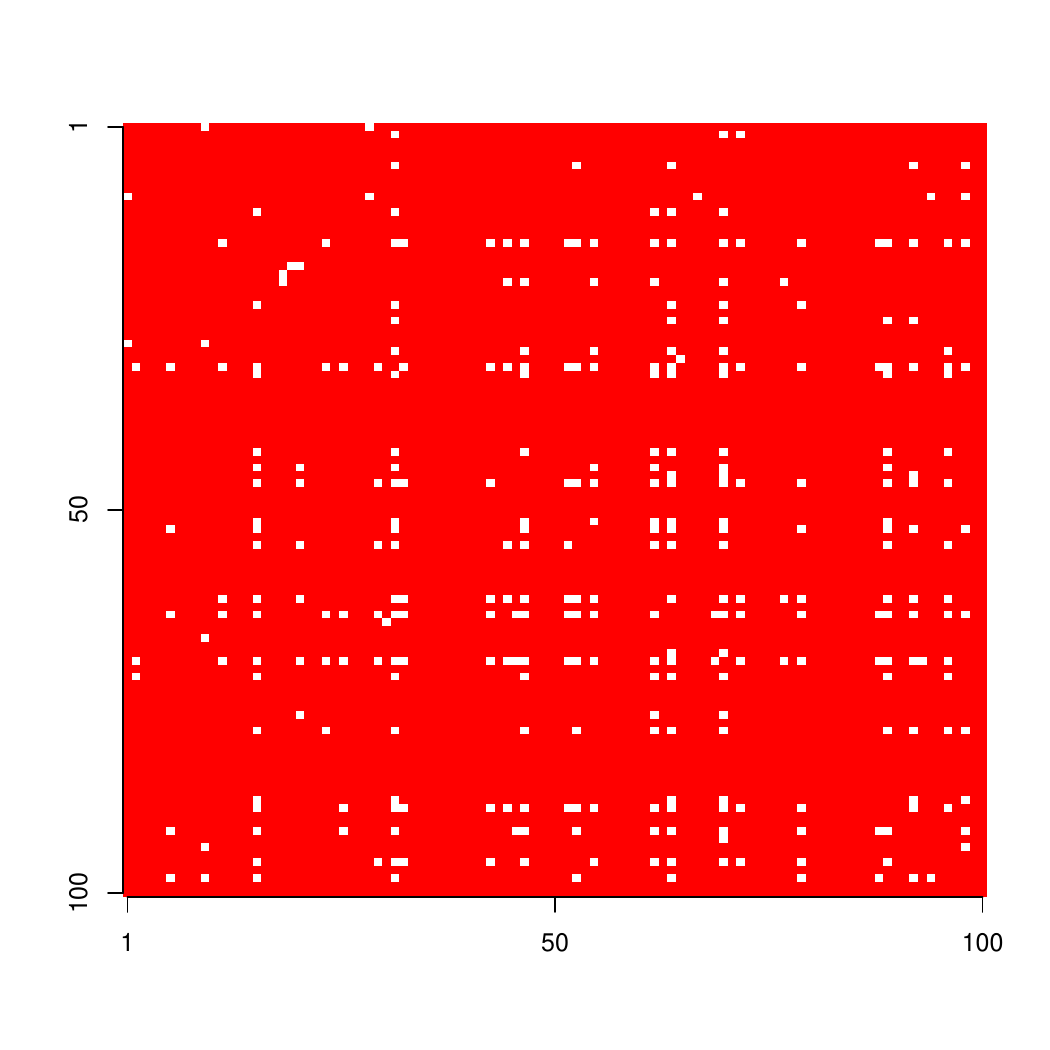}
\includegraphics[width=0.24\textwidth]{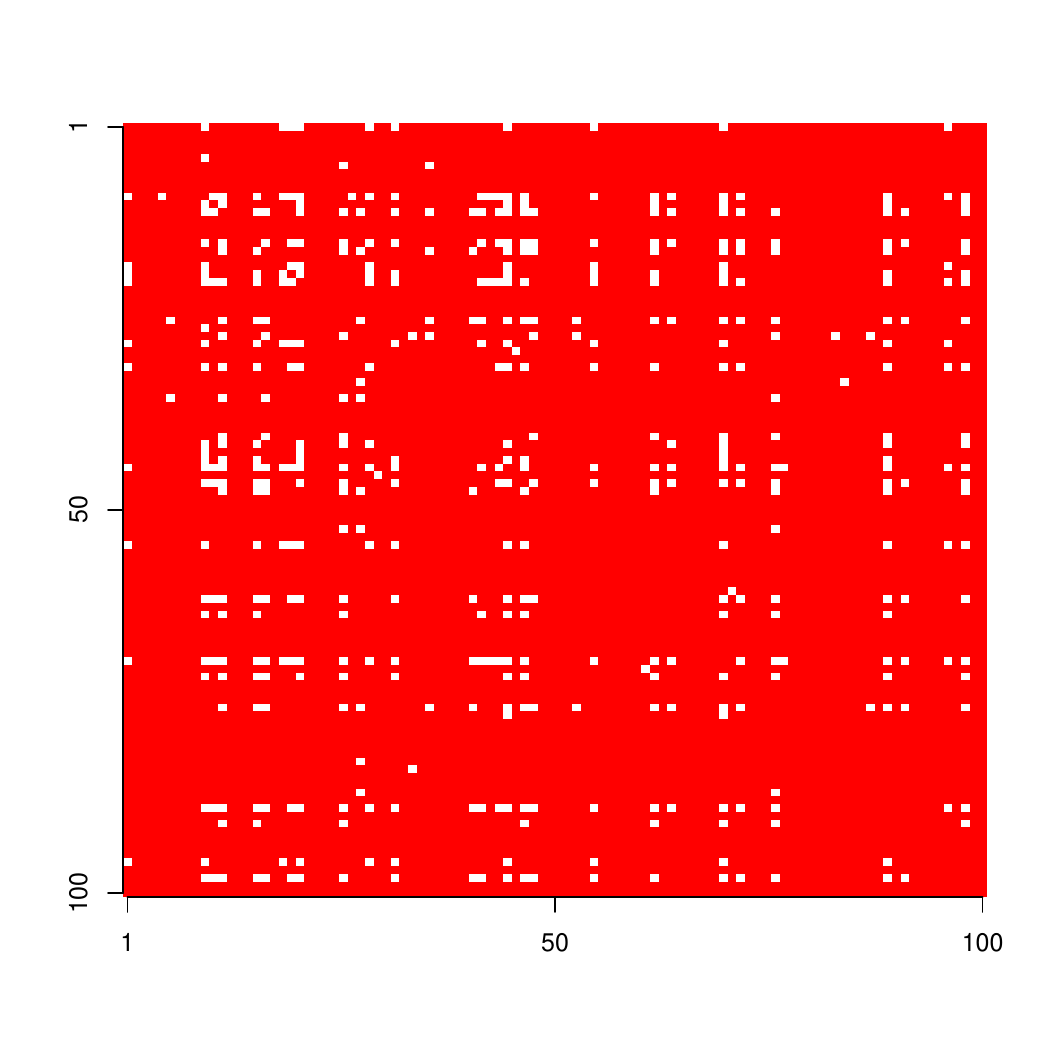}
\includegraphics[width=0.24\textwidth]{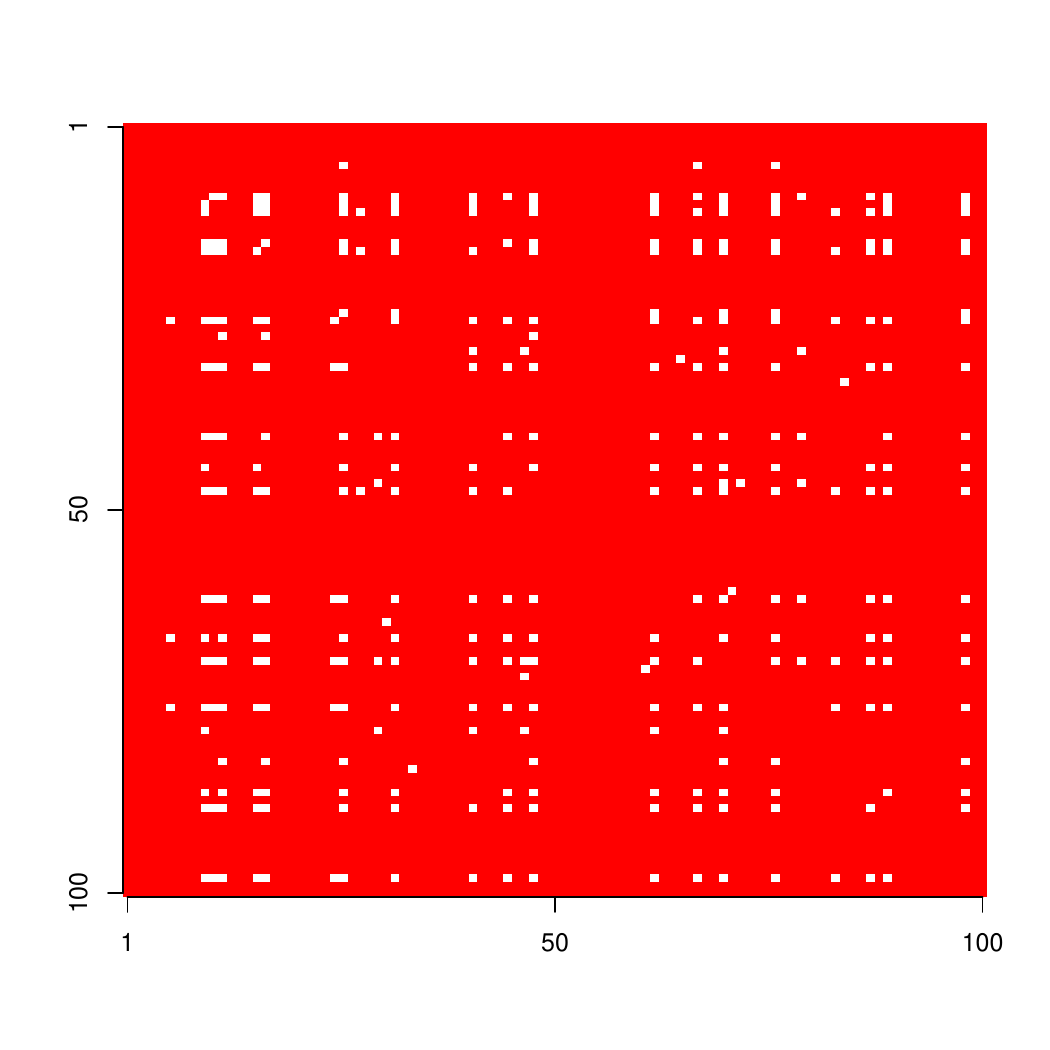}
\includegraphics[width=0.24\textwidth]{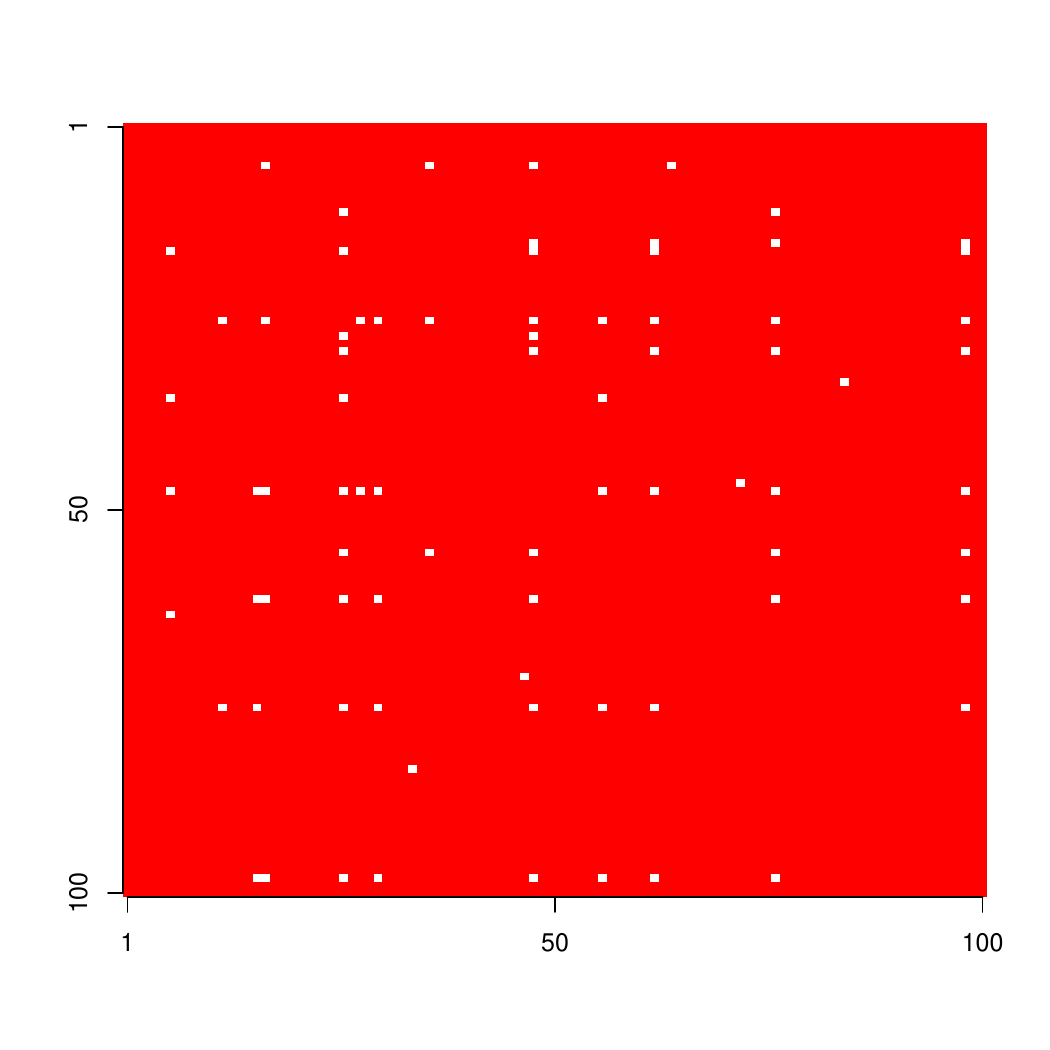}							
		\caption{\label{fig3} Examples  of  adjacency  matrices, down-sampled to  a $100\times 100$, between  the  change points  estimated  by  	NonPar-RDPG-CPD  in  the zebrafish example.  From  left  to  right and from top to bottom, the   first two rows of panels correspond to  $t = 3, 7,  15,  32,  40,  45, 52, 60, 65, 75, 80$ and  $87$.    From  left  to  right and from top to bottom, the last two rows correspond  to $t =6,8,9,10,11,12$ and $13$. }
	\end{center}
\end{figure}

\begin{table}[t!]
	\centering
	\caption{\label{tab4} Scenario 4}
	\medskip
	\setlength{\tabcolsep}{18pt}
	\begin{small}
		\begin{tabular}{ rrrrrr}
			\hline
Method &	$n$ &  $\varepsilon$  & $\vert  K - \widehat{K}\vert$   & $d(\widehat{\mathcal{C}}| \mathcal{C})$  & $d( \mathcal{C}|\widehat{\mathcal{C}})$ \\ 
\hline
NonPar-RDPG-CPD & 300 & 0.3 & \textbf{0.72}  &35.0&\textbf{12.0}\\
NBS	& 300& 0.3  &19.4& \textbf{1.0} &43.0  \\			
MNBS	& 300& 0.3  &   2.0 &  $\inf$  &  $-\inf$ \\	
\hline	
NonPar-RDPG-CPD & 200 & 0.3 & \textbf{0.84} &40.0& \textbf{10.0}\\
NBS	& 200& 0.3  & 19.4& \textbf{1.0}  &43.0  \\	
MNBS	& 200& 0.3  &2.0 &$\inf$ &$-\inf$  \\			
\hline
NonPar-RDPG-CPD & 100 & 0.3 & \textbf{1.0}   & 30.0 & 20.0\\
NBS	& 100& 0.3  & 9.44&   \textbf{3.0} &  41.0 \\
MNBS	& 100& 0.3  &2.0 &$\inf$ &$-\inf$ \\								
\hline
NonPar-RDPG-CPD & 300 & 0.15 &  \textbf{0.8} & 34.0  &  \textbf{17.0} \\
NBS	& 300& 0.15  &20.24 &   \textbf{1.0}  & 43.0 \\	
MNBS	& 300& 0.15  &2.0 &$\inf$ &$-\inf$ \\			
\hline
NonPar-RDPG-CPD & 200 & 0.15 & \textbf{0.96}  & 40.0     &  \textbf{11.0} \\
NBS	& 200& 0.15  & 17.0 &   \textbf{1.0}   & 43.0\\
MNBS	& 200& 0.15   &2.0 &$\inf$ &$-\inf$\\			
\hline
NonPar-RDPG-CPD & 100 & 0.15 & \textbf{0.84}  &  34    &  \textbf{18.0} \\
NBS	& 100& 0.15  &10.64& \textbf{1.0} &41.0 \\
MNBS	& 100& 0.15   &2.0 &$\inf$ &$-\inf$\\	
\hline
NonPar-RDPG-CPD & 300 & 0.05 & \textbf{0.80}  &33.0  &  \textbf{17.0} \\
NBS	& 300& 0.05  &   20.48   &   \textbf{1.0}  &  43.0 \\	
MNBS	& 300& 0.05   &2.0 &$\inf$ &$-\inf$\\			
\hline
NonPar-RDPG-CPD & 200 & 0.05 & \textbf{0.88}  & 38.0  &  \textbf{19.0} \\
NBS	& 200& 0.05  & 17.56& \textbf{1.0} & 43.0 \\
MNBS	& 200& 0.05   &2.0 &$\inf$ &$-\inf$\\			
\hline
NonPar-RDPG-CPD & 100 & 0.05 &\textbf{1.04}   &   32.0 & \textbf{16.0} \\
NBS	& 100& 0.05  &11.48&   \textbf{3.0}  &  41.0 \\
MNBS	& 100& 0.05   &2.0 &$\inf$ &$-\inf$ \\
\hline		
		\end{tabular}
	\end{small}
\end{table}

\subsection{Real data}\label{sec-real-data}

Our goal is to estimate change points in the context of the neuronal activity in larval zebrafish.  The data consist of simultaneous whole-brain neuronal activity data at near single cell resolution \citep{prevedel2014simultaneous}. The original data format is a matrix of size $5379 \times 5000$.  This corresponds to the neural activity of 5379 neurons over   5000 frames, where one second in time corresponds to 20 frames.

To construct the final sequence of networks, we proceed as in \cite{park2015anomaly}.  Specifically, we first remove artificial neurons leaving us with a $5105 \times 5000$ matrix.  Then we bin the data into 100 non-overlapping periods.  Each period corresponds to 2.5 seconds of the original data.  The resulting time series is then $Z(t) \in \mathbb{R}^{5105 \times  50}$ for $t \in \{1, \ldots, 100\}$.  Following \cite{lyzinski2017fast}, we finally construct the adjacency matrices $A(t) \in \mathbb{R}^{5105 \times 5105 }$ as
	\[
		A_{i,j}(t) = \mathbbm{1}\{\text{corr}(Z_i(t), Z_j(t))> 0.7\}, \quad t = 1, \ldots, T,
	\]
	where $T = 100$.

With the  time  series  $\{A(t)\}_{t=1}^T$ in hand, we  proceed  to  run  change point  detection with Algorithm \ref{algorithm:WBS}.  The implementation  details  are the same  as  those  in   Section  \ref{sec:simulations}. However,  to facilitate  computations  at every  instance  of time  we  randomly  sample   800  nodes in the network and work with a down-sampled  version of $A(t)$. After running  our method, we  estimate  change points  at   $t = 5, 10, 29, 36, 42, 50, 57, 62, 71, 79, 85$,  and  $89$. In the original 250  seconds  time stamp, the  changes correspond to 12.5  25.0,  72.5,  90.0, 105.0, 125.0, 142.5, 155.0, 177.5, 197.5, 212.5,  and  222.5  seconds. Simple  inspection suggests  that our estimated  change points  are  in agreement with   the  extracted intensity signal of Ca2+ fluorescence using spatial filters in   Figure   3 (c) in \cite{prevedel2014simultaneous}.  As remarked in \cite{park2015anomaly},  a lab scientist induced a change-point  at the 16th
second, by giving an olfactory stimulus to the zebrafish. In  the scale of our time series  $\{A(t)\}_{t=1}^T$, this change   corresponds  to $t = 6$  which   seems  to  be  captured by our algorithm  that detected  a change point  at  $t=5$.

We also  considered change point detection with the algorithm NBS  \citep{wang2018optimal}. The set  of  estimated   change points is roughly the same  to that estimated by 	NonPar-RDPG-CPD: 10, 14, 22, 26, 32, 36, 42, 50, 58, 62, 66, 72, 80, and 90. One important difference, however,  is that  NBS did not detect a change point  near  $t=6$, the change point  created  by the lab scientist. We also  tried the MNBS method \citep{zhao2019change},  but this only  detected  changes at 14, 45, 66, 80. Tuning parameters of both  NBS and MNBS are chosen as described in Section \ref{sec:simulations}.

Finally,  we  have included  \Cref{fig3}  which shows   down-sampled  versions of $A(t)$ for values  of $t$  between   estimated  change points. This reinforces  our intuition that  the     structural breaks  estimated with NonPar-RDPG-CPD  are meaningful.

\section{Discussions}\label{sec-conclusion}

In this paper, we have studied the offline change point localization problem in a sequence of dependent nonparametric random dot product graphs.  We allow for a weakly dependent process along the time and introduce the dependence within networks via latent positions.  In fact, conditional on the latent positions, the edges within a network are independent and one may wish to further allow for dependence among edges conditional on the latent positions.  We remark that this is technically feasible - one can incorporate a weak dependence version of Bernstein's inequality (Lemma~\ref{thm-delvon}) in the estimation of the latent positions (Lemma~\ref{lem-three-e}).  Such deployment requires a data generating mechanism characterizing the dependence among edges, but without a natural distance among edges, we refrain our pursuit on this direction.

\acks{The authors are very grateful to Joshua Agterberg for pointing out a mistake in the proof in a previous version and to the Editor and all referees for constructive comments which substantially improved the paper.  OHMP is partially funded by NSF DMS-2015489.  YY is partially funded by EPSRC EP/V013432/1.  CEP was supported in part by the Defense Advanced Research Projects Agency under the D3M program administered through contract FA8750-17-2-0112, the National Science Foundation HDR TRIPODS 1934979, and by funding from Microsoft Research.  This project began while YY was Lecturer and CEP was Heilbronn Distinguished Visitor in Data Science at the University of Bristol.}

\newpage

\appendix
\section{Technical details of \Cref{sec:jumps}}\label{sec-app-jumps}

\begin{proof}[Proof of Lemma \ref{lem-dist-moment}]
	For any $i, j \in \{1, \ldots, n\}$, $i \neq j$, it holds that 
	\[
	\mathbb{P}\{A_{ij} | X_i, X_j\} = X_i^{\top} X_j = X_i^{\top} U^{\top} UX_j,
	\]
	for any orthogonal operator $U \in \mathbb{R}^{d \times d}$.  In this proof, by the equivalence in terms of the distributions $F$ and $\widetilde{F}$, we mean the equivalence up to a rotation, which is detailed in Definition \ref{def-equivalence}.  Without loss of generality, if a rotation is needed, we omit it in the notation.
	
	We divide this proof into two cases: (a) $d = 1$ and (b) $d > 1$.  
	
	\vskip 1mm
	\noindent \textbf{(a) $p = 1$.}
	
	Since the entries of $A$ and $\widetilde{A}$ are Bernoulli random variables, they only take values in $\{0, 1\}^{n \times n}$.  For any symmetric matrix $v \in \{0, 1\}^{n\times n}$, we have
	\begin{align}
	\mathbb{P}\{A = v\} & = \mathbb{E}\left\{\mathbb{E}\left(\prod_{i = 1}^{n-1} \prod_{j = i+1}^n \mathbbm{1}\{A_{ij} = v_{ij}\} \Bigg | \{X_l\}_{l=1}^n\right)\right\}	\nonumber \\
	& = \mathbb{E}\left[\prod_{i = 1}^{n-1} \prod_{j = i+1}^n \left\{(X_i  X_j)v_{ij} + (1 - X_i  X_j)(1 - v_{ij})\right\}\right]. \label{eq-pav-exp}
	\end{align}
	
	If $\mathcal{L} = \widetilde{\mathcal{L}}$, then we have the following.  
	
	\begin{itemize}
		\item If $v_{ij} \equiv 1$, then 
		\begin{align*}
		\eqref{eq-pav-exp} = \mathbb{E}\left[ \prod_{i = 1}^{n-1} \prod_{j = i+1}^n (X_i X_j)\right] = \left\{\mathbb{E} (X_1^{n-1})\right\}^n,
		\end{align*}
		which implies that $\mathbb{E}_F(X_1^{n-1}) = \mathbb{E}_{\widetilde{F}}(\widetilde{X}_1^{n-1})$.  Note that in order to have an edge, $n \geq 2$, which implies that $n - 1 \geq 1$.
		
		\item If there is one and only one pair $(i, j)$, $i < j$, such that $v_{ij} = v_{ji} = 0$, and $v_{kl} = 1$, $(k, l) \notin \{(i, j), \, (j, i)\}$, then without loss of generality, we let $(i, j) = (1, 2)$.  If $n = 2$, then 
		\[
		\eqref{eq-pav-exp} = 1 - \{\mathbb{E}(X_1)\}^2,
		\]
		which implies that $\mathbb{E}_F(X_1^{n-1}) = \mathbb{E}_{\widetilde{F}}(\widetilde{X}_1^{n-1})$.  
		
		If $n \geq 3$, then 
		\begin{align*}
		\eqref{eq-pav-exp} & = \mathbb{E}\left[ \prod_{i = 3}^{n-1} \prod_{j = i+1}^n (X_i X_j) \cdot \prod_{r = 1}^ 2\prod_{l = 3}^n (X_r X_l) \cdot (1 - X_1X_2)\right] \\
		& = \left\{\mathbb{E}(X_1^{n-2})\right\}^2 \left\{\mathbb{E}(X_1^{n-1})\right\}^{n-2} - \left\{\mathbb{E} (X_1^{n-1})\right\}^n,
		\end{align*}
		which implies $\mathbb{E}_F(X_1^{n-2}) = \mathbb{E}_{\widetilde{F}}(\widetilde{X}_1^{n-2})$.
		
		\item If $n \geq 3$, then for $k \in \{2, \ldots, n-1\}$	, without loss of generality, let $v_{1j} = v_{j1} = 0$, $j \in \{2, \ldots, k+1\}$, and $v_{rs} = v_{sr} = 1$ otherwise.  We have that
		\begin{align}
		\eqref{eq-pav-exp} & = \mathbb{E}\left[\prod_{i = k+2}^{n-1} \prod_{j = i+1}^n (X_i X_j) \cdot \prod_{l = 1}^{k+1}\prod_{i = k+2}^n (X_lX_i) \cdot \prod_{r = 2}^{k+1} (1 - X_1X_r) \right]	 \nonumber \\
		& = \left\{\mathbb{E}(X_1^{n-1})\right\}^{n-k-1} \mathbb{E}\left[\prod_{l = 1}^{k+1} X_l^{n-k-1} \cdot \prod_{r = 2}^{k+1} (1 - X_1X_r) \right] \nonumber \\
		& = \left\{\mathbb{E}(X_1^{n-1})\right\}^{n-k-1} \sum_{r = 0}^k {k \choose r} (-1)^r \mathbb{E}(X_1^{n-k-1+r}) \left[ \mathbb{E}(X_1^{n-k})\right]^r.  \label{eq-general-k-case}
		\end{align}
		
		Note that, if $k = 2$, then the summands in \eqref{eq-general-k-case} include moments $n-1$, $n-2$ and $n-3$.  We have already shown that $\mathbb{E}_F(X_1^{n-1}) = \mathbb{E}_{\widetilde{F}}(\widetilde{X}_1^{n-1})$ and $\mathbb{E}_F(X_1^{n-2}) = \mathbb{E}_{\widetilde{F}}(\widetilde{X}_1^{n-2})$, therefore \eqref{eq-general-k-case} implies that $\mathbb{E}_F(X_1^{n-3}) = \mathbb{E}_{\widetilde{F}}(\widetilde{X}_1^{n-3})$.  
		
		\item By induction, for $n > k_0$ and $k_0 \geq 3$, if it holds that $\mathbb{E}_F(X_1^{n-s}) = \mathbb{E}_{\widetilde{F}}(\widetilde{X}_1^{n-s})$, $s = 1, \ldots, k_0$, then we have $\mathbb{E}_F(X_1^{n-k_0-1}) = \mathbb{E}_{\widetilde{F}}(\widetilde{X}_1^{n-k_0-1})$, due to the fact that the summands in \eqref{eq-general-k-case} include moment $n-s$, $s = 1, \ldots, k_0 + 1$.
	\end{itemize}
	
	We conclude that if $\mathcal{L} = \widetilde{\mathcal{L}}$, then $\mathbb{E}_F(X_1^{k}) = \mathbb{E}_{\widetilde{F}}(\widetilde{X}_1^{k})$, $k = 1, \ldots, n-1$. \\
	
	If $\mathbb{E}_F(X_1^k) = \mathbb{E}_{\widetilde{F}}(\widetilde{X}_1^k)$, $k = 1, \ldots, n-1$, then it follows from that for any $v$,
	\begin{align*}
	& \eqref{eq-pav-exp} = \sum_{l = 0}^{\sum_{i < j}\mathbbm{1}\{v_{ij} = 0\}} {\sum_{i < j}\mathbbm{1}\{v_{ij} = 0\} \choose l}(-1)^{\sum_{i < j}\mathbbm{1}\{v_{ij} = 0\} - l} \\
	& \hspace{2cm} \times \mathbb{E}\left\{\prod_{v_{ij} = 1}X_i X_j \prod_{\substack{r = 1 \\ v_{i_rj_r} = 0}}^{\sum_{i < j}\mathbbm{1}\{v_{ij} = 0\} - l}X_{i_r}X_{j_r}\right\},
	\end{align*}
	which is a function solely of $\mathbb{E}_F(X_1^{k})$, $k = 1, \ldots, n-1$.  We, therefore, have that $\mathcal{L} = \widetilde{\mathcal{L}}$.
	
	\vskip 3mm
	\noindent \textbf{(b) $d > 1$.}
	
	Since the entries of $A$ and $\widetilde{A}$ are Bernoulli random variables, they only take values in $\{0, 1\}^{n \times n}$.  For any symmetric matrix $v \in \{0, 1\}^{n\times n}$, we have
	\begin{align}
	\mathbb{P}\{A = v\} & = \mathbb{E}\left\{\mathbb{E}\left(\prod_{i = 1}^{n-1} \prod_{j = i+1}^n \mathbbm{1}\{A_{ij} = v_{ij}\} \Bigg | \{X_l\}_{l=1}^n\right)\right\}	\nonumber \\
	& = \mathbb{E}\left[\prod_{i = 1}^{n-1} \prod_{j = i+1}^n \left\{\left(\sum_{k = 1}^d X_{i, k} X_{j, k}\right)v_{ij} + \left(1 - \sum_{k = 1}^d X_{i, k} X_{j, k}\right)(1 - v_{ij})\right\}\right]. \label{eq-pav-exp-2}
	\end{align}
	
	If $\mathcal{L} = \widetilde{\mathcal{L}}$, then we have the following.  
	
	\begin{itemize}
		\item 	If $v_{ij} \equiv 1$, then 
		\begin{align}
		\eqref{eq-pav-exp-2} & = \mathbb{E}\left[ \prod_{i = 1}^{n-1} \prod_{j = i+1}^n \left(\sum_{k = 1}^d X_{i, k} X_{j, k} \right)\right] \nonumber \\
		& = \mathbb{E}\left[ \prod_{j = 2}^n \left(\sum_{k = 1}^p X_{1, k} X_{j, k}\right) \cdot \prod_{i=2}^{n-1}\prod_{j = i+1}^n \left(\sum_{k = 1}^d X_{i, k}X_{j, k}\right)\right] \nonumber \\
		& = \mathbb{E}\left\{\left[\sum_{k_2, \ldots, k_{n} = 1}^d \left(\prod_{l = 2}^n X_{1, k_l}\right) \cdot \left(\prod_{l = 2}^n X_{l, k_l}\right)\right] \cdot \left[\prod_{i = 2}^{n-1} \prod_{j = i+1}^n \left(\sum_{k = 1}^d X_{i, k}X_{j, k}\right)\right] \right\} \nonumber \\
		& = \sum_{k_2, \ldots, k_n = 1}^d \mathbb{E}\left(\prod_{l = 2}^n X_{1, k_l}\right) \mathbb{E}\left\{ \left(\prod_{l = 2}^n X_{l, k_l}\right) \cdot \left[\prod_{i = 2}^{n-1} \prod_{j = i+1}^n \left(\sum_{k = 1}^d X_{i, k}X_{j, k}\right)\right] \right\}, \label{eq-pav-exp-3}
		\end{align}
		where the third identity follows from the independence assumption.  Note that for any $(k_2, \ldots, k_n) \in \{1, \ldots, p\}^{\otimes (n-1)}$, the term 
		\[
		\mathbb{E}\left\{ \left(\prod_{l = 2}^n X_{l, k_l}\right) \cdot \left[\prod_{i = 2}^{n-1} \prod_{j = i+1}^n \left(\sum_{k = 1}^d X_{i, k}X_{j, k}\right)\right] \right\}
		\]
		in \eqref{eq-pav-exp-3} does not involve $X_1$, and the term 
		\[
		\mathbb{E}\left(\prod_{l = 2}^n X_{1, k_l}\right) 
		\]
		includes all possible terms of the form 
		\begin{equation}\label{eq-pav-exp-4}
		\mathbb{E}\left(\prod_{l = 1}^d X_{1, l}^{k_l}\right), \quad \sum_{l = 1}^d k_l = n-1, \quad k_l \geq 0, \, l \in \{1, \ldots, d\}.
		\end{equation}
		Due to the exchangeablility, we conclude that \eqref{eq-pav-exp-2} is solely a function of polynomials of \eqref{eq-pav-exp-4}.
		
		If $n = 2$, then due to Definition \ref{def-1}, we have that $\mathcal{L} = \widetilde{\mathcal{L}}$ implies that 
		\[
		\mathbb{E}\left(\prod_{l = 1}^d X_{1, l}^{k_l}\right) = \mathbb{E}\left(\prod_{l = 1}^d \widetilde{X}_{1, l}^{k_l}\right), \quad \sum_{l = 1}^d k_l = n-1, \quad k_l \geq 0, \, l \in \{1, \ldots, d\}.
		\]
		
		\item If $n \geq 3$, then we prove by induction.  Assume that 	
		\[
		\mathbb{E}\left(\prod_{l = 1}^p X_{1, l}^{k_l}\right) = \mathbb{E}\left(\prod_{l = 1}^p \widetilde{X}_{1, l}^{k_l}\right), \quad \sum_{l = 1}^p k_l = n-k, \ldots, n-1, \quad k_l \geq 0, \, l \in \{1, \ldots, p\},
		\]
		where $ n-1 \geq n - k \geq 2$.  We now proceed to prove that 
		\begin{equation}\label{eq-pav-exp-5}
		\mathbb{E}\left(\prod_{l = 1}^p X_{1, l}^{k_l}\right) = \mathbb{E}\left(\prod_{l = 1}^p \widetilde{X}_{1, l}^{k_l}\right), \quad \sum_{l = 1}^p k_l = n-k-1, \ldots, n-1, \quad k_l \geq 0, \, l \in \{1, \ldots, d\}.
		\end{equation}
		To show this, we assume that $v_{1j} = v_{j1} = 0$, $j \in \{2, \ldots, k+1\}$, and $v_{rs} = 1$ otherwise.  We have that
		\begin{align*}
		\eqref{eq-pav-exp-2} & = \mathbb{E}\left[\prod_{j = 2}^{k+1} \left(1 - \sum_{s = 1}^s X_{1, s}X_{j, s}\right) \cdot \prod_{l = k+2}^n \left(\sum_{s = 1}^d X_{1, s}X_{l, s}\right) \cdot \prod_{i = 2}^{n-1} \prod_{r = i+1}^n \left(\sum_{s=1}^d X_{i, s}X_{r, s}\right)\right] \\
		& = (-1)^k \mathbb{E}\left\{\prod_{l = k+2}^n \left(\sum_{s = 1}^d X_{1, s}X_{l, s}\right) \cdot \prod_{i = 2}^{n-1} \prod_{r = i+1}^n \left(\sum_{s=1}^d X_{i, s}X_{r, s}\right)\right\} + f(X)   \\
		& = (-1)^k \sum_{s_{k+2}, \ldots, s_n = 1}^d \mathbb{E}\left(\prod_{l = k+2}^n X_{1, s_l}\right) \mathbb{E}\Bigg\{\prod_{l = k+2}^n \left(\sum_{s = 1}^d X_{l, s} \right) \\
		& \hspace{3cm} \times \prod_{i = 2}^{n-1} \prod_{r = i+1}^n \left(\sum_{s=1}^d X_{i, s}X_{r, s}\right) \Bigg\}+ f(X),   
		\end{align*}
		where $f(X)$ is solely a function of 
		\[
		\mathbb{E}\left(\prod_{l = 1}^d X_{1, l}^{k_l}\right), \quad \sum_{l = 1}^d k_l = n-k, \ldots, n-1, \quad k_l \geq 0, \, l \in \{1, \ldots, d\}.
		\]
		Note that 
		\[
		\sum_{s_{k+2}, \ldots, s_n = 1}^d \mathbb{E}\left(\prod_{l = k+2}^n X_{1, s_l}\right) 
		\]
		is a function of 
		\[
		\mathbb{E}\left(\prod_{l = 1}^d X_{1, l}^{k_l}\right), \quad \sum_{l = 1}^d k_l = n-k-1, \quad k_l \geq 0, \, l \in \{1, \ldots, d\}.
		\]
		Therefore we have shown \eqref{eq-pav-exp-5}.
	\end{itemize}
	
	To this end, we have that $\mathcal{L} = \widetilde{\mathcal{L}}$ implies that
	\begin{equation}\label{eq-pav-exp-6}
	\mathbb{E}\left(\prod_{l = 1}^d X_{1, l}^{k_l}\right) = \mathbb{E}\left(\prod_{l = 1}^d \widetilde{X}_{1, l}^{k_l}\right), \quad \sum_{l = 1}^d k_l = 1, \ldots, n-1, \quad k_l \geq 0, \, l \in \{1, \ldots, d\}.
	\end{equation}
	
	To show that \eqref{eq-pav-exp-6} implies that $\mathcal{L} = \widetilde{\mathcal{L}}$, we notice that for any $v$,
	\begin{align*}
	& \eqref{eq-pav-exp-2} = \sum_{l = 0}^{\sum_{i < j}\mathbbm{1}\{v_{ij} = 0\}} (-1)^l  \Bigg[ \Bigg\{\sum_{\substack{\{(i_1, j_1), \ldots, (i_l, j_l)\} \\ \in \{(i, j):\, v_{ij} = 0,\, i < j\}}} \left[\prod_{r = 1}^l \left\{ \sum_{k = 1}^d \left(X_{i_l, k} X_{j_l, k}\right)\right\} \right] \Bigg\}\\
	& \hspace{4cm} \times \prod_{(i, j): \, v_{ij} = 1,\, i < j} \left( \sum_{k = 1}^d X_{i, k} X_{j, k}\right)\Bigg],
	\end{align*}
	which is solely a function of 
	\[
	\mathbb{E}\left(\prod_{l = 1}^d X_{1, l}^{k_l}\right), \quad \sum_{l = 1}^d k_l = 1, \ldots, n-1, \quad k_l \geq 0, \, l \in \{1, \ldots, d\}.
	\]
	The  final claim holds.
	
\end{proof}

\begin{proof}[Proof of Lemma \ref{lem-one-network}]
	For simplicity, we assume $n$ is an even number.  Let $\mathcal{O} = \{(i, n/2+i), \, i = 1, \ldots, n/2\}$.  Let
	\[
	z_* \,\in \, \argsup_{z \in [0, 1]} |G(z) - \widetilde{G}(z)|.
	\]
	Note that 
	\begin{align}
	& \left|\sqrt{\frac{2}{n}} \sum_{(i, j) \in \mathcal{O}} \left(\mathbbm{1}\{Y_{ij} \leq z_*\} -  \mathbbm{1}\{\widetilde{Y}_{ij} \leq z_*\}\right)\right| \nonumber \\
	= & \Bigg|\sqrt{\frac{2}{n}} \sum_{(i, j) \in \mathcal{O}} \left\{\left(\mathbbm{1}\{Y_{ij} \leq z_*\} - \mathbb{E}\left[\mathbbm{1}\{Y_{ij} \leq z_*\}\right]\right) - \left(\mathbbm{1}\{\widetilde{Y}_{ij} \leq z_*\} - \mathbb{E}\left[\mathbbm{1}\{\widetilde{Y}_{ij} \leq z_*\}\right]\right) \right\} \nonumber \\
	& \hspace{2cm} + \sqrt{\frac{n}{2}} \left\{\mathbb{E}\left[\mathbbm{1}\{Y_{ij} \leq z_*\}\right] - \mathbb{E}\left[\mathbbm{1}\{\widetilde{Y}_{ij} \leq z_*\}\right]\right\} \Bigg| \nonumber \\
	\geq & \sqrt{\frac{n}{2}} \left|\mathbb{E}\left[\mathbbm{1}\{Y_{ij} \leq z_*\}\right] - \mathbb{E}\left[\mathbbm{1}\{\widetilde{Y}_{ij} \leq z_*\}\right]\right| - \left| \sqrt{\frac{2}{n}} \sum_{(i, j) \in \mathcal{O}} \left(\mathbbm{1}\{Y_{ij} \leq z_*\} - \mathbb{E}\left[\mathbbm{1}\{Y_{ij} \leq z_*\}\right]\right) \right| \nonumber \\
	& \hspace{2cm} -  \left| \sqrt{\frac{2}{n}} \sum_{(i, j) \in \mathcal{O}} \left(\mathbbm{1}\{\widetilde{Y}_{ij} \leq z_*\} - \mathbb{E}\left[\mathbbm{1}\{\widetilde{Y}_{ij} \leq z_*\}\right]\right) \right| \nonumber \\
	= & \kappa_0\sqrt{n/2} - \left| \sqrt{\frac{2}{n}} \sum_{(i, j) \in \mathcal{O}} \left(\mathbbm{1}\{Y_{ij} \leq z_*\} - \mathbb{E}\left[\mathbbm{1}\{Y_{ij} \leq z_*\}\right]\right) \right| \nonumber \\
	& \hspace{2cm} -  \left| \sqrt{\frac{2}{n}} \sum_{(i, j) \in \mathcal{O}} \left(\mathbbm{1}\{\widetilde{Y}_{ij} \leq z_*\} - \mathbb{E}\left[\mathbbm{1}\{\widetilde{Y}_{ij} \leq z_*\}\right]\right) \right|. \label{eq-pf-lem-one-network-1}
	\end{align}
	
	Next, it follows from Hoeffding's inequality that
	\begin{align}
	&\mathbb{P}\Bigg\{\max\Bigg\{\left| \sqrt{\frac{2}{n}} \sum_{(i, j) \in \mathcal{O}} \left(\mathbbm{1}\{Y_{ij} \leq z_*\} - \mathbb{E}\left[\mathbbm{1}\{Y_{ij} \leq z_*\}\right]\right) \right|, \nonumber \\
	& \hspace{2cm} \left| \sqrt{\frac{2}{n}} \sum_{(i, j) \in \mathcal{O}} \left(\mathbbm{1}\{\widetilde{Y}_{ij} \leq z_*\} - \mathbb{E}\left[\mathbbm{1}\{\widetilde{Y}_{ij} \leq z_*\}\right]\right) \right|\Bigg\} > \sqrt{\log(n)}\Bigg\} \leq 2n^{-4}.	\label{eq-pf-lem-one-network-2}
	\end{align}
	
	Combining \eqref{eq-pf-lem-one-network-1} and \eqref{eq-pf-lem-one-network-2}, we have that with probability at least $1 - 2n^{-4}$, 
	\begin{equation}\label{eq-pf-lem-one-network-3}
	\left|\sqrt{\frac{2}{n}} \sum_{(i, j) \in \mathcal{O}} \left(\mathbbm{1}\{Y_{ij} \leq z_*\} -  \mathbbm{1}\{\widetilde{Y}_{ij} \leq z_*\}\right)\right| \geq  \kappa_0\sqrt{n/2} - 2\sqrt{\log(n)}.
	\end{equation}
	
	We then prove by contradiction.  If 	$\mathcal{L} = \widetilde{\mathcal{L}}$, then it follows from Hoeffding's inequality that
	\begin{equation}\label{eq-pf-lem-one-network-4}
	\mathbb{P}\left\{\left|\sqrt{\frac{2}{n}} \sum_{(i, j) \in \mathcal{O}} \left(\mathbbm{1}\{Y_{ij} \leq z_*\} -  \mathbbm{1}\{\widetilde{Y}_{ij} \leq z_*\}\right)\right| \leq \sqrt{\log(n)}\right\} \geq 1 - 2n^{-4}.
	\end{equation}
	Due to \Cref{assume-comp}, \eqref{eq-pf-lem-one-network-3} and \eqref{eq-pf-lem-one-network-4} contradict with each other, which implies that $\mathcal{L} \neq \widetilde{\mathcal{L}}$.
	
\end{proof}

\section{Large probability events}\label{sec-app-prop}

Define
	\[
		\Delta^t_{s, e}(z) = \sum_{k = s+1}^e w_k \sum_{(i, j) \in \mathcal{O}} \left(\mathbbm{1}\{\widehat{Y}^k_{ij} \leq z\} - \mathbb{E}\left\{\mathbbm{1}\{Y^k_{ij} \leq z\}\right\}\right),
	\]
	where 
	\[
		w_k = \begin{cases}
 			\sqrt{\frac{2}{n}} \sqrt{\frac{e-t}{(e-s)(t-s)}}, & k = s + 1, \ldots, t,\\
 			- \sqrt{\frac{2}{n}} \sqrt{\frac{t-s}{(e-s)(e-t)}}, & k = t + 1, \ldots, e.
 		\end{cases}
	\]

In this section, we are to show the following two events hold with probability tending to 1, as $(n \vee T) \to \infty$,
	\begin{align*}
		\mathcal{B}_1 = 	\left\{\max_{\substack{0 \leq s < t < e \leq T}} \Delta^t_{s, e} \leq C_9 \sqrt{\frac{T}{1-\rho}} \max\{d\log(n \vee T), \, d^{3/2}\sqrt{\log(n \vee T)}\}\right\}
	\end{align*}
	and
	\begin{align*}	
		\mathcal{B}_2 &= \Bigg\{\max_{0 \leq s < t < e \leq T} \sup_{z \in [0, 1]}\left|\sqrt{\frac{2}{n(e-s)}}\sum_{k = s+1}^e \sum_{(i, j) \in \mathcal{O}} \left(\mathbbm{1}\{\widehat{Y}^k_{ij} \leq z\} - \mathbb{E}\left\{\mathbbm{1}\{Y^k_{ij} \leq z\}\right\}\right)\right| \nonumber \\
		& \hspace{2cm} \leq C_9 \sqrt{\frac{T}{1-\rho}} \max\{d\log(n \vee T), \, d^{3/2}\sqrt{\log(n \vee T)}\}\Bigg\}.
	\end{align*}
	This is formally stated in Lemma \ref{lem-large-prob}.  To reach there, we denote
	\begin{align*}
		& \mathcal{E}_1 = \left\{\max_{t = 1, \ldots, T} \|U_{P_t}^{\top}(A(t) - P_t)U_{P_t}\|_{\mathrm{F}} \leq C_1 \sqrt{\log(n \vee T)}\right\}, \\
		& \mathcal{E}_2 = \left\{\max_{t = 1, \ldots, T} \|(A(t)  - P_t)U_{P_t}\|_{ 2\rightarrow \infty  } \leq C_2 \sqrt{d\log(n \vee T)}\right\}, \\
		& \mathcal{E}_3 = \left\{\max_{t = 1, \ldots, T} \|A(t)  - P_t\|_{\mathrm{op}} \leq C_3 \sqrt{n}\right\}
	\end{align*}
	and	
	\begin{align*}		
		\mathcal{E}_4 = \bigg\{2^{-1} n \min_{k = 1, \ldots, K} \mu_d^k \leq \min_{t = 1, \ldots, T} \lambda_d(P_t) \leq \max_{t = 1, \ldots, T} \lambda_1(P_t) \leq (3/2) n \max_{k = 1, \ldots, K} \mu^k_1\bigg\}, 
	\end{align*}
	where $C_1 > 4\sqrt{6}$, $C_2 > 4\sqrt{6}$, $C_3 > 0$ are universal constants. Throughout, $\|\cdot\|_{2\rightarrow \infty}$ denotes the two-to-infinity norm.  To be specific, for any matrix $M \in \mathbb{R}^{m_1 \times m_2}$,
	\[
		\|M\|_{2 \to \infty} = \max_{x \in \mathbb{R}^{m_2}: \, \|x\|_2 = 1} \|Ax\|_{\infty},
	\]
	where $\|Ax\|_{\infty}$ denotes the largest absolute value of the entries in $Ax$.

\begin{lemma}\label{lem-low-rank}
Under \Cref{assume:model-rdpg}, for any $t \in \{1, \ldots, T\}$, it holds that
	\[
		\mathbb{P}\{\lambda_{d+1}(P_t) = 0\} = 1.
	\]
\end{lemma}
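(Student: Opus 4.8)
The plan is to observe that the claim is essentially a deterministic rank bound, which then holds a fortiori with probability one. First I would recall from \Cref{assume:model-rdpg} that $P_t = X(t)X(t)^{\top}$ with $X(t) = (X_1(t),\ldots,X_n(t))^{\top} \in \mathbb{R}^{n\times d}$. Since $P_t$ is a Gram matrix, for any $v \in \mathbb{R}^n$ we have $v^{\top} P_t v = \|X(t)^{\top} v\|_2^2 \ge 0$, so $P_t$ is positive semidefinite and all of its eigenvalues are nonnegative; in particular, ordering them by value and by absolute value coincide, so there is no ambiguity in how $\lambda_{d+1}(P_t)$ is understood.

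Next I would invoke the rank bound $\rank(P_t) = \rank(X(t)X(t)^{\top}) = \rank(X(t)) \le d$, which holds simply because $X(t)$ has only $d$ columns. A positive semidefinite matrix of rank $r$ has exactly $r$ strictly positive eigenvalues and $n-r$ zero eigenvalues. Hence, writing the eigenvalues in decreasing order $\lambda_1(P_t) \ge \cdots \ge \lambda_n(P_t) \ge 0$, at most $d$ of them can be positive, which forces $\lambda_{d+1}(P_t) = 0$ whenever $n \ge d+1$.

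Finally, since this chain holds for every realization of the latent positions $\{X_i(t)\}_{i=1}^n$ — the rank of an $n\times d$ matrix never exceeds $d$, irrespective of its entries — the event $\{\lambda_{d+1}(P_t) = 0\}$ occurs surely, and a fortiori with probability one. I do not anticipate any genuine obstacle: the only point requiring a word of care is that $\lambda_{d+1}(P_t)$ be well defined, i.e.\ that $n \ge d+1$, which is guaranteed in the dense regime under consideration. I would also remark, to avoid confusion, that the complementary statement $\rank(P_t) = d$ (equivalently $\lambda_d(P_t) > 0$) is the genuinely probabilistic claim, underwritten separately by \Cref{assume:model-rdpg}(4) and the event $\mathcal{E}_4$; it plays no role in the present lemma, which concerns only the upper bound on the rank.
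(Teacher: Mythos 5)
Your proposal is correct and is essentially the paper's own argument: the paper likewise writes $P_t = X(t)X(t)^{\top}$ and notes that for any realisation of $X(t) \in \mathbb{R}^{n \times d}$ one has $\lambda_{d+1}(P_t) = 0$, which is exactly the deterministic rank bound you spell out. Your version merely fills in the details (positive semidefiniteness, $\rank(P_t) \le d$, the implicit requirement $n \ge d+1$) that the paper leaves tacit.
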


\begin{proof}
For any $t \in \{1, \ldots, T\}$	, we have that
	\[
		P_t = X(t)(X(t))^{\top}.
	\]
	For any realisation of $X(t) \in \mathbb{R}^{n \times d}$, $\lambda_{d+1}(P_t) = 0$.  Thus the final claim holds.
\end{proof}

\begin{lemma}\label{lem-three-e}
Under \Cref{assume:model-rdpg}, we have that 
	\begin{align}
		& \max\left\{\mathbb{P}\left\{\mathcal{E}_1 \mid \{X(t)\}_{t = 1}^T \right\}, \, \mathbb{P}\left\{\mathcal{E}_1 \right\}\right\} \geq 1 - (n \vee T)^{-c_1}, \label{eq-E1-def}\\
		& \max\left\{\mathbb{P}\left\{\mathcal{E}_2 \mid \{X(t)\}_{t = 1}^T \right\}, \, \mathbb{P}\left\{\mathcal{E}_2 \right\}\right\} \geq 1 - (n \vee T)^{-c_2} \label{eq-E2-def}
	\end{align}
	and
	\begin{equation}\label{eq-E3-def}
		\max\left\{\mathbb{P}\left\{\mathcal{E}_3 \mid \{X(t)\}_{t = 1}^T \right\}, \, \mathbb{P}\left\{\mathcal{E}_3 \right\}\right\} \geq 1 - 4Te^{-n},
	\end{equation}
	where $c_1, c_2 > 0$ are universal constants depending on $C_1$ and $C_2$, respectively.
\end{lemma}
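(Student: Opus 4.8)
The plan is to prove all three statements by first conditioning on the latent positions $\{X(t)\}_{t=1}^T$. Once we condition, each $P_t = X(t)X(t)^{\top}$, and hence each orthonormal matrix $U_{P_t}$, becomes deterministic, while the off-diagonal entries $\{(A(t)-P_t)_{ij}\}_{i<j}$ become independent, mean-zero, bounded in $[-1,1]$ with variances at most $1/4$ (the diagonal contributes only a deterministic, operator-norm-$O(1)$ shift and is negligible throughout). The crucial observation is that every conditional tail bound derived below holds uniformly over \emph{all} realizations of $\{X(t)\}$; consequently, taking expectation over the latent positions transfers each bound verbatim to the unconditional probability. Thus both terms in each maximum are bounded below by the same quantity, and it suffices to control the conditional probabilities.

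For $\mathcal{E}_1$, I would write the $(a,b)$ entry of $U_{P_t}^{\top}(A(t)-P_t)U_{P_t}$ as the bilinear form $u_a^{\top}(A(t)-P_t)u_b = \sum_{i<j}(A(t)-P_t)_{ij}(u_{a,i}u_{b,j}+u_{a,j}u_{b,i})$, where $u_a,u_b$ are orthonormal columns of $U_{P_t}$. This is a sum of independent bounded terms whose variance is $O(1)$ by orthonormality, so Bernstein's inequality controls each entry at scale $\sqrt{\log(n\vee T)}$; combining the entrywise bounds into the Frobenius norm and union bounding over $t = 1,\dots,T$ gives \eqref{eq-E1-def}. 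For $\mathcal{E}_2$, note that $\|(A(t)-P_t)U_{P_t}\|_{2\to\infty} = \max_i \|e_i^{\top}(A(t)-P_t)U_{P_t}\|_2$; for a fixed row $i$ and column $k$ the scalar $\sum_j (A(t)-P_t)_{ij}(U_{P_t})_{jk}$ is again a sum of independent bounded variables of variance at most $1/4$, so Bernstein yields $O(\sqrt{\log(n\vee T)})$ per coordinate and hence $O(\sqrt{d\log(n\vee T)})$ for each row $\ell_2$-norm; a union bound over the $n$ rows and the $T$ time points produces \eqref{eq-E2-def}.

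The operator-norm event $\mathcal{E}_3$ is the main obstacle, since it demands the sharp $\sqrt{n}$ scaling together with an $e^{-n}$-type tail rather than a polynomial one, which the entrywise arguments above cannot supply. Conditionally on $\{X(t)\}$, the matrix $A(t)-P_t$ is symmetric with independent bounded entries above the diagonal, and I would invoke a sharp concentration bound for such matrices: either by an $\varepsilon$-net discretization of the unit sphere combined with a Bernstein bound on the quadratic form $x^{\top}(A(t)-P_t)x$ followed by a union bound over the net, or by directly citing a known adjacency-matrix concentration result, to obtain $\|A(t)-P_t\|_{\op}\leq C_3\sqrt{n}$ with conditional probability at least $1-4e^{-n}$. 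A union bound over $t = 1,\dots,T$ then yields the factor $4Te^{-n}$ in \eqref{eq-E3-def}. Finally, as noted in the first paragraph, since each derivation is carried out for an arbitrary fixed realization of $\{X(t)\}$, averaging over the latent positions transfers every bound to the unconditional probability, which completes the proof.
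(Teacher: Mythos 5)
Your overall architecture is exactly the paper's: condition on $\{X(t)\}_{t=1}^T$ so that each $U_{P_t}$ is deterministic and the above-diagonal entries of $A(t)-P_t$ are independent, mean-zero and bounded; prove entrywise concentration for $\mathcal{E}_1$, coordinate-wise concentration for $\mathcal{E}_2$, and an operator-norm bound for $\mathcal{E}_3$; union bound over entries/rows and over $t$; and transfer to the unconditional statement by averaging over the latent positions (the paper writes $\mathbb{P}\{\mathcal{E}_1\}=\mathbb{E}\left\{\mathbb{P}\{\mathcal{E}_1\mid \{X(t)\}_{t=1}^T\}\right\}$, exactly your transfer step). Your fallback for $\mathcal{E}_3$ of citing a known concentration result for symmetric matrices with independent bounded entries is also what the paper does, via Eq.~(4.18) in Vershynin.

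The genuine gap is the concentration tool. Bernstein's inequality carries a max-term penalty: for centred independent terms bounded by $M$ with total variance $v$ it gives $2\exp\bigl(-\tfrac{\varepsilon^2/2}{v+M\varepsilon/3}\bigr)$. Since no delocalization of the eigenvectors $U_{P_t}$ is available uniformly over realizations of $\{X(t)\}$ (and your argument explicitly claims uniformity over all realizations), the coefficients $u_{a,i}u_{b,j}+u_{a,j}u_{b,i}$ in $\mathcal{E}_1$, the entries $(U_{P_t})_{lj}$ in $\mathcal{E}_2$, and the products $2x_ix_j$ for a net point $x$ in $\mathcal{E}_3$ can all be of constant order, so $M$, like $v$, is only $O(1)$. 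At the scale $\varepsilon=C\sqrt{\log(n\vee T)}$ Bernstein then yields only $\exp\bigl(-c\sqrt{\log(n\vee T)}\bigr)$, which is not polynomially small, and the union bounds over $Td^2$ entries ($\mathcal{E}_1$) or $nTd$ coordinates ($\mathcal{E}_2$) fail, because $\log T$ eventually dominates $C\sqrt{\log(n\vee T)}$. The failure is starker for your net argument for $\mathcal{E}_3$: at $\varepsilon=C\sqrt{n}$ the worst-case Bernstein tail is only $\exp(-c\sqrt{n})$ per net point, which cannot beat the $9^n$ cardinality of a $1/4$-net, and here localized test vectors $x$ really do occur, so this branch is irreparable as stated. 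What is needed, and what the paper uses, is Hoeffding's inequality in its sub-Gaussian form (Theorem~2.6.3 of Vershynin for $\mathcal{E}_1$, classical Hoeffding for $\mathcal{E}_2$), whose exponent is $-c\varepsilon^2/\sum_i c_i^2$ with no max-term penalty; the denominators are controlled purely by orthonormality, namely $\sum_{i<j}(u_{a,i}u_{b,j}+u_{a,j}u_{b,i})^2\leq 2$, $\sum_{l}(U_{P_t})_{lj}^2=1$ and $\sum_{i<j}(2x_ix_j)^2\leq 2$, giving $\exp(-c\varepsilon^2)$ tails that survive all three union bounds. (For $\mathcal{E}_1$ and $\mathcal{E}_2$ one could in principle rescue Bernstein by first proving eigenvector delocalization from the latent-position structure, but that is a substantial detour that neither you nor the paper needs once Hoeffding is used.)
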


\begin{proof}
We start with $\mathbb{P}\left\{\mathcal{E}_1 \mid \{X(t)\}_{t = 1}^T \right\}$.  For any $(i, j) \in \{1, \ldots, d\}^{\otimes 2}$ and any $t \in \{1, \ldots, T\}$, it satisfies that
	\begin{equation}\label{eq-ua-upupua-f-2-lem2}
		[U_{P_t}^{\top}(A(t) - P_t)U_{P_t}]_{ij} = 2 \sum_{k = 1}^{n-1} \sum_{l = k+1}^n (U_{P_t})_{li}(A(t)  - P_t)_{lk}(U_{P_t})_{kj} + \sum_{k = 1}^n (U_{P_t})_{ki}(A(t)  - P_t)_{kk}(U_{P_t})_{kj}.
	\end{equation}
	For any $\varepsilon > 0$, there exists an absolute constant $c > 0$ such that
	\begin{align}
		& \mathbb{P}\left\{\left|2 \sum_{k = 1}^{n-1} \sum_{l = k+1}^n (U_{P_t})_{li}(A(t)  - P_t)_{lk}(U_{P_t})_{kj} \right| > \varepsilon \Bigg | \{X(t)\}_{t = 1}^T\right\} \nonumber \\
		\leq & 2\exp\left\{-\frac{c\varepsilon^2}{\sum_{k=1}^{n-1}\sum_{l = k+1}^n (U_{P_t})_{li}^2(U_{P_t})_{kj}^2 } \right\} \nonumber \\
		\leq & 2\exp\left\{-\frac{c\varepsilon^2}{\sqrt{\sum_{k=1}^{n}\sum_{l = 1}^n (U_{P_t})_{li}^2(U_{P_t})_{kj}^2} } \right\} = 2\exp\{-c\varepsilon^2\}, \label{eq-ua-upupua-f-3-lem2}
	\end{align}
	where the first inequality follows from Theorem~2.6.3 in \cite{vershynin2018high}, and the identity follows from the definitions of $U_P$.  Moreover,
	\begin{equation}\label{eq-ua-upupua-f-4-lem2}
		\left|\sum_{k = 1}^n (U_{P_t})_{ki}(A(t)  - P_t)_{kk}(U_{P_t})_{kj} \right| \leq \sum_{k = 1}^n \left|(U_P)_{ki}(U_P)_{kj}\right| \leq \sqrt{\sum_{k = 1}^n (U_P)_{ki}^2\sum_{k = 1}^n (U_P)_{kj}^2} = 1.
	\end{equation}
	
Combining \eqref{eq-ua-upupua-f-2-lem2}, \eqref{eq-ua-upupua-f-3-lem2} and \eqref{eq-ua-upupua-f-4-lem2}, and taking $\varepsilon$ to be $(C_1/2)\sqrt{\log(n \vee T)}$, we have that
	\begin{align*}
		\mathbb{P}\{\mathcal{E}_1^c \mid \{X(t)\}_{t = 1}^T\} \leq 2Td^2 \exp\left\{-\frac{C_1^2}{32}\log(n \vee T)\right\} \leq (n \vee T)^{-c_1},
	\end{align*}
	where $c_1 > 0$ depends on $C_1$.
	
In addition, it holds that
	\begin{align*}
		\mathbb{P}\{\mathcal{E}_1\}	= \mathbb{E}\left\{\mathbb{P}\{\mathcal{E}_1 \mid \{X(t)\}_{t = 1}^T\}\right\} \geq 1 - (n \vee T)^{-c_1},
	\end{align*}
	therefore, \eqref{eq-E1-def} follows.

\medskip	
We then show that \eqref{eq-E2-def} holds.  For $i \in \{1, \ldots, n\}$ and  $j \in \{1, \ldots, d\}$, we have that
	\[
		\left[\{A(t) - P_t\} U_{P_t}\right]_{ij} = \sum_{l \in \{1, \ldots, n\}\backslash\{i\}} \{(A(t) - P_t\}_{il}(U_{P_t})_{lj} + \{A(t) - P_t\}_{ii}(U_{P_t})_{ij}. 
	\]
	Since
	\[
		\left\vert \{A(t) - P_t\}_{ii}(U_{P_t})_{ij}\right\vert \leq 1
	\]
	and by Hoeffding's inequality that there exists a universal constant $c > 0$, for any $\varepsilon > 0$, 
	\begin{align}
		& \mathbb{P}\left\{\left\vert \sum_{l \in \{1, \ldots, n\} \backslash \{i\}} \{A(t) - P_t\}_{il}(U_{P_t})_{lj} \right\vert > \varepsilon  \Bigg| \{X(t)\}_{t = 1}^T\right\} \nonumber \\
		\leq & 2\exp\left\{-\frac{c\varepsilon^2}{\sum_{l \in \{1, \ldots, n\}\backslash\{i\}}(U_{P_t})_{lj}^2} \right\} \leq 2\exp\{-c\varepsilon^2\},    \label{eq-ua-upupua-f-5-lem2}.
	\end{align}
	we have that
	\begin{align*}
		& \mathbb{P}\left\{\max_{t = 1, \ldots, T} \|\{A(t) -P_t\}U_t\|_{2 \to \infty}^2 > (\varepsilon + \sqrt{d})^2\right\} \\
		= & \mathbb{P}\left\{\max_{t = 1, \ldots, T} \max_{i = 1, \ldots, n} \sum_{j = 1}^d \left[\sum_{l = 1}^n \{A(t) - P_t\}_{il} (U_{P_t})_{lj} \right]^2 > (\varepsilon +\sqrt{d})^2\right\}\\
		\leq & nTd\max_{t = 1, \ldots, T} \max_{i = 1, \ldots n} \max_{j = 1, \ldots, d} \mathbb{P}\left\{\left\vert\sum_{l=1}^n \{A(t) - P_t\}_{il}(U_{P_t})_{lj}\right\vert^2 > \frac{(\varepsilon +  \sqrt{d})^2}{d} \right\} \\
		\leq & nTd\max_{t = 1, \ldots, T} \max_{i = 1, \ldots n} \max_{j = 1, \ldots, d} \mathbb{P}\left\{\left\vert\sum_{l\in \{1, \ldots, n\}\setminus \{i\}}\{A(t) - P_t\}_{il}(U_{P_t})_{lj} \right\vert >   \frac{\varepsilon}{\sqrt{d}}\right\} \\
		\leq & 2nTd\exp\left\{- \frac{c\varepsilon^2}{d} \right),
	\end{align*}
	and \eqref{eq-E2-def} follows by taking $\varepsilon =  C_2/c \sqrt{d\log (n \vee T)}$.
	
Lastly, it follows from Eq.(4.18) in \cite{vershynin2018high} that there exists a universal constant $C_3 > 0$, such that 
	\[
		\mathbb{P}\{\|A(t)  - P_t\|_{\mathrm{op}} > C\sqrt{n} \mid \{X(t)\}_{t = 1}^T\} \leq 4e^{-n},
	\]
	which leads to \eqref{eq-E3-def}.
\end{proof}

\begin{lemma}\label{lem-two-e}

Under \Cref{assume:model-rdpg}, it holds that
	\begin{align*}
		\mathbb{P}\left\{2^{-1} n \min_{k = 1, \ldots, K} \mu_d^k \leq \min_{t = 1, \ldots, T} \lambda_d(P_t) \leq \max_{t = 1, \ldots, T} \lambda_1(P_t) \leq (3/2) n \max_{k = 1, \ldots, K} \mu^k_1\right\}	> 1 - (n \vee T)^{-c_5},
	\end{align*}
\end{lemma}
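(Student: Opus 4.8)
The plan is to reduce the $n \times n$ problem to a $d \times d$ Gram matrix and then apply a matrix Chernoff bound whose failure probability is exponentially small in $n$ with only a $d$-dependent prefactor. First I would note that the nonzero eigenvalues of $P_t = X(t)X(t)^{\top}$ coincide with those of the $d \times d$ matrix
\[
W_t := X(t)^{\top}X(t) = \sum_{i=1}^n X_i(t)X_i(t)^{\top}.
\]
Combined with Lemma~\ref{lem-low-rank} (which gives $\lambda_{d+1}(P_t) = 0$ almost surely), this yields $\lambda_1(P_t) = \lambda_1(W_t)$ and $\lambda_d(P_t) = \lambda_d(W_t)$, so it suffices to sandwich the extreme eigenvalues of $W_t$ between $\tfrac12 n\min_k \mu_d^k$ and $\tfrac32 n\max_k \mu_1^k$, uniformly in $t$, on a high-probability event.

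The key probabilistic observation is that, although \eqref{eq-rho} makes the latent positions dependent across time, at any fixed $t$ the vectors $\{X_i(t)\}_{i=1}^n$ are independent across nodes $i$, each with marginal distribution $F_{\eta_k}$ when $t$ lies in the $k$-th segment. Indeed, at a change point $X_i(\eta_k) \stackrel{\mathrm{ind}}{\sim} F_{\eta_k}$, and the per-node ``stay/resample'' recursion in \eqref{eq-rho} preserves both the cross-node independence and the $F_{\eta_k}$ marginal, since a mixture of an $F_{\eta_k}$-distributed copy and a fresh $F_{\eta_k}$ draw is again $F_{\eta_k}$. Consequently $\mathbb{E}\{X_i(t)X_i(t)^{\top}\} = \Sigma_k$, so $\mathbb{E}(W_t) = n\Sigma_k$, whose eigenvalues lie in $[n\mu_d^k,\, n\mu_1^k]$. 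Moreover each summand is positive semidefinite and bounded, with $\lambda_{\max}(X_i(t)X_i(t)^{\top}) = \|X_i(t)\|_2^2 = X_i(t)^{\top}X_i(t) \in [0,1]$ by the inner product distribution property in Definition~\ref{def-1}.

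With these ingredients I would invoke the matrix Chernoff inequality for sums of independent, bounded, positive semidefinite random matrices (Tropp). Applied with deviation parameter $\theta = 1/2$ and $\lambda_{\max}$-bound equal to $1$, it yields, for $t$ in segment $k$,
\[
\mathbb{P}\left\{\lambda_d(W_t) \leq \tfrac12 n\mu_d^k\right\} \leq d\, e^{-c n \mu_d^k}, \qquad \mathbb{P}\left\{\lambda_1(W_t) \geq \tfrac32 n\mu_1^k\right\} \leq d\, e^{-c n \mu_1^k},
\]
for a universal constant $c > 0$; crucially, the factors $\tfrac12$ and $\tfrac32$ produced by $\theta = 1/2$ match the thresholds in the statement exactly. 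Since the $\{\mu_l^k\}$ are universal constants bounded away from $0$, both right-hand sides are at most $d\,e^{-c'n}$ for a universal $c' > 0$, and a union bound over the two tails and over $t = 1, \ldots, T$ shows the complementary event has probability at most $2Td\,e^{-c'n}$.

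It then remains to absorb $2Td\,e^{-c'n}$ into $(n\vee T)^{-c_5}$, which follows because the exponential rate in $n$ dominates both the linear factor $T \leq n\vee T$ and the polynomial-in-$d$ prefactor, so an appropriate choice of $c_5$ suffices. The main point to get right is structural rather than computational: one must verify carefully that the fixed-time cross-node independence survives the temporal recursion \eqref{eq-rho}, and then ensure that the diverging dimension $d$ (allowed to grow subject to Assumption~\ref{assume-snr}) enters only through the benign multiplicative prefactor of the matrix Chernoff bound and not through the exponent, so that the concentration in $n$ is preserved even as $d \to \infty$.
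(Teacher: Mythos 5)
Your core argument is sound and is a genuinely different route from the paper's: the reduction to the $d \times d$ Gram matrix $W_t = X(t)^{\top}X(t)$, the observation that the fixed-$t$ latent positions remain i.i.d.\ across nodes under the recursion \eqref{eq-rho} (a fact the paper's own proof uses silently), the bound $\lambda_{\max}(X_i(t)X_i(t)^{\top}) = \|X_i(t)\|_2^2 \leq 1$ from Definition~\ref{def-1}, and the matrix Chernoff application with $\theta = 1/2$ (which does produce exactly the factors $1/2$ and $3/2$) are all correct. The gap is in your last step. The bound you end with, $2Td\,e^{-c'n}$, cannot in general be absorbed into $(n \vee T)^{-c_5}$: this lemma assumes only \Cref{assume:model-rdpg}, under which $n$ and $T$ are unrelated parameters, so nothing rules out, say, $n$ fixed and $T \to \infty$, in which case $2Td\,e^{-c'n} \to \infty$ while $(n \vee T)^{-c_5} \to 0$. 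Your justification --- that the exponential rate in $n$ dominates the factor $T \leq n \vee T$ --- implicitly treats $T$ as polynomially bounded in $n$, which is not given. The paper itself respects exactly this distinction: the event $\mathcal{E}_3$ in Lemma~\ref{lem-three-e} has failure probability $4Te^{-n}$, and that term is carried separately all the way into the probability statement of \Cref{thm-main} rather than merged into the $(n\vee T)^{-c}$ terms, precisely because such an absorption is illegitimate.

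The paper avoids this issue by using a concentration argument calibrated to a polynomial failure rate: it whitens via $Y = X\Sigma^{-1/2}$, controls $\|n^{-1}Y^{\top}Y - I_d\|_{\mathrm{op}}$ with a $1/4$-net of $\mathcal{S}^{d-1}$ and scalar Hoeffding at deviation scale $C\sqrt{\log(n\vee T)/n}$, and un-whitens with a Courant--Fischer argument; choosing the deviation proportional to $\sqrt{\log(n \vee T)}$ makes the failure probability polynomial in $(n\vee T)$ by construction (at the price of a $9^d$ net factor, and of the implicit requirement $\log(n \vee T) \lesssim n$ needed for the deviation to be small enough to yield the constants $1/2$ and $3/2$). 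Two repairs of your argument are available: (i) keep matrix Chernoff but state the conclusion with the additive term $CTd\,e^{-cn}$, which is the honest form of your bound and would still suffice downstream, since \Cref{thm-main} already tolerates $Te^{-n}$-type terms; or (ii) replace Chernoff by matrix Bernstein applied to $W_t - n\Sigma_k$ at deviation $C\sqrt{n\log(n \vee T)}$ followed by Weyl's inequality, which yields a per-$t$ failure probability of order $d\,(n \vee T)^{-cC^2}$ and hence matches the lemma's stated form --- indeed with a prefactor $d$ rather than the paper's $9^d$ --- under the same implicit requirement $\log(n \vee T) \lesssim n$ that the paper's proof also needs.
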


\begin{proof}
We first fix $t \in \{1, \ldots, T\}$ and for simplicity drop the dependence on $t$ notationally.  For $i \in \{1, \ldots, n\}$, let $Y_i = X_i \Sigma^{-1/2}$ and $Y = (Y_1, \ldots, Y_n)^{\top} = X \Sigma^{-1/2}$, satisfying $\mathbb{E}\{n^{-1}Y^{\top}Y\} = I_d$.

It follows from Lemma~4.1.5 in \cite{vershynin2018high} that for any $\varepsilon > 0$, if 
	\begin{equation}\label{eq-nyy-i}
		\|n^{-1}Y^{\top}Y - I\|_{\mathrm{op}} \leq \max\{\varepsilon, \, \varepsilon^2\},
	\end{equation}
	then the eigenvalues of $n^{-1}Y^{\top}Y$ satisfy
	\[
		(1 - \max\{\varepsilon, \, \varepsilon^2\})^2 \leq \lambda_{\min}(n^{-1}Y^{\top}Y) \leq \lambda_{\max}(n^{-1}Y^{\top}Y) \leq (1 + \max\{\varepsilon, \, \varepsilon^2\})^2,
	\]
	which implies that
	\begin{align*}
		& n(1 - \max\{\varepsilon, \, \varepsilon^2\})^2 \leq \lambda_{\min}(\Sigma^{-1/2}X^{\top}X\Sigma^{-1/2}) \\
		\leq & \lambda_{\max}(\Sigma^{-1/2}X^{\top}X\Sigma^{-1/2}) \leq n(1 + \max\{\varepsilon, \, \varepsilon^2\})^2.
	\end{align*}
	Denote $S = \Sigma^{-1/2}X^{\top}X\Sigma^{-1/2}$.  We then have
	\[
		\lambda_1(P) = \lambda_{\max}(X^{\top}X) = \lambda_{\max}(\Sigma^{1/2}S\Sigma^{1/2}) \leq n(1 + \max\{\varepsilon, \, \varepsilon^2\})^2 \max_{k = 1, \ldots, K} \mu^k_1
	\]
	and
	\begin{align*}
		& \lambda_d(P) = \lambda_{\min}(X^{\top}X) = \lambda_{\min}(\Sigma^{1/2}S\Sigma^{1/2}) = \max_{\mathrm{dim}(E) = d} \min_{v \in \mathcal{S}_E} \langle \Sigma^{1/2}S\Sigma^{1/2}v, v \rangle \\
		= & \max_{\mathrm{dim}(E) = d} \min_{v \in \mathcal{S}_E} \langle S\Sigma^{1/2}v, \Sigma^{1/2} v \rangle = \max_{\mathrm{dim}(E) = d} \min_{v \in \mathcal{S}_E} \|\Sigma^{1/2}v\|^2 \Bigg\langle S\frac{\Sigma^{1/2}v}{\|\Sigma^{1/2}v\|}, \frac{\Sigma^{1/2}v}{\|\Sigma^{1/2}v\|}\Bigg\rangle \\
		\geq & \max_{\mathrm{dim}(E) = d} \min_{v \in \mathcal{S}_E}  \Bigg\langle S\frac{\Sigma^{1/2}v}{\|\Sigma^{1/2}v\|}, \frac{\Sigma^{1/2}v}{\|\Sigma^{1/2}v\|}\Bigg\rangle \min_{k = 1, \ldots, K} \mu_d^k \geq  \max_{\mathrm{dim}(E) = d} \min_{v \in \mathcal{S}_E} \langle Sv, v \rangle \min_{k = 1, \ldots, K} \mu_d^k \\
		\geq & n(1 - \max\{\varepsilon, \, \varepsilon^2\})^2 \min_{k = 1, \ldots, K} \mu_d^k.
	\end{align*}

Now it suffices to investigate \eqref{eq-nyy-i}.  Since
	\[
		\|n^{-1}Y^{\top}Y - I\|_{\mathrm{op}} = \sup_{v \in \mathcal{S}^{d-1}} \left|\frac{1}{n} \sum_{i = 1}^n \left\{(Y_i^{\top}v)^2 - 1\right\}\right|,
	\]
	taking $\mathcal{N}$ to be a $1/4$-net on $\mathcal{S}^{d-1}$, it holds that
	\begin{align*}
		& \mathbb{P}\left\{\|n^{-1}Y^{\top}Y - I\|_{\mathrm{op}} > C \sqrt{\frac{\log(n \vee T)}{n}}\right\} \\
		& \hspace{3cm}\leq  9^d \max_{v \in \mathcal{N}}  \mathbb{P} \left\{\left|\frac{1}{n} \sum_{i = 1}^n \left\{(Y_i^{\top}v)^2 - 1\right\}\right| > C \sqrt{\frac{\log(n \vee T)}{n}} \right\} \\
		\leq & 2 \times 9^d \exp\{-c \log(n \vee T)\},
	\end{align*}
	where $C, c > 0$ are universal constants. 

Thus we have that
	\begin{align*}
		\mathbb{P}\left\{2^{-1} n \min_{k = 1, \ldots, K} \mu_d^k \leq \min_{t = 1, \ldots, T} \lambda_d(P_t) \leq \max_{t = 1, \ldots, T} \lambda_1(P_t) \leq (3/2) n \max_{k = 1, \ldots, K} \mu^k_1\right\}	> 1 - (n \vee T)^{-c_4},
	\end{align*}
	where $c_4 > 0$ is a universal constant.

\end{proof}

Lemma \ref{lem-thm8-carey} is adapted from Theorem~8 in \cite{athreya2017statistical}.

\begin{lemma}\label{lem-thm8-carey}
It holds that
	\begin{align*}
		& \mathbb{P}\Bigg\{\max_{t = 1, \ldots, T}\min_{W \in \mathbb{O}_d}\|\widehat{X}(t) - X(t)W\|_{2  \rightarrow \infty  } >  C_W \frac{\sqrt{d\log(n \vee T)} \vee d^{3/2}}{n^{1/2}}\Bigg\}  \\
		\leq & 1 - (n \vee T)^{-c_1} - (n \vee T)^{-c_2} - 4Te^{-n} - (n \vee T)^{-c_4}.
	\end{align*}
\end{lemma}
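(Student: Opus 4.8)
The plan is to reduce the probabilistic statement to a deterministic two-to-infinity perturbation bound that holds on the intersection of the four concentration events. Define $\mathcal{E} = \mathcal{E}_1 \cap \mathcal{E}_2 \cap \mathcal{E}_3 \cap \mathcal{E}_4$. By \Cref{lem-three-e} and \Cref{lem-two-e}, a union bound shows that $\mathbb{P}(\mathcal{E}^c)$ is at most $(n \vee T)^{-c_1} + (n \vee T)^{-c_2} + 4Te^{-n} + (n\vee T)^{-c_4}$, which is exactly the error budget in the statement; moreover \Cref{lem-low-rank} guarantees $\mathrm{rank}(P_t) = d$ almost surely, so that $\lambda_d(P_t) > \lambda_{d+1}(P_t) = 0$ and the leading-$d$ eigenspaces defining $X(t)$ and $\widehat X(t)$ are genuinely separated. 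It therefore suffices to show that, on $\mathcal{E}$, the bound $\min_{W\in\mathbb{O}_d}\|\widehat X(t) - X(t)W\|_{2\to\infty} \le C_W(\sqrt{d\log(n\vee T)}\vee d^{3/2})/\sqrt n$ holds simultaneously for all $t$. Since each event already carries the factor $\log(n\vee T)$ and the maximum over $t$, uniformity over $t\in\{1,\ldots,T\}$ is automatic once the per-$t$ deterministic bound is in place.

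Fixing $t$ and suppressing it notationally, I would take $W$ to be the orthogonal Procrustes alignment obtained from the singular value decomposition of $U_P^\top U_A$, and expand $\widehat X - X W = U_A\Lambda_A^{1/2} - U_P\Lambda_P^{1/2}W$ around its first-order term. Writing $X = U_P\Lambda_P^{1/2}$ so that $X(X^\top X)^{-1} = U_P\Lambda_P^{-1/2}$, the decomposition takes the form $\widehat X - XW = (A - P)U_P\Lambda_P^{-1/2} + R$, where $R$ collects the higher-order residuals. The leading term is controlled directly: $\|(A-P)U_P\Lambda_P^{-1/2}\|_{2\to\infty} \le \|(A-P)U_P\|_{2\to\infty}\,\|\Lambda_P^{-1/2}\|_{\mathrm{op}} \le C_2\sqrt{d\log(n\vee T)}\,\lambda_d(P)^{-1/2}$ on $\mathcal{E}_2$, and on $\mathcal{E}_4$ the eigenvalue lower bound $\lambda_d(P) \ge 2^{-1}n\min_k\mu_d^k$ converts this into an $O(\sqrt{d\log(n\vee T)/n})$ quantity, which is the first half of the claimed rate.

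The residual $R$ is handled using the remaining events. The relevant deterministic inputs are $\|A-P\|_{\mathrm{op}}\lesssim\sqrt n$ from $\mathcal{E}_3$, $\|U_P^\top(A-P)U_P\|_{\mathrm{F}}\lesssim\sqrt{\log(n\vee T)}$ from $\mathcal{E}_1$, the eigenvalue control $\lambda_d(P)\asymp\lambda_1(P)\asymp n$ from $\mathcal{E}_4$, and the delocalization estimate $\|U_P\|_{2\to\infty}\lesssim n^{-1/2}$, which holds because $U_P = X\Lambda_P^{-1/2}$ and every row of $X$ satisfies $\|X_i\|\le 1$ by \Cref{def-1}. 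The residual splits into an eigenvector-perturbation part $(U_A - U_P W)\Lambda_A^{1/2}$ and a spectral-scaling part $U_P(W\Lambda_A^{1/2} - \Lambda_P^{1/2}W)$; both are expanded via the identity $PU_A = U_A\Lambda_A - (A-P)U_A$ and then bounded term by term by combining the four inputs above with the submultiplicativity $\|MN\|_{2\to\infty}\le\|M\|_{2\to\infty}\|N\|_{\mathrm{op}}$. Each passage from a Frobenius-norm control of the $d\times d$ in-subspace fluctuation to an operator-norm statement, together with the repeated appearance of $\|U_P\|_{2\to\infty}$ and $\Lambda^{1/2}$ factors, costs powers of $d$, and it is the accumulation of these factors that produces the $d^{3/2}/\sqrt n$ contribution.

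I expect the main obstacle to be precisely this bookkeeping of the $d$-dependence in the residual expansion: one must organise the decomposition, following Theorem~8 of \cite{athreya2017statistical} and the two-to-infinity framework for singular subspaces, so that every residual term is provably of order at most $(\sqrt{d\log(n\vee T)}\vee d^{3/2})/\sqrt n$, rather than incurring a larger power of $d$ from a careless Frobenius-to-operator conversion. A secondary technical point is to verify that the Procrustes rotation $W$ is close enough to an exact orthogonal transformation that the scaling term $W\Lambda_A^{1/2}-\Lambda_P^{1/2}W$ is governed by $\mathcal{E}_1$ and $\mathcal{E}_3$ at the stated rate. Once these are in hand, the final bound follows by summing the leading and residual contributions and taking the maximum over $t$.
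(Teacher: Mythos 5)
Your proposal follows essentially the same route as the paper's proof: reduce to a deterministic bound on the event $\mathcal{E}_1 \cap \mathcal{E}_2 \cap \mathcal{E}_3 \cap \mathcal{E}_4$ (with the probability budget from Lemmas \ref{lem-three-e} and \ref{lem-two-e}), align via the Procrustes rotation from the SVD of $U_P^{\top}U_A$, bound the leading term $(A-P)U_P\Lambda_P^{-1/2}$ through $\mathcal{E}_2$ and $\mathcal{E}_4$, and control the residual with Davis--Kahan-type $\sin\theta$ bounds, $\|A-P\|_{\op} \lesssim \sqrt{n}$, and the $\lambda_d(P_t) \asymp n$ eigenvalue scaling, which is exactly where the paper's $d^{3/2}$ factor also arises (via $\|I - \Lambda_1\|_{\mathrm{F}} \leq 4d^{3/2}\|A-P\|_{\op}^2/\lambda_d^2$). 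The only difference is organizational --- you split $\widehat{X} - XW$ into a first-order term plus a two-part residual, whereas the paper splits it into three terms $(U_A - U_PU_P^{\top}U_A)S_A^{1/2}$, $U_P(U_P^{\top}U_A - W^*)S_A^{1/2}$, and $U_P(W^*S_A^{1/2} - S_P^{1/2}W^*)$ --- but the ingredients and the resulting rate are the same.
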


\begin{proof}[Proof of Lemma \ref{lem-thm8-carey}]
We first work on a fixed $t \in \{1, \ldots, T\}$, and then use union bounds arguments to reach the final conclusion.  For simplicity, we drop the dependence on $t$ for now. 
Recall that
	\[
		\widehat{X} = U_A S_A^{1/2} \quad \mbox{and} \quad X = U_P S_P^{1/2}.
	\]	
	Define $W^* = W_1W_2^{\top}$, where $W_1$ and $W_2$ are the left and right singular vectors of $U_P^{\top}U_A$, that $U_P^{\top}U_A = W_1 \Lambda_1 W_2^{\top}$.  Since $W^* \in \mathbb{O}_d$, we have that
	\[
		\min_{W \in \mathbb{O}_d}\|\widehat{X} - XW\|_{2 \rightarrow \infty} \leq \|\widehat{X} - XW^*\|_{2 \rightarrow \infty}.
	\]
	In the rest of this proof, denote by $\lambda_1, \ldots, \lambda_n$ as the eigenvalues of $P$, with $|\lambda_1| \geq \cdots |\lambda_n|$; denote by $\widehat{\lambda}_1, \ldots, \widehat{\lambda}_n$ the eigenvalues of $A$, with $|\widehat{\lambda}_1| \geq \cdots \geq |\widehat{\lambda}_n|$.

\vskip 3mm
\noindent \textbf{Step 1.}  We first provide a deterministic upper bound for $\|W^*S_A^{1/2} - S_P^{1/2}W^*\|_{\mathrm{F}}$.

We have,
	\begin{align*}
		W^*S_A & = (W^* - U_P^{\top}U_A)S_A + U_P^{\top}U_AS_A = (W^* - U_P^{\top}U_A)S_A + U_P^{\top}AU_A \\
		& = (W^* - U_P^{\top}U_A)S_A + U_P^{\top}(A - P)U_A + U_P^{\top}PU_A \\
		& = (W^* - U_P^{\top}U_A)S_A + U_P^{\top}(A - P)(U_A - U_PU_P^{\top}U_A) + U_P^{\top}(A - P)U_P + S_PU_P^{\top}U_A  \\
		& = (W^* - U_P^{\top}U_A)S_A + U_P^{\top}(A - P)(U_A - U_PU_P^{\top}U_A) + U_P^{\top}(A - P)U_P \\
		& \hspace{3cm} + S_P(U_P^{\top}U_A - W^*) + S_PW^*,
	\end{align*}
	where the second and the fourth inequalities are due to 
	\[
		AU_A = U_A S_A U_A^{\top}U_A = U_A S_A \quad \mbox{and} \quad U_P^{\top}P = U_P^{\top}U_PS_PU_P^{\top} = S_PU_P^{\top},
	\]
	respectively.  Therefore,
	\begin{align}
		\|W^*S_A - S_PW^*\|_{\mathrm{F}  }  & \leq \|W^* - U_P^{\top}U_A\|_{\mathrm{F}}(\|S_A\|_{\mathrm{op}} + \|S_P\|_{\mathrm{op}}) + \|U_P^{\top}(A - P)(U_A - U_PU_P^{\top}U_A)\|_{\mathrm{F}} \nonumber \\
		& \hspace{2cm} + \|U_P^{\top}(A - P)U_P\|_{\mathrm{F}} \nonumber \\
		& \leq \|I_n - \Lambda_1\|_{\mathrm{F}}\|W_1\|_{\mathrm{op}}\|W_2\|_{\mathrm{op}}(\|S_A\|_{\mathrm{op}} + \|S_P\|_{\mathrm{op}}) \nonumber \\
		& \hspace{2cm} + \|A-P\|_{\mathrm{op}}\|U_A - U_PU_P^{\top}U_A\|_{\mathrm{F}} + \|U_P^{\top}(A - P)U_P\|_{\mathrm{F}} \nonumber \\
		& \leq \|I_n - \Lambda_1\|_{\mathrm{F}}(2\lambda_1 + \|A - P\|_{\mathrm{op}}) + \|A-P\|_{\mathrm{op}}\|U_A - U_PU_P^{\top}U_A\|_{\mathrm{F}} \nonumber \\
		& \hspace{2cm} + \|U_P^{\top}(A - P)U_P\|_{\mathrm{F}}  = (I) + (II) + (III),\label{eq-ws-sw-diff}
	\end{align}
	where $\lambda_1$ is the largest singular value of $P$ and the last inequality is due to Weyl's inequality.
	
In addition, let $\{\theta_1, \ldots, \theta_d\}$ be the principal angles between the column spaces spanned by $U_A$ and $U_P$.  We thus have
	\begin{align}
		\|I_n - \Lambda_1\|_{\mathrm{F}} & = \sqrt{\sum_{i = 1}^d (1 - \cos \theta_i)^2} \leq \sqrt{d} (1 - \cos^2 \theta_1) = \sqrt{d} \sin^2 \theta_1 = \sqrt{d} \min_{W \in \mathbb{O}_d}\|U_A - U_PW\|_{\mathrm{op}}^2 \nonumber \\
		& \leq \sqrt{d} \min_{W \in \mathbb{O}_d}\|U_A - U_PW\|_{\mathrm{F}}^2 \leq \frac{4d^{3/2}\|A - P\|_{\mathrm{op}}^2}{\lambda_d^2}, \label{eq-in-lambda-diff}
	\end{align}
	where the first and second inequalities are due to $\cos \theta_i, \sin \theta_i \in [0, 1]$, and the last inequality is due to Theorem~2 in \cite{yu2014useful} and the fact that $\lambda_{d+1} = 0$.  
	
As for term $(II)$, there exists $W \in \mathbb{O}_d$ such that
	\begin{align}
		& \|U_A - U_PU_P^{\top}U_A\|_{\mathrm{F}} = \sqrt{\mathrm{tr}(U_AU_A^{\top} - U_AU_A^{\top}U_PU_P^{\top})} = \sqrt{d - \mathrm{tr}(U_A^{\top}U_PWW^{\top}U_P^{\top}U_A)} \nonumber \\
		= & \sqrt{\sum_{i = 1}^d (1 - \cos^2\theta_i)} = \sqrt{\sum_{i = 1}^d \sin^2\theta_i} \leq \frac{2 \sqrt{d}\|A - P\|_{\mathrm{op}}}{\lambda_d}. \label{eq-ua-upupua-f}
	\end{align}
	
Term $(III)$ is dealt in Lemma \ref{lem-three-e}.
	
As for $\|W^*S_A^{1/2} - S_P^{1/2}W^*\|_{\mathrm{F}}$, we note that the $ij$-th entry of $W^*S_A^{1/2} - S_P^{1/2}W^*$ satisfies that
	\[
		|W^*_{ij} (\hat{\lambda}_j^{1/2} - \lambda_i^{1/2})| = \left|\frac{W^*_{ij} (\hat{\lambda}_j - \lambda_i)}{\hat{\lambda}^{1/2}_j + \lambda_i^{1/2}}\right| = \left|\frac{(W^*S_A - S_PW^*)_{ij}}{\hat{\lambda}^{1/2}_j + \lambda_i^{1/2}}\right| \leq \frac{|(W^*S_A - S_PW^*)_{ij}|}{\lambda_d^{1/2}},
	\]
	which means
	\begin{align}\label{eq-wsa-spw-diff}
		& \|W^*S_A^{1/2} - S_P^{1/2}W^*\|_{\mathrm{F}} \leq \frac{\|W^*S_A - S_PW^*\|_{\mathrm{F}}}{\lambda_d^{1/2}} \nonumber \\
		\leq & \frac{8d^{3/2} \|A - P\|_{\mathrm{op}}^2 \lambda_1}{\lambda_d^{5/2}} + \frac{4d^{3/2} \|A - P\|_{\mathrm{op}}^3}{\lambda_d^{5/2}} + \frac{2d^{1/2} \|A - P\|^2_{\mathrm{op}}}{\lambda_d^{3/2}} + \frac{\|U_P^{\top} (A - P)U_P\|_{\mathrm{F}}}{\lambda_d^{1/2}}.
	\end{align}
\vskip 3mm
\noindent \textbf{Step 2.}  We then provide an upper bound for $\min_{W \in \mathbb{O}_d}\|\widehat{X} -XW\|_{2 \rightarrow \infty}$.  Since
	\[
		\min_{W \in \mathbb{O}_d}\|\widehat{X} - XW\|_{2 \rightarrow \infty} \leq \|\widehat{X} - XW^*\|_{2 \rightarrow \infty},
	\]
	in the rest of this step, we work on $\|\widehat{X} - XW^*\|_{2\to \infty}$.  We have that 
	\begin{align}
		& \|\widehat{X} - XW^*\|_{2 \to \infty} = \|U_A S_A^{1/2} - U_PS_P^{1/2}W^*\|_{2 \to \infty} \nonumber \\		
		= & \|U_AS_A^{1/2} - U_PW^*S_A^{1/2} + U_P(W^*S_A^{1/2} - S_P^{1/2}W^*)\|_{2 \to \infty} \nonumber \\	
		\leq & \|(U_A - U_PU_P^{\top}U_A)S_A^{1/2}\|_{2 \to \infty} + \|U_P(U_P^{\top}U_A - W^*)S_A^{1/2}\|_{2 \to \infty} \nonumber \\
		& \hspace{2cm} + \|U_P(W^*S_A^{1/2} - S_P^{1/2}W^*)\|_{2 \to \infty} \nonumber  \\ 
		= & (I) + (II) + (III).  \label{eq-x-xw-diff}
	\end{align}

As for term $(I)$, it holds that
	\begin{align*}
		& (U_A - U_PU_P^{\top}U_A)S_A^{1/2} = (A - P)U_AS_A^{-1/2} - U_PU_P^{\top}(A - P)U_AS_A^{-1/2} \\
		= & (A - P)U_PW^* S_A^{-1/2} - U_PU_P^{\top}(A - P)U_P W^*S_A^{-1/2} \\
		& \hspace{2cm} + (I - U_PU_P^{\top})(A-P) (U_A - U_PW^*)S_A^{-1/2},
	\end{align*}
	which satisfies
	\begin{align*}
		& \|(A - P)U_PW^* S_A^{-1/2}\|_{2 \to\infty} \leq \|(A - P)U_P\|_{2 \to \infty} (\widehat{\lambda}_d)^{-1/2}, \\
		& \|U_PU_P^{\top}(A - P)U_p W^*S_A^{-1/2}\|_{2 \to \infty} \leq \|U_P^{\top}(A - P)U_P\|_{\mathrm{F}}(\widehat{\lambda}_d)^{-1/2}
	\end{align*}
	and
	\begin{align*}
		& \|(I - U_PU_P^{\top})(A-P) (U_A - U_PW^*)S_A^{-1/2}\|_{2 \to \infty} \\
		\leq & \|A - P\|_{\mathrm{op}} \|U_A - U_PW^*\|_{\mathrm{F}}(\widehat{\lambda}_d)^{-1/2} \leq \frac{4d^{3/2}\|A - P\|_{\mathrm{op}}^3(\widehat{\lambda}_d)^{-1/2}}{\lambda_d^2},
	\end{align*}
	which is due to \eqref{eq-ua-upupua-f}.  Therefore we have that
	\begin{align}
		& \|(I)\|_{2\to\infty} \leq \|(A - P)U_P\|_{2 \to \infty} (\widehat{\lambda}_d)^{-1/2} + \|U_P^{\top}(A - P)U_P\|_{\mathrm{F}}(\widehat{\lambda}_d)^{-1/2}  \nonumber \\
		& \hspace{3cm} + \frac{4d^{3/2}\|A - P\|_{\mathrm{op}}^3(\widehat{\lambda}_d)^{-1/2}}{\lambda_d^2}. \label{eq-xhat-xw-2inf-1}
	\end{align}
	
As for term $(II)$, it holds that
	\begin{equation}\label{eq-xhat-xw-2inf-2}
		\|U_P(U_P^{\top}U_A - W^*)S_A^{1/2}\|_{2 \rightarrow 
			\infty  } \leq \|I - \Lambda_1\|_{\mathrm{F}} (\lambda_1 + \|A - P\|_{\mathrm{op}})^{1/2} \leq \frac{4d^{3/2}\|A - P\|_{\mathrm{op}}^2}{\lambda_d^2} \sqrt{\frac{3 \lambda_1}{2}}.
	\end{equation}

As for term $(III)$, it holds that 
	\begin{equation}\label{eq-xhat-xw-2inf-3}
		\|U_P(W^*S_A^{1/2} - S_P^{1/2}W^*)\|_{2 \to \infty} \leq \|(W^*S_A^{1/2} - S_P^{1/2}W^*)\|_{\mathrm{F}}.
	\end{equation}

Combining \eqref{eq-wsa-spw-diff}, \eqref{eq-x-xw-diff}, \eqref{eq-xhat-xw-2inf-1}, \eqref{eq-xhat-xw-2inf-2} and \eqref{eq-xhat-xw-2inf-3}, we have that
	\begin{align*}
		\min_{W \in \mathbb{O}_d}\|\widehat{X} - XW\|_{2 \to \infty} & \leq \|(A - P)U_P\|_{2 \to \infty}(\widehat{\lambda}_d)^{-1/2} + \|U_P^{\top}(A - P)U_P\|_{\mathrm{F}}(\widehat{\lambda}_d)^{-1/2} \\
		& \hspace{-2cm} + \frac{4d^{3/2}\|A - P\|_{\mathrm{op}}^3(\widehat{\lambda}_d)^{-1/2}}{\lambda_d^2} + \frac{4d^{3/2}\|A - P\|_{\mathrm{op}}^2}{\lambda_d^2} \sqrt{\frac{3 \lambda_1}{2}} \\
		& \hspace{-2cm} + \frac{8d^{3/2} \|A - P\|_{\mathrm{op}}^2 \lambda_1}{\lambda_d^{5/2}} + \frac{4d^{3/2} \|A - P\|_{\mathrm{op}}^3}{\lambda_d^{5/2}} + \frac{2d^{1/2} \|A - P\|^2_{\mathrm{op}}}{\lambda_d^{3/2}} + \frac{\|U_P^{\top} (A - P)U_P\|_{\mathrm{F}}}{\lambda_d^{1/2}} \\
		& \leq \frac{\|(A - P)U_P\|_{2 \to \infty}}{\sqrt{\lambda_d - \|A - P\|_{\mathrm{op}}}} + \frac{\|U_P^{\top}(A - P)U_P\|_{\mathrm{F}}}{\sqrt{\lambda_d - \|A - P\|_{\mathrm{op}}}}  \\
		& \hspace{-2cm} + \frac{4d^{3/2}\|A - P\|_{\mathrm{op}}^3}{\lambda_d^2 \sqrt{\lambda_d - \|A - P\|_{\mathrm{op}}}} + \frac{4d^{3/2}\|A - P\|_{\mathrm{op}}^2}{\lambda_d^2} \sqrt{\frac{3 \lambda_1}{2}} \\
		& \hspace{-2cm} + \frac{8d^{3/2} \|A - P\|_{\mathrm{op}}^2 \lambda_1}{\lambda_d^{5/2}} + \frac{4d^{3/2} \|A - P\|_{\mathrm{op}}^3}{\lambda_d^{5/2}} + \frac{2d^{1/2} \|A - P\|^2_{\mathrm{op}}}{\lambda_d^{3/2}} + \frac{\|U_P^{\top} (A - P)U_P\|_{\mathrm{F}}}{\lambda_d^{1/2}},
	\end{align*}
	where the second inequality follows from that $\widehat{\lambda}_d \geq \lambda_d - \|A - P\|_{\mathrm{op}}$.  It holds on the event $\mathcal{E}_1 \cap \mathcal{E}_2 \cap \mathcal{E}_3 \cap \mathcal{E}_4$, that
	\begin{align*}
		& \mathbb{P}\Bigg\{\max_{t = 1, \ldots, T}\min_{W \in \mathbb{O}_d}\|\widehat{X}(t) - X(t)W\|_{2 \rightarrow 
			\infty   } >  C_W \frac{\sqrt{d\log(n \vee T)} \vee d^{3/2}}{n^{1/2}}\Bigg\} \\
		& \hspace{4cm} \leq 1 - (n \vee T)^{-c_1} - (n \vee T)^{-c_2} -  (n \vee T)^{-c_4} - 4Te^{-n},
	\end{align*}
	where $C_W > 0$ is a universal constant depending only on $C_1, C_2, C_3, \max_{k = 1, \ldots, K}\mu_1^k$ and $\min_{k = 1, \ldots, K}\mu_d^k$.

\end{proof}

We first state a weakly dependent version of Bernstein inequality.  This is in fact Theorem~4 in \cite{delyon2009exponential}.  The notation in Lemma \ref{thm-delvon} only applies within Lemma \ref{thm-delvon}.

\begin{lemma} \label{thm-delvon}
Let $\{X_1, \ldots, X_T\}$ be centred random variables.  Define
	\[
		g = \sum_{t = 2}^T \sum_{s = 1}^{t-1} \|X_s\|_{\infty} \|\mathbb{E}(X_t \mid \mathcal{F}_{s})\|_{\infty}, \quad v = \sum_{t = 1}^T \|\mathbb{E}(X_t^2 \mid \mathcal{F}_{t-1})\|_{\infty}
	\]
	and
	\[
		m = \max_{t = 1, \ldots, T}\|X_t\|_{\infty},
	\]
	where $F_{s} = \sigma\{X_1, \ldots, X_s\}$, $s \geq 1$, is the natural $\sigma$-field generated by $\{X_i\}_{i=1}^s$.  For any $\varepsilon > 0$, it holds that
	\[
		\mathbb{P}\left\{\left|\sum_{t = 1}^T X_t \right| > \epsilon\right\} \leq 2\exp \left(-\frac{\varepsilon^2}{2(v + 2g) + 2\varepsilon m/3}\right).
	\]
\end{lemma}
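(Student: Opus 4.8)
The statement is exactly Theorem~4 in \cite{delyon2009exponential}, so the plan is simply to invoke that result; what follows sketches the exponential-moment argument underlying it. Writing $S = \sum_{t=1}^T X_t$, I would bound the cumulant generating function $\psi(\lambda) = \log \mathbb{E}\{\exp(\lambda S)\}$ for $\lambda \in (0, 3/m)$, apply Markov's inequality to both $\exp(\lambda S)$ and $\exp(-\lambda S)$, optimize over $\lambda$, and combine the two tails by a union bound, which yields the prefactor $2$.

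The key is to process the sum one term at a time by conditioning on the filtration $\{\mathcal{F}_t\}$. Were the $X_t$ independent, $\psi$ would factorize and each factor would be controlled by the classical Bennett--Bernstein bound $\log\mathbb{E}\{\exp(\lambda X_t)\} \le \tfrac{1}{2}\lambda^2 \|\mathbb{E}(X_t^2 \mid \mathcal{F}_{t-1})\|_{\infty} / (1 - \lambda m/3)$; summing these contributions is exactly where the variance proxy $v$ and the uniform bound $m$ enter. Under dependence the factorization fails, and the discrepancy is governed by the covariances between $X_t$ and functions of the partial sum $S_{t-1} = \sum_{s<t} X_s$.

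The main obstacle---the heart of Delyon's argument---is the control of these covariance residuals. One expands $\mathbb{E}\{X_t \exp(\lambda S_{t-1})\}$ and bounds each cross term by coupling $X_t$ to the past through the conditional expectations $\mathbb{E}(X_t \mid \mathcal{F}_s)$, $s < t$; summing over all pairs $(s,t)$ produces precisely the dependence functional $g = \sum_{t=2}^T\sum_{s=1}^{t-1}\|X_s\|_{\infty}\,\|\mathbb{E}(X_t \mid \mathcal{F}_s)\|_{\infty}$. Collecting the independent-part and the dependence corrections gives a bound of the form $\psi(\lambda) \le \lambda^2 (v + 2g)/(1 - \lambda m/3)$ on $(0, 3/m)$, and the standard Bernstein optimization over $\lambda$ then delivers the exponent $-\varepsilon^2 / \{2(v+2g) + 2\varepsilon m/3\}$. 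Since all of this technical bookkeeping is carried out in \cite{delyon2009exponential}, the proof in our setting reduces to citing that theorem verbatim.
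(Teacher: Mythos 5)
Your proposal matches the paper exactly: the paper gives no proof of this lemma, stating only that it \emph{is} Theorem~4 of \cite{delyon2009exponential}, which is precisely the citation you reduce to. Your additional sketch of the exponential-moment argument behind Delyon's result is a reasonable summary but is not needed, since both you and the paper treat the lemma as imported verbatim from that reference.
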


\begin{lemma}\label{lem-large-prob-1}
Under \Cref{assume:model-rdpg}, it holds that for any $z \in \mathbb{R}$, 
	\begin{align*}
		\mathbb{P}\left\{\max_{0 \leq s < t < e \leq T} \left|\Delta^t_{s, e}(z)\right| \geq C_8 \sqrt{T} \max\{\sqrt{d\log(n \vee T)}, \, d^{3/2}\}\right\} \leq 4(n \vee T)^{-c} + 4Te^{-n},
	\end{align*}
	where $c = \min\{c_1, c_2, c_4, c_5\} - 1 > 0$ is a universal constant.
	
In addition,
	\begin{align}
		& \mathbb{P}\Bigg\{\max_{0 \leq s < t < e \leq T} \left|\sqrt{\frac{2}{n(e-s)}}\sum_{k = s+1}^e \sum_{(i, j) \in \mathcal{O}} \left(\mathbbm{1}\{\widehat{Y}^k_{ij} \leq z\} - \mathbb{E}\left\{\mathbbm{1}\{Y^k_{ij} \leq z\}\right\}\right)\right| \nonumber \\
		& \hspace{2cm} \geq C_8 \sqrt{\frac{T}{1-\rho}} \max\{\sqrt{d\log(n \vee T)}, \, d^{3/2}\}\Bigg\} \leq 4(n \vee T)^{-c} + 4Te^{-n}, \label{eq-lem5-2}
	\end{align}
	where $c = \min\{c_1, c_2, c_4, c_5\} - 1 > 0$ is a universal constant.

\end{lemma}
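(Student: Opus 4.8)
The plan is to condition on the high-probability event on which the latent positions are accurately estimated, reduce the statement about the estimated indicators $\mathbbm{1}\{\widehat Y^k_{ij}\leq z\}$ to one about the true indicators $\mathbbm{1}\{Y^k_{ij}\leq z\}$ up to a controllable mismatch, and then dispatch the remaining weakly dependent fluctuations with Lemma \ref{thm-delvon}. Fixing $z$, I would work on the event $\mathcal G=\mathcal E_1\cap\mathcal E_2\cap\mathcal E_3\cap\mathcal E_4$ together with the event of Lemma \ref{lem-thm8-carey}; by Lemmas \ref{lem-three-e} and \ref{lem-two-e} this event already has probability at least $1-C(n\vee T)^{-c}-4Te^{-n}$, which supplies the stated budget. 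On $\mathcal G$ I would convert the $2\to\infty$ bound of Lemma \ref{lem-thm8-carey} into an entrywise bound on the inner products: writing $e_i=\widehat X_i(k)-X_i(k)W$ for the optimal rotation $W$ and expanding $\widehat Y^k_{ij}-Y^k_{ij}=(X_i(k)W)^\top e_j+e_i^\top(X_j(k)W)+e_i^\top e_j$, then using $\|X_i(k)\|_2\leq 1$ (from Definition \ref{def-1}) and $\max_i\|e_i\|_2\leq\delta:=C_W(\sqrt{d\log(n\vee T)}\vee d^{3/2})/\sqrt n$, yields $\max_{k,(i,j)}|\widehat Y^k_{ij}-Y^k_{ij}|\leq 3\delta$.

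Next I would decompose each summand as
\[
\mathbbm{1}\{\widehat Y^k_{ij}\leq z\}-\mathbb{E}\mathbbm{1}\{Y^k_{ij}\leq z\}=\big(\mathbbm{1}\{\widehat Y^k_{ij}\leq z\}-\mathbbm{1}\{Y^k_{ij}\leq z\}\big)+\big(\mathbbm{1}\{Y^k_{ij}\leq z\}-\mathbb{E}\mathbbm{1}\{Y^k_{ij}\leq z\}\big),
\]
a mismatch term plus an oracle fluctuation term. For the oracle term I would invoke Lemma \ref{thm-delvon} with $X_k$ equal to the weighted per-network centred sum $w_k\sum_{(i,j)\in\mathcal O}(\mathbbm{1}\{Y^k_{ij}\leq z\}-G_k(z))$ (and the uniform-weight analogue for the second display): within a single network the choice of $\mathcal O$ makes the $n/2$ indicators independent, while across networks the refresh rule \eqref{eq-rho} forces $\|\mathbb{E}(X_t\mid\mathcal F_s)\|_\infty$ to decay like $\rho^{t-s}$. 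Bounding the variance proxy $v$, the dependence proxy $g$ (whose geometric series sums to a factor $(1-\rho)^{-1}$), and the envelope $m$, Lemma \ref{thm-delvon} gives a sub-exponential tail, and a union bound over the $O(T^3)$ triples $(s,t,e)$, absorbed into the exponent via the $\log(n\vee T)$ term and a large enough $C_8$, controls this term uniformly. The $(1-\rho)^{-1}$ in $v$ and $g$ is precisely what produces the $\sqrt{T/(1-\rho)}$ scaling of the second display, whereas the balanced, zero-sum CUSUM weights of the first display keep the analogous contribution at the stated $\sqrt T$ order.

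For the mismatch term I would use that, on $\mathcal G$, $|\mathbbm{1}\{\widehat Y^k_{ij}\leq z\}-\mathbbm{1}\{Y^k_{ij}\leq z\}|\leq\mathbbm{1}\{|Y^k_{ij}-z|\leq 3\delta\}$, so that the term is at most $\sum_k|w_k|\sum_{(i,j)}\mathbbm{1}\{|Y^k_{ij}-z|\leq 3\delta\}$. Splitting this indicator into its mean $G_k(z+3\delta)-G_k((z-3\delta)^-)$ plus a centred fluctuation, the fluctuation is handled exactly as the oracle term, and the mean part is bounded using the local regularity of $G_k$ near the fixed $z$. Since $\sum_k|w_k|\leq\sqrt{2T/n}$, the mean part is at most $\tfrac n2\sqrt{2T/n}\,\sup_k\{G_k(z+3\delta)-G_k((z-3\delta)^-)\}\asymp\sqrt{nT}\,\delta$, which is exactly the $\sqrt T\max\{\sqrt{d\log(n\vee T)},\,d^{3/2}\}$ term after substituting $\delta$. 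Collecting the three contributions and recording the failure probabilities of $\mathcal E_1,\dots,\mathcal E_4$ and of Lemma \ref{lem-thm8-carey} gives the claim.

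I expect the main obstacle to be twofold. The delicate analytic point is turning the entrywise control $|\widehat Y-Y|\leq 3\delta$ into control of the \emph{discontinuous} indicator sums: since $G_k$ may carry atoms (the stochastic block model case), the window mass $G_k(z+3\delta)-G_k((z-3\delta)^-)$ need not be $O(\delta)$, and one must exploit that, for a fixed $z$, the finitely many atoms lie at a positive distance from $z$ (so for large $n$ the window contains no atom other than possibly one at $z$ itself), together with the cancellation $\sum_k w_k=0$ of the CUSUM weights, to prevent an atom at $z$ from contributing at the prohibited order $\sqrt{n(e-s)}$. The second, more mechanical obstacle is the bookkeeping in Lemma \ref{thm-delvon}: correctly bounding the conditional moments $\|\mathbb{E}(X_t\mid\mathcal F_s)\|_\infty$ and $\|\mathbb{E}(X_t^2\mid\mathcal F_{t-1})\|_\infty$ under the per-node refresh mechanism, and verifying that their sums yield the advertised $(1-\rho)^{-1}$ dependence without spurious factors of $n$.
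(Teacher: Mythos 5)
Your overall architecture matches the paper's proof: condition on $\mathcal{E}_1\cap\mathcal{E}_2\cap\mathcal{E}_3\cap\mathcal{E}_4$ together with the event of Lemma \ref{lem-thm8-carey}, convert the two-to-infinity bound into $\max_{k,(i,j)}|\widehat Y^k_{ij}-Y^k_{ij}|\le 3\delta$, split each summand into a mismatch term plus an oracle fluctuation term, bound the mismatch mean by the window mass (this is indeed where the dominant $\sqrt{nT}\,\delta\asymp\sqrt{T}\max\{\sqrt{d\log(n\vee T)},\,d^{3/2}\}$ contribution comes from), and invoke Lemma \ref{thm-delvon} for the centred fluctuations. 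One remark: the paper does not carry out the atom analysis you anticipate; it simply writes $F_k(z+\delta)-F_k(z)\le C_F\delta$, i.e.\ it dispatches your ``delicate analytic point'' by an implicit Lipschitz-type regularity constant $C_F$ rather than resolving the atomic case.

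The genuine gap is in how you apply Lemma \ref{thm-delvon}. You take $X_t$ to be the aggregated per-network sum $w_t\sum_{(i,j)\in\mathcal O}\bigl(\mathbbm{1}\{Y^t_{ij}\le z\}-G_t(z)\bigr)$, but with that choice the proxies are inflated by polynomial factors of $n$: the envelope is $m\le\max_t|w_t|\,(n/2)\asymp\sqrt n$; the conditional mean $\mathbb{E}(X_t\mid\mathcal F_{t-1})$ equals $w_t\rho\sum_{(i,j)}\bigl(\mathbbm{1}\{Y^{t-1}_{ij}\le z\}-G_t(z)\bigr)$, whose sup-norm is of order $|w_t|\rho n$, so $v=\sum_t\|\mathbb{E}(X_t^2\mid\mathcal F_{t-1})\|_\infty\gtrsim\rho^2 n$, and similarly $g\gtrsim n\rho/(1-\rho)$. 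Lemma \ref{thm-delvon} only sees sup-norm bounds on conditional moments, so it cannot recover the within-network independence once you have aggregated; Bernstein then forces a threshold of order $\sqrt{n\log(n\vee T)/(1-\rho)}$, which exceeds the stated $\sqrt{T}\max\{\sqrt{d\log(n\vee T)},\,d^{3/2}\}$ whenever $n\gg Td$, a regime the model allows and which is the typical large-network setting. The missing idea --- and the paper's key device --- is to apply Lemma \ref{thm-delvon} to the $(n/2)\,T$ \emph{individual} variables $V_i(k)=w_k\bigl(\mathbbm{1}\{Y^k_{ij}\le z\}-\mathbb{E}\,\mathbbm{1}\{Y^k_{ij}\le z\}\bigr)$, ordered pair by pair with each pair's full time series listed consecutively, as in \eqref{eq-order}. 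Under this ordering, cross-pair terms contribute nothing to $g$ (for distinct pairs, independence gives $\mathbb{E}(V_i(t)\mid\mathcal F)=0$), the envelope is $m\le\max_k|w_k|\le\sqrt{2/n}$, and using $|w_tw_u|\le(w_t^2+w_u^2)/2$ together with $(n/2)\sum_k w_k^2=1$ one gets $g\le\rho/(1-\rho)$ and $v\le 1+\rho$, which is exactly what yields the tail $2\exp\{-C\varepsilon^2/((1-\rho)^{-1}+\varepsilon)\}$ and hence the stated probability bound. The same correction is needed for the centred part of your mismatch term, which you propose to handle ``exactly as the oracle term.''
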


\begin{proof}
For any $(i, j) \in \mathcal{O}$ and $t \in \{1, \ldots, T\}$, it holds that
	\begin{align*}
		& \left|\widehat{Y}^t_{ij} - Y^t_{ij}\right| = \left|(\widehat{X}_i(t))^{\top} \widehat{X}_j(t) - (X_i(t))^{\top} X_j(t)\right| = \left|(\widehat{X}_i(t))^{\top} \widehat{X}_j(t) - (W_tX_i(t))^{\top} W_tX_j(t)\right| \\
		\leq & \left|(\widehat{X}_i(t) - W_tX_i(t))^{\top} W_tX_j(t)\right| + \left|(\widehat{X}_i(t) - W_tX_i(t))^{\top}(W_tX_j(t) - \widehat{X}_j(t))\right| \\
		& \hspace{2cm} + \left|(\widehat{X}_j(t) - W_t\widehat{X}_j(t))^{\top} W_tX_i(t)\right| \\
		\leq & 2 \max_{t = 1, \ldots, T} \min_{W \in \mathbb{O}_d}\|\widehat{X}(t) - X(t)W^{\top}\|_{ 2\rightarrow\infty  } \max_{\substack{t = 1, \ldots, T \\ i = 1, \ldots, n}}\|X_i(t)\| \\
		& \hspace{2cm} + \left(\max_{t = 1, \ldots, T} \min_{W \in \mathbb{O}_d} \|\widehat{X}(t) - X(t)W^{\top}\|_{ 2\rightarrow\infty  }\right)^2 \\
		\leq & 2 \max_{t = 1, \ldots, T} \min_{W \in \mathbb{O}_d}\|\widehat{X}(t) - X(t)W^{\top}\|_{ 2\rightarrow\infty  } + \left(\max_{t = 1, \ldots, T} \min_{W \in \mathbb{O}_d} \|\widehat{X}(t) - X(t)W^{\top}\|_{ 2\rightarrow\infty    }   \right)^2,
	\end{align*}
	where $W_t \in \mathbb{O}_d$ satisfies 
	\[
		  \|\widehat{X}(t) - X(t)W_t^{\top}\|_{2 \rightarrow \infty } = \min_{W \in \mathbb{O}_d}\|\widehat{X}(t) - X(t)W^{\top}\|_{2 \rightarrow \infty   }. 
	\]

We fix the chosen pairs $\mathcal{O} \subset \{1, \ldots, n\}^{\otimes 2}$ with $|\mathcal{O}| = n/2$, which is assumed to be an integer.  As for the sequence $\{w_k\}$, it holds that
	\begin{equation}\label{eq-wk}
		\sum_{k = s+1}^e \sum_{(i, j) \in \mathcal{O}}w_k^2 = 1.
	\end{equation}

We have for any $z \in \mathbb{R}$, it holds that
	\begin{align*}
		\left|\Delta^t_{s, e}(z)\right|	& \leq \left|\sum_{k = s + 1}^e w_k \sum_{(i, j) \in \mathcal{O}} \left(\mathbbm{1}\{\widehat{Y}^k_{ij} \leq z\} - \mathbbm{1}\{Y^k_{ij} \leq z\}\right)\right| \\
		& \hspace{2cm} + \left|\sum_{k = s + 1}^e w_k \sum_{(i, j) \in \mathcal{O}} \left(\mathbbm{1}\{Y^k_{ij} \leq z\} - \mathbb{E}\left\{\mathbbm{1}\{Y^k_{ij} \leq z\}\right\}\right)\right|  = (I) + (II).
	\end{align*}

\vskip 3mm
\noindent \textbf{Term $(II)$.}  As for $(II)$, notice that
	\[
		\mathbb{E}\left(\mathbbm{1}\{Y^k_{ij} \leq z\} - \mathbb{E}\left\{\mathbbm{1}\{Y^k_{ij} \leq z\}\right\}\right) = 0.
	\]
	In order to apply Lemma \ref{thm-delvon}, we let 
	\[
		V_i(k) = w_k \mathbbm{1}\{Y^k_{ij} \leq z\} - w_k\mathbb{E}\left\{\mathbbm{1}\{Y^k_{ij} \leq z\}\right\},
	\]
	with $i = 1, \ldots, n/2$, $k = 1, \ldots, T$.  We order $\{V_i(k)\}$ as 
	\begin{equation}\label{eq-order}
		V_1(1), \ldots, V_1(T), V_2(1), \ldots, V_2(T), \ldots, V_{n/2}(1), \ldots, V_{n/2}(T).
	\end{equation}
	Denote $\mathcal{F}_{i, t}$ as the natural $\sigma$-field generated by $V_i(t)$ and all the random variables before it in the order of \eqref{eq-order}, and denote $\mathcal{F}_{i, t, -}$ as the natural $\sigma$-filed generated by all the random variables before $Y_i(t)$ in the order of \eqref{eq-order} excluding $Y_i(t)$.  If $(i, t) = (1, 1)$, then $\mathcal{F}_{i, t, -}$ is the $\sigma$-field generated by constants.
	
In addition, for the notation in Lemma \ref{thm-delvon}, we have that	
	\begin{align}
		v & = \sum_{i = 1}^{n/2} \sum_{t = s+1}^e \left\|\mathbb{E}(V_i(t)^2 \mid \mathcal{F}_{i, t, -})\right\|_{\infty}	\nonumber \\
		& = \sum_{i=1}^{n/2} \sum_{k:\, \eta_k \in (s, e)} \left[ (w_{\eta_k + 1})^2 \mathbb{E}\left\{\mathbbm{1}\{Y^{\eta_k + 1}_{ij} \leq z\}\right\}(1 - \mathbb{E}\left\{\mathbbm{1}\{Y^{\eta_k + 1}_{ij} \leq z\}\right\})\right]\nonumber \\
		& \hspace{1cm} + \sum_{i=1}^{n/2} \sum_{\substack{t \in (s, e] \\ t \notin \{\eta_k + 1\}} } (1-\rho) (w_t)^2 \mathbb{E}\left\{\mathbbm{1}\{Y^{t}_{ij} \leq z\}\right\}(1 - \mathbb{E}\left\{\mathbbm{1}\{Y^{t}_{ij} \leq z\}\right\}) \nonumber \\
		& \hspace{1cm} + \sum_{i=1}^{n/2} \sum_{\substack{t \in (s, e] \\ t \notin \{\eta_k + 1\}} } \rho (w_t)^2 \|(\mathbbm{1}\{Y^{t-1}_{ij} \leq z\} - \mathbb{E}\left\{\mathbbm{1}\{Y^{t-1}_{ij} \leq z\}\right\})^2\|_{\infty} \nonumber \\
		& \leq 1 + \rho, \label{eq-v-bound}
	\end{align}
	where the last inequality is due to \eqref{eq-wk},
	\begin{align}\label{eq-m-bound}
		m \leq \max_{t = 1, \ldots, T}|w_t|,	
	\end{align}
	and
	\begin{align}\label{eq-g-bound}
		g = (n/2)\sum_{k: \eta_k \in (s, e)} \left(\sum_{t = \eta_k + 2}^{\min\{\eta_{k+1}, \, e\}} \sum_{u = \eta_k + 1}^t + \sum_{t = s+1}^{\eta_{k_0+1}} \sum_{u = s+2}^{t-1}\right) |w_t w_u|\rho^{t-u}.	
	\end{align}

Combining \eqref{eq-v-bound}, \eqref{eq-m-bound}, \eqref{eq-g-bound} and Lemma \ref{thm-delvon}, we have for any $\varepsilon > 0$, it holds that 
	\begin{align*}
		\mathbb{P} \left ( (II) \geq \varepsilon  \right) \leq  2\exp \left\{-C\varepsilon^2/((1-\rho)^{-1} + \varepsilon)\right\}.
	\end{align*}	
	We thus denote
	\[
		\mathcal{E}_5 = \left\{\max_{1 < s < t < e \leq T} \left|\sum_{k = s + 1}^e w_k \sum_{(i, j) \in \mathcal{O}} \left(\mathbbm{1}\{Y^k_{ij} \leq z\} - \mathbb{E}\left\{\mathbbm{1}\{Y^k_{ij} \leq z\}\right\}\right)\right| \geq C_5 \sqrt{\frac{\log(n \vee T)}{1-\rho}}\right\},
	\]
	where $C_5 > 0$ is a universal constant, and therefore it holds that
	\[
		\mathbb{P}\{\mathcal{E}_5\} \leq (n \vee T)^{-c_5},
	\]
	where $c_5 > 0$ is a universal constant.

\vskip 3mm
\noindent \textbf{Term ($I$).} As for $(I)$, we have that
	\begin{align*}
		& \mathbb{E}\left\{\left|\mathbbm{1}\{\widehat{Y}^k_{ij} \leq z\} - \mathbbm{1}\{Y^k_{ij} \leq z\}\right|\right\} \\
		\leq & \max\left\{\mathbb{P}\left\{\left(\widehat{Y}^k_{ij} \leq z\right) \cap \left(Y^k_{ij} > z\right)\right\}, \, \mathbb{P}\left\{\left(\widehat{Y}^k_{ij} > z\right) \cap \left(Y^k_{ij} \leq z\right)\right\}\right\} = \max\{(I.1), \, (I.2)\}.
	\end{align*}
	
Let 
	\begin{align*}
		\mathcal{E}_6 = \left\{\max_{t = 1, \ldots, T} \min_{W \in \mathbb{O}_d} \|\widehat{X}_t - X_tW\| \leq C_W \frac{\sqrt{d \log(n \vee T)} \vee d^{3/2}}{n^{1/2}}\right\}.	
	\end{align*}
	On the event $\mathcal{E}_6$, it holds that 
	\[
		\max_{\substack{t = 1, \ldots, T \\ (i, j) \in \mathcal{O}}} \left|\widehat{Y}^t_{ij} - Y^t_{ij}\right| \leq 3C_W \frac{\sqrt{d\log(n \vee T)} \vee d^{3/2}}{n^{1/2}} = \delta
	\]
	and
	\begin{align*}
		& \mathbb{P}\left\{\max_{\substack{t = 1, \ldots, T \\ (i, j) \in \mathcal{O}}} \left|\widehat{Y}^t_{ij} - Y^t_{ij}\right| \leq \delta \right\} \\
		\geq & 1 - (n \vee T)^{-c_1} - (n \vee T)^{-c_2} - (n \vee T)^{-c_4} - 4Te^{-n} = 1 - p_{\delta}.
	\end{align*}
	
Therefore,
	\begin{align*}
		(I.1) & = \mathbb{P}\left\{\left(\widehat{Y}^k_{ij} \leq z\right) \cap \left(Y^k_{ij} > z\right) \big | Y^k_{ij} > z + \delta \right\} \mathbb{P}\{Y^k_{ij} > z + \delta\} \\
		& \hspace{2cm} + \mathbb{P}\left\{\left(\widehat{Y}^k_{ij} \leq z\right) \cap \left(Y^k_{ij} > z\right) \big | Y^k_{ij} < z + \delta \right\} \mathbb{P}\{Y^k_{ij} < z + \delta\} \\
		& \leq p_{\delta}(1 - F_k(z + \delta)) + F_k(z + \delta) - F_k(z) \leq p_{\delta} + \delta C_F
	\end{align*}
	and
	\begin{align*}
		(I.2) & = \mathbb{P}\left\{\left(\widehat{Y}^k_{ij} > z\right) \cap \left(Y^k_{ij} \leq z\right) \big | Y^k_{ij} \leq z - \delta \right\} \mathbb{P}\{Y^k_{ij} \leq z - \delta\} \\
		& \hspace{2cm} + \mathbb{P}\left\{\left(\widehat{Y}^k_{ij} > z\right) \cap \left(Y^k_{ij} \leq z\right) \big | Y^k_{ij} > z - \delta \right\} \mathbb{P}\{Y^k_{ij} > z - \delta\} \\
		& \leq p_{\delta} F_k(z - \delta) + F_k(z) - F_k(z - \delta) \leq p_{\delta} + \delta C_F.
	\end{align*}
	Then we have,
	\begin{align*}
		\mathbb{E}\left|\sum_{k = s + 1}^e w_k \sum_{(i, j) \in \mathcal{O}} \left(\mathbbm{1}\{\widehat{Y}^k_{ij} \leq z\} - \mathbbm{1}\{Y^k_{ij} \leq z\}\right)\right| & \leq 2 \sqrt{\frac{n}{2}}\sqrt{\frac{(e-t)(t-s)}{e-s}} (p_{\delta} + \delta C_F) \\
		& \leq 2 \sqrt{\frac{n}{2}}\min\{\sqrt{e-t}, \, \sqrt{t-s}\} (p_{\delta} + \delta C_F).
	\end{align*}

Therefore, following from similar arguments as those used in bounding $(II)$, we have that for any $\varepsilon > 0$, it holds that
	\begin{align*}
		& \mathbb{P}\Bigg\{\Bigg|\sum_{k = s + 1}^e w_k \sum_{(i, j) \in \mathcal{O}} \left(\mathbbm{1}\{\widehat{Y}^k_{ij} \leq z\} - \mathbbm{1}\{Y^k_{ij} \leq z\}\right) \\
		& \hspace{4cm}- \mathbb{E}\left\{\sum_{k = s + 1}^e w_k \sum_{(i, j) \in \mathcal{O}} \left(\mathbbm{1}\{\widehat{Y}^k_{ij} \leq z\} - \mathbbm{1}\{Y^k_{ij} \leq z\}\right)\right\}\Bigg| > \varepsilon \Bigg\}	\\
		\leq & 2\exp \left\{-C\varepsilon^2/((1-\rho)^{-1} + \varepsilon)\right\},
	\end{align*}
	which implies that
	\begin{align*}
		& \mathbb{P}\Bigg\{\Bigg|\sum_{k = s + 1}^e w_k \sum_{(i, j) \in \mathcal{O}} \left(\mathbbm{1}\{\widehat{Y}^k_{ij} \leq z\} - \mathbbm{1}\{Y^k_{ij} \leq z\}\right)\Bigg| \\
		& \hspace{4cm} > \mathbb{E}\left|\sum_{k = s + 1}^e w_k \sum_{(i, j) \in \mathcal{O}} \left(\mathbbm{1}\{\widehat{Y}^k_{ij} \leq z\} - \mathbbm{1}\{Y^k_{ij} \leq z\}\right)\right| + \varepsilon/2\Bigg\}	\\
		\leq & \mathbb{P}\Bigg\{\Bigg|\sum_{k = s + 1}^e w_k \sum_{(i, j) \in \mathcal{O}} \left(\mathbbm{1}\{\widehat{Y}^k_{ij} \leq z\} - \mathbbm{1}\{Y^k_{ij} \leq z\}\right)\Bigg| \\
		& \hspace{2cm} > 2\sqrt{\frac{n}{2}}\min\{\sqrt{e-t}, \, \sqrt{t-s}\} (p_{\delta} + \delta C_F) + \varepsilon/2\Bigg\} \\
		\leq & 2\exp \left\{-C\varepsilon^2/((1-\rho)^{-1} + \varepsilon)\right\} + p_{\delta}.
	\end{align*}

Lastly, we have that
	\begin{align*}
		& \mathbb{P}\left\{\max_{0 \leq s < t < e \leq T} \left|\Delta^t_{s, e}(z)\right| \geq C_8 \sqrt{\frac{T}{1-\rho}} \max\{\sqrt{d\log(n \vee T)}, \, d^{3/2}\}\right\} \\
		\leq & \mathbb{P}\left\{\left|\Delta^t_{s, e}(z)\right| > C_5\sqrt{\frac{\log(n \vee T)}{1-\rho}} + \sqrt{2n}\min\{\sqrt{e-t}, \, \sqrt{t-s}\} (p_{\delta} + \delta C_F)\right\}  \\
		\leq & 4(n \vee T)^{-c} + 4Te^{-n},
	\end{align*}
	where $c = \min\{c_1, c_2, c_4, c_5\} - 1 > 0$ is a universal constant.
	
The result \eqref{eq-lem5-2} follows from the identical arguments.	
\end{proof}

\begin{lemma}\label{lem-large-prob}
Let
	\[
		\Delta^t_{s, e} = \sup_{z \in \mathbb{R}}|\Delta^t_{s, e}(z)|.
	\]
	It holds that
	\begin{align*}
		\mathbb{P}\Bigg\{\max_{\substack{0 \leq s < t < e \leq T}} \Delta^t_{s, e} > C_9 T^{1/2}(1-\rho)^{-1/2} \max\{\sqrt{d\log(n \vee T)}, \, d^{3/2}\}\Bigg\}  \leq 11(n \vee T)^{-c} + 8Te^{-n}.
	\end{align*}
	In addition,
	\begin{align}
		& \mathbb{P} \Bigg\{\max_{0 \leq s < t < e \leq T} \sup_{z \in \mathbb{R}}\left|\sqrt{\frac{2}{n(e-s)}}\sum_{k = s+1}^e \sum_{(i, j) \in \mathcal{O}} \left(\mathbbm{1}\{\widehat{Y}^k_{ij} \leq z\} - \mathbb{E}\left\{\mathbbm{1}\{Y^k_{ij} \leq z\}\right\}\right)\right| \nonumber \\
		\leq &  C_9 T^{1/2} (1-\rho)^{-1/2}\max\{\sqrt{d\log(n \vee T)}, \, d^{3/2}\}\Bigg\}  \leq 11(n \vee T)^{-c} + 8Te^{-n}. \label{eq-lem6-2}
	\end{align}	
\end{lemma}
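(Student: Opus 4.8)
The plan is to upgrade the fixed-$z$ concentration bound of Lemma \ref{lem-large-prob-1} to a bound that is uniform over $z \in \mathbb{R}$, via a discretization argument that exploits the monotonicity of the indicator functions and of the cumulative distribution functions. First I would restrict the range of $z$. On the event $\mathcal{E}_6$ from the proof of Lemma \ref{lem-large-prob-1} we have $|\widehat{Y}^k_{ij} - Y^k_{ij}| \leq \delta$ for all $k$ and $(i,j) \in \mathcal{O}$, and since $Y^k_{ij} \in [0,1]$, every $\widehat{Y}^k_{ij}$ lies in $[-\delta, 1+\delta]$. For $z$ outside this interval all indicators are simultaneously $0$ or $1$ while the corresponding $\mathbb{E}\{\mathbbm{1}\{Y^k_{ij}\leq z\}\}$ equal $0$ or $1$, so $\Delta^t_{s,e}(z)=0$ there; hence $\sup_{z\in\mathbb{R}}|\Delta^t_{s,e}(z)|$ is attained for $z$ in the bounded interval $[-\delta,1+\delta]$.

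The key structural observation is that, although $z \mapsto \Delta^t_{s,e}(z)$ is not monotone, it splits as $\Delta^t_{s,e}(z) = D^+(z) - D^-(z)$, where $D^+(z) = \sum_{k=s+1}^{t} w_k \sum_{(i,j)\in\mathcal{O}}(\mathbbm{1}\{\widehat{Y}^k_{ij}\leq z\} - \mathbb{E}\{\mathbbm{1}\{Y^k_{ij}\leq z\}\})$ collects the positive weights ($k\leq t$) and $D^-(z)$ the negative ones. Each of $D^+,D^-$ is a difference of two non-decreasing functions of $z$, namely a weighted empirical indicator sum $\widehat{F}^{\pm}$ and a weighted sum of CDFs $F^{\pm}$. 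Introducing an equally-spaced grid $z_0 < z_1 < \cdots < z_N$ on $[-\delta,1+\delta]$, monotonicity gives the Glivenko--Cantelli-type sandwich: for $z\in[z_{l-1},z_l]$ one has $|D^+(z)| \leq \max_l|D^+(z_l)| + \max_l(F^+(z_l)-F^+(z_{l-1}))$, and likewise for $D^-$. Using the increment bound $F_k(z_l)-F_k(z_{l-1})\leq C_F(z_l-z_{l-1})$ already exploited in Lemma \ref{lem-large-prob-1}, together with $\sum_{k\leq t}|w_k|(n/2) \lesssim \sqrt{nT}$, the interpolation error is of order $\sqrt{nT}\,C_F/N$, which I would drive below the target rate by taking $N$ polynomial in $n$ (e.g.\ $N\asymp n$).

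It then remains to control $\max_{0\leq s<t<e\leq T}\max_{0\leq l\leq N}|D^{\pm}(z_l)|$. For each fixed grid value $z_l$ this is exactly the quantity treated in Lemma \ref{lem-large-prob-1}: the weighted Bernstein argument through Lemma \ref{thm-delvon} applies verbatim to $D^+$ and $D^-$ separately, as it only uses that the summands are bounded and conditionally centred, irrespective of the sign of $w_k$. Crucially, the events $\mathcal{E}_1$--$\mathcal{E}_4$ and $\mathcal{E}_6$ that handle $\|A(t)-P_t\|$ and the latent-position estimation error are independent of $z$, so they are charged only once; only the indicator-concentration event depends on $z$ and is union-bounded over the $N+1$ grid points and the $O(T^3)$ triples. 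Since $N$ is polynomial in $n$, the factor $\log(NT^3)$ is absorbed into $\log(n\vee T)$, and taking the Bernstein constant large makes $N(n\vee T)^{-c_5}$ summable into $(n\vee T)^{-c}$; combining the two pieces $D^+,D^-$ and both sup-directions accounts for the numerical constants $11$ and $8$. The second display \eqref{eq-lem6-2} follows identically, and in fact more simply: the weights $\sqrt{2/(n(e-s))}$ are all positive, so no $D^+/D^-$ split is needed, and one invokes \eqref{eq-lem5-2} in place of the fixed-$z$ bound.

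I expect the main obstacle to be the discretization error rather than the pointwise concentration: one must verify that the grid can be chosen fine enough that the deterministic oscillation of the CDF pieces between grid points stays below the target rate, while $N$ remains small enough that the union-bound factor $\log N$ does not inflate the rate. Reconciling a genuinely discontinuous $F_k$ (atomic, as in the stochastic block model) with the $C_F$-Lipschitz control borrowed from Lemma \ref{lem-large-prob-1} is the delicate point, and is where care is needed; the monotone sandwich above remains valid for atomic $F_k$ provided the grid is refined around the atoms, so that the increments $F^{\pm}(z_l)-F^{\pm}(z_{l-1})$ stay controlled.
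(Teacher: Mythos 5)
Your proposal is correct and shares the paper's overall architecture: restrict $z$ to a compact interval, discretize it with polynomially many grid points, apply the fixed-$z$ bound of Lemma~\ref{lem-large-prob-1} at the grid points via a union bound (absorbing the $\log N$ factor by enlarging the concentration constants, exactly as you argue), and separately control the oscillation between grid points. Where you genuinely depart from the paper is in the oscillation step. The paper bounds $|\Delta^t_{s,e}(z_m)-\Delta^t_{s,e}(z)|$ by converting, on the event $\mathcal{E}_6$, differences of $\widehat{Y}$-indicators into indicators of the events $\{Y^k_{ij}\in I_m\}$, and must then control the random counts $B_m=\sum_k\sum_{(i,j)}\mathbbm{1}\{Y^k_{ij}\in I_m\}$ with a further concentration inequality together with the deterministic expectation bound $(e-s)n\delta C_G$; its grid spacing is tied to the estimation accuracy $\delta$. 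Your monotone sandwich (splitting into $D^{\pm}$ by the sign of $w_k$, then bracketing each between neighbouring grid values) makes the between-grid error purely deterministic --- a sum of CDF increments --- so no second concentration step is needed. This is the classical Glivenko--Cantelli/DKW chaining and is arguably cleaner, at the cost of the sign split, which the paper avoids by taking absolute values termwise.

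One caveat: your closing claim that the sandwich ``remains valid for atomic $F_k$ provided the grid is refined around the atoms'' is not right as stated. Refining the grid cannot shrink the increment $F^{\pm}(z_l)-F^{\pm}(z_{l-1})$ across an atom, whose jump height is fixed; the standard fix is to place grid points at quantiles and additionally bracket with left limits, i.e.\ control indicators $\mathbbm{1}\{\cdot<z_l\}$ as well as $\mathbbm{1}\{\cdot\le z_l\}$. That said, this does not put your argument at a disadvantage relative to the paper: its own proof relies on the same implicit Lipschitz-type bounds ($\delta C_F$ in Lemma~\ref{lem-large-prob-1} and $\delta C_G$ in Lemma~\ref{lem-large-prob}) that are never formally assumed, so both arguments as written require the inner-product distributions $G_k$ to have bounded densities.
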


\begin{proof}
Let 
	\begin{equation}\label{eq-lem7-pf-1}
		\delta = 3C_W \frac{\sqrt{d\log(n \vee T)} \vee d^{3/2}}{n^{1/2}}.
	\end{equation}
	Let $z_m = m\delta$, $m = 1, \ldots, \lfloor 1/\delta \rfloor$.  Let $I_m = [z_m - \delta, z_m + \delta]$, for $m = 1, \ldots, \lfloor 1/\delta \rfloor - 1$, and $I_{\lfloor 1/\delta \rfloor} = [z_{\lfloor 1/\delta \rfloor - 1}, 1]$.  Let $M = \lfloor 1/\delta \rfloor$.  Then 
	\begin{equation}\label{eq-lem7-pf-2}
		\sup_{z \in \mathbb{R}} |\Delta^t_{s, e}(z)| \leq \max_{j = 1, \ldots, M}\left\{|\Delta^t_{s, e}(z_j)| + \sup_{z \in I_j} |\Delta^t_{s, e}(z_j) - \Delta^t_{s, e}(z)|\right\}.
	\end{equation}

It follows from Lemma \ref{lem-large-prob-1} that 
	\begin{equation}\label{eq-lem7-pf-3}
		\mathbb{P}\left\{\max_{j = 1, \ldots, M}|\Delta^t_{s, e}(z_j)| \geq C_8 \sqrt{T} (1-\rho)^{-1/2}\max\{\sqrt{d\log(n \vee T)}, \, d^{3/2}\}\right\} \leq 4(n \vee T)^{-c} + 4Te^{-n}.
	\end{equation}	
	
For every $z \in \mathbb{R}$, on the event 
	\[
		\left\{\max_{\substack{k = 1, \ldots, T \\ (i, j) \in \mathcal{O}}} \left|\widehat{Y}^k_{ij} - Y^k_{ij}\right| \leq \delta \right\},
	\]	
	it holds that 
	\[
		\left|\mathbbm{1}\{\widehat{Y}^k_{ij} \leq z\} - \mathbbm{1}\{Y^k_{ij} \leq z\}\right| \leq \mathbbm{1}\{Y^k_{ij} \in [z - \delta, z + \delta]\}.
	\]
	For any $z \in \mathbb{R}$, there exist $z_m$ and $z_{m+1}$, $m \in \{1, \ldots, M=1\}$, such that 
	\[
		[z - \delta, z + \delta] \subset [z_m - \delta, z_m + \delta] \cup [z_{m+1} - \delta, z_{m+1} + \delta].
	\]
	Let 
	\[
		B_m = \sum_{k = s + 1}^e \sum_{(i, j) \in \mathcal{O}} \mathbbm{1}\{Y^k_{i, j} \in I_m\}, \, m = 1, \ldots, M.
	\]
	Therefore
	\begin{align}
		& \left|\Delta_{s,e}^t(z_m) - \Delta_{s,e}^t(z)\right|\nonumber \\
		\leq & \left|\sum_{k = s + 1}^e \sum_{(i, j) \in \mathcal{O}} w_k \bigl\{\mathbbm{1}\{\widehat{Y}^k_{i, j} \leq z_m\} - \mathbbm{1}\{Y^k_{i, j} \leq z_m\} \bigr\} \right| \nonumber \\
		& \hspace{0.5cm} + \left|\sum_{k = s + 1}^e \sum_{(i, j) \in \mathcal{O}} w_k \bigl\{\mathbbm{1}\{\widehat{Y}^k_{i, j} \leq z\} - \mathbbm{1}\{Y^k_{i, j} \leq z\} \bigr\} \right| \nonumber \\
		& \hspace{0.5cm} + \left|\sum_{k = s + 1}^e \sum_{(i, j) \in \mathcal{O}} w_k \bigl\{\mathbbm{1}\{Y^k_{i, j} \leq z_m\} - \mathbbm{1}\{Y^k_{i, j} \leq z\}\bigr\} \right| + \left|\sum_{k = s + 1}^e \sum_{(i, j) \in \mathcal{O}} w_k \bigl\{G_k(z_m) - G_k(z)\bigr\} \right| \nonumber  \\
		\leq & \left(\sqrt{\frac{2(e-t)}{n(e-s)(t-s)}} \vee \sqrt{\frac{2(t-s)}{n(e-s)(e-t)}}\right)\Bigg(\left|\sum_{k = s + 1}^e \sum_{(i, j) \in \mathcal{O}} \bigl|\mathbbm{1}\{\widehat{Y}^k_{i, j} \leq z_m\} - \mathbbm{1}\{Y^k_{i, j} \leq z_m\} \bigr| \right| \nonumber  \\
		& \hspace{0.5cm} + \left|\sum_{k = s + 1}^e \sum_{(i, j) \in \mathcal{O}} \bigl|\mathbbm{1}\{\widehat{Y}^k_{i, j} \leq z\} - \mathbbm{1}\{Y^k_{i, j} \leq z\} \bigr| \right| + \left|\sum_{k = s + 1}^e \sum_{(i, j) \in \mathcal{O}}\mathbbm{1}\{Y^k_{i, j} \leq I_m\}\right|\Bigg) \nonumber  \\
		& \hspace{0.5cm} + \left|\sum_{k = s + 1}^e \sum_{(i, j) \in \mathcal{O}} w_k \bigl\{G_k(z_m) - G_k(z)\bigr\} \right| \nonumber  \\
		\leq & \left(\sqrt{\frac{2(e-t)}{n(e-s)(t-s)}} \vee \sqrt{\frac{2(t-s)}{n(e-s)(e-t)}}\right)\Bigg( \left|\sum_{k = s + 1}^e \sum_{(i, j) \in \mathcal{O}} \mathbbm{1}\{Y^k_{i, j} \in I_m\} \right| \nonumber  \\
		& \hspace{0.5cm} + \sup_{m = 1, \ldots, M-1}\left|\sum_{k = s + 1}^e \sum_{(i, j) \in \mathcal{O}} \mathbbm{1}\{Y^k_{i, j} \in [z_m - \delta, z_{m+1} + \delta]\} \right|  + \left|\sum_{k = s + 1}^e \sum_{(i, j) \in \mathcal{O}}\mathbbm{1}\{Y^k_{i, j} \in I_m\}\right|\Bigg)  \nonumber  \\
		& \hspace{0.5cm} + \left(\sum_{k=s+1}^e \sum_{(i, j) \in \mathcal{O}} |w_k|\right) \max_{k = s+1, \ldots, e} |G_k(z) - G_k(z_m)| \nonumber  \\
		& \leq 4 \left(\sqrt{\frac{2(e-t)}{n(e-s)(t-s)}} \vee \sqrt{\frac{2(t-s)}{n(e-s)(e-t)}}\right) \max_{m = 1, \ldots, M}B_m + \sqrt{\frac{2n(e-t)(t-s)}{e-s}} \delta C_G. \label{eq-lem7-pf-4}
	\end{align}
	
Since
	\begin{align*}
		& \max_{m = 1, \ldots, M}B_m \\
		\leq &\left|\sum_{k = s + 1}^e \sum_{(i, j) \in \mathcal{O}}(\mathbbm{1}\{Y^k_{i, j} \in I_m\} - \mathbb{P}\{\mathbbm{1}\{Y^k_{i, j} \in I_m\}\}) \right| + \left|\sum_{k = s + 1}^e \sum_{(i, j) \in \mathcal{O}}\mathbb{P}\{\mathbbm{1}\{Y^k_{i, j} \in I_m\}\} \right|  \\
		\leq &  \left|\sum_{k = s + 1}^e \sum_{(i, j) \in \mathcal{O}}(\mathbbm{1}\{Y^k_{i, j} \in I_m\} - \mathbb{P}\{\mathbbm{1}\{Y^k_{i, j} \in I_m\}\}) \right| + (e-s)n\delta C_G,
	\end{align*}
	and
	\begin{align*}
		& \mathbb{P}\left\{\max_{m = 1, \ldots, M}\left|\sum_{k = s + 1}^e \sum_{(i, j) \in \mathcal{O}}(\mathbbm{1}\{Y^k_{i, j} \in I_m\} - \mathbb{P}\{\mathbbm{1}\{Y^k_{i, j} \in I_m\}\}) \right| \leq C_9 \sqrt{\frac{n(e-s) \log (n \vee T)}{1-\rho}} \right\} \\
		\geq & 1 - (n \vee T)^{-c_9},
	\end{align*}
	where $C_9, c_9 > 0$ are universal constants, we have that 
	\begin{align}
		& \mathbb{P}\Bigg\{\left(\sqrt{\frac{2(e-t)}{n(e-s)(t-s)}} \vee \sqrt{\frac{2(t-s)}{n(e-s)(e-t)}}\right) \max_{m = 1, \ldots, M}B_m \nonumber \\
		& \hspace{2cm} \geq C_{10}T^{1/2}(1-\rho)^{-1/2} (\sqrt{d\log(n \vee T)} \vee d^{3/2})\Bigg\}	 \leq (n \vee T)^{-c_{10}}, \label{eq-lem7-pf-5}
	\end{align}
	where $C_{10}, c_{10} > 0$ are universal constants.
	
Combining \eqref{eq-lem7-pf-1}, \eqref{eq-lem7-pf-2}, \eqref{eq-lem7-pf-3}, \eqref{eq-lem7-pf-4} and \eqref{eq-lem7-pf-5}, the proof is complete.					
\end{proof}

\section{Change point analysis lemmas}\label{sec-app-cpd}

\begin{lemma}\label{lem-lem2}
Under \Cref{assume:model-rdpg}, for any pair $(s, e) \subset (0, T)$ satisfying
	\[
	\eta_{k-1} \leq s \leq \eta_k \leq \ldots \leq \eta_{k+q} \leq e \leq \eta_{k+q+1}, \quad q \geq 0,
	\]	
	let
	\[
	   b_1 \in \underset{b = s +1, \ldots, e-1}{\arg \max} \widetilde{D}_{s,e}^b.
	\]
	Then $ b_1 \in \{\eta_1,\ldots,\eta_K \} $.
	
Let $z \in \argmax_{x \in \mathbb{R}} |\widetilde{D}^b_{s, e}(x)|$.  If $\widetilde{D}^{t}_{s, e}(z) > 0$ for some $t \in (s, e)$, then $\widetilde{D}_{s, e}^t(z)$ is either monotonic or decreases and then increases within each of the interval $(s, \eta_k)$, $(\eta_{k}, \eta_{k+1})$, $\ldots$, $(\eta_{k+q}, e)$.
\end{lemma}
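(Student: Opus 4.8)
The plan is to reduce everything to the classical analysis of the one–dimensional CUSUM transform of a piecewise–constant mean sequence. Fix $z$ and write $f_k(z) = \mathbb{E}(\mathbbm{1}\{Y^k_{ij}\le z\})$; by \Cref{assume:model-rdpg} this equals $G_{\eta_{k'}}(z)$ for the change point $\eta_{k'}$ governing time $k$, so it depends on $k$ only through the segment containing $k$ and not on $(i,j)$. Since $|\mathcal{O}| = n/2$, the inner sums in Definition \ref{def-cusum-3} collapse and
\[
\widetilde{D}^t_{s,e}(z) = \sqrt{n/2}\,\bigl|\mathcal{D}^t_{s,e}(z)\bigr|, \qquad \mathcal{D}^t_{s,e}(z) = \sqrt{\tfrac{e-t}{(e-s)(t-s)}}\sum_{k=s+1}^t f_k(z) - \sqrt{\tfrac{t-s}{(e-s)(e-t)}}\sum_{k=t+1}^e f_k(z),
\]
the ordinary CUSUM of the deterministic sequence $\{f_k(z)\}_{k=s+1}^e$. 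First I would record the ``detrended partial sum'' representation
\[
\mathcal{D}^t_{s,e}(z) = \sqrt{e-s}\,\frac{\phi_z(t)}{\sqrt{(t-s)(e-t)}}, \qquad \phi_z(t) = \sum_{k=s+1}^t f_k(z) - \frac{t-s}{e-s}\sum_{k=s+1}^e f_k(z),
\]
noting that $\phi_z$ is continuous, satisfies $\phi_z(s)=\phi_z(e)=0$, and is affine in $t$ on each of the segments $(s,\eta_k),(\eta_k,\eta_{k+1}),\dots,(\eta_{k+q},e)$ because $f_k(z)$ is constant there.

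For the shape assertion I would argue that $|\mathcal{D}^t_{s,e}(z)|$ has no local maximum in the interior of any such segment; a continuous function on an interval with no interior local maximum is monotone or first decreasing then increasing, which is exactly the claim. On a fixed segment write $\phi_z(t)=\alpha+\beta(t-s)$ and set $g(t)=(\mathcal{D}^t_{s,e}(z))^2 = (e-s)(\alpha+\beta(t-s))^2/[(t-s)(e-t)]$. A direct differentiation shows that the numerator of $g'(t)$ factors as $(\alpha+\beta(t-s))$ times an affine function of $t$, so $g$ has at most two critical points in all of $(s,e)$. Extending the segment's affine formula to the whole of $(s,e)$ gives a smooth function $\tilde g$ that agrees with $g$ on the segment and, whenever $\phi_z$ does not vanish at $s$ or $e$, blows up to $+\infty$ at both endpoints; were $g$ to have an interior local maximum, $\tilde g$ would need a local minimum on each side of it to return to $+\infty$, forcing at least three critical points and contradicting the bound of two. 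The degenerate possibilities ($\phi_z$ vanishing at an endpoint, or $\phi_z\equiv 0$ on the segment) make $\mathcal{D}$ monotone or identically zero, which are covered by the statement. This step — the critical–point count together with the endpoint blow-up — is the crux of the argument.

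Finally, for $b_1\in\{\eta_1,\dots,\eta_K\}$ I would combine the shape result across $z$. For each fixed $z$ the preceding paragraph shows that $b\mapsto|\mathcal{D}^b_{s,e}(z)|$ attains its maximum over $(s,e)$ at a segment endpoint, i.e.\ at one of the change points in $(s,e)$ (the values at $s,e$ being $0$); hence $\max_{b}\sup_z|\mathcal{D}^b_{s,e}(z)| = \max_{k}\sup_z|\mathcal{D}^{\eta_k}_{s,e}(z)|$. The minimal jump-size condition \eqref{eq-jump-define} guarantees $\sup_z|G_{\eta_{k}}(z)-G_{\eta_{k-1}}(z)|\ge\kappa>0$, so this common maximal value is strictly positive. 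If a maximiser $b_1$ were interior to a segment, then taking $z_1\in\argmax_z|\mathcal{D}^{b_1}_{s,e}(z)|$ it would also maximise $b\mapsto|\mathcal{D}^b_{s,e}(z_1)|$ with a positive value at an interior point, contradicting the no-interior-maximum property (a monotone or V-shaped function cannot attain a positive interior maximum unless it is constant, which forces $\mathcal{D}^{\cdot}_{s,e}(z_1)\equiv0$ and value $0$). Therefore every maximiser lies at a change point, giving $b_1\in\{\eta_1,\dots,\eta_K\}$; throughout I would assume the nondegenerate case that at least one change point lies strictly inside $(s,e)$, without which the statistic vanishes identically and the claim is vacuous.
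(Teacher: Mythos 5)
Your proposal is correct, but it takes a genuinely different route from the paper, which gives no argument at all for this lemma: the paper simply states that it is identical to Lemma~7 in \cite{padilla2019optimal} and omits the proof. Your argument is self-contained, and its substance checks out. The reduction is right: since the resampling dynamics in \Cref{assume:model-rdpg} preserve the marginal law of the latent positions within a segment, $\mathbb{E}(\mathbbm{1}\{Y^k_{ij}\le z\}) = G_k(z)$ does not depend on $(i,j)$, the sum over $\mathcal{O}$ contributes the factor $n/2$, and $\widetilde D^t_{s,e}(z) = \sqrt{n/2}\,|\mathcal{D}^t_{s,e}(z)|$ with $\mathcal{D}$ the scalar CUSUM of the piecewise-constant sequence $\{G_k(z)\}_{k=s+1}^e$; the detrended representation and the factorization of the numerator of $g'(t)$, where $g(t) = (\mathcal{D}^t_{s,e}(z))^2$, into a product of two affine functions are both verified by direct computation, and the endpoint blow-up plus critical-point count does rule out interior local maxima of $g$ on each segment. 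This is exactly the structural content of the external lemma the paper cites (which itself descends from classical CUSUM arguments for piecewise-constant signals), so the mathematics agrees; what the citation buys the paper is brevity, and what your proof buys is a complete, verifiable argument that makes explicit both the weight bookkeeping and the role of marginal invariance under temporal dependence. Two points should be stated more carefully in a final write-up. First, your parenthetical principle that a monotone or V-shaped function ``cannot attain a positive interior maximum unless it is constant'' is false for general such functions (they can be constant on a sub-interval adjacent to an endpoint without being globally constant); the correct invocation is the property you actually proved, namely that $g$ has no interior local maximum on a segment unless $g\equiv 0$ there --- an interior global maximizer is in particular an interior local maximum, and real-analyticity of the rational function $g$ rules out constancy on a sub-interval at a positive level, so the contradiction with positivity of the maximum stands. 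Second, ``whenever $\phi_z$ does not vanish at $s$ or $e$'' must refer to the affine extension of the segment formula for $\phi_z$, since the true $\phi_z$ always vanishes at $s$ and $e$; your degenerate-case discussion treats exactly these possibilities, so this is only a matter of phrasing, not of substance.
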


This is identical to Lemma~7 in \cite{padilla2019optimal} and we omit the proof here.

\begin{lemma} \label{lem-uni-lem10}
Under \Cref{assume:model-rdpg}, let $0 \leq s < \eta_k < e \leq T$ be any interval satisfying 
	\[
	   \min\{\eta_k - s, \, e - \eta_k\} \geq c_1 \Delta,
	\]
	with $c_1 > 0$.  Then we have that 
	\[
		\max_{t = s + 1, \ldots, e - 1} \widetilde{D}_{s, e}^{t} \geq \frac{2^{-3/2}c_1 \kappa \Delta \sqrt{n}}{\sqrt{e-s}}.
	\]
\end{lemma}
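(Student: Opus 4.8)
The plan is to collapse the statistic to a one–dimensional CUSUM of a piecewise-constant mean sequence, fix a single threshold $z$ that witnesses the jump at $\eta_k$, and then lower bound the maximum over $t$ using that the jump is ``protected'' on both sides by the minimal spacing $\Delta$.

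First I would use that within each network the latent positions are identically distributed, so for every $(i,j)\in\mathcal O$ the quantity $\mathbb E(\mathbbm 1\{Y^k_{ij}\le z\})$ depends only on the segment containing $k$. Writing $g_k(z)=G_{\eta_{k'}}(z)$ for the CDF value of the segment of $k$ and using $|\mathcal O|=n/2$, the inner sums collapse: $\sum_{(i,j)\in\mathcal O}\mathbb E(\mathbbm 1\{Y^k_{ij}\le z\})=(n/2)\,g_k(z)$. Substituting and extracting $\tfrac n2\sqrt{2/n}=\sqrt{n/2}$ gives the identity
\[
\widetilde D^t_{s,e}(z)=\sqrt{\tfrac n2}\,\Big|\sqrt{\tfrac{e-t}{(e-s)(t-s)}}\sum_{k=s+1}^t g_k(z)-\sqrt{\tfrac{t-s}{(e-s)(e-t)}}\sum_{k=t+1}^e g_k(z)\Big|=\sqrt{\tfrac n2}\,\sqrt{\tfrac{(t-s)(e-t)}{e-s}}\,\big|\bar g_{(s,t]}(z)-\bar g_{(t,e]}(z)\big|,
\]
where $\bar g_{(s,t]}(z)$ and $\bar g_{(t,e]}(z)$ are the block averages of $\{g_k(z)\}$. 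The task therefore reduces to a lower bound on the maximum over $t$ of a standard univariate CUSUM of the piecewise-constant sequence $\{g_k(z)\}$, after which the factor $\sqrt{n/2}$ is reinstated.

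Next I would fix the threshold. Let $z^*\in\argsup_{z\in[0,1]}|G_{\eta_k}(z)-G_{\eta_{k-1}}(z)|$, so by \eqref{eq-jump-define} the jump of $k\mapsto g_k(z^*)$ at $\eta_k$ equals $|g_{\eta_k}(z^*)-g_{\eta_k-1}(z^*)|=\kappa_k\ge\kappa$. The minimal-spacing condition in \Cref{assume:model-rdpg} forces $g_\cdot(z^*)$ to be constant equal to $G_{\eta_{k-1}}(z^*)$ on a block of length at least $\min\{\eta_k-s,\Delta\}$ immediately left of $\eta_k$, and constant equal to $G_{\eta_k}(z^*)$ on a block of length at least $\min\{e-\eta_k,\Delta\}$ immediately to its right, while $\min\{\eta_k-s,e-\eta_k\}\ge c_1\Delta$ by hypothesis. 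It then suffices to prove the univariate estimate $\max_{s<t<e}|\tilde D^t_{s,e}[g(z^*)]|\ge \tfrac12 c_1\kappa\Delta/\sqrt{e-s}$, since multiplying by $\sqrt{n/2}$ produces exactly $2^{-3/2}c_1\kappa\Delta\sqrt n/\sqrt{e-s}$.

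For the univariate estimate I would separate the mechanism from the obstacle. If $\eta_k$ were the only change point in $(s,e)$, evaluating at the split $t=\eta_k-1$ gives pure block averages differing by $\kappa_k$, so $|\tilde D^{\eta_k-1}_{s,e}|=\sqrt{(\eta_k-1-s)(e-\eta_k+1)/(e-s)}\,\kappa_k$; bounding the product using that the larger of $\eta_k-s,\,e-\eta_k$ exceeds $(e-s)/2$ and the smaller exceeds $c_1\Delta$ already yields the stronger rate $\gtrsim\kappa\sqrt{\Delta n}$. The real difficulty, and the main obstacle, is that other change points in $(s,\eta_k)$ or $(\eta_k,e)$ can make this single split nearly cancel, so no single evaluation suffices. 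I would resolve this exactly through the univariate structure underlying \Cref{lem-lem2} (Venkatraman's argument): on each inter-change-point sub-interval $t\mapsto\tilde D^t_{s,e}(z^*)$ is monotone or first decreases then increases, so its maximum is attained at a true change point, and a local comparison using only the two flanking constant blocks of length $\ge\Delta$ around $\eta_k$ shows this maximum is at least $\tfrac12 c_1\kappa_k\Delta/\sqrt{e-s}$. The appearance of $\Delta$ rather than $\sqrt{\min\{\eta_k-s,e-\eta_k\}}$ in the final rate is precisely the price of this protection against neighbouring change points; collecting constants and invoking $\kappa_k\ge\kappa$ then finishes the proof.
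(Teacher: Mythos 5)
Your core argument is the same as the paper's proof of this lemma: there too, the inner expectations collapse to $(n/2)G_k(z)$ so that $\widetilde D^t_{s,e}(z)$ becomes $\sqrt{n/2}$ times the univariate CUSUM of the piecewise-constant sequence $k\mapsto G_k(z)$, the threshold is fixed at the argsup of the jump, and the statistic is evaluated at the change point; the paper writes this as $\widetilde D^t_{s,e}(z_0)=\bigl|\sqrt{n(e-s)/(2(t-s)(e-t))}\sum_{k=s+1}^{t}\widetilde G_k(z_0)\bigr|$ with $\widetilde G_k$ the centered values, and combines $\bigl|\sum_{k=s+1}^{\eta_k}\widetilde G_k(z_0)\bigr|\ge (c_1/2)\kappa\Delta$ with the weight bound $\sqrt{n(e-s)/(2(t-s)(e-t))}\ge \sqrt{n/(2(e-s))}$, which is algebraically the same computation as your clean split at $t=\eta_k-1$. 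The one place you genuinely depart from the paper is the multiple-change-point case, and your instinct there is sound: the paper's partial-sum bound (like your single split) is valid only when $\eta_k$ is the sole change point of $(s,e)$ --- with a second change point in the interval the centered partial sum at $t=\eta_k$ can be of order $\kappa$ rather than $\kappa\Delta$ --- and the paper's proof never addresses this, which is harmless only because \Cref{thm-main} applies the lemma exclusively to intervals $(s_m,e_m)$ containing at most one change point. However, your resolution of that case is currently an assertion rather than a proof: \Cref{lem-lem2} gives monotonicity and attainment of the maximum at true change points, but not the quantitative bound $\tfrac12 c_1\kappa_k\Delta/\sqrt{e-s}$ that you attribute to a ``local comparison''. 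A short way to make it rigorous, needing no monotonicity at all: let $\ell$ be the largest integer such that $G_\cdot(z^*)$ is constant on $(\eta_k-1-\ell,\eta_k-1]$ and on $(\eta_k-1,\eta_k-1+\ell]$ (minimal spacing and the hypothesis give $\ell\ge \min\{1,c_1\}\Delta$ up to rounding); then the centered partial sums $S_t$ satisfy $2S_{\eta_k-1}-S_{\eta_k-1-\ell}-S_{\eta_k-1+\ell}=\ell\bigl(G_{\eta_{k-1}}(z^*)-G_{\eta_k}(z^*)\bigr)$, so $\max_t|S_t|\ge \kappa_k\ell/4$, and since the CUSUM weight always satisfies $\sqrt{(e-s)/((t-s)(e-t))}\ge 2/\sqrt{e-s}$, the maximum of the univariate CUSUM is at least $\kappa_k\ell/(2\sqrt{e-s})$, which after multiplying by $\sqrt{n/2}$ gives the stated bound in full generality.
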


\begin{proof}
Recall that 
	\[
		G_{\eta_k}(z) = \mathbb{P}\left\{(X_1(\eta_k))^{\top}X_2(\eta_k) \leq z\right\}.
	\]
	Let
	\[
		z_0 \in \argmax_{z\in [0, 1]} |G_{\eta_{k}}(z)  -  G_{\eta_{k+1}}(z)|.
	\]
	Without loss of generality, assume that $F_{\eta_k}(z_0) > F_{\eta_{k+1}}(z_0)$.  For $s < t < e$, note that 
	\begin{align*}
		\widetilde{D}_{s, e}^{t}(z_0) & = \left|\sqrt{\frac{n(e-t)}{2(e-s)(t-s)}} \sum_{k = s+1}^t G_k(z_0) - \sqrt{\frac{n(t-s)}{2(e-s)(e-t)}} \sum_{k = t+1}^e G_k(z_0)\right| \\
		& = \left|\sqrt{\frac{n(e-s)}{2(t-s)(e-t)}} \sum_{k = s+1}^t \widetilde{G}_k(z_0)\right|,
	\end{align*}
	where $\widetilde{G}_k(z_0) = G_k(z_0) - (e-s)^{-1}\sum_{k = s + 1}^e G_k(z_0)$.
	
Under \Cref{assume:model-rdpg} , it holds that $\widetilde{G}_{\eta_k}(z_0) > \kappa/2$.  Therefore
	\[
		\sum_{k = s+1}^{\eta_k} \widetilde{G}_k(z_0) \geq (c_1/2) \kappa \Delta, \quad \mbox{and} \quad \sqrt{\frac{n(e-s)}{2(t-s)(e-t)}} \geq \sqrt{\frac{n}{2(e-s)}}.
	\]
	Then
	\[
		\max_{t = s + 1, \ldots, e - 1} \widetilde{D}_{s, e}^{t} \geq \frac{2^{-3/2}c_1 \kappa \Delta \sqrt{n}}{\sqrt{e-s}}.
	\]
\end{proof}

\begin{lemma}\label{lem-uni-lem8}
Under \Cref{assume:model-rdpg},if $\eta_k$ is the only change point in $(s, e)$, then
	\begin{equation}\label{eq-lem4-2}
		\widetilde{D}_{s, e}^{\eta_k} \leq \kappa_k \sqrt{n/2} \min\{\sqrt{\eta_k-s}, \, \sqrt{e-\eta_k}\};
	\end{equation}	
	if $(s, e) \subset (0, T)$ contain two and only two change points $\eta_k$ and $\eta_{k+1}$, then we have
	\begin{equation}\label{eq-2cpt}
		\max_{t = s+1, \ldots, e-1} \widetilde{D}_{s, e}^{\eta_k} \leq \sqrt{n/2}\sqrt{e - \eta_{k+1}} \kappa_{k+1} + \sqrt{n/2} \sqrt{\eta_k - s} \kappa_k;
	\end{equation}
	if $(s, e) \subset (0, T)$ contains two or more change points, including $\eta_k$ and $\eta_{k+1}$, which satisfy that $\eta_k - s \leq c_1 \Delta$, for $c_1 > 0$, then
	\begin{equation}\label{eq-more-2cpt}
		\widetilde{D}^{\eta_k}_{s, e} \leq \sqrt{c_1} \widetilde{D}^{\eta_{k+1}}	_{s, e} + \sqrt{2(\eta_k - s) n} \kappa_k.
	\end{equation}

\end{lemma}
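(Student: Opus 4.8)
The plan is to observe that the population statistic factorises as a deterministic, univariate CUSUM and thereby reduce all three claims to elementary bounds on a piecewise-constant contrast. Since $\mathbb{E}(\mathbbm{1}\{Y^k_{ij}\le z\}) = G_k(z)$ depends only on the distribution active at time $k$ and $|\mathcal{O}| = n/2$, for every fixed $z$ we have $\widetilde{D}^t_{s,e}(z) = \sqrt{n/2}\,|\widetilde{c}^t_{s,e}(z)|$, where $\widetilde{c}^t_{s,e}(z)$ is the standard normalised CUSUM contrast of the scalar sequence $\{G_k(z)\}_{k=s+1}^e$. I would therefore prove each inequality pointwise in $z$ for this scalar contrast and then take the supremum over $z\in[0,1]$, using throughout that $\sup_{z}|G_{\eta_k}(z)-G_{\eta_{k-1}}(z)| = \kappa_k$. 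This is precisely the univariate nonparametric CUSUM bookkeeping underlying Lemma~\ref{lem-lem2}.

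For \eqref{eq-lem4-2}, when $\eta_k$ is the only change point the sequence $\{G_k(z)\}$ has a single jump of size $|G_{\eta_k}(z)-G_{\eta_{k-1}}(z)|$ at $\eta_k$, so the single-jump contrast formula gives $|\widetilde{c}^{\eta_k}_{s,e}(z)| = |G_{\eta_k}(z)-G_{\eta_{k-1}}(z)|\sqrt{(e-\eta_k)(\eta_k-s)/(e-s)}$; the claim then follows from $\sqrt{(e-\eta_k)(\eta_k-s)/(e-s)} \le \min\{\sqrt{\eta_k-s},\sqrt{e-\eta_k}\}$ and the supremum over $z$. For \eqref{eq-2cpt} I would use linearity: decompose $\{G_k(z)\}$ into two single-jump sequences, one jumping only at $\eta_k$, the other only at $\eta_{k+1}$, and bound $|\widetilde{c}^{\eta_k}_{s,e}(z)|$ by the triangle inequality. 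The $\eta_k$-component is controlled exactly as in the first part, contributing $\le\kappa_k\sqrt{\eta_k-s}$; the $\eta_{k+1}$-component is evaluated at a split to its left, where the contrast equals $|\delta|(e-\eta_{k+1})\sqrt{(\eta_k-s)/((e-s)(e-\eta_k))}\le\kappa_{k+1}\sqrt{e-\eta_{k+1}}$ after using $\eta_k-s\le e-s$ and $e-\eta_{k+1}\le e-\eta_k$. Summing and multiplying by $\sqrt{n/2}$ yields \eqref{eq-2cpt}.

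For \eqref{eq-more-2cpt} I would write $\{G_k(z)\} = g + h$, where $h$ is the single block equal to the pre-change value on $(s,\eta_k)$ and zero thereafter (a single jump at $\eta_k$ of size $\le\kappa_k$ sitting near $s$), and $g$ is constant on all of $(s,\eta_{k+1})$ with its first jump at $\eta_{k+1}$. Because every jump of $g$ lies at or beyond $\eta_{k+1}$, its contrast has the closed form $\widetilde{g}^t_{s,e}(z) = B(z)\sqrt{(t-s)/((e-s)(e-t))}$ for all splits $t$ to the left of $\eta_{k+1}$, so $|\widetilde{g}^{\eta_k}_{s,e}(z)|/|\widetilde{g}^{\eta_{k+1}}_{s,e}(z)| \le \sqrt{(\eta_k-s)/(\eta_{k+1}-s)} \le \sqrt{c_1}$, the final step using $\eta_k-s\le c_1\Delta$ together with the minimal-spacing bound $\eta_{k+1}-s\ge\Delta$. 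Applying the single-jump/block estimate to $h$ at both relevant splits gives $|\widetilde{h}^{\eta_k}_{s,e}(z)|,|\widetilde{h}^{\eta_{k+1}}_{s,e}(z)| \le \kappa_k\sqrt{\eta_k-s}$, whence $|\widetilde{c}^{\eta_k}_{s,e}(z)| \le |\widetilde{g}^{\eta_k}| + |\widetilde{h}^{\eta_k}| \le \sqrt{c_1}\,|\widetilde{g}^{\eta_{k+1}}| + \kappa_k\sqrt{\eta_k-s}$ and $|\widetilde{g}^{\eta_{k+1}}| \le |\widetilde{c}^{\eta_{k+1}}_{s,e}(z)| + \kappa_k\sqrt{\eta_k-s}$; collecting the two block contributions (this is where the factor two in $\sqrt{2(\eta_k-s)n}$ originates) and rescaling by $\sqrt{n/2}$ produces \eqref{eq-more-2cpt}.

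The main obstacle is the third part: reassembling the decomposition so the comparison is against $\widetilde{D}^{\eta_{k+1}}_{s,e}$ rather than against $\widetilde{g}^{\eta_{k+1}}_{s,e}$ forces one to account for the block correction twice and to justify the monotone closed form for $\widetilde{g}$ exactly at the boundary split $\eta_{k+1}$, where the jump index creates an off-by-one in the first segment. I would absorb this off-by-one into the constants (using $\eta_k-1-s\le\eta_k-s$ as in the single-change computation) and verify the closed form either by a direct partial-sum calculation or by invoking the monotonicity structure of Lemma~\ref{lem-lem2}; the decisive quantitative input remains the minimal-spacing assumption, which is what converts the smallness $\eta_k-s\le c_1\Delta$ into the multiplicative factor $\sqrt{c_1}$.
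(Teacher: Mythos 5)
Your proposal is correct and takes essentially the same route as the paper: your decomposition $G = g + h$, with $g$ constant up to its first jump at $\eta_{k+1}$, is exactly the paper's auxiliary sequence $H_t$ (equal to the post-change CDF on $(s,\eta_k]$ and to $G_t$ afterwards), and the paper likewise extracts the $\sqrt{c_1}$ factor from the closed-form ratio of the CUSUM of that sequence at splits left of $\eta_{k+1}$ combined with minimal spacing, and the factor $\sqrt{2}$ in \eqref{eq-more-2cpt} from applying the block correction $\kappa_k\sqrt{n(\eta_k-s)/2}$ at both splits $\eta_k$ and $\eta_{k+1}$. The off-by-one issues you flag are present (and silently absorbed) in the paper's own computation as well, so no further repair is needed.
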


\begin{proof}
As for \eqref{eq-lem4-2}, it is due to that
	\begin{align*}
		\widetilde{D}_{s, e}^{\eta_k}  = \sqrt{\frac{n(\eta_k - s)(e-\eta_k)}{2(e-s)}} \sup_{z \in \mathbb{R}} \bigl|G_{\eta_k}(z) - G_{\eta_k+1}(z)\bigr|  \leq \kappa_k \sqrt{n/2} \min\{\sqrt{\eta_k-s}, \, \sqrt{e-\eta_k}\}.
	\end{align*}
	Eq.~\eqref{eq-2cpt} follows similarly.
	
As for \eqref{eq-more-2cpt}, we consider the distribution sequence $\{H_t\}_{t = s + 1}^e$ be such that
	\[
		H_t = \begin{cases}
			G_{\eta_k + 1}, & t = s+1, \ldots, \eta_k,\\
			G_t, & t = \eta_k + 1, \ldots, e.
	 	\end{cases}	
	\]	
	For any $s < t < e$, define
	\[
		\mathcal{H}_{s,e}^t = \underset{ z \in \mathbb{R}  }{\sup }\left\vert  \mathcal{H}_{s,e}^t(z)  \right\vert,  
	\]
	where
	\[	
		\mathcal{H}_{s,e}^t(z) = \sqrt{\frac{n (t - s)(e - t)}{2(e-s)} } \left\{\frac{1}{t - s}\sum_{l = s + 1}^t H_l(z) -\frac{1}{e - t}\sum_{l = t+1}^e H_l(z)\right\}.
	\]

For any $t \geq \eta_k$ and $z \in \mathbb{R}$, it holds that
	\[
		\left|\widetilde{D}^t_{s, e}(z) - \mathcal{H}^t_{s, e}(z)\right| = \sqrt{\frac{2(e-t)}{n(e-s)(t-s)}} \frac{n(\eta_k - s)}{2} \left|G_{\eta_{k+1}}(z) - G_{\eta_{k}}(z)\right| \leq \sqrt{\frac{n(\eta_k - s)}{2}} \kappa_k.
	\]
	Thus we have
	\begin{align*}
		\widetilde{D}^{\eta_k}_{s, e} & = \sup_{z \in \mathbb{R}} \left|\widetilde{D}^{\eta_k}_{s, e}(z) - \mathcal{H}^{\eta_k}_{s, e}(z) + \mathcal{H}^{\eta_k}_{s, e}(z)\right| \leq \sup_{z \in \mathbb{R}} \left| \widetilde{D}^{\eta_k}_{s, e}(z) - \mathcal{H}^{\eta_k}_{s, e}(z) \right| + \mathcal{H}^{\eta_k}_{s, e} \\
		& \leq \mathcal{H}^{\eta_k}_{s, e} + \sqrt{\frac{n(\eta_k - s)}{2}} \kappa_k \leq \sqrt{\frac{(\eta_k - s)(e - \eta_{k+1})}{(\eta_{k+1} - s)(e - \eta_{k+1})}} \mathcal{H}^{\eta_{k+1}}_{s, e} + \sqrt{\frac{n(\eta_k - s)}{2}} \kappa_k   \\
	& \leq \sqrt{c_1} \widetilde{D}^{\eta_{k+1}}_{s, e} + \sqrt{2n(\eta_k-s)} \kappa_k.
	\end{align*}	
\end{proof}

\begin{lemma}
\label{lem:local}
For any $z_0 \in \mathbb{R}$ and $(s, e) \subset (0, T)$ satisfying the following: there exits a true change point $\eta_k \in (s, e)$ such that 
	\begin{equation} \label{eqn:interval_lb}
		\min\{\eta_k - s, \, e - \eta_k\} \geq c_1 \Delta,
	\end{equation}
	\begin{equation} \label{eqn:lower_bound}
		\widetilde{D}_{s, e}^{\eta_{k}} (z_0) \geq (c_1/2)  \sqrt{n/2} \frac{\kappa \Delta }{ \sqrt{e - s}},
	\end{equation}
	where $c_1 > 0$ is a sufficiently small constant, and that	
	\begin{equation} \label{eqn:upper_bound} 
		\max_{t = s+ 1, \ldots, e}  \vert \widetilde{D}_{s, e}^{t} (z_0)\vert -  \widetilde{D}_{s, e}^{\eta_k} (z_0) \leq 2^{-3/2}c_1^3 (e-s)^{-7/2}\Delta^4\kappa\sqrt{n}, 
	\end{equation}
	for all $d \in (s, e)$ satisfying
	\begin{equation} \label{eqn:distance}
		\vert d -  \eta_k\vert \leq  c_1\Delta/32,
	\end{equation} 
	it holds that
	\[
		\widetilde{D}_{s, e}^{\eta_k} (z_0) -  \widetilde{D}_{s, e}^{d}(z_0) > c  |d - \eta_k| \Delta \widetilde{D}^{\eta_k}_{s, e}(z_0)(e-s)^{-2},
	\]
	where $c > 0$ is a sufficiently small constant.
\end{lemma}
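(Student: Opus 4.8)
The plan is to reduce the statement to a purely univariate CUSUM inequality and then exploit the shape information already recorded in \Cref{lem-lem2}. For a fixed $z_0$ write $f_k = G_k(z_0)$ and observe that, because every pair in $\mathcal{O}$ is identically distributed, $\widetilde{D}^t_{s,e}(z_0) = \sqrt{n/2}\,|\tilde{f}^t_{s,e}|$, where $\tilde{f}^t_{s,e}$ is the ordinary univariate CUSUM of the piecewise constant sequence $\{f_k\}_{k=s+1}^e$. The factor $\sqrt{n/2}$ multiplies every term in \eqref{eqn:lower_bound}, in the conclusion, and (after using $\sqrt{n}=\sqrt 2\,\sqrt{n/2}$, so that $2^{-3/2}\sqrt n = 2^{-1}\sqrt{n/2}$) in \eqref{eqn:upper_bound} as well. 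Hence it suffices to prove the corresponding inequality for $\tilde{f}^t_{s,e}$, and the whole argument takes place for the fixed one-dimensional signal $\{f_k\}$, which is constant between consecutive change points with a jump of magnitude $|G_{\eta_k}(z_0)-G_{\eta_{k-1}}(z_0)|$ at each $\eta_k$.

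By symmetry I would treat $d\ge\eta_k$ (the case $d<\eta_k$ is identical after reflecting the interval). Since $|d-\eta_k|\le c_1\Delta/32<\Delta$ by \eqref{eqn:distance} and $\Delta$ is the minimal spacing, no change point lies strictly between $\eta_k$ and $d$, so $f_k$ is constant on $(\eta_k,d]$. Fixing the sign of $z_0$ so that $\tilde{f}^{\eta_k}_{s,e}>0$, \Cref{lem-lem2} tells me that on the segment $(\eta_k,\eta_{k+1})$ the map $t\mapsto \tilde{f}^t_{s,e}$ is either monotone or first decreases and then increases. A monotone increase over a segment of length at least $\Delta$ would make $\tilde{f}^{\eta_{k+1}}_{s,e}$ exceed $\tilde{f}^{\eta_k}_{s,e}$ by far more than the gap allowed in \eqref{eqn:upper_bound}, once combined with the lower bound \eqref{eqn:lower_bound}; so that case is excluded, and $\tilde{f}^t_{s,e}$ must decrease immediately to the right of $\eta_k$ and stay positive there. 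Consequently $\widetilde{D}^{\eta_k}_{s,e}(z_0)-\widetilde{D}^{d}_{s,e}(z_0)=\sqrt{n/2}\,(\tilde{f}^{\eta_k}_{s,e}-\tilde{f}^{d}_{s,e})$ with both summands positive, and I may drop the absolute values.

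With the signs fixed, I would compute $\tilde{f}^{\eta_k}_{s,e}-\tilde{f}^{d}_{s,e}$ explicitly. Because $\sum_{k=s+1}^t f_k$ is piecewise linear in $t$ and $f_k$ is constant on $(\eta_k,d]$, this difference is an explicit function of $(s,t,e)$ and of the partial sums of $f$; differentiating $(\tilde{f}^t_{s,e})^2$ in $t$ on the segment $(\eta_k,\eta_{k+1})$ gives a rational expression whose leading behaviour near $\eta_k$ is of order $\tilde{f}^{\eta_k}_{s,e}\cdot\Delta\,|t-\eta_k|(e-s)^{-2}$. Here \eqref{eqn:lower_bound} keeps $\tilde{f}^{\eta_k}_{s,e}$ bounded below (so the slope does not degenerate), \eqref{eqn:distance} guarantees that $|d-\eta_k|$ is small enough for the higher-order terms of the expansion to be dominated by the linear one, and \eqref{eqn:upper_bound} certifies that the valley bottom on $(\eta_k,\eta_{k+1})$ is bounded away from $\eta_k$, so that the whole interval $[\eta_k,d]$ lies in the strictly decreasing part. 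Collecting these bounds and reinstating the factor $\sqrt{n/2}$ yields $\widetilde{D}^{\eta_k}_{s,e}(z_0)-\widetilde{D}^{d}_{s,e}(z_0)>c\,|d-\eta_k|\Delta\,\widetilde{D}^{\eta_k}_{s,e}(z_0)(e-s)^{-2}$ for a small universal $c>0$.

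The main obstacle is this last step: carrying out the CUSUM-difference algebra with the square-root weights and, above all, ruling out the scenario in which the global maximiser of $t\mapsto\widetilde{D}^t_{s,e}(z_0)$ sits just off $\eta_k$ — this is precisely what the delicately calibrated gap in \eqref{eqn:upper_bound} is engineered to prevent, and making that quantitative is the crux. This bookkeeping is standard for univariate CUSUM contrasts and parallels the analysis underlying \Cref{lem-lem2} (Lemma~7 in \cite{padilla2019optimal}), from which the requisite monotonicity/quasi-convexity input is inherited; I would import that machinery rather than rederive it.
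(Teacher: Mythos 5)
Your proposal is correct and takes essentially the same approach as the paper: the paper's entire proof is the one-line observation that the statement is Lemma~11 in \cite{padilla2019optimal} with $n_{\min}=n_{\max}=n/2$, and your reduction of $\widetilde{D}^t_{s,e}(z_0)$ to $\sqrt{n/2}$ times the univariate CUSUM of the sequence $\{G_k(z_0)\}$ is exactly that translation. Your closing decision to import the CUSUM-difference machinery of \cite{padilla2019optimal} rather than rederive it is precisely what the paper does.
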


\begin{proof}
	The proof is identical to the proof of Lemma~11 in \cite{padilla2019optimal} after letting $n_{\min} = n_{\max} = n/2$.	
\end{proof}

\begin{lemma}\label{lem-uni-15}

Under \Cref{assume:model-rdpg}, consider any generic $(s, e) \subset (0, T)$, satisfying
	\[
		\min_{l = 1, \ldots, K}\min\{\eta_l - s, e - \eta_l\} \geq \Delta/16, \quad \eta_k \in (s, e).
	\]
	and
	\[
		e - s \leq C_R \Delta.
	\]
	Let
	\[
		\kappa_{s, e}^{\max} = \max_{\substack{l = 1, \ldots, K \\ \eta_l \in (s, e)}} \kappa_l,
	\]
	and $b \in \argmax_{s < t < e} D^t_{s, e}$.  For some $c_1 > 0$ and $\gamma > 0$, suppose that
	\begin{equation}\label{eq-lem13-2}
		D^b_{s, e} \geq c_1 \kappa^{\max}_{s, e}\sqrt{\Delta n},
	\end{equation}
	\begin{equation}\label{eq-lem13-1}
		\max_{t = s + 1, \ldots, e - 1}\sup_{z \in \mathbb{R}}\left|\Delta^t_{s, e}(z)\right| \leq \gamma,
	\end{equation}
	and
	\begin{equation}\label{eq-lem13-4}
		\max_{0 \leq s < e \leq T}\sup_{z \in \mathbb{R}}\left| \sqrt{\frac{2}{n(e-s)}} \sum_{t = s+1}^e \sum_{i, j \in \mathcal{O}} \left(\mathbbm{1}\{\widehat{Y}^t{i,j} \leq z\} - G_t(z)\right)\right| \leq \gamma.
	\end{equation}
	If there exits a sufficiently small $0 < c_2 < c_1/2$ such that
	\begin{equation}\label{eq-lem13-3}
		\gamma \leq c_2 \kappa^{\max}_{s, e}\sqrt{\Delta n},
	\end{equation}
	then there exists a change point $\eta_k \in (s, e)$ such that
	\[
		\min\{e - \eta_k, \, \eta_k - s\} \geq \Delta/4 \quad \mbox{and} \quad |\eta_k - b| \leq C_{\epsilon}\frac{\gamma^2}{\kappa^2_k n},
	\]		
	where $C_{\epsilon} > 0$ is a sufficiently large constant.
\end{lemma}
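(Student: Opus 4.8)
The plan is to pass from the observed statistic $D^t_{s,e}$ to its population counterpart $\widetilde{D}^t_{s,e}$, to locate an interior change point that carries the bulk of the signal, and then to combine the local curvature bound of Lemma \ref{lem:local} with a squaring argument to pin down the sample maximiser $b$. First I would record the sample-to-population bridge: writing the quantity inside $D^t_{s,e}(z)$ as the quantity inside $\widetilde{D}^t_{s,e}(z)$ plus $\Delta^t_{s,e}(z)$, condition \eqref{eq-lem13-1} gives $\sup_{t,z}|D^t_{s,e}(z)-\widetilde{D}^t_{s,e}(z)|\le\gamma$ and hence $|D^t_{s,e}-\widetilde{D}^t_{s,e}|\le\gamma$ for every $t$. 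Combined with \eqref{eq-lem13-2} this yields $\widetilde{D}^b_{s,e}\ge D^b_{s,e}-\gamma\ge(c_1-c_2)\kappa^{\max}_{s,e}\sqrt{\Delta n}$. Letting $\eta_k\in\argmax_{s<t<e}\widetilde{D}^t_{s,e}$ be the population maximiser, Lemma \ref{lem-lem2} guarantees that $\eta_k$ is a true change point and $\widetilde{D}^{\eta_k}_{s,e}\ge\widetilde{D}^b_{s,e}$. To see that $\eta_k$ is interior I would argue by contradiction: if $\min\{\eta_k-s,\,e-\eta_k\}<\Delta/4$, the boundary bound \eqref{eq-lem4-2} of Lemma \ref{lem-uni-lem8} forces $\widetilde{D}^{\eta_k}_{s,e}\le\kappa_k\sqrt{n/2}\sqrt{\Delta/4}$, which contradicts the lower bound once $c_1-c_2$ exceeds a fixed constant; hence $\min\{\eta_k-s,\,e-\eta_k\}\ge\Delta/4$.

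With this $\eta_k$ fixed, I would set $z_0\in\argmax_{z}\widetilde{D}^{\eta_k}_{s,e}(z)$ and verify the hypotheses of Lemma \ref{lem:local}. Condition \eqref{eqn:interval_lb} holds (with constant $1/4$) by the previous step; \eqref{eqn:lower_bound} follows from the signal lower bound of Lemma \ref{lem-uni-lem10} applied on $(s,e)$; and \eqref{eqn:upper_bound} is immediate, because $\eta_k$ maximises $\widetilde{D}^t_{s,e}=\sup_z\widetilde{D}^t_{s,e}(z)$ over $t$, so $\max_t\widetilde{D}^t_{s,e}(z_0)\le\widetilde{D}^{\eta_k}_{s,e}=\widetilde{D}^{\eta_k}_{s,e}(z_0)$ and the difference in \eqref{eqn:upper_bound} is non-positive. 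Lemma \ref{lem:local} then supplies, for every $d$ with $|d-\eta_k|\le c_1\Delta/32$, the tent-type lower bound $\widetilde{D}^{\eta_k}_{s,e}(z_0)-\widetilde{D}^d_{s,e}(z_0)>c\,|d-\eta_k|\,\Delta\,\widetilde{D}^{\eta_k}_{s,e}(z_0)(e-s)^{-2}$.

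The final step matches this against an upper bound obtained by squaring the optimality $D^b_{s,e}\ge D^{\eta_k}_{s,e}$. Fixing signs so that the signed population contrast at $z_0$ is positive, expanding $(D^{\eta_k}_{s,e})^2\ge(\widetilde{D}^{\eta_k}_{s,e}(z_0)+\Delta^{\eta_k}_{s,e}(z_0))^2$ and $(D^b_{s,e})^2\le(\widetilde{D}^b_{s,e}(z_0)+\gamma)^2$, and cancelling the leading quadratic terms using $|\Delta^{\eta_k}_{s,e}(z_0)|\le\gamma$, I would reach $\widetilde{D}^{\eta_k}_{s,e}(z_0)-\widetilde{D}^b_{s,e}(z_0)\lesssim\gamma^2/\widetilde{D}^{\eta_k}_{s,e}(z_0)$; here the uniform-in-$z$ control \eqref{eq-lem13-4} is what lets me retain the single threshold $z_0$ when replacing the $\sup_z$ maximiser defining $D^b_{s,e}$, so that the curvature of Lemma \ref{lem:local} (stated at $z_0$) is applicable with $d=b$. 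Combining the curvature lower bound with this upper bound, together with $\widetilde{D}^{\eta_k}_{s,e}(z_0)^2=\tfrac12 n(\eta_k-s)(e-\eta_k)(e-s)^{-1}\kappa_k^2\gtrsim\kappa_k^2 n\Delta$ and $e-s\le C_R\Delta$, gives $|b-\eta_k|\lesssim\gamma^2/(\kappa_k^2 n)$, the claimed rate. The case $|b-\eta_k|>c_1\Delta/32$ is excluded separately: applying the curvature bound at the endpoint $d$ with $|d-\eta_k|=c_1\Delta/32$, together with the monotonicity structure of Lemma \ref{lem-lem2}, would force $\widetilde{D}^b_{s,e}(z_0)$ below $\widetilde{D}^{\eta_k}_{s,e}(z_0)-O(\gamma)$ by more than the near-optimality $\widetilde{D}^b_{s,e}\ge\widetilde{D}^{\eta_k}_{s,e}-2\gamma$ permits.

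I expect the crux to be precisely this squaring-plus-threshold-matching step. The estimator $b$ maximises $\sup_z|D^t_{s,e}(z)|$, whereas Lemma \ref{lem:local} controls the single fixed threshold $z_0$; a crude linear comparison would only deliver the weaker rate $\gamma/(\kappa_k\sqrt{n})$, and it is the quadratic cancellation, fed by the uniform empirical-process bound \eqref{eq-lem13-4}, that produces the sharp $\gamma^2$ dependence demanded by \Cref{thm-main}. Bookkeeping of the universal constants (so that the choice $c_2<c_1/2$ in \eqref{eq-lem13-3} leaves enough room at each comparison) is the remaining routine, if delicate, obstacle.
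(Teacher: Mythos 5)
Your overall architecture---the sample-to-population bridge via \eqref{eq-lem13-1}, locating a change point through Lemma \ref{lem-lem2}, verifying the hypotheses of Lemma \ref{lem:local} (your choice of $\eta_k$ as the global population maximiser does make \eqref{eqn:upper_bound} immediate, which is a clean touch), then a crude localisation $|b-\eta_k|\le c_1\Delta/32$ followed by a refinement to the $\gamma^2$ rate---is the same as the paper's proof. However, two steps have genuine gaps. First, your interior-ness argument applies the bound \eqref{eq-lem4-2} of Lemma \ref{lem-uni-lem8}, but that bound is stated, and is only true, when $\eta_k$ is the \emph{only} change point in $(s,e)$. Here $e-s$ may be as large as $C_R\Delta$ with $C_R>3/2$, so $(s,e)$ can contain two or more change points, and then the population CUSUM at a near-boundary change point can be large because it carries the signal of an interior one, so \eqref{eq-lem4-2} simply fails there. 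The paper treats this case separately using \eqref{eq-more-2cpt}: if $\eta_k$ is too close to $s$ it must be the leftmost change point, and $\widetilde{D}^{\eta_k}_{s,e}\le \sqrt{c_1}\,\widetilde{D}^{\eta_{k+1}}_{s,e}+\sqrt{2n(\eta_k-s)}\,\kappa_k$ then forces $\widetilde{D}^{\eta_k}_{s,e}<\max_t D^t_{s,e}-\gamma$, contradicting near-optimality. Your proposal never invokes \eqref{eq-more-2cpt}, so this case is unhandled.

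Second, the squaring step as you describe it does not work. The inequality $(D^b_{s,e})^2\le\bigl(\widetilde{D}^b_{s,e}(z_0)+\gamma\bigr)^2$ is invalid: $D^b_{s,e}$ is a supremum over $z$, and the bridge only gives $D^b_{s,e}\le \sup_z|\widetilde{D}^b_{s,e}(z)|+\gamma$; since $\sup_z|\widetilde{D}^b_{s,e}(z)|$ may strictly exceed $\widetilde{D}^b_{s,e}(z_0)$, freezing $z=z_0$ shrinks the right-hand side and the inequality can fail. Moreover, even granting both of your displayed bounds, expanding and factoring yields
\[
\left(\widetilde{D}^{\eta_k}_{s,e}(z_0)-\widetilde{D}^{b}_{s,e}(z_0)\right)\left(\widetilde{D}^{\eta_k}_{s,e}(z_0)+\widetilde{D}^{b}_{s,e}(z_0)\right)\le 4\gamma\,\widetilde{D}^{\eta_k}_{s,e}(z_0)+\gamma^2,
\]
hence a gap of order $\gamma+\gamma^2/\widetilde{D}^{\eta_k}_{s,e}(z_0)$: the cross terms of size $2\gamma\widetilde{D}$ do \emph{not} cancel, and you recover exactly the linear $O(\gamma)$ rate you set out to beat. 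The genuine $\gamma^2$ gain requires showing that the noise at $b$ and the noise at $\eta_k$ nearly coincide---that is, bounding the difference $\Delta^{b}_{s,e}(\cdot)-\Delta^{\eta_k}_{s,e}(z_0)$ in terms of $|b-\eta_k|$ and then iterating this bound---which is the content of Step~3 of Lemma~15 in \cite{padilla2019optimal}; the paper imports that argument verbatim, and invoking \eqref{eq-lem13-4} on its own, as you do, does not substitute for it.
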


\begin{proof}

Without loss of generality, assume that $\widetilde{D}^b_{s, e} > 0$ and that $\widetilde{D}^t_{s, e}$ is locally decreasing at $b$.  Observe that there has to be a change point $\eta_k \in (s, b)$, or otherwise $\widetilde{D}^b_{s, e} > 0$ implies that $\widetilde{D}^t_{s, e}$ is decreasing, as a consequence of Lemma~\ref{lem-lem2}.  Thus, if $s \leq \eta_k \leq b \leq e$, then
	\begin{equation}\label{eq-lem13-pf-1}
		\widetilde{D}^{\eta_k}_{s, e} \geq \widetilde{D}^b_{s, e} \geq D^b_{s, e} - \gamma \geq (c_1 - c_2)\kappa^{\max}_{s, e}\sqrt{\Delta n/2} \geq 2^{-3/2}c_1\kappa^{\max}_{s, e}\sqrt{\Delta n},
	\end{equation}
	where the second inequality follows from \eqref{eq-lem13-1}, and the third inequality follows from \eqref{eq-lem13-2} and \eqref{eq-lem13-3}.  Observe that $e - s \leq C_R \Delta$ and that $(s, e)$ contains at least one change point. \\
	
\noindent \textbf{Step 1.}  In this step, we are to show that 
	\begin{equation}\label{eq-lem13-pf-2}
		\min\{\eta_k - s, \, e - \eta_k\} \geq \min\{1, c_1^2\}\Delta/16.
	\end{equation}	
	Suppose that $\eta_k$ is the only change point in $(s, e)$.  Then \eqref{eq-lem13-pf-2} must hold or otherwise it follows from \eqref{eq-lem4-2} in Lemma \ref{lem-uni-lem8}, we have
	\[
		D^{\eta_k}_{s, e} \leq \kappa_k \sqrt{\Delta n}\frac{c_1}{4},
	\]
	which contradicts \eqref{eq-lem13-pf-1}.
	
Suppose $(s, e)$ contains at least two change points.  Then $\eta_k - s < \min\{1, \, c_1^2\}\Delta /16$ implies that $\eta_k$ is the most left change point in $(s, e)$.  Therefore it follows from \eqref{eq-more-2cpt} that
	\begin{align}
		\widetilde{D}^{\eta_k}_{s, e} & \leq \frac{c_1}{4} \widetilde{D}^{\eta_{k+1}}_{s, e} + \sqrt{2n(\eta_k - s)} \kappa_k \leq \frac{c_1}{4} \max_{t = s+ 1, \ldots, e} \widetilde{D}^t_{s, e} + \frac{c_1 \kappa_k \sqrt{n \Delta}}{4\sqrt{2}} \nonumber \\
		& \leq \frac{c_1}{4} \max_{t = s+1, \ldots, e} D^t_{s, e} + \frac{c_1}{4} \gamma + \frac{c_1 \kappa_k \sqrt{n \Delta}}{4\sqrt{2}} \nonumber \\
		& < \max_{t = s + 1, \ldots, e} D^t_{s, e} - \gamma,  \label{eq-dtilde-d-gamma-contra}
	\end{align}
	where the last inequality follows from that
	\[
		\max_{t = s + 1, \ldots, e} D^t_{s, e} = D^b_{s, e} \geq 2^{-3/2}c_1\kappa^{\max}_{s, e}\sqrt{\Delta n},
	\]
	as implied by \eqref{eq-lem13-pf-1}.  Therefore, \eqref{eq-dtilde-d-gamma-contra} contradicts 
	\[
		\widetilde{D}^{\eta_k}_{s, e} \geq \widetilde{D}^b_{s, e} - \gamma,
	\]
	which is also implied by \eqref{eq-lem13-pf-1}.\\

\noindent \textbf{Step 2.}  It follows from Lemma~\ref{lem:local} that 
	\begin{equation}\label{eq-lem13-pf-3}
		\widetilde{D}^{\eta_k}_{s, e} - \widetilde{D}^{\eta_k + c_1\Delta/32}_{s, e} \geq c\frac{c_1\Delta}{32} \Delta \widetilde{D}^{\eta_k}_{s, e}(e-s)^2 \geq \frac{cc_1}{32C_R^2}(c_1 \kappa \sqrt{\Delta n} - 2\gamma) \geq 2\gamma.
	\end{equation}
	We claim that $b \in (\eta_k, \eta_k + c_1\Delta/32)$.  By contradiction, suppose that $b \geq \eta_k + c_1\Delta/32$.  Then
	\begin{equation}\label{eq-lem13-pf-4}
		\widetilde{D}^b_{s, e} \leq \widetilde{D}^{\eta_k + c_1\Delta/32}_{s, e} < \widetilde{D}^{\eta_k}_{s, e} - 2\gamma \leq \max_{t = s+1, \ldots, e}\widetilde{D}^t_{s, e} - 2\gamma \leq \max_{t = s+1, \ldots, e} D^t_{s, e} - \gamma = D^b_{s, e} - \gamma,
	\end{equation}
	where the first inequality follows from Lemma~\ref{lem-lem2}, the second follows from \eqref{eq-lem13-pf-3}, and the fourth follows from \eqref{eq-lem13-1}.   Note that \eqref{eq-lem13-pf-4} shows that
	\[
		\widetilde{D}^b_{s, e} < D^b_{s, e} - \gamma,
	\]
	which is a contradiction with \eqref{eq-lem13-pf-1} showing that 
	\[
		\widetilde{D}^b_{s, e} \geq \widetilde{D}^b_{s, e} - \gamma.
	\]
	Therefore we have $b \in (\eta_k, \eta_k + c_1\Delta /32)$. \\

\noindent \textbf{Step 3.}  This follows from the identical arguments as those in \textbf{Step 3} in the proof of Lemma~15 in \cite{padilla2019optimal} by letting $n_{\min} = n_{\max} = n/2$ and translating notation appropriately.  We have that
	\[
		|b - \eta_k| \leq C_{\epsilon} \frac{\gamma^2}{n \kappa_k^2},
	\]
	where $C_{\epsilon} > 0$ is a universal constant.

\end{proof}

\section{Proof of \Cref{thm-main}}\label{sec-app-main}

\begin{proof}[Proof of \Cref{thm-main}]
	Since $\epsilon$ is the upper bound of the localisation error, by induction, it suffices to consider any interval $(s, e) \subset (1, T)$	 that satisfies
	\[
	\eta_{k-1} \leq s \leq \eta_k \leq \ldots \leq \eta_{k+q} \leq e \leq \eta_{k+q+1}, \quad q \geq -1,
	\]
	and
	\[
	\max\bigl\{\min\{\eta_k - s, \, s - \eta_{k-1}\}, \, \min \{\eta_{k+q+1} - e, \, e - \eta_{k+q}\} \bigr\} \leq \epsilon,
	\]
	where $q = -1$ indicates that there is no change point contained in $(s, e)$.
	
	By \Cref{assume-snr}, it holds that 
	\[
	\epsilon = C_{\epsilon} \frac{T \max\{d\log(n \vee T), \, d^3\}}{\kappa^2 n (1 - \rho)} < \Delta/4.
	\]
	It has to be the case that for any change point $\eta_k \in (0, T)$, either $|\eta_k - s| \leq \epsilon$ or $|\eta_k - s| \geq \Delta - \epsilon \geq 3\Delta/4$.  This means that $\min\{|\eta_k - s|, \, |\eta_k - e|\}\leq \epsilon$ indicates that $\eta_k$ is a detected change point in the previous induction step, even if $\eta_k \in (s, e)$.  We refer to $\eta_k \in (s, e)$ an undetected change point if $\min\{|\eta_k - s|, \, |\eta_k - e|\} \geq 3\Delta/4$.
	
	In order to complete the induction step, it suffices to show that we (i) will not detect any new change point in $(s, e)$ if all the change points in that interval have been previous detected, and (ii) will find a point $b \in (s, e)$ such that $|\eta_k - b| \leq \epsilon$ if there exists at least one undetected change point in $(s, e)$.
	
	Recall the definitions $Y_{ij}^k = (X_i(k))^{\top}X_j(k)$ and $\widehat{Y}_{ij}^k = (\widehat{X}_i(k))^{\top}\widehat{X}_j(k)$.  For $j = 1, 2$, define the events
	\[
	\mathcal{B}_j(\gamma) = \left\{\max_{1\leq s < b < e \leq T}\sup_{z \in [0, 1]}\left|\sum_{k = s+1}^e w_k^{(j)} \sum_{(i, j) \in \mathcal{O}} \left(\mathbbm{1}\{\widehat{Y}^k_{ij} \leq z\} - \mathbb{E}\left\{\mathbbm{1}\{Y^k_{ij} \leq z\}\right\}\right)\right| \leq \gamma\right\},
	\]
	where
	\[
	w_k^{(1)} = \begin{cases}
	\sqrt{\frac{2}{n}}\sqrt{\frac{(e-b)}{(b-s)(e-s)}}, & k = s + 1, \ldots, b, \\
	- \sqrt{\frac{2}{n}} \sqrt{\frac{(b-s)}{(e-b) (e-s)}}, & k = b+1, \ldots, e,
	\end{cases}, \quad w_k^{(2)} = \sqrt{\frac{2}{n}}\frac{1}{\sqrt{e-s}},
	\]
	and
	\[
	\gamma = C_{\gamma} T^{1/2} \max\{\sqrt{d\log(n \vee T)}, \, d^{3/2}\},
	\] 
	with a sufficiently large constant $C_{\gamma} > 0$.  
	
	Define
	\[
	\mathcal{S} = \bigcap_{k = 1}^K\left\{\alpha_s \in [\eta_k - 3\Delta/4, \eta_k - \Delta/2], \, \beta_s \in [\eta_k + \Delta/2, \eta_k + 3\Delta/4], \mbox{ for some } s = 1, \ldots, S\right\}.
	\] 
	
	It follows from Lemma \ref{lem-large-prob} that for $j = 1, 2$, it holds that
	\begin{align*}
	\mathbb{P}\{\mathcal{B}_j\} \geq 1 - 11(n \vee T)^{-c} - 8Te^{-n}.
	\end{align*}
	The event $\mathcal{S}$ is studied in Lemma~13 in \cite{wang2018univariate}.  The rest of the proof assumes the the event $\mathcal{B}_1(\gamma) \cap \mathcal{B}_2(\gamma) \cap \mathcal{S}$. \\
	
	\noindent \textbf{Step 1.}  In this step, we will show that we will consistently detect or reject the existence of undetected change points within $(s, e)$.  Let $a_m$, $b_m$ and $m^*$ be defined as in Algorithm~\ref{algorithm:WBS}.   Suppose there exists a change point $\eta_k \in (s, e)$ such that $\min \{\eta_k - s, \, e - \eta_k\} \geq 3\Delta/4$.  In the event $\mathcal{S}$, there exists an interval $(\alpha_m, \beta_m)$ selected such that $\alpha_m \in [\eta_k - 3\Delta/4, \eta_k - \Delta/2]$ and $\beta_m \in [\eta_k + \Delta/2, \eta_k + 3\Delta/4]$.  
	
	Following Algorithm~\ref{algorithm:WBS}, $(s_m, e_m) = (\alpha_m, \beta_m) \cap (s, e)$.  We have that $\min \{\eta_k - s_m, e_m - \eta_k\} \ge (1/4)\Delta$ and $(s_m, e_m)$ contains at most one true change point.

	It follows from Lemma \ref{lem-uni-lem10}, with $c_1$ there chosen to be $1/4$, that
	\[
	\max_{s_m < t < e_m} \widetilde{D}^{t}_{s_m, e_m} \geq \frac{2^{-7/2}\kappa\Delta \sqrt{n}}{\sqrt{e-s}},
	\]
	Therefore
	\begin{align*}
	a_m  = \max_{s_m < t < e_m} D^{t}_{s_m, e_m} \geq \max_{s_m < t < e_m} \widetilde{D}^{t}_{s_m, e_m} - \gamma  \geq 2^{-7/2}C_R^{-1/2}\kappa \sqrt{\Delta n} - \gamma.
	\end{align*}
	Thus for any undetected change point $\eta_k \in (s, e)$, it holds that
	\begin{equation}\label{eq:wbsrp size of population}
	a_{m^*} = \sup_{1\le m\le  S} a_m \geq 2^{-7/2}C_R^{-1/2}\kappa \sqrt{\Delta n} - \gamma \geq c_{\tau, 2}  \kappa \sqrt{\Delta n} ,   
	\end{equation}
	where the last inequality is from the choice of $\gamma$ and $c_{\tau, 2} > 0$ is achievable with a sufficiently large $C_{\mathrm{SNR}}$ in \Cref{assume-snr}.  This means we accept the existence of undetected change points.
	
	Suppose that there are no undetected change points within $(s, e)$, then for any $(s_m, e_m)$, one of the following situations must hold.
	\begin{itemize}
		\item [(a)]	There is no change point within $(s_m, e_m)$;
		\item [(b)] there exists only one change point $\eta_k \in (s_m, e_m)$ and $\min\{\eta_k - s_m, e_m - \eta_k\} \le \epsilon_k$; or
		\item [(c)] there exist two change points $\eta_k, \eta_{k+1} \in (s_m, e_m)$ and $\eta_k - s_m \leq \epsilon_k$, $e_m - \eta_{k+1} \leq \epsilon_{k+1}$.
	\end{itemize}
	
	Observe that if (a) holds, then we have
	\[
	\max_{s_m < t < e_m} D^{t}_{s_m, e_m}  \leq \max_{s_m < t < e_m}  \widetilde{D}^{t}_{s_m, e_m}+ \gamma = \gamma  <   \tau,
	\]
	so no change points  are detected.
	
	Cases (b) and (c) are similar, and case (b) is simpler than (c), so we will only focus on case (c).  It follows from Lemma \ref{lem-uni-lem8} that
	\begin{align*}
	\max_{s_m < t < e_m} \widetilde{D}^t_{s_m, e_m} & \leq \sqrt{n/2}\sqrt{e_m - \eta_{k+1}}\kappa_{k+1} + \sqrt{n/2} \sqrt{\eta_k - s_m} \kappa_k \\
	& \leq \sqrt{2C_{\epsilon}} T^{1/2} \max\{\sqrt{d\log (n \vee T)},\, d^{3/2}\},
	\end{align*}
	therefore
	\begin{align*}
	\max_{s_m < t < e_m} D^t_{s_m, e_m} \leq \max_{s_m < t < e_m} \widetilde{D}^t_{s_m, e_m} + \gamma \leq 2\gamma < \tau.
	\end{align*}
	Under \eqref{eq-tau-cond}, we will always correctly reject the existence of undetected change points. \\
	
	\noindent \textbf{Step 2.}  Assume that there exists a change point $\eta_k \in (s, e)$ such that $\min\{\eta_k - s, \eta_k - e\} \ge 3\Delta/4$.  Let $s_m$, $e_m$ and $m^*$ be defined as in Algorithm~\ref{algorithm:WBS}.  To complete the proof it suffices to show that, there exists a change point $\eta_k \in (s_{m*}, e_{m*})$ such that $\min\{\eta_k - s_{m*}, \eta_k - e_{m*}\} \geq \Delta/4$ and $|b_{m*} - \eta_k| \leq \epsilon$.	
	
	To this end, we are to ensure that the assumptions of Lemma \ref{lem:local} are verified.  Note that \eqref{eq-lem13-2} follows from \eqref{eq:wbsrp size of population}, \eqref{eq-lem13-1} and \eqref{eq-lem13-4} follow from the definitions of events $\mathcal{B}_1(\gamma)$ and $\mathcal{B}_2(\gamma)$, and \eqref{eq-lem13-3} follows from \Cref{assume-snr}.
	
	Thus, all the conditions in Lemma \ref{lem:local} are met. Therefore, we conclude that there exists a change point $\eta_{k}$, satisfying
	\begin{equation}
	\min \{e_{m^*}-\eta_k,\eta_k-s_{m^*}\} > \Delta /4 \label{eq:coro wbsrp 1d re1}
	\end{equation}
	and
	\[
	| b_{m*}-\eta_{k}| \leq C_{\epsilon}\frac{\gamma^2}{n \kappa_k^2} \leq \epsilon,
	\]
	where the last inequality holds from the choice of $\gamma$ and \Cref{assume-snr}.
	
	The proof is completed by noticing  that \eqref{eq:coro wbsrp 1d re1} and  $(s_{m^*}, e_{m^*}) \subset (s, e)$ imply that
	\[
	\min \{e-\eta_k,\eta_k-s\}  >  \Delta /4 > \epsilon.
	\]
	As discussed in the argument before {\bf Step 1}, this implies that $\eta_k $ must be an undetected change point.	
	
\end{proof}

\section{Sensitivity analysis of the input $d$}
\label{sec:sensitivity}

We proceed with the same setting as in \Cref{sec:simulations}, focusing on Scenario 3. The only difference with \Cref{sec:simulations} is that now we explore the sensitivity of  NonPar-RDPG-CPD   to the choice $d$, by considering the performance of our algorithm  for $d \in \{7,9,11,13,15,17\}$. The results in \Cref{tab5}  show that, overall,  NonPar-RDPG-CPD   is not sensitive to  $d$, when it is not smaller than the true dimension of the latent positions.

\begin{table}[t!]
\centering
\caption{\label{tab5} Performance of  NonPar-RDPG-CPD  with data generated under  Scenario 3  for varying values of $d$.}
\medskip
\setlength{\tabcolsep}{18pt}
\begin{small}
	\begin{tabular}{ rrrrr}
		\hline
	$d$ &	$n$  & $\vert  K - \widehat{K}\vert$   & $d(\widehat{\mathcal{C}}| \mathcal{C})$  & $d( \mathcal{C}|\widehat{\mathcal{C}})$ \\ 
		\hline
	    7	   & 300 &   0.3      &   0.0 & 0.0 \\
	     9   & 300&     0.2    &0.0    &0.0  \\
		11	 & 300&   0.2      & 0.0   &  0.0\\			
		13	 & 300&     0.2    &  0.0  &  0.0\\			
		15	 & 300&     0.3    &   0.0 &0.0  \\			
		17	 & 300&       0.2  & 0.0   &0.0  \\									
		\hline	
			    7	   & 200 &      0.2   &    0.0& 0.0 \\
		9   & 200&      0.2   &   0.0 &  0.0\\
		11	 & 200&   0.1      &  0.0 &  0.0\\			
		13	 & 200&   0.1      & 1.0   & 0.0\\			
		15	 & 200&   0.1      &  1.0  &0.0  \\			
		17	 & 200&       0.1  &1.0    & 1.0 \\							
\hline		
	    7	   & 100 &    0.3     &    2.0& 2.0 \\
9   & 100&      0.2   &  1.0  &  3.0\\
11	 & 100&    0.5     &    5.0&  5.0\\			
13	 & 100&     0.5    &  3.0  &5.0  \\			
15	 & 100&     0.5    & 6.0   & 7.0 \\			
17	 & 100&    0.6     & 10.0   &  10.0\\			
\hline
	\end{tabular}
\end{small}
\end{table}

\section{Additional experiment on community structure changes only}

In this section we consider an additional scenario to the ones described in Section \ref{sec:experiments}.  Keeping everything as in Section \ref{sec:experiments}, with $\rho =0.5$, we modify Scenario 1 by setting  the matrix $Q$  as
	\[
Q_{i,j} = \begin{cases}
	0.5,  & i,j \in \mathcal{C}_l, \, l \in \{1, 2, 3\},\\
	0.3,  & \text{otherwise}.
\end{cases}
\]
where  $\mathcal{C}_1 = \mathcal{B}_1$, $\mathcal{C}_2 = \mathcal{B}_2$ and  $\mathcal{C}_3 = \mathcal{B}_3\cup \mathcal{B}_4$, with $\mathcal{B}_1,\ldots, \mathcal{B}_4$ as in Scenario 1. 
We refer to the resulting model as  Scenario 5, which consists of an example where the change happens in the community structure.  

The results in Table \ref{tab6}  show that in the setting of Scenario 5 our proposed approach still outperforms the competing methods. 
\newpage

\begin{table}[t!]
	\centering
	\caption{\label{tab6} Scenario 5}
	\medskip
	\setlength{\tabcolsep}{18pt}
	\begin{small}
		\begin{tabular}{ rrrrr}
			\hline
			Method &	$n$  & $\vert  K - \widehat{K}\vert$   & $d(\widehat{\mathcal{C}}| \mathcal{C})$  & $d( \mathcal{C}|\widehat{\mathcal{C}})$ \\ 
			\hline
			NonPar-RDPG-CPD & 300 &\textbf{0.0}  &1.0 & \textbf{1.0} \\
			NBS	                                 & 300&            21.6       &         1.0     & 43\\
			MNBS	                            & 300&          0.9        &          \textbf{0.0}    &20.0\\				
			\hline		
			NonPar-RDPG-CPD & 200 &\textbf{0.1}  &3.0 & \textbf{3.0} \\
           NBS	                                 & 200&        21.2           &    \textbf{1.0}           & 43\\
           MNBS	                            & 200&         1.1         &     4.0         &19.0\\	
			\hline
			NonPar-RDPG-CPD & 100 &\textbf{0.6}  &15.0 & \textbf{15.0} \\
			NBS	                                 & 100&       22.0            &  \textbf{2.0}             & 44.0\\
			MNBS	                            & 100&      1.1            &      5.0        &20.0\\	
			\hline
		\end{tabular}
	\end{small}
\end{table}

\section{Additional simulation results on varying $\kappa$}

We now consider the setting of Scenario 3 and allow for an extra parameter $\sigma^2$. Specifically, the data are now generated as follows. For $t \in \{1, 101\}$, we generate $Z_i(t) \stackrel{\mbox{ind}}{\sim} \mathcal{N}(0, \sigma^2  I_3)$, and for  $t \in \mathcal{A}_1 \cup \mathcal{A}_3 \backslash \{1, 101\}$, we generate
	\[
	Z_i(t)\begin{cases}
		\stackrel{\mathrm{ind}}{\sim}	\mathcal{N}(0,\sigma^2 I_3), & \text{with probability } 0.9,\\
		=	Z_i(t-1), & \text{with probability } 0.1. 
	\end{cases} 
	\]
	We then set
	\[
	P_{i,j}(t) = \frac{\exp\left\{Z_i(t)^{\top}Z_{j}(t)\right\}}{1 + \exp\left\{Z_i(t)^{\top}Z_{j}(t)\right\}}.
	\]

Furthermore, we generate $P_{i,j}(51) \sim \text{Beta}(100, 100)$, and for $t \in \{52, \ldots, 100\}$ we generate
\[
P(t) \begin{cases}
	= P(t-1),  &  \text{with probability } 0.9,\\
	\sim \text{Beta}(100,100), & \text{with probability } 0.1.\\
\end{cases}
\] 
Once the mean matrices $\{P(t)\}_{t=1}^T \mathbb{R}^{n\times n}$ have been constructed, we independently draw $A_{i,j}(t) \sim \text{Ber}(P_{i, j}(t))$, for all $i, j \in \{1, \ldots, n\}$ and $t \in \{1, \ldots, T\}$.   We consider experiments with $\sigma^2 \in \{1.5,2,2.5\}$.  This additional parameter is meant to capture different levels of jump sizes $\kappa$. 

For the model above and with the same setting from Section \ref{sec:simulations}, the results in Tables  \ref{tab8}--\ref{tab10}  show that our method once again outperforms the competing approaches.

\begin{table}[t!]
	\centering
	\caption{\label{tab8} $\sigma^2 = 1.5.$}
	\medskip
	\setlength{\tabcolsep}{18pt}
	\begin{small}
		\begin{tabular}{ rrrrr}
			\hline
			Method &	$n$  & $\vert  K - \widehat{K}\vert$   & $d(\widehat{\mathcal{C}}| \mathcal{C})$  & $d( \mathcal{C}|\widehat{\mathcal{C}})$ \\ 
			\hline
			NonPar-RDPG-CPD & 300 &\textbf{0.2}  &\textbf{0.0} & \textbf{0.0} \\
			NBS	& 300& 15.1 &1.0 &43.0 \\
			MNBS	& 300&1.1&28.0& 36.0\\				
			\hline		
			NonPar-RDPG-CPD & 200 & \textbf{0.52} &\textbf{0.0} &\textbf{0.0}  \\
			NBS	& 200& 14.0 & 1.0&44.0\\
			MNBS	& 200& 1.0&25.0 &35.0 \\	
			\hline
			NonPar-RDPG-CPD & 100 & \textbf{0.32} &\textbf{1.0}& \textbf{1.0} \\
			NBS	& 100 &14.0 &1.0 &45.0\\
			MNBS	& 100& 1.0& 25.0 &34.0\\			
			\hline
		\end{tabular}
	\end{small}
\end{table}

\begin{table}[t!]
	\centering
	\caption{\label{tab9} $\sigma^2 = 2.0.$}
	\medskip
	\setlength{\tabcolsep}{18pt}
	\begin{small}
		\begin{tabular}{ rrrrr}
			\hline
			Method &	$n$  & $\vert  K - \widehat{K}\vert$   & $d(\widehat{\mathcal{C}}| \mathcal{C})$  & $d( \mathcal{C}|\widehat{\mathcal{C}})$ \\ 
			\hline
			NonPar-RDPG-CPD & 300 &\textbf{0.2}  &\textbf{0.0} & \textbf{0.0} \\
			NBS	& 300&  14.9& 1.0&43.0 \\
			MNBS	& 300&1.2&21.0&37.0 \\				
			\hline		
			NonPar-RDPG-CPD & 200 & \textbf{0.3} &\textbf{0.0} &\textbf{0.0}  \\
			NBS	& 200&  14.4& 1.0&43.0\\
			MNBS	& 200&0.9 &26.0 &35.0 \\	
			\hline
			NonPar-RDPG-CPD & 100 & \textbf{0.3} &\textbf{1.0}& \textbf{2.0} \\
			NBS	& 100 &13.8 & 1.0&45.0\\
			MNBS	& 100& 0.9&22.0  &35.0\\			
			\hline
		\end{tabular}
	\end{small}
\end{table}

\begin{table}[t!]
	\centering
	\caption{\label{tab10} $\sigma^2 = 2.5.$}
	\medskip
	\setlength{\tabcolsep}{18pt}
	\begin{small}
		\begin{tabular}{ rrrrr}
			\hline
			Method &	$n$  & $\vert  K - \widehat{K}\vert$   & $d(\widehat{\mathcal{C}}| \mathcal{C})$  & $d( \mathcal{C}|\widehat{\mathcal{C}})$ \\ 
			\hline
			NonPar-RDPG-CPD & 300 &\textbf{0.2}  &\textbf{0.0} & \textbf{0.0} \\
			NBS	& 300&15.4  & 2.0& 43.0\\
			MNBS	& 300&1.2&21.0&35.0 \\				
			\hline		
			NonPar-RDPG-CPD & 200 & \textbf{0.4} &\textbf{0.0} &\textbf{0.0}  \\
			NBS	& 200& 14.7 &\textbf{1.0}&43.0\\
			MNBS	& 200& 0.9&21.0 &35.0 \\	
			\hline
			NonPar-RDPG-CPD & 100 & \textbf{0.3} &\textbf{1.0}& \textbf{1.0} \\
			NBS	& 100 & 13.8& \textbf{1.0}&43.0\\
			MNBS	& 100& 0.9& 27.0 &36.0\\			
			\hline
		\end{tabular}
	\end{small}
\end{table}

\bibliography{citation} 

\begin{thebibliography}{47}
\providecommand{\natexlab}[1]{#1}
\providecommand{\url}[1]{\texttt{#1}}
\expandafter\ifx\csname urlstyle\endcsname\relax
  \providecommand{\doi}[1]{doi: #1}\else
  \providecommand{\doi}{doi: \begingroup \urlstyle{rm}\Url}\fi

\bibitem[Arlot et~al.(2012)Arlot, Celisse, and Harchaoui]{arlot2012kernel}
Sylvain Arlot, Alain Celisse, and Zaid Harchaoui.
\newblock A kernel multiple change-point algorithm via model selection.
\newblock \emph{arXiv preprint arXiv:1202.3878}, 2012.

\bibitem[Athreya et~al.(2018)Athreya, Fishkind, Tang, Priebe, Park, Vogelstein,
  Levin, Lyzinski, and Qin]{athreya2017statistical}
Avanti Athreya, Donniell~E Fishkind, Minh Tang, Carey~E Priebe, Youngser Park,
  Joshua~T Vogelstein, Keith Levin, Vince Lyzinski, and Yichen Qin.
\newblock Statistical inference on random dot product graphs: a survey.
\newblock \emph{The Journal of Machine Learning Research}, 18\penalty0
  (1):\penalty0 1--92, 2018.

\bibitem[Aue et~al.(2009)Aue, H\"{o}mann, Horv\'{a}th, and
  Reimherr]{AueEtal2009}
A.~Aue, S.~H\"{o}mann, L.~Horv\'{a}th, and M.~Reimherr.
\newblock Break detection in the covariance structure of multivariate nonlinear
  time series models.
\newblock \emph{The Annals of Statistics}, 37:\penalty0 4046--4087, 2009.

\bibitem[Avanesov and Buzun(2018)]{avanesov2018change}
Valeriy Avanesov and Nazar Buzun.
\newblock Change-point detection in high-dimensional covariance structure.
\newblock \emph{Electronic Journal of Statistics}, 12\penalty0 (2):\penalty0
  3254--3294, 2018.

\bibitem[Bickel and Sarkar(2016)]{bickel2016hypothesis}
Peter~J Bickel and Purnamrita Sarkar.
\newblock Hypothesis testing for automated community detection in networks.
\newblock \emph{Journal of the Royal Statistical Society: Series B (Statistical
  Methodology)}, 78\penalty0 (1):\penalty0 253--273, 2016.

\bibitem[Cape et~al.(2019)Cape, Tang, Priebe, et~al.]{cape2019two}
Joshua Cape, Minh Tang, Carey~E Priebe, et~al.
\newblock The two-to-infinity norm and singular subspace geometry with
  applications to high-dimensional statistics.
\newblock \emph{The Annals of Statistics}, 47\penalty0 (5):\penalty0
  2405--2439, 2019.

\bibitem[Chatterjee et~al.(2015)]{chatterjee2015matrix}
Sourav Chatterjee et~al.
\newblock Matrix estimation by universal singular value thresholding.
\newblock \emph{The Annals of Statistics}, 43\penalty0 (1):\penalty0 177--214,
  2015.

\bibitem[Chen and Lei(2018)]{chen2018network}
Kehui Chen and Jing Lei.
\newblock Network cross-validation for determining the number of communities in
  network data.
\newblock \emph{Journal of the American Statistical Association}, 113\penalty0
  (521):\penalty0 241--251, 2018.

\bibitem[Cho(2016)]{cho2016change}
Haeran Cho.
\newblock Change-point detection in panel data via double cusum statistic.
\newblock \emph{Electronic Journal of Statistics}, 10\penalty0 (2):\penalty0
  2000--2038, 2016.

\bibitem[Chu and Chen(2017)]{ChuChen2017}
L.~Chu and H.~Chen.
\newblock Asymptotic distribution-free change-point detection for modern data.
\newblock \emph{arXiv preprint}, \penalty0 (arXiv:1707.00167), 2017.

\bibitem[Cribben and Yu(2017)]{CribbenYu2017}
I.~Cribben and Y.~Yu.
\newblock Estimating whole-brain dynamics by using spectral clustering.
\newblock \emph{Journal of the Royal Statistical Society: Series C (Applied
  Statistcs)}, 66:\penalty0 607--627, 2017.

\bibitem[Delyon(2009)]{delyon2009exponential}
Bernard Delyon.
\newblock Exponential inequalities for sums of weakly dependent variables.
\newblock \emph{Electronic Journal of Probability}, 14:\penalty0 752--779,
  2009.

\bibitem[Fearnhead and Rigaill(2018)]{fearnhead2018changepoint}
Paul Fearnhead and Guillem Rigaill.
\newblock Changepoint detection in the presence of outliers.
\newblock \emph{Journal of the American Statistical Association}, pages 1--15,
  2018.

\bibitem[Franco Salda\~{n}a et~al.(2017)Franco Salda\~{n}a, Yu, and
  Feng]{saldana2017many}
D~Franco Salda\~{n}a, Yi~Yu, and Yang Feng.
\newblock How many communities are there?
\newblock \emph{Journal of Computational and Graphical Statistics}, 26\penalty0
  (1):\penalty0 171--181, 2017.

\bibitem[Frick et~al.(2014)Frick, Munk, and Sieling]{FrickEtal2014}
Klaus Frick, Axel Munk, and Hannes Sieling.
\newblock Multiscale change point inference.
\newblock \emph{Journal of the Royal Statistical Society: Series B (Statistical
  Methodology)}, 76:\penalty0 495--580, 2014.

\bibitem[Fryzlewicz(2014)]{fryzlewicz2014wild}
Piotr Fryzlewicz.
\newblock Wild binary segmentation for multiple change-point detection.
\newblock \emph{The Annals of Statistics}, 42\penalty0 (6):\penalty0
  2243--2281, 2014.

\bibitem[Garreau and Arlot(2018)]{garreau2018consistent}
Damien Garreau and Sylvain Arlot.
\newblock Consistent change-point detection with kernels.
\newblock \emph{Electronic Journal of Statistics}, 12\penalty0 (2):\penalty0
  4440--4486, 2018.

\bibitem[Heyde(1963)]{heyde1963property}
Chris~C Heyde.
\newblock On a property of the lognormal distribution.
\newblock \emph{Journal of the Royal Statistical Society: Series B
  (Methodological)}, 25\penalty0 (2):\penalty0 392--393, 1963.

\bibitem[Hoeffding(1948)]{hoeiffding1948class}
W~Hoeffding.
\newblock A class of statistics with asymptotically normal distributions.
\newblock \emph{Annals of Mathematical Statiatics}, 19:\penalty0 293--325,
  1948.

\bibitem[Holland et~al.(1983)Holland, Laskey, and Leinhardt]{HollandEtal1983}
P.~W. Holland, K.~B. Laskey, and S.~Leinhardt.
\newblock Stochastic blockmodels: First steps.
\newblock \emph{Social Networks}, pages 109--137, 1983.

\bibitem[Lei(2016)]{lei2016goodness}
Jing Lei.
\newblock A goodness-of-fit test for stochastic block models.
\newblock \emph{The Annals of Statistics}, 44\penalty0 (1):\penalty0 401--424,
  2016.

\bibitem[Levin et~al.(2017)Levin, Athreya, Tang, Lyzinski, Park, and
  Priebe]{levin2017central2}
Keith Levin, Avanti Athreya, Minh Tang, Vince Lyzinski, Youngser Park, and
  Carey~E Priebe.
\newblock A central limit theorem for an omnibus embedding of multiple random
  graphs and implications for multiscale network inference.
\newblock \emph{arXiv preprint arXiv:1705.09355}, 2017.

\bibitem[Li et~al.(2016)Li, Levina, and Zhu]{li2016network}
Tianxi Li, Elizaveta Levina, and Ji~Zhu.
\newblock Network cross-validation by edge sampling.
\newblock \emph{arXiv preprint arXiv:1612.04717}, 2016.

\bibitem[Liu et~al.(2018)Liu, Choi, Xie, and Roeder]{LiuEtal2018}
F.~Liu, D.~Choi, L.~Xie, and K.~Roeder.
\newblock Global spectral clustering in dynamic networks.
\newblock \emph{Proceedings of the National Academy of Sciences of the United
  States of America}, \penalty0 (201718449), 2018.

\bibitem[Lyzinski et~al.(2017)Lyzinski, Park, Priebe, and
  Trosset]{lyzinski2017fast}
Vince Lyzinski, Youngser Park, Carey~E Priebe, and Michael Trosset.
\newblock Fast embedding for jofc using the raw stress criterion.
\newblock \emph{Journal of Computational and Graphical Statistics}, 26\penalty0
  (4):\penalty0 786--802, 2017.

\bibitem[Matteson and James(2014)]{matteson2014nonparametric}
David~S Matteson and Nicholas~A James.
\newblock A nonparametric approach for multiple change point analysis of
  multivariate data.
\newblock \emph{Journal of the American Statistical Association}, 109\penalty0
  (505):\penalty0 334--345, 2014.

\bibitem[Mukherjee(2018)]{Mukherjee}
Soumendu~Sundar Mukherjee.
\newblock \emph{On Some Inference Problems for Networks}.
\newblock PhD thesis, 2018.

\bibitem[Padilla et~al.(2019{\natexlab{a}})Padilla, Yu, Wang, and
  Rinaldo]{padilla2019optimal}
Oscar Hernan~Madrid Padilla, Yi~Yu, Daren Wang, and Alessandro Rinaldo.
\newblock Optimal nonparametric change point detection and localization.
\newblock \emph{arXiv preprint arXiv:1905.10019}, 2019{\natexlab{a}}.

\bibitem[Padilla et~al.(2019{\natexlab{b}})Padilla, Yu, Wang, and
  Rinaldo]{padilla2019optimal_2}
Oscar Hernan~Madrid Padilla, Yi~Yu, Daren Wang, and Alessandro Rinaldo.
\newblock Optimal nonparametric multivariate change point detection and
  localization.
\newblock \emph{arXiv preprint arXiv:1910.13289}, 2019{\natexlab{b}}.

\bibitem[Page(1954)]{Page1954}
Ewan~S Page.
\newblock Continuous inspection schemes.
\newblock \emph{Biometrika}, 41\penalty0 (1):\penalty0 100--115, 1954.

\bibitem[Park et~al.()Park, Wang, N{\"o}bauer, Vaziri, and
  Priebe]{park2015anomaly}
Youngser Park, Heng Wang, Tobias N{\"o}bauer, Alipasha Vaziri, and Carey~E
  Priebe.
\newblock Anomaly detection on whole-brain functional imaging of neuronal
  activity using graph scan statistics.
\newblock \emph{21st ACM SIGKDD Conference on Knowledge Discovery and Data
  Mining (KDD-2015), Workshop on Outlier Definition, Detection, and Description
  (ODDx3), August 10, Sydney, Australia.}

\bibitem[Pein et~al.(2017)Pein, Sieling, and Munk]{pein2017heterogeneous}
Florian Pein, Hannes Sieling, and Axel Munk.
\newblock Heterogeneous change point inference.
\newblock \emph{Journal of the Royal Statistical Society: Series B (Statistical
  Methodology)}, 79\penalty0 (4):\penalty0 1207--1227, 2017.

\bibitem[Prevedel et~al.(2014)Prevedel, Yoon, Hoffmann, Pak, Wetzstein, Kato,
  Schr{\"o}del, Raskar, Zimmer, and Boyden]{prevedel2014simultaneous}
Robert Prevedel, Young-Gyu Yoon, Maximilian Hoffmann, Nikita Pak, Gordon
  Wetzstein, Saul Kato, Tina Schr{\"o}del, Ramesh Raskar, Manuel Zimmer, and
  Edward~S Boyden.
\newblock Simultaneous whole-animal 3d imaging of neuronal activity using
  light-field microscopy.
\newblock \emph{Nature methods}, 11\penalty0 (7):\penalty0 727, 2014.

\bibitem[Rubin-Delanchy et~al.(2017)Rubin-Delanchy, Priebe, Tang, and
  Cape]{rubin2017statistical}
Patrick Rubin-Delanchy, Carey~E Priebe, Minh Tang, and Joshua Cape.
\newblock A statistical interpretation of spectral embedding: the generalised
  random dot product graph.
\newblock \emph{arXiv preprint arXiv:1709.05506}, 2017.

\bibitem[Sussman et~al.(2012)Sussman, Tang, Fishkind, and
  Priebe]{sussman2012consistent}
Daniel~L Sussman, Minh Tang, Donniell~E Fishkind, and Carey~E Priebe.
\newblock A consistent adjacency spectral embedding for stochastic blockmodel
  graphs.
\newblock \emph{Journal of the American Statistical Association}, 107\penalty0
  (499):\penalty0 1119--1128, 2012.

\bibitem[Vershynin(2018)]{vershynin2018high}
Roman Vershynin.
\newblock \emph{High-dimensional probability: An introduction with applications
  in data science}, volume~47.
\newblock Cambridge University Press, 2018.

\bibitem[Wang et~al.(2017)Wang, Yu, and Rinaldo]{wang2017optimal}
Daren Wang, Yi~Yu, and Alessandro Rinaldo.
\newblock Optimal covariance change point localization in high dimension.
\newblock \emph{arXiv preprint arXiv:1712.09912}, 2017.

\bibitem[Wang et~al.(2018{\natexlab{a}})Wang, Yu, and Rinaldo]{wang2018optimal}
Daren Wang, Yi~Yu, and Alessandro Rinaldo.
\newblock Optimal change point detection and localization in sparse dynamic
  networks.
\newblock \emph{arXiv preprint arXiv:1809.09602}, 2018{\natexlab{a}}.

\bibitem[Wang et~al.(2018{\natexlab{b}})Wang, Yu, and
  Rinaldo]{wang2018univariate}
Daren Wang, Yi~Yu, and Alessandro Rinaldo.
\newblock Univariate mean change point detection: Penalization, cusum and
  optimality.
\newblock \emph{arXiv preprint arXiv:1810.09498}, 2018{\natexlab{b}}.

\bibitem[Wang et~al.(2014)Wang, Tang, Park, and Priebe]{WangEtal2014}
H.~Wang, M.~Tang, Y.~Park, and C.~E. Priebe.
\newblock Locality statistics for anomaly detection in time series of graphs.
\newblock \emph{IEEE Transactions on Signal Processing}, 62:\penalty0 703--717,
  2014.

\bibitem[Wang and Samworth(2016)]{wang2016high}
Tengyao Wang and Richard~J Samworth.
\newblock High-dimensional changepoint estimation via sparse projection.
\newblock \emph{arXiv preprint arXiv:1606.06246}, 2016.

\bibitem[Xu et~al.(2021)Xu, Padilla, Wang, and Li]{r-changepoint-package}
Haotian Xu, Oscar Padilla, Daren Wang, and Mengchu Li.
\newblock \emph{changepoints: A Collection of Change-Point Detection Methods},
  2021.
\newblock URL \url{https://CRAN.R-project.org/package=changepoints}.
\newblock R package version 1.0.0.

\bibitem[Xu(2018)]{xu2018rates}
Jiaming Xu.
\newblock Rates of convergence of spectral methods for graphon estimation.
\newblock In \emph{Proceedings of the 35th International Conference on Machine
  Learning}, 2018.

\bibitem[Young and Scheinerman(2007)]{young2007random}
Stephen~J Young and Edward~R Scheinerman.
\newblock Random dot product graph models for social networks.
\newblock In \emph{International Workshop on Algorithms and Models for the
  Web-Graph}, pages 138--149. Springer, 2007.

\bibitem[Yu et~al.(2014)Yu, Wang, and Samworth]{yu2014useful}
Yi~Yu, Tengyao Wang, and Richard~J Samworth.
\newblock A useful variant of the davis--kahan theorem for statisticians.
\newblock \emph{Biometrika}, 102\penalty0 (2):\penalty0 315--323, 2014.

\bibitem[Zhao et~al.(2019)Zhao, Chen, and Lin]{zhao2019change}
Zifeng Zhao, Li~Chen, and Lizhen Lin.
\newblock Change-point detection in dynamic networks via graphon estimation.
\newblock \emph{arXiv preprint arXiv:1908.01823}, 2019.

\bibitem[Zou et~al.(2014)Zou, Yin, Feng, and Wang]{zou2014nonparametric}
Changliang Zou, Guosheng Yin, Long Feng, and Zhaojun Wang.
\newblock Nonparametric maximum likelihood approach to multiple change-point
  problems.
\newblock \emph{The Annals of Statistics}, 42\penalty0 (3):\penalty0 970--1002,
  2014.

\end{thebibliography}

\end{document}